\newtheorem{theorem}{Theorem}[section]
\newtheorem{thm}{Theorem}[section]
\newtheorem{lemma}[thm]{Lemma}
\newtheorem{corollary}[thm]{Corollary}
\newtheorem{claim}[thm]{Claim}
\newtheorem{proposition}[thm]{Proposition}
\newtheorem{definition}[thm]{Definition}
\newtheorem{remark}[thm]{Remark}
\newtheorem{fact}[thm]{Fact}
\DeclareMathOperator*{\EE}{\mathbb{E}}
\newcommand\card[1]{\left| {#1} \right|}
\newcommand\sett[2]{\left\{ #1 \;\middle\vert\; #2 \right\}}
\newcommand\set[1]{{\left\{ #1 \right\}}}
\newcommand\Prob[2]{{\Pr_{#1}\left[ {#2} \right]}}
\newcommand\cProb[3]{{\Pr_{#1}\left[ #3 \;\middle\vert\; #2 \right]}}
\newcommand\Expect[2]{{\mathop{\mathbb{E}}_{#1}\left[ {#2} \right]}}
\newcommand\Expectnoop[2]{{\mathbb{E}_{#1}\left[ {#2} \right]}}
\newcommand\cExpect[3]{{\mathbb{E}_{#1}\left[ #3 \;\middle\vert\; #2 \right]}}
\newcommand\cExpectop[3]{{\mathop{\mathbb{E}}_{#1}\left[ #3 \;\middle\vert\; #2 \right]}}
\newcommand\norm[1]{\| #1 \|}
\newcommand\power[1]{\set{0,1}^{#1}}
\newcommand\ceil[1]{\lceil{#1}\rceil}
\newcommand\half{{1\over2}}
\newcommand\defeq{\stackrel{\mathit{def}}{=}}
\newcommand\skipi{{\vskip 10pt}}
\newcommand\inner[2]{\langle{#1},{#2}\rangle}
\newcommand\eps{\varepsilon}
\newcommand*\xor{\mathbin{\oplus}}
\renewcommand\geq{\geqslant}
\renewcommand\leq{\leqslant}
\newcommand{\Tdown}[2]{\mathrm{T}_{{#1},{#2}}^{\downarrow}}
\newcommand{\Tup}[2]{\Tdown{#2}{#1}}
\newcommand{\T}{\mathrm{T}}
\providecommand{\distD}{\mathcal{D}}
\renewcommand\epsilon{\eps}
\newtheorem*{rep@theorem}{\rep@title}
\newcommand{\newreptheorem}[2]{%
\newenvironment{rep#1}[1]{%
\def\rep@title{\bf #2 \ref*{##1} \text{(Restated)} }%
\begin{rep@theorem} }%
{\end{rep@theorem} } }
\newtheorem*{rep@claim}{\rep@title}
\newcommand{\newrepclaim}[2]{%
\newenvironment{rep#1}[1]{%
\def\rep@title{\bf #2 \ref*{##1} \text{(Restated)} }%
\begin{rep@claim} }%
{\end{rep@claim} } }
\newtheorem*{rep@lemma}{\rep@title}
\newcommand{\newreplemma}[2]{%
\newenvironment{rep#1}[1]{%
\def\rep@title{\bf #2 \ref*{##1} \text{(Restated)} }%
\begin{rep@lemma} }%
{\end{rep@lemma} } }
\newtheorem{open-question}{Open Question}
\title{AND Testing and Robust Judgement Aggregation}
\date{\vspace{-5ex}}
\author{Yuval Filmus \thanks{Department of Computer Science, Technion, Israel. This project has received funding from the European Union's Horizon 2020 research and innovation programme under grant agreement No~802020-ERC-HARMONIC.}
\and Noam Lifshitz \thanks{Einstein Institute of Mathematics, Hebrew University of Jerusalem.}
\and Dor Minzer \thanks{Institute for Advanced Study. Supported partially by NSF grant CCF-1412958 and Rothschild Fellowship.}
\and Elchanan Mossel \thanks{Massachusetts Institute of Technology. Department of Mathematics and IDSS. Supported by NSF Award DMS-1737944, by ARO MURI W911NF1910217 and by a Simons Investigator Award in Mathematics (622132)}}
\begin{document}
\maketitle

\begin{abstract}
A function $f\colon\power{n}\to \power{}$ is called an approximate AND-homomorphism if choosing ${\bf x},{\bf y}\in\power{n}$ randomly, we have that
$f({\bf x}\land {\bf y}) = f({\bf x})\land f({\bf y})$ with probability at least $1-\eps$, where $x\land y = (x_1\land y_1,\ldots,x_n\land y_n)$.
We prove that if $f\colon \{0,1\}^n \to \{0,1\}$ is an approximate AND-homomorphism, then $f$ is $\delta$-close to either a constant function or an AND function,
where $\delta(\epsilon) \to 0$ as $\epsilon\to0$. This improves on a result of Nehama, who proved a similar statement in which $\delta$ depends on~$n$.

Our theorem implies a strong result on judgement aggregation in computational social choice. In the language of social choice, our result shows that if $f$ is $\eps$-close to satisfying judgement aggregation, then it is $\delta(\eps)$-close to an oligarchy (the name for the AND function in social choice theory). This improves on Nehama's result,
in which $\delta$ decays polynomially with $n$.

Our result follows from a more general one, in which we characterize approximate solutions to the eigenvalue equation $\T f = \lambda g$,
where $\T$ is the downwards noise operator $\T f(x) = \EE_{{\bf y}}[f(x \land {\bf y})]$, $f$ is $[0,1]$-valued, and $g$ is $\{0,1\}$-valued. We identify
all exact solutions to this equation, and show that any approximate solution in which $\T f$ and $\lambda g$ are close is close to an exact
solution.
\end{abstract}

\section{Introduction}

Which functions $f\colon \{0,1\}^n \to \{0,1\}$ satisfy
\[
 f({\bf x} \land {\bf y}) = f({\bf x}) \land f({\bf y}) \text{ w.p. } 1-\epsilon,
\]
where ${\bf x}, {\bf y}$ are chosen uniformly at random?

If $\epsilon = 0$, it is not hard to check that $f$ is either constant or an AND of a subset of the coordinates. Nehama~\cite{Nehama} showed that when $\epsilon > 0$, $f$ must be $O((n\epsilon)^{1/3})$-close to a constant function or to an AND (in other words, $\Pr[f \neq g] = O((n\epsilon)^{1/3})$, where $g$ is constant or an AND).
The main result in this paper implies, as a corollary, a similar statement, in which the distance between $f,g$ vanishes with $\eps$, without any dependence on $n$.
\begin{theorem} \label{thm:intro-main}
For each $\delta > 0$ there is $\epsilon > 0$ such that if $f\colon \{0,1\}^n \to \{0,1\}$ satisfies
\[ \Pr_{{\bf x}, {\bf y}}[f({\bf x} \land {\bf y}) = f({\bf x}) \land f({\bf y})] \geq 1-\epsilon, \]
then $f$ is $\delta$-close to a constant or an AND.
\end{theorem}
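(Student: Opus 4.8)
The plan is to recast the AND-homomorphism condition as an \emph{approximate} instance of the eigenvalue equation $\T f=\lambda g$, push it through the general characterization theorem, and then read off which exact solutions are relevant. Set $\mu\defeq\Expect{}{f}$. Since $f$ is $\{0,1\}$-valued we have $f(x)\land f(y)=f(x)f(y)$, and for Boolean quantities the probability of disagreement equals the expected absolute difference, so the hypothesis gives
\[
\Expect{{\bf x},{\bf y}}{\card{f({\bf x}\land{\bf y})-f({\bf x})f({\bf y})}}=\Prob{{\bf x},{\bf y}}{f({\bf x}\land{\bf y})\neq f({\bf x})\land f({\bf y})}\leq\eps .
\]
Applying Jensen's inequality to the expectation over ${\bf y}$ and using $\Expect{{\bf y}}{f(x\land{\bf y})}=\T f(x)$ and $\Expect{{\bf y}}{f(x)f({\bf y})}=\mu f(x)$ yields $\norm{\T f-\mu f}_1\leq\eps$; since $\T f-\mu f$ takes values in $[-1,1]$ this also gives $\norm{\T f-\mu f}_2\leq\sqrt\eps$. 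If $\mu\leq\delta$ then $\Prob{}{f\neq 0}=\mu\leq\delta$, and if $\mu\geq 1-\delta$ then $\Prob{}{f\neq 1}\leq\delta$; in either case $f$ is $\delta$-close to a constant and we are done. So assume $\delta<\mu<1-\delta$, so that $\lambda\defeq\mu$ is bounded away from $0$ and $1$.

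Now I would apply the general characterization theorem with the pair $(f,f)$ in the roles of $(f,g)$ — legitimate since $f$ is both $[0,1]$-valued and $\{0,1\}$-valued — and with $\lambda=\mu$. As $\T f$ and $\lambda f$ are $O(\sqrt\eps)$-close, the theorem produces an exact solution $(f^\ast,g^\ast,\lambda^\ast)$ of $\T f^\ast=\lambda^\ast g^\ast$ with $\norm{f-f^\ast}_1,\norm{f-g^\ast}_1\leq\delta'$ for some $\delta'=\delta'(\eps)\to0$; in particular $\norm{f^\ast-g^\ast}_1\leq2\delta'$. Since $f$ and $g^\ast$ are Boolean, $\norm{f-g^\ast}_1=\Prob{}{f\neq g^\ast}$, so it suffices to show $g^\ast$ is a constant or an AND, and then take $\eps$ small enough that $\delta'(\eps)\leq\delta$.

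For the classification step, recall that $\T$ is diagonal in the subcube-monomial basis $m_S(x)=\prod_{i\in S}x_i$, with $\T m_S=2^{-\card S}m_S$; a short argument (evaluate a Boolean eigenfunction $\sum_{\card S=k}c_Sm_S$ at points of Hamming weight $k$, forcing $c_S\in\{0,1\}$, and then at $1^n$, forcing at most one $c_S=1$) shows that the only Boolean eigenfunctions of $\T$ — equivalently, the only exact solutions with $f^\ast=g^\ast$ — are the constants and the single ANDs $m_S$. Every other exact solution on the list produced by the theorem has $f^\ast$ bounded away from $\{0,1\}$: for instance the family $g^\ast=\mathrm{OR}_S$, $f^\ast=2\lambda^\ast\cdot\mathrm{PARITY}_S$ with $\lambda^\ast\in[0,\half]$ and $\card S\geq 1$, for which $f^\ast$ and $g^\ast$ disagree on a constant fraction of the cube. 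Hence $\norm{f^\ast-g^\ast}_1\leq2\delta'$ eliminates all exotic solutions once $\delta'$ is below the relevant absolute constant, so $g^\ast$ is a constant or an AND, completing the proof.

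The substantive content is the general characterization theorem, which I am taking as given; within this reduction the one genuinely delicate point is the final step — one must know the complete list of exact solutions and, crucially, that each ``exotic'' one is \emph{quantitatively} far from having $f^\ast$ close to $g^\ast$, so that the closeness the theorem delivers truly forces $g^\ast$ to be a constant or an AND. One should also confirm that ``closeness to an exact solution'' in the theorem is measured in $L^1$/Hamming distance (or a norm controlling it), matching the notion of $\delta$-closeness in the statement, and that the theorem's hypotheses accommodate $\lambda$ in the whole range $(\delta,1-\delta)$.
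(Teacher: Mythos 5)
Your reduction of the AND-homomorphism hypothesis to $\norm{\T f-\mu f}_1\leq\eps$ with $\lambda=\mu=\E[f]$ is correct and is the same starting move the paper makes, and your handling of the extreme cases $\mu\leq\delta$ and $\mu\geq 1-\delta$ is also fine. The substantive problem is in the final step.

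The general characterization theorem (Theorem~\ref{thm:main_2func}, the precise form of Theorem~\ref{thm:intro-approximate}) does \emph{not} give $\norm{f-f^\ast}_1\leq\delta'$ for an exact solution $f^\ast$. It gives that $g=f$ is $\eps$-close to an AND-OR $\psi$, and that a certain \emph{averaged} version $\tilde f$ of $f$ — averaging outside the junta coordinates $T$, and under the measure $\mu_{p\rho}$ rather than the measure $\mu_p$ used to compare $g$ with $\psi$ — is close to a multiple $2^r\lambda\phi$ of the corresponding AND-XOR. Nothing constrains the unaveraged $f$ to resemble $f^\ast=2^r\lambda\phi$; the paper warns about exactly this in the introduction ("the closeness of the function $f$ to an AND-XOR function has to be stated in a more subtle way since otherwise it is false"), and the example $f_1$ (OR near the middle slice, XOR near the $n/4$-slice) shows why a single Boolean $f$ can be close to the AND-OR in one Hamming-weight regime while behaving like the AND-XOR in another. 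So the chain "$\norm{f-f^\ast}_1,\norm{f-g^\ast}_1\leq\delta'\Rightarrow\norm{f^\ast-g^\ast}_1\leq2\delta'$" does not follow from the theorem as stated.

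The conclusion is still reachable from Theorem~\ref{thm:main_2func}, but you need a different last step, one that actually exploits the identity $\lambda=\E[f]$ at the point where the exotic solutions are ruled out. Since $f$ is $\eps$-close to $\psi$ under $\mu_{1/2}$, you get $\lambda=\E[f]\approx\E[\psi]=\prod_i(1-2^{-\card{A_i}})$. On the other hand $\tilde f$ takes values in $[0,1]$ and is $\eps$-close to $2^r\lambda\phi$, and $\phi$ equals $1$ with probability bounded below, so $2^r\lambda\leq 1+O(\eps)$. But $2^r\prod_i(1-2^{-\card{A_i}})=\prod_i\bigl(2-2^{1-\card{A_i}}\bigr)$, and each factor is $\geq 1$ with equality iff $\card{A_i}=1$; if some $\card{A_i}\geq 2$ the product is at least $3/2$, a contradiction once $\eps$ is small. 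This forces $\psi$ to be an AND (or $r=0$, which is the constant case you already handled).

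Finally, note the paper does not route the main theorem through Theorem~\ref{thm:main_2func} at all. It observes that the AND-homomorphism condition forces $f$ to have small individual \emph{negative influences}, and then invokes Theorem~\ref{thm:mono_and} — the "almost-monotone" special case — which yields the AND/constant structure directly, without the AND-OR/AND-XOR detour; this is the content of the proof of Theorem~\ref{thm:and_homomorphism} in Section~\ref{sec:thm:and_homomorphism}. This is not merely stylistic: it buys a markedly better $\eta$-vs.-$\eps$ dependence (quasi-polynomial rather than doubly exponential), because the approximation-by-junta lemma available under near-monotonicity is quantitatively much stronger than what is available in the general case.
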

\noindent Our technique is in fact more general, and allows us to study the multi-function version of this problem, in which we are interested in triples of functions $f,g,h\colon\power{n}\to\power{}$ that satisfy $f({\bf x}\land {\bf y}) = g({\bf x})\land h({\bf y})$
with probability at least $1-\eps$. 


If we replace $\land$ with $\oplus$ in the above problem, then the result corresponding to Theorem~\ref{thm:intro-main} is the well-known
soundness of the Blum--Luby--Rubinfeld \emph{linearity test} \cite{BLR}, that plays an important role in the construction of PCPs \cite{AS,ALMSS,Hastad}.
By now, many proofs for the soundness of this test are known: self-correction~\cite{BLR}, Fourier analysis~\cite{BCHKS,Hastad}, induction~\cite{DDGKS}.
Unfortunately, all of these proofs rely on $\oplus$ being a group operation (either directly or via Fourier analysis), and hence do not extend to our setting.

Our approach recasts the problem as determining the approximate eigenfunctions of a one-sided noise operator. Define
the operator $\T$ acting on functions $f\colon\power{n}\to\power{}$ in the following way:
\[
 (\T f)(x) = \Expect{\mathbf{y}}{f(x \land \mathbf{y})}.
\]
Using this operator, the premise of Theorem~\ref{thm:intro-main} implies that $f$ is an approximate eigenfunction
of this operator, i.e.\ $\T f \approx \lambda f$, where $\lambda = \EE[f]$ is the average of $f$.
Here, by approximate solution we mean that the $L_1$ distance between the two functions is small:
$\EE_{{\bf x}}[|\T f({\bf x}) - \lambda f({\bf x})|] \leq \epsilon$.

If we replace $\T$ with the usual two-sided noise operator $T_\rho$, then a short spectral argument shows that if $f$ is an approximate eigenfunction than it must be close to an exact eigenfunction of $T_\rho$. Unfortunately, the spectral argument relies on orthogonality of eigenspaces of $T_\rho$, a property which $\T$ doesn't satisfy (its eigenspaces are spanned by ANDs, which aren't orthogonal). Indeed, $\T$ has approximate eigenfunctions beyond ANDs. Here are two examples:
\begin{align*}
f_1(x) &=
\begin{cases}
x_1 \lor x_2 & \text{if } |x| \geq n/3 \\
x_1 \oplus x_2 & \text{if } |x| < n/3	
\end{cases}
&
f_2(x) &=
\begin{cases}
1 & \text{if } |x| \geq n/3 \\
\mathrm{Ber}(\lambda) & \text{if } |x| < n/3	
\end{cases}	
\end{align*}
The second example is probabilistic: $f_2(x) = 1$ with probability $\lambda < 1$ independently for each $|x| < n/3$. These functions satisfy $\T f_1 \approx \frac{1}{2} f_1$, $\T f_2 \approx \lambda f_2$. They are not counterexamples to Theorem~\ref{thm:intro-main} since $\EE[f_1] \approx 3/4 > 1/2$ and $\EE[f_2] \approx 1 > \lambda$.

Note that each one of the functions $f_1,f_2$ is essentially composed of two, completely different ``sub-functions'':
one defined on high Hamming-weight inputs, and another defined on low Hamming-weight inputs.
This suggests decoupling the two functions, and considering the generalized eigenvalue problem
\[
 \T f = \lambda g, \text{ where } f\colon \{0,1\}^n \to [0,1] \text{ and } g\colon \{0,1\}^n \to \{0,1\}.
\]
Here $f$ represents the low-weight part, and $g$ represents the high-weight part. We represent the probabilistic aspect of the low-weight part by allowing $f$ to take on values in the solid interval $[0,1]$.

The two examples above corresponds to \emph{exact} solutions of this generalized problem: $\T(x_1 \oplus x_2) = \tfrac{1}{2}(x_1 \lor x_2)$ and $\T \lambda = \lambda \cdot 1$.
Therefore, as a prerequisite to characterizing approximate eigenfunctions of $\T$, we must first study exact solutions to the more general
two-function version. We show:
\begin{theorem} \label{thm:intro-Tf=g}
If $f\colon \{0,1\}^n \to [0,1]$ and $g\colon \{0,1\}^n \to \{0,1\}$ satisfy $\T f = \lambda g$ then either $f=g=0$ or there exist disjoint subsets $S_1,\ldots,S_m \subseteq [n]$, where $m \leq \log_2 (1/\lambda)$, such that
\begin{align*}
f(x) &= \bigwedge_{i=1}^m \bigoplus_{j \in S_i} x_i, &
g(x) &= \bigwedge_{i=1}^m \bigvee_{j \in S_i} x_i.
\end{align*}
Moreover, if $f$ is monotone then $g$ is an AND and $f = g$.
\end{theorem}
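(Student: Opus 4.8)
The plan is to build everything on the identity $\T f(x)=2^{-|x|}\sum_{z\leq x}f(z)$, valid because $x\land y$, for uniformly random $y$, is uniform over the subcube $\{z:z\leq x\}$; equivalently, writing $f=\sum_{S\subseteq[n]}c_S\prod_{j\in S}x_j$ in the multilinear basis, $\T$ acts diagonally, scaling the monomial $\prod_{j\in S}x_j$ by $2^{-|S|}$. First I would use $f\geq0$: since $\T f(x)=0$ forces $f\equiv0$ on all of $\{z:z\leq x\}$, the set $g^{-1}(0)$ is downward closed, so $g$ is monotone, and $\operatorname{supp}(f)\subseteq g^{-1}(1)$. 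Next, the equation $\sum_{z\leq x}f(z)=\lambda 2^{|x|}g(x)$, which holds for every $x$, can be Möbius-inverted over the Boolean lattice to the closed form
\[
 f(x)=\lambda\sum_{S\subseteq x}2^{|S|}b_S=\lambda\,\widetilde g(2x_1,\dots,2x_n),
\]
where the $b_S$ are the multilinear coefficients of $g$ and $\widetilde g$ is its multilinear extension. Conversely, for \emph{any} Boolean $g$ the function $x\mapsto\lambda\widetilde g(2x)$ solves $\T f=\lambda g$, and it is $[0,1]$-valued exactly when $\widetilde g(v)\in[0,1/\lambda]$ for all $v\in\{0,2\}^n$ (while $\widetilde g\geq0$ on $\{0,2\}^n$ already implies $g$ monotone). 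So the whole theorem reduces to a combinatorial question: for which monotone $g\colon\{0,1\}^n\to\{0,1\}$ is $\widetilde g$ nonnegative on $\{0,2\}^n$?

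I claim the answer is exactly $g=\bigwedge_{i=1}^m\bigvee_{j\in S_i}x_j$ with $S_1,\dots,S_m$ pairwise disjoint (the case $m=0$ giving $g\equiv1$, and an empty $S_i$ giving $g\equiv0$). The ``if'' direction is immediate from $\widetilde g=\prod_{i=1}^m\bigl(1-\prod_{j\in S_i}(1-x_j)\bigr)$, since at $v\in\{0,2\}^n$ each factor is $1-(\pm1)\in\{0,2\}$. For ``only if'' I would induct on $n$. Writing $g_0=g|_{x_n=0}$ and $g_1=g|_{x_n=1}$, one has $\widetilde g=(1-x_n)\widetilde{g_0}+x_n\widetilde{g_1}$, so the hypothesis at $v_n=0$ gives $\widetilde{g_0}\geq0$ on $\{0,2\}^{n-1}$ and at $v_n=2$ gives $2\widetilde{g_1}-\widetilde{g_0}\geq0$ there, whence also $\widetilde{g_1}\geq0$; the induction hypothesis then applies to both $g_0$ and $g_1$. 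If $g_0=g_1$ then $g$ does not depend on $x_n$ and we are done; otherwise $g=g_1\land(x_n\lor g_0)$ with $g_0<g_1$ (using $g_0\leq g_1$ from monotonicity), and the task is to prove $g_0=g_1\land\bigvee_{j\in T}x_j$ for some $T$ disjoint from the variables of $g_1$, so that $\{n\}\cup T$ becomes a new block of $g$. This gluing step is the technical heart. Passing to the $\mathbb{F}_2$-affine description of the XOR-forms $\widetilde{g_0}(2\,\cdot\,)$ and $\widetilde{g_1}(2\,\cdot\,)$, the inequality $2\widetilde{g_1}\geq\widetilde{g_0}$ on $\{0,2\}^{n-1}$ says that the indicator of every block of $g_1$ lies in the $\mathbb{F}_2$-span of the indicators of the blocks of $g_0$ — hence, by disjointness, is a disjoint union of blocks of $g_0$ — and also that $g_1$ has at least $m_0-1$ blocks; a parity count then forces every block of $g_1$ to be a single block of $g_0$ with exactly one block of $g_0$ left unused, which is the desired statement. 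I expect verifying this gluing, together with the degenerate sub-cases ($g_0\equiv0$, $g_1\equiv1$, empty blocks, $m=0$), to be the main source of casework.

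Finally I would read off the theorem. For $g=\bigwedge_i\bigvee_{j\in S_i}x_j$, using $1-2x_j=(-1)^{x_j}$ gives $\widetilde g(2x)=\prod_i 2\bigl(\bigoplus_{j\in S_i}x_j\bigr)=2^m\bigwedge_i\bigoplus_{j\in S_i}x_j$, so $f=2^m\lambda\cdot\bigwedge_i\bigoplus_{j\in S_i}x_j$ and $g=\bigwedge_i\bigvee_{j\in S_i}x_j$; since $f\in[0,1]$ this forces $2^m\lambda\leq1$, i.e.\ $m\leq\log_2(1/\lambda)$, and when $f$ attains the value $1$ one has $2^m\lambda=1$ and $f$ is exactly the displayed Boolean function (an empty $S_i$ being the case $f=g=0$). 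For the ``moreover'' clause, suppose $f$ is monotone; fixing an index $i_0$ and, for each $i\neq i_0$, an assignment to $S_i$ making $\bigoplus_{j\in S_i}x_j=1$ (possible since the $S_i$ are disjoint), the monotonicity of $f$ in the coordinates of $S_{i_0}$ forces $\bigoplus_{j\in S_{i_0}}x_j$ to be monotone, so $|S_{i_0}|\leq1$; since $f\not\equiv0$ no $S_i$ is empty, hence $|S_i|=1$ for every $i$, $g$ is an AND, and $f=2^m\lambda\,g=g$.
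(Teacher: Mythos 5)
Your route is genuinely different from the paper's, and it is sound. The paper proves this as the $\eta=0$ specialization of its Lemma~\ref{lemma:main_exact}, which is a quantitative $L_\infty$-stability statement: it first shows $g$ is monotone, then that all minterms of $g$ have a common size $m$ and that $f$ is everywhere near $0$ or near $2^m\lambda$, then builds an $m$-coloring of $[n]$ from a fixed minterm and shows that the minterms of $g$ are exactly the rainbow sets, yielding the AND-OR structure; $\T^{-1}$ (their Claim~\ref{claim:aux_mobius}, the same M\"obius inversion you use) is only invoked at the very end to transfer the structure of $g$ back to $f$. You instead invert $\T$ \emph{first}, reducing the whole theorem to a self-contained combinatorial question --- for which monotone Boolean $g$ is the multilinear extension $\widetilde g$ nonnegative on $\{0,2\}^n$? --- and then settle that by induction on $n$, with the gluing step handled by $\mathbb{F}_2$-linear algebra on the affine subspaces cut out by the XOR blocks. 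I traced through the gluing: from $\widetilde{g_0}\geq0$ and $2\widetilde{g_1}\geq\widetilde{g_0}$ you get $\phi_0\leq\phi_1$, hence each $1_{B_i}$ is an odd-size sum of the $1_{A_{i'}}$'s and (evaluating at a point of $V_0$) $m_1\geq m_0-1$; the parity/size count then does force all $|S_i|=1$ with at most one unused $A_{i'}$. So the argument closes, though as you note the degenerate cases ($g_0\equiv0$, $g_1\equiv1$, empty blocks) need to be checked separately.

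What each approach buys: yours is shorter and more transparent for the exact equation, and the reduction to ``$\widetilde g\geq0$ on $\{0,2\}^n$'' is a nice reformulation. The paper's minterm-and-coloring argument is set up to tolerate an $L_\infty$ error $\eta$ throughout, which is what they actually need downstream (Lemma~\ref{lemma:main_exact} is invoked with small but nonzero $\eta$ in the proof of Theorem~\ref{thm:main_2func}); your M\"obius-inversion shortcut would not give that robustness without substantial rework, since inverting $\T$ amplifies $L_1$ or $L_\infty$ error by $3^n$. Finally, a small imprecision you inherit from the theorem statement itself: what one actually gets is $f=2^m\lambda\cdot\phi$, and $f=\phi$ (equivalently $f=g$ in the monotone case) only when $2^m\lambda=1$; the paper's precise version, Lemma~\ref{lemma:main_exact} and Theorem~\ref{thm:mono_and}, states the multiple $2^m\lambda$ explicitly. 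Your parenthetical remark ``when $f$ attains the value $1$'' is the right caveat, but your closing line $f=2^m\lambda\,g=g$ drops it.
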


Thus if $\T f = \lambda g$ then $g$ is an ``AND-OR'' and $f$ is the corresponding ``AND-XOR'' (or $f=g=0$). When $f$ is monotone, $g$ must be an AND, and so $f = g$.
Using Theorem~\ref{thm:intro-Tf=g}, we can then actually solve the more general problem of characterizing approximate solutions to the equation $\T f = \lambda g$.
\begin{theorem} \label{thm:intro-approximate}
If $f\colon \{0,1\}^n \to [0,1]$ and $g\colon \{0,1\}^n \to \{0,1\}$ satisfy $\T f \approx \lambda g$ then either $f\approx g\approx 0$ or $g$ is close to an AND-OR and $f$ is close to an AND-XOR.

Moreover, if $f$ is monotone then $f,g$ are both close to a constant or to an AND.
\end{theorem}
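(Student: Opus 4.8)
Theorem~\ref{thm:intro-Tf=g} already classifies the exact solutions, so the whole task is a rigidity/stability statement: show that an approximate solution of $\T f=\lambda g$ is $L_1$-close to one of those exact solutions, and then read off the ``moreover'' clause. I would organize this as: first pin down the Boolean function $g$ (show it is close to an AND--OR), then recover $f$ from $g$, with an induction on the number of OR-clauses. The first move is a dichotomy on how much mass $g$ (equivalently $\T f$) carries. If $\E[g]$ is below a threshold depending only on the target precision, then $\lambda g$ is $L_1$-small, hence so is $\T f$; since $\T f(x)$ is the average of the nonnegative $f$ over the down-set $\{z\le x\}$, smallness of $\T f$ forces $f$ itself to be $L_1$-small, which is the ``$f\approx g\approx0$'' case. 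Otherwise $\lambda\E[g]=\E[\T f]$ is bounded below by some $c(\delta)$, hence $\lambda\ge c(\delta)$, so any AND--OR we could hope to approach has at most $m\le\log_2(1/c(\delta))=O_\delta(1)$ clauses; we induct on $\lceil\log_2(1/\lambda)\rceil$, so the recursion depth is bounded in terms of $\delta$ alone.

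\textbf{The inductive core.}
The exact argument uses the restriction identities $(\T f)^{i\to0}=\T(f^{\,i\to0})$ and $\T(f^{\,i\to1})=\lambda(2g^{\,i\to1}-g^{\,i\to0})$; the first shows that restricting the equation to $\{x_i=0\}$ gives the \emph{same} type of equation one coordinate down (this is what will drive the recursion), and the second, together with $f\ge0$, shows $g$ is monotone and ultimately an AND of ORs. In the robust setting I would rerun these identities to conclude that $g$ is \emph{approximately} monotone, and then, via an $L_2$/hypercontractive analysis of $\T f$ on the part of the cube where $g$ is ``undetermined'', that this part is governed by a single OR-clause on some coordinate set $S_1$. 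Having identified (an approximation to) $S_1$, one peels it off exactly as in the exact case: restrict to a coordinate $j_0\in S_1$ being $1$ and the other coordinates of $S_1$ being $0$; there $g$ becomes an AND--OR with $m-1$ clauses, $f$ becomes the corresponding AND--XOR with $m-1$ clauses, $\lambda$ doubles, and the inductive hypothesis applies. Reassembling over the choice of the peeled clause recovers the claimed structure for $g$.

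\textbf{The main obstacle, and recovering $f$.}
The hard part is exactly the one flagged in the introduction: the eigenspaces of $\T$ are non-orthogonal (they are spanned by ANDs), so the one-line spectral stability proof available for $T_\rho$ is unavailable, and because $n$ is unbounded a naive compactness argument is too. Concretely, $S_1$ can be large — $g=\bigvee_{j\in S_1}x_j$ with $|S_1|\approx\log_2 n$ is already an exact solution — so one cannot peel a clause by restricting to the single subcube $\{x_{j_0}=1,\ x_j=0\ (j\in S_1\setminus\{j_0\})\}$ without the error $\eps$ blowing up relative to that exponentially small subcube. This is where I expect to need global hypercontractivity and sharp-threshold technology: rather than passing to a thin subcube, one analyzes $\T f$ through appropriate $p$-biased restrictions, which is natural since $\T$ averages a weight-$w$ input over the binomially weighted layers below it. Once $g$ is shown close to a specific AND--OR $G$ we know $\T f\approx\lambda G=\T F$, where $F$ is the corresponding AND--XOR; one cannot invert $\T$ (it is catastrophically ill-conditioned, with inverse norm $\sim 2^n$), but one can again use that $\T f(x)$ is the average of the bounded nonnegative $f$ over $\{z\le x\}$: where $G=0$ this pins $f\approx0$, and where $G=1$ a more delicate averaging-inversion argument, using both $f\ge0$ and $f\le1$, pins $f$ down to $F$.

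\textbf{The monotone corollary and bookkeeping.}
For the ``moreover'' clause, if $f$ is monotone then it is \emph{exactly} monotone and (by the above) arbitrarily close to an AND--XOR as $\eps\to0$; but an AND--XOR containing a clause of size $\ge2$ is $\Omega(2^{-m})=\Omega_\delta(1)$-far from every monotone function — flipping a coordinate of that clause, on the set where the other $m-1$ clauses are satisfied, produces a constant fraction of violating edges — so for $\eps$ small enough all clauses must have size $1$, i.e.\ $f$ (hence $g$) is close to an AND, or to a constant in the degenerate case. Finally, the error parameters degrade at each of the $O_\delta(1)$ recursion levels, but since the depth depends only on $\delta$, the resulting closeness $\delta(\eps)\to0$ as $\eps\to0$ with no dependence on $n$, which is precisely the improvement over Nehama's bound.
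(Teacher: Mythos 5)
You correctly identify the central obstacle — that the clauses of an AND–OR can have size $\Theta(\log n)$, so that the obvious ``peel one clause by restricting to a thin subcube'' move would amplify the error by a factor exponential in the clause size — but your proposed escape is not an argument, it is a placeholder. ``Global hypercontractivity and sharp-threshold technology'' plus ``appropriate $p$-biased restrictions'' names a toolbox without saying what lemma is being proved, what quantity is being hypercontracted, or why the resulting bound would locate a single OR-clause $S_1$ to peel. The inductive scheme on $\lceil\log_2(1/\lambda)\rceil$ would need, at each level, a robust version of the step ``find the next clause,'' and that is precisely the part left open. A second soft spot: you argue informally that smallness of $\T f$ forces $f$ small because $\T f(x)$ averages $f$ over the down-set of $x$; this gives $f$ small at low Hamming weight, but says nothing directly about $f$ near the top of the cube (where the averaging washes out any individual value of $f$), so even the base case needs more care than stated. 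And your observation that $\T$ cannot be inverted because $\|\T^{-1}\|\sim 2^n$ is true on $\{0,1\}^n$ but is exactly the reason the paper never tries to invert on the full cube.

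The paper's route is different and cleaner. It first shows (Lemma \ref{lem:vanishing_tail}) that $\T$ contracts Fourier level $k$ by roughly $\rho^{k/2}$, so $\T f$ — and hence $g\approx \lambda^{-1}\T f$ — has an exponentially decaying Fourier tail; since $g$ is Boolean, Bourgain's junta theorem (and in the stronger cases Kindler–Safra) then says $g$ is close to a $J$-junta on some $T$ with $|T|=O_{\epsilon,\zeta}(1)$. This replaces your proposed clause-peeling induction with a single spectral step that is insensitive to the size of the OR-clauses — it never needs to locate a clause, only to bound a tail. Once $g$ lives (approximately) on $O(1)$ variables, for a typical restriction $\beta$ outside $T$ the pair $(\tilde f_\beta, g_\beta)$ satisfies the equation in $L_\infty$ on a constant-dimensional cube, and there the exact rigidity lemma (Lemma \ref{lemma:main_exact}) is applied, including the inversion $\T^{-1}$ you rightly flag as dangerous — but now on a constant-size domain where $\|\T^{-1}\|=O(1)$. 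A truncation lemma reconciles the slightly different AND–ORs seen on different restrictions. Your monotone-corollary argument, by contrast, is essentially the one the paper uses (an AND–XOR with a clause of size $\ge 2$ has a negative influence bounded away from $0$) and would go through once the structural statement is in hand.
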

We also show how to deduce Theorem~\ref{thm:intro-main} from Theorem~\ref{thm:intro-approximate}.
We remark that Theorem~\ref{thm:intro-approximate} is stated in a somewhat informal way: the closeness
of the function $f$ to an AND-XOR function has to be stated in a more subtle way (since otherwise it is false),
and we defer this point to the formal statement of the theorems in Section~\ref{sec:main-results}.

\subsection{Other variants}
\subsubsection*{Other noise rates}
Nehama~\cite{Nehama} also considers the more general equation
\[
 f({\bf x_1} \land \cdots \land {\bf x_m}) \approx f({\bf x_1}) \land \cdots \land f({\bf x_m}),
\]
where each one of ${\bf x_1},\ldots,{\bf x_m}$ is sampled uniformly and independently from $\power{n}$. We can reduce
this problem, in a similar manner, to an eigenfunction of an appropriate operator $\T^{(m)}$, defined by
\[
\T^{(m)} f(x) = \EE_{{\bf y_1},\ldots,{\bf y_{m-1}}}{f( x \land {\bf y_1}\land\ldots\land{\bf y_{m-1}})}.
\]
Our techniques also apply to such operators (and in fact to a slightly richer family of noise operators), and
we prove variants of Theorem~\ref{thm:intro-Tf=g} and Theorem~\ref{thm:intro-approximate} in this case as well:
\begin{theorem} \label{thm:intro-m>2}
Let $m > 2$. If $f\colon \{0,1\}^n \to [0,1]$ and $g\colon \{0,1\}^n \to \{0,1\}$ satisfy $\T^{(m)}f = \lambda g$ then either $f=g=0$ or $f=g$ is an AND.

Furthermore, if $\T^{(m)}f \approx \lambda g$ then either $f\approx g\approx 0$ or $f,g$ are close to an AND.
\end{theorem}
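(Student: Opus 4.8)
The plan is to bootstrap from the $m=2$ results by exploiting the semigroup structure of these operators. Since $(\T^{(m)}f)(x)=\EE_{\mathbf y_1,\dots,\mathbf y_{m-1}}[f(x\wedge\mathbf y_1\wedge\cdots\wedge\mathbf y_{m-1})]=\bigl(\T(\T^{(m-1)}f)\bigr)(x)$, we have $\T^{(m)}=\T\circ\T^{(m-1)}=\T^{m-1}$. Writing $h_k:=\T^k f$ for $0\le k\le m-1$, so that $h_0=f$, $h_{m-1}=\lambda g$ and $h_{k+1}=\T h_k$, the hypothesis becomes a chain of equations of the form $\T h_k=(\text{scalar})\cdot(\text{Boolean-valued function})$. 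The idea is to apply Theorem~\ref{thm:intro-Tf=g} (respectively Theorem~\ref{thm:intro-approximate}) repeatedly along this chain, from the top down, and to observe that the ``AND-XOR'' produced at the first step is forced, by the \emph{second} application, to be an ordinary AND; once $g$ is an AND the rest of the chain collapses to scalar multiples of it.

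\textbf{Exact case.} Apply Theorem~\ref{thm:intro-Tf=g} to the pair $(h_{m-2},g)$ (legitimate since $\T h_{m-2}=\lambda g$): either $h_{m-2}=g=0$ — whence $\T^{m-2}f=0$ and, since $f\ge0$ and the law of $\mathbf y_1\wedge\cdots\wedge\mathbf y_{m-2}$ has full support, $f\equiv0$ — or $h_{m-2}=c\bigwedge_i\bigoplus_{j\in S_i}x_j$ and $g=\bigwedge_i\bigvee_{j\in S_i}x_j$ for disjoint $S_1,\dots,S_\ell$ and some $c\in(0,1]$. In the latter case $h_{m-2}$ is $\{0,c\}$-valued, and since $m>2$ we have $m-2\ge1$, so we may apply Theorem~\ref{thm:intro-Tf=g} again to the pair $(h_{m-3},\ \bigwedge_i\bigoplus_{j\in S_i}x_j)$ coming from $\T h_{m-3}=c\cdot\bigl(\text{that AND-XOR}\bigr)$; it tells us that $\bigwedge_i\bigoplus_{j\in S_i}x_j$ is itself an AND-OR. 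This is the one genuinely new point, and it is purely combinatorial: an AND-XOR can coincide with an AND-OR only if every block $S_i$ is a singleton, since otherwise the $1$-set of the AND-XOR fails to be an up-set (raising a $0$-entry inside a block of size $\ge2$ to $1$ flips that parity and exits the $1$-set), whereas the $1$-set of any AND-OR is an up-set. Hence $g=\bigwedge_{s\in S}x_s$ with $S=\bigcup_iS_i$ is an AND. Propagating down the chain — each subsequent application of Theorem~\ref{thm:intro-Tf=g} now has a singleton-block ``$g$'', hence returns a singleton-block ``$f$'' — gives $h_k=c_k\bigwedge_{s\in S}x_s$ for every $k$, so $f=c_0\bigwedge_{s\in S}x_s$; comparing with $\T^{m-1}\bigl(c_0\bigwedge_{s\in S}x_s\bigr)=c_0\,2^{-(m-1)|S|}\bigwedge_{s\in S}x_s=\lambda g$ pins down $c_0$ and $\lambda$. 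For $c_0=1$ this gives $f=g=\bigwedge_{s\in S}x_s$, an AND; the residual possibility $c_0<1$ is the degenerate family $f\equiv\lambda$, $g\equiv1$ (for $S=\emptyset$) and its relatives, which is the constant/``$f=g=0$'' alternative of the statement.

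\textbf{Approximate case.} Run the same skeleton with Theorem~\ref{thm:intro-approximate} replacing Theorem~\ref{thm:intro-Tf=g}. From $\T h_{m-2}\approx\lambda g$ we obtain that $g$ is close to an AND-OR $\bigwedge_i\bigvee_{j\in S_i}x_j$ and $h_{m-2}$ is close to the corresponding AND-XOR; since $h_{m-2}=\T h_{m-3}$ and $h_{m-2}$ is close to a Boolean function, a second application of Theorem~\ref{thm:intro-approximate} (with scalar $\approx1$) shows that this AND-XOR is itself close to an AND-OR. The quantitative replacement for the combinatorial point is: \emph{an AND-XOR with $\ell$ blocks that is $\eps$-close to a monotone function is $O(\eps)$-close to an AND or to the constant $0$.} Indeed, if some block $S_i$ has $|S_i|\ge2$, the coordinate-shift matching $x\leftrightarrow x\oplus e_s$ for $s\in S_i$ maps each point of the AND-XOR's $1$-set (of measure $2^{-\ell}$) to a point outside it, so monotonicity forces $\Omega(2^{-\ell})$ disagreements; hence for $\ell\lesssim\log(1/\eps)$ every block is a singleton and the AND-XOR is an AND, while for $\ell\gtrsim\log(1/\eps)$ the AND-XOR is $O(\eps)$-close to $0$, and then so is the AND-OR it approximates, and so is $g$. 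Iterating this collapse down the chain should transfer the conclusion to each $h_k$, in particular to $f=h_0$ — modulo the quantitative subtlety discussed next.

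\textbf{The main obstacle.} I expect the crux to be the ``subtle sense'' of closeness that already appears in the $m=2$ statement of Theorem~\ref{thm:intro-approximate}: closeness of an approximate solution to an AND-XOR is only true in a weighted sense (favoring low Hamming weights), because $\T$ is a strict, non-invertible contraction — a function concentrated near weight $n$ has negligible image under $\T$ — and this is exactly the feature that forces the weighted formulation and makes it awkward to iterate. Controlling this weighted closeness along all $m-1$ links of the chain with only an $O_m(\eps)$ loss, no dependence on $n$, and correctly dispatching the degenerate scaled-constant family, is where the real work lies. A cleaner route may be to bypass the composition and re-prove the structure theorems directly for $\T^{(m)}$, killing the XOR blocks at the source: $\T^{(m)}$ sends a parity $\bigoplus_{j\in S}x_j$ to the function $x\mapsto\tfrac12\bigl(1-(1-2^{2-m})^{|S\cap\mathrm{supp}(x)|}\bigr)$, and for $m>2$ the map $\ell\mapsto\tfrac12\bigl(1-(1-2^{2-m})^{\ell}\bigr)$ is strictly increasing, hence injective — so no AND-XOR with a block of size $\ge2$ can be even an approximate eigenfunction — whereas for $m=2$ this map is constantly $\tfrac12$ on $\ell\ge1$, which is precisely what permits XOR solutions in that case.
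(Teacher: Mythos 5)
Your plan is to decompose $\T^{(m)} = \T^{m-1}$ and iterate the $\rho=\tfrac12$ structure theorem (Theorem~\ref{thm:intro-Tf=g} / \ref{thm:intro-approximate}) along the chain $h_k = \T^k f$. This is a genuinely different route from the paper: the paper proves Theorem~\ref{thm:intro-m>2} via its formal avatar Theorem~\ref{thm:main_2func_smallrho}, by working \emph{directly} in the $\rho<\tfrac12$ regime (Section~\ref{sec:large_noise}), with a self-contained exact-case argument (Lemma~\ref{lemma:main_exact_smallrho}) that never mentions AND-XORs: it shows $g$ is monotone, pins down $f$ on and above a minimal minterm, and then rules out a second minterm using the inequality $(1-\rho)^k > \rho^k$ (exactly the place where $\rho<\tfrac12$ enters). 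Your observation that $\T^{(m)}$ applied to a parity gives $x\mapsto\tfrac12(1-(1-2^{2-m})^{|S\cap\mathrm{supp}(x)|})$, which is constant on $\ell\ge1$ precisely when $m=2$, is a clean way to see \emph{why} XOR blocks vanish for $m>2$ and is consonant with the spirit of the paper's direct route — but the paper does not use this Fourier-side computation either.

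For the exact case your iteration is correct. Both applications of Theorem~\ref{thm:intro-Tf=g} are legitimate after rescaling $h_{m-2}$ by $c$, the observation that an AND-XOR equal to an AND-OR must have singleton blocks (monotonicity of the $1$-set) is right, and the degenerate/constant case is handled. One caveat worth noting: iterating forces you to chase the chain through shrinking biases $\mu_{2^{-m+k}}$, so the constants in Lemma~\ref{lemma:main_exact} degrade with $m$; for fixed $m$ this is harmless, but it is an extra bookkeeping cost that the direct proof does not pay. Also the paper's direct exact-case proof is shorter than even one application of Lemma~\ref{lemma:main_exact}, so what your reuse of the $\rho=\tfrac12$ machinery ``buys'' is conceptual economy, not brevity; what the paper's direct route buys is a genuinely better quantitative dependence ($\eta = \exp(-\Theta(\log^2(1/\eps)))$ for $\rho<\tfrac12$, versus doubly exponential for $\rho=\tfrac12$), which your chain would inherit at the worse rate.

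For the approximate case the gap you flag at the end is real, and it is not a mere technicality: Theorem~\ref{thm:intro-approximate}/\ref{thm:main_2func} only asserts that the \emph{averaged} $\widetilde{\T^{(m-1)}f}$ is close (in $L_\infty$ on $\power{T}$) to a multiple of the AND-XOR; the unaveraged $h_{m-2}=\T^{(m-1)}f$ need not be $L_1$-close to any Boolean function, which is exactly what you would need to feed it back in as the ``$g$'' argument of the next application. The sub-sampling counterexample in Section~\ref{sec:main-results} is precisely the reason the conclusion is phrased in averaged form, so this obstruction cannot be waved away; closing it would require a new argument (e.g.\ exploiting the exponential Fourier decay of $\T^{(m-1)}f$ via Lemma~\ref{lem:T-ev} to upgrade averaged closeness to pointwise closeness), and at that point you would be re-deriving much of the machinery that the paper deploys directly for $\T^{(m)}$. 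So: the exact half of your proposal is a valid alternative proof; the approximate half as written is an incomplete sketch with a correctly identified hole, and the ``cleaner route'' you gesture at in the last paragraph is the one the paper actually takes.
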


\subsubsection*{One-sided error version}
Finally, we consider the one-sided error version of the equation $\T f =\lambda g$. That is, suppose
we have a bounded function $f\colon\power{n}\to[0,1]$, and a Boolean function $g$, such that with probability $1-\eps$ over ${\bf x}$:
(a) if $g({\bf x}) = 1$, then $f({\bf x}\land {\bf y})\geq \lambda$ with constant probability over ${\bf y}$,
and (b) if $g({\bf x}) = 0$ then $f({\bf x}\land {\bf y})\leq \eps$ with probability $1-\eps$ over ${\bf y}$.

We note that this condition is a relaxation of the approximate eigenvalue condition.
In this case, we prove a weaker structural result than in Theorem~\ref{thm:intro-approximate},
namely that $g$ is close to a monotone junta.
\begin{theorem} \label{thm:intro-one-sided}
Suppose that $f\colon \{0,1\}^n \to [0,1]$ and $g\colon \{0,1\}^n \to \{0,1\}$ satisfy the following condition: when $g=0$, $\T f$ is typically small; and when $g=1$, $\T f$ is typically at least $\lambda$.\footnote{In contrast, in Theorem~\ref{thm:intro-approximate} we ask that $\T f$ be typically \emph{close} to $\lambda$.} Then $g$ is close to a monotone Boolean junta.\footnote{A junta is a function depending on a constant number of coordinates.}
\end{theorem}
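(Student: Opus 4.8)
The plan is: (i) reduce the one-sided hypothesis to a comparison between $g$ and a threshold $G_\theta:=\mathbbm{1}[\T f\ge\theta]$ of $\T f$; (ii) use the almost-monotone behaviour of $\T$ to bound the ``negative influence'' of each coordinate on $g$; (iii) establish a regularity estimate for $\T f$ that forces $g$ to be close to a junta; and (iv) combine (ii) and (iii) to obtain a \emph{monotone} junta. For step (i), I would first unpack ``typically'': there is a set of measure $O(\eps)$ off which $\T f(x)\le\eps$ when $g(x)=0$ and $\T f(x)\ge\lambda$ when $g(x)=1$. Hence the middle band $M=\{x:\eps<\T f(x)<\lambda\}$ has measure $O(\eps)$, and for \emph{any} $\theta\in(\eps,\lambda)$ the Boolean function $G_\theta$ agrees with $g$ off a set of measure $O(\eps)$; so it suffices to show that some $G_\theta$ is close to a monotone junta. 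I will take $\theta$ near the middle of $(\eps,\lambda)$, so that a $\gamma$-neighbourhood of $\theta$ with $\gamma=\Theta(\lambda-\eps)$ lies inside $(\eps,\lambda)$, and therefore $\T f$ is within $\gamma$ of $\theta$ only on an $O(\eps)$-set. (When $\lambda$ is comparable to $\eps$ one still gets such a $\theta$, now with $\gamma=\Theta(\eps)$; alternatively, one checks directly that in that regime the only admissible $g$ are AND--OR's, whose large OR-clauses are $O(\eps)$-close to the constant $1$ and hence juntas.)

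For step (ii), splitting off coordinate $i$ gives the identity $\T f(x_{i\to1})=\tfrac12\T f(x_{i\to0})+\tfrac12\,\T'(f_{i\to1})(x)$, where $x_{i\to b}$ is $x$ with its $i$-th coordinate set to $b$, $\T'$ is the downwards operator on the remaining coordinates, and $f_{i\to1}$ is the restriction of $f$ to $x_i=1$. Since the last term lies in $[0,1]$, we get $\tfrac12\T f(x_{i\to0})\le\T f(x_{i\to1})\le\tfrac12\T f(x_{i\to0})+\tfrac12$. Consequently, if an $i$-edge is ``anti-monotone'' for $G_\theta$ — value $1$ at the bottom, $0$ at the top — then $\T f(x_{i\to0})\in[\theta,2\theta)\subseteq M$, so such edges form an $O(\eps)$-fraction. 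Thus every coordinate has negative influence $O(\eps)$ on $G_\theta$, and hence on $g$.

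Step (iii) is the crux and the step I expect to be the main obstacle. I would prove a regularity estimate for $\T f$, e.g.\ that $\T f$ has small noise sensitivity at some constant rate $\rho_0=\rho_0(\delta)$. The naive bound $\bigl(\EE_u[f(x\land u)-f(x'\land u)]\bigr)^2\le\EE_u\bigl[(f(x\land u)-f(x'\land u))^2\bigr]$ is far too lossy — it discards exactly the cancellation that makes, for instance, $\T(x_1\oplus\cdots\oplus x_n)=\mathbbm{1}[x=0]$ essentially constant — so a genuine use of the averaging-over-downsets structure of $\T$ is required here. Granting such an estimate, the $\gamma$-gap yields $\mathrm{NS}_{\rho_0}[G_\theta]\le O(\eps)+\gamma^{-2}\,\mathrm{NS}_{\rho_0}[\T f]$, which is small, and a standard junta theorem (Friedgut's theorem for bounded influence, or its noise-sensitivity version) then gives that $G_\theta$, and hence $g$, is $\tfrac{\delta}{2}$-close to a $J$-junta $g^{*}$ with $J=J(\delta)$. (Alternatively, one might extract this junta conclusion from the proof of Theorem~\ref{thm:intro-approximate}, which presumably first localizes $g$ to few coordinates and only afterwards uses the two-sided bound to pin down the AND--OR structure.)

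For step (iv): since $g^{*}$ depends on only $J=O_\delta(1)$ coordinates and is $O(\delta)$-close to $g$, its total negative influence is at most $J\bigl(O(\eps)+O(\delta)\bigr)$, which is small once $\eps\ll\delta/J$; a Boolean function with small total negative influence is comparably close to its monotone shift, and the monotone shift of a $J$-junta is again a monotone $J$-junta. This monotone $J$-junta is $\delta$-close to $g$ (after rescaling $\delta$), which is the claim.
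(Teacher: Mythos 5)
Your plan is correct, and it takes a genuinely different (and I believe cleaner) route than the paper's proof in Section~\ref{sec:one_sided}. The paper bounds ${\sf NS}_\nu[g]$ directly through the coupling argument of Lemma~\ref{lem:one_sided_NS}: the quadruple distribution $\mathcal{D}_\nu$ (Definition~\ref{def:dist_coupled}), the comparison of the conditional laws $\mathcal{D}_1^a,\mathcal{D}_2^a$ (Proposition~\ref{prop:1}), and the boosting Lemma~\ref{lem:2}. Indeed the paper's introduction remarks that the Fourier-analytic tail bound from the two-sided setting ``does not seem to be applicable'' here, since $\norm{\Tdown{p}{\rho p}f-\lambda g}_2$ is no longer controlled. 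Your key move sidesteps this: replace $g$ by the threshold $G_\theta=\mathbbm{1}[\Tdown{p}{\rho p}f\geq\theta]$, to which $g$ is $O(\eta/\theta)$-close by the one-sided hypothesis and Markov, and then bound ${\sf NS}_\nu[G_\theta]$ via regularity of $\Tdown{p}{\rho p}f$ itself. Your step~(ii), using $\Tdown{p}{\rho p}f(x_{i\to1})\geq(1-\rho)\,\Tdown{p}{\rho p}f(x_{i\to0})$ to confine anti-monotone edges of $G_\theta$ to the thin middle band, is a tidy substitute for the coupling-based Claim~\ref{claim:low_inf_neg_onesided}; step~(iv) matches the paper's use of Facts~\ref{fact:avg_dec_neg_inf} and~\ref{fact:gglrs}.

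The one piece you leave unproved is the regularity estimate in step~(iii), which you flag as the main obstacle. It is not: it follows in two lines from Lemma~\ref{lem:T-ev} and needs no input whatsoever from the one-sided hypothesis. Writing $\rho_2=\frac{(1-p)\rho}{1-\rho p}<1$ and using $\sum_S\widehat{f}_{\rho p}(S)^2=\|f\|_2^2\leq1$ for $[0,1]$-valued $f$, one has for a $(1-\nu)$-correlated pair $(\mathbf{x},\mathbf{y})$ under $\mu_p$
\[
\Expect{\mathbf{x},\mathbf{y}}{\bigl(\Tdown{p}{\rho p}f(\mathbf{x})-\Tdown{p}{\rho p}f(\mathbf{y})\bigr)^2}
=2\sum_{S}\bigl(1-(1-\nu)^{|S|}\bigr)\rho_2^{|S|}\,\widehat{f}_{\rho p}(S)^2
\leq 2\nu\max_{k\geq1}k\rho_2^{k}
=O_\zeta(\nu),
\]
which is exactly the hypothesis-free regularity you wanted and already accounts for the cancellation in your $x_1\oplus\cdots\oplus x_n$ example. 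Combining this with $\mu_p(M)=O(\eta/\theta)$ and the gap $\lambda-\theta=\Theta(\zeta)$ gives ${\sf NS}_\nu[G_\theta]\leq O(\eta/\theta)+O_\zeta(\nu)$, after which Claim~\ref{claim:ns_tail} and Theorem~\ref{thm:Bourgain} deliver the junta exactly as you describe. So once this short calculation is inserted you have a complete alternative proof that avoids the combinatorial machinery of Section~\ref{sec:one_sided}. (Your parenthetical about $\lambda\approx\eps$ is unnecessary, since the theorem fixes $\lambda\geq\zeta$ and sends the error to zero; and the alternative in your step~(iii), extracting a junta conclusion from the two-sided proof, would not work, since that proof genuinely needs $\norm{\Tdown{p}{\rho p}f-\lambda g}_2$ small.)
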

We remark that while the structural result in this case is weaker, it is for a good reason: for any monotone junta $f$, choosing $g=f$ yields
an approximate, one-sided error solution.

\subsection{Social choice interpretations of approximate eigenfunctions}
The seminal work of Kornhauser and Sager~\cite{KornhauserSager:86} discusses
a situation where three cases $A,B,C$ are considered in court, and by law, one should rule against $C$ if and only if 
there is a ruling against both $A$ and $B$.
When several judges are involved, their opinions should be aggregated using a function $f$ that preserves this law, that is, satisfies $f(x \land y) = f(x) \land f(y)$; we say that $f$ is an \emph{AND-homomorphism}.
List and Pettit~\cite{LP02,LP04} showed that the only non-constant aggregation functions that are AND-homomorphisms are the AND functions, known in the social choice literature as \emph{oligarchies}. 

Let the individual opinions of the judges are
$x_1,\ldots,x_n$ on $A$, $y_1,\ldots,y_n$ on $B$, and $x_1 \land y_1,x_2 \land y_2,\ldots,x_n \land y_n$ on $C$. 
The characterization of robust judgement aggregation that we prove in this paper (Theorem~\ref{thm:intro-main}) states that if typically 
$f(x \land y) = f(x) \land f(y)$, then $f$ is close to an oligarchy.

The characterization in terms of approximate eigenfunctions (Theorem~\ref{thm:intro-approximate}) actually shows more. Suppose that opinions are aggregated according to a monotone function $f$ which satisfies the following two conditions:
\begin{itemize}
\item 
There is rarely a ruling against $C$ unless there is a ruling against $A$ and a ruling against $B$.
\item 
Suppose that there is a ruling against $A$. If there is also a ruling against $B$, then with probability roughly $q$, there is a ruling against $C$.\\ (Formally, for typical $x$ such that $f(x) = 1$, we have $\Pr[f(x \land y) = 1 \mid f(y) = 1] \approx q$.)
\end{itemize}  
Then $f$ must be close to an oligarchy or to a constant function, and $q \approx 1$.

In fact, the second condition can be weakened significantly:
\begin{itemize}
\item
Suppose that there is a ruling against $A$. Then with probability roughly $\lambda$, there is a ruling against $C$.
\end{itemize}
Theorem~\ref{thm:intro-approximate} implies that $f$ must be close to an oligarchy or to a constant function, and $\lambda \approx \EE[f]$.

Similarly, Theorem~\ref{thm:intro-one-sided} shows that $f$ has to be close to a monotone junta if the second condition above is replaced with either of the following two conditions:
\begin{itemize}
\item 
Suppose that there is a ruling against $A$. If there is also a ruling against $B$, then with probability \emph{at least} $q$, there is a ruling against $C$.
\item
Suppose that there is a ruling against $A$. Then with probability \emph{at least} $\lambda$, there is a ruling against $C$.
\end{itemize} 

Thus our results do not only strengthen robust judgement aggregation in a quantitative way, but also in a qualitative way. 

\subsection{Our techniques}
Our main result, Theorem~\ref{thm:intro-main}, easily follows from Theorem~\ref{thm:intro-approximate},
which is our main technical result. Below we sketch the proof idea of Theorem~\ref{thm:intro-approximate}
(the proofs of Theorem~\ref{thm:intro-m>2} and Theorem~\ref{thm:intro-one-sided} follow similar lines).

Suppose $f,g$ are functions as in Theorem~\ref{thm:intro-approximate} that satisfy $\T f \approx \lambda g$. The first step of the proof
is to show that the function $g$ is close to a junta $h$, i.e.\ to a function depending only on constantly many variables. To get some
intuition for that, note that if $\T$ was the standard noise operator, then the function $\T f$ has exponentially decreasing tail
and hence it is very concentrated on its low Fourier levels.
When the operator $\T$ is the one-sided noise operator, one can
actually use similar reasoning to claim that $g$ again has an exponentially decaying tail (as observed by Lifshitz~\cite{Lifshitz}).
Since $g$ is Boolean and $g\approx \frac{1}{\lambda} \T f$, this observation
would then allow us to use structure theorems on Boolean functions (more specifically, a result of Bourgain~\cite{Bourgain} or
of Kindler and Safra~\cite{KS}) to conclude that $g$ is (close to) a junta.

Thus, ignoring some (important) technical details, one can think of $g$ as a function of a constant number of variables, and since
the proximity parameter between $\T f$ and $g$ can be taken to be very small (even in comparison to the number of variables
$g$ depends on), one may as well think of it as being $0$. In other words, the problem essentially boils down to studying
exact solutions to the equation $\T f = g$ when $n$ is constant, which is where Theorem~\ref{thm:intro-Tf=g} enters the picture.
Using it, we prove the structural result on $g$; getting the structural result on $f$ then amounts to averaging $f$ over coordinates
that $g$ does not depend on (since those could be thought of as a ``source of randomness'' as in the example of $f_2(x)$ above),
and then inverting the operator $\T$ acting on functions of a constant number of variables.

This ends the informal description of our techniques. We remark that actually composing the two components, namely the
approximation by junta and the solution to the exact equation, is more subtle and requires some care. We also remark
that in the case of one-sided error (Theorem~\ref{thm:intro-one-sided}), the Fourier-analytic argument alluded to above, which implies
that $g$ is close to a junta, does not seem to be applicable. We thus present an alternative, more combinatorial
argument that captures this case as well.
%


\subsection{Related work}
\subsubsection{Quantitative social choice theory}
Social choice theory studies how to aggregate the opinion of a number of agents.
Already in the 18th century, Condorcet~\cite{Condorcet:85} noted that natural aggregation rules often result in paradoxes. A large body of work has been developed in economics since the middle of the 20th century, in which it was shown that natural aggregation tasks have \emph{no} good aggregation functions.
The two most famous results in this area are Arrow's impossibility theorem~\cite{Arrow:50,Arrow:63} and the Gibbard--Satterthwaite (GS) manipulation theorem~\cite{Gibbard:73,Satterthwaite:75}. The questions of aggregation re-emerged in the context of multi-agent systems in computer science, where the hope was that either the probability of paradoxical outcome is small, or there is computational difficulty in arriving at a paradoxical outcome, see e.g.~\cite{BaToTr:89b} and the survey \cite{Faliszewski_Procaccia_2010}. A sequence of results showed that this is not the case by proving strong and general quantitative versions of both Arrow's Theorem~\cite{Kalai:02,Mossel:12,Keller:12,MoOlSe:13} and the GS Theorem~\cite{FrKaNi:08,FKKN:11,IsKiMo:12,MosselRacz:15}, as well as results interpolating the two theorems~\cite{Falik2014}.

The main motivation for the problem discussed in this paper is Judgement Aggregation.
This problem is considered in a fascinating paper in the Yale Law Review by Kornhauser and Sager~\cite{KornhauserSager:86}. In particular, toward the end of the paper, the authors considered legal cases, where the judgement aggregation function $f$ should satisfy $f(x \land y) = f(x) \land f(y)$. They observe that this does not hold when
$f$ is the majority vote on three opinions.

The failure of Majority, which mirrors the failure of Majority in ranking that was observed by Condorcet, led to work by List and Pettit~\cite{LP02,LP04},
who characterized exactly the functions $f$ that are AND-homomorphisms, i.e., oligarchies.
The question of judgement aggregation has attracted much attention in philosophy, social epistemology,
and artificial intelligence~\cite{Pigozzi:16}.
In the context of multi-agent systems, when the number of agents is large, it makes sense to ask if it is possible to achieve approximate judgment aggregation.
Our results show that this can only be achieved in the obvious way, i.e., by almost-oligarchies.

We note that the 
study of judgement aggregation extends well beyond AND-homomorphisms, to other types of homomorphisms, and indeed such a theory of polymorphisms
is well-developed~\cite{NP08,DH09,DH10a,DH10b,DH10c,SX18}. We leave if for future work to investigate robust versions of these results.

\subsubsection{Property testing}
The work of Blum, Luby and Rubinfeld \cite{BLR} has been extended to more general settings by various authors. For example, Moore and Russell~\cite{MR15} and Gowers and Hatami~\cite{GH17} considered approximate representations of finite groups. Other authors had considered infinite groups, see for example the survey of Hyers and Rassias~\cite{HR92}. Theorem~\ref{thm:intro-main} generalizes Blum--Luby--Rubinfeld in a different direction, to approximate polymorphisms,
where there is no group structure.

We remark that Theorem~\ref{thm:intro-main} implies that the soundness of a property testing algorithm of Parnas, Ron and Samorodnitsky~\cite{PRS02}, whose goal is to test whether the input function is a dictatorship (they also consider the more general problem of testing whether the input function is close to an AND function).
The authors proposed the following natural tester, which they were unable to analyze: test that $f$ has expectation $1/2$ and satisfies $f(x \land y) = f(x) \land f(y)$. Instead, they proposed a somewhat less natural tester. Our results
imply that their original tester also works.

It is interesting to explore if there is a relationship between our results and different
notions of approximate polymorphisms that appear in the literature~\cite{BCR15,BG20}, which were used to prove hardness of approximation results.

\paragraph{Organization.} We formally state our results in Section~\ref{sec:main-results}. After some preliminaries in Section~\ref{sec:prel}, we prove the various results in Sections~\ref{sec:main_2func}--\ref{sec:large_noise}. We close the paper by stating some open questions in Section~\ref{sec:open-questions}.


\section{Main Results} \label{sec:main-results}
Let $\mu_p$ denote the $p$-biased measure on $\power{n}$.
Let $L_2(\power{n}, \mu_p)$ be the space of real-valued functions
on $\power{n}$ equipped with the inner product $\inner{f}{g} = \Expect{{\bf x}\sim\mu_p^n}{f({\bf x})g({\bf x})}$.

\begin{definition}
  For $q\leq p$, the distribution $({\bf y},{\bf x})\sim \mathbb{D}(q,p)$ over $\power{n}\times\power{n}$
  is the distribution in which for each $i\in[n]$ independently, we have $\Prob{}{{\bf y}_i = 1} = q$ and $\Prob{}{{\bf x}_i = 1} = p$,
  and always ${\bf y}_i\leq {\bf x}_i$.
\end{definition}
One way to generate inputs $({\bf y},{\bf x})\sim\mathbb{D}(q,p)$ that will be useful for us is as follows.
Sample ${\bf x}\sim\mu_p$ and ${\bf z}\sim\mu_{q/p}$ independently, and output $({\bf x}\land {\bf z},{\bf x})$. (Here $\land$ refers to the coordinatewise
AND operation, i.e.\ for each $i\in[n]$ we have $({\bf x}\land {\bf z})_i = {\bf x}_i\land {\bf z}_i$.)

For $\rho\in (0,1)$, define the one-sided operator $\Tdown{p}{\rho p}$ as follows.
For any function $f\colon(\power{n},\mu_{\rho p})\to\power{}$,
the function $\Tdown{p}{\rho p} f \colon(\power{n},\mu_p)\to\power{}$ is given by
\[
\Tdown{p}{\rho p} f(x) = \cExpect{({\bf y},{\bf v})\sim \mathbb{D}(\rho p, p)}{{\bf v}=x}{f({\bf y})}.
\]
Equivalently, we have $\Tdown{p}{\rho p} f(x) = \Expect{{\bf z}\sim\mu_{\rho}}{f(x\land {\bf z})}$.

Next, we shall discuss the spectrum
(eigenvectors and eigenvalues) of the operator $\Tdown{p}{\rho p}$. We remark that throughout this section, the parameters $p$ and
$\rho$ should be thought of as constants bounded away from $0,1$.

For each $S\subseteq[n]$, the function ${\sf AND}_S\colon\power{n}\to\power{}$ defined by ${\sf AND}_{S}(x) = \bigwedge_{i\in S}{x_i}$
is an eigenvector of $\Tdown{p}{\rho p}$ with eigenvalue $\rho^{\card{S}}$ (we omit the easy proof).
Moreover, these are the only eigenvectors of $\Tdown{p}{\rho p}$ that are Boolean valued.\footnote{To see that, note that any function $f$ can be written as a linear combination of AND functions, and if $f$ is an eigenvector then all of these
ANDs are of the same size, say with coefficients $\alpha_1,\ldots,\alpha_m$. Considering the value of $f$ on the minterms of these ANDs,
one concludes that all of the $\alpha$'s must be $1$, and considering the value of $f$ on the all-$1$ string, one concludes that $m=1$.}
Our goal in this paper is to find a robust version of this characterization of the Boolean eigenvectors of $\Tdown{p}{\rho p}$.

We say that a function
$f$ is an $\eta$-approximate eigenvector with eigenvalue $\lambda$, if $\norm{\Tdown{p}{\rho p}f - \lambda f}_1\leq \eta$ (here and throughout
the paper we will consider the $\ell_1$ norm with respect to the $\mu_p$ measure).
What structure do Boolean, approximate eigenvectors of $\Tdown{p}{\rho p}$ must have? A natural conjecture
would be that any such function has to be close to an exact eigenvector, which by Booleanity would have to be an AND-function
over $\approx \log_{\rho}(\lambda)$ variables. However, this conjecture turns out to be false, as the following example demonstrates.

Set $p = \rho = \half$, and consider the function $f$ defined by $f(x) = x_1\lor x_2$ for inputs
whose hamming weight is $n/2\pm \sqrt{n\log n}$, and by $x_1\xor x_2$ for the rest of the inputs.
It is easy to see that $f$ is far from any AND function on the $\mu_{1/2}$ measure, and
we argue that $\norm{\Tdown{1/2}{1/4} f - \half f} = o(1)$. By definition, for each $x$,
$\Tdown{1/2}{1/4} f(x)$ is the probability that picking ${\bf z}\sim\mu_{1/2}$, we have $f(x\land {\bf z})=1$.
Except with probability $o(1)$, the hamming weight of $x,x\land {\bf z}$ is roughly $n/2, n/4$ respectively,
and we focus only on this event. In this case if $f(x) = 0$ then $x_1=x_2 = 0$ and thus clearly
$f(x\land {\bf z}) = 0$, and otherwise $(x_1,x_2)\neq (0,0)$ and therefore $f(x\land {\bf z}) = x_1{\bf z}_1\xor x_2{\bf z}_2$
is a uniform bit, i.e.\ $f(x\land {\bf z}) = 1$ with probability $\half$.

\begin{remark}
  It is worth noting that for any constant $\lambda>0$, there are approximate eigenvectors
  of $\Tdown{1/2}{1/4}$ with eigenvalue $\lambda$ (not only for $\lambda = 2^{-k}$). Indeed,
  the function $f$ that is constantly $1$ on inputs with Hamming weight $n/2 \pm \sqrt{n\log n}$,
  and on each other point $x$ independently, we take $f(x) = 1$ with probability $\lambda$, is (with probability $1-o(1)$) an
  approximate eigenvector with eigenvalue $\lambda$.
\end{remark}

\subsection{The basic two-function version}
Since the previous example is essentially composed of two different functions (one around the middle slice and the other around the $n/4$-slice),
it makes sense to consider the two-function version of the approximate eigenvector problem. Namely, let $f\colon(\power{n},\mu_{p\rho})\to\power{}$,
$g\colon(\power{n},\mu_p)\to\power{}$, and $\lambda\in(0,1)$ be such that $\norm{\Tdown{p}{\rho p} f - g}_1\leq \eta$. What
can we say about $f$ and $g$? We note that in this case, even the exact version of the problem, i.e.\ determining
which functions can satisfy $\Tdown{p}{\rho p} = g$, is already unclear (and in fact, as it turns out, understanding
solutions to the exact problem is a key step in solving the approximate problem).

The version of the problem we will consider is actually more general and allows the function $f$ to take values in $[0,1]$.
It turns out that the structure of the solutions heavily depends on $\rho$, and we consider three different regimes:
$0<\rho < \half$, $\rho = \half $ and $\half < \rho < 1$. We remark that all of the results apply in particular for the
original approximate eigenvector problem, i.e.\ the case $f=g$.

The first range, $0 < \rho < \half$, is the simplest, and we have the following result.
\begin{thm}\label{thm:main_2func_smallrho}
  For any $\zeta>0$ there is $J\in\mathbb{N}$ such that for any $\eps>0$ there is $\eta > 0$
  such that the following holds.
  Let $p\in[\zeta,1-\zeta]$, $\rho\in[\zeta,\half-\zeta]$ and $\lambda\in[\zeta,1]$,
  and let $f\colon\power{n}\to[0,1]$ and $g\colon\power{n}\to \power{}$ satisfy $\norm{\Tdown{p}{\rho p} f - \lambda g}_1\leq \eta$.
  Then either $f,g$ are $\eps$-close to the zero function, or there is a set $T\subseteq[n]$ of size at most $J$ such that:
  \begin{itemize}
    \item $g$ is $\eps$-close to ${\sf AND}_T$.
    \item After averaging outside $T$, $f$ is $\eps$-close to $\rho^{-|T|}\lambda\cdot{\sf AND}_T$.
    More precisely, the function $\tilde{f}\colon \power{T}\to[0,1]$ given by $\tilde{f}(x) = \Expect{{\bf y}\sim\mu_{\rho p}^{[n]\setminus T}}{f({\bf y},x)}$
    is $\eps$-close to $\rho^{-|T|}\lambda\cdot{\sf AND}_T$.
  \end{itemize}
\end{thm}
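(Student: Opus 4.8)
The plan is to first reduce to the case where $g$ is a genuine junta (on a bounded set $T$), then reduce to an exact equation $\Tdown{p}{\rho p} f = \lambda g$ on $\power{T}$ using the exact characterisation (Theorem~\ref{thm:intro-Tf=g}), and finally invert the operator. The starting point is the Fourier-analytic observation sketched in the introduction: since $\Tdown{p}{\rho p}$ is a one-sided noise operator with $\rho < 1/2$, the image $\frac{1}{\lambda}\Tdown{p}{\rho p} f$ has rapidly decaying Fourier tails (by the argument attributed to Lifshitz~\cite{Lifshitz}), and since $g$ is Boolean and $\eta$-close to this image in $L_1$, applying the Kindler--Safra / Bourgain structure theorem~\cite{KS,Bourgain} yields a junta $h = {\sf AND}_T$-type function --- but in fact the correct way to phrase it is that $g$ is $\eps'$-close to some Boolean junta $h\colon\power{T}\to\power{}$ on a set $T$ of size at most some $J = J(\zeta,\eps')$. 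I would first establish this, with $\eps'$ chosen much smaller than $\eps$ and $\eta$ chosen much smaller than anything depending on $J$.

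**Next I would pass to the exact equation.** Having localised $g$ to $T$, I average $f$ outside $T$ to get $\tilde f\colon\power{T}\to[0,1]$ as in the statement; since $\Tdown{p}{\rho p}$ commutes with averaging over coordinates outside $T$ and $h$ depends only on $T$, we get $\norm{\Tdown{p}{\rho p}^{T}\tilde f - \lambda h}_1 \leq \eps' + \eta'$ for a suitable small error, where now everything lives on the bounded cube $\power{T}$. On a cube of bounded dimension, an $L_1$-approximate solution to a linear equation with a bounded-away-from-zero operator (note $\Tdown{p}{\rho p}$ restricted to functions on $\power{T}$ is invertible, with inverse norm depending only on $\rho$ and $|T|$) must be genuinely close to an \emph{exact} solution: one takes $f^\star = (\Tdown{p}{\rho p}^T)^{-1}(\lambda h)$ and checks $\norm{\tilde f - f^\star}_1$ is small; then $f^\star$ together with $h$ is an exact solution of $\Tdown{p}{\rho p}f^\star = \lambda h$ with $h$ Boolean --- except that $f^\star$ need not be $[0,1]$-valued, only close to it. I would handle this by noting $f^\star$ is close in $L_\infty$ (on a bounded cube $L_1$-closeness of a fixed-dimension function upgrades to $L_\infty$-closeness up to constants) to the $[0,1]$-valued $\tilde f$, hence $h$ is within a tiny additive slack of being exactly achievable by a bona fide $[0,1]$-valued function; a limiting/perturbation argument (or directly checking that Theorem~\ref{thm:intro-Tf=g} is robust to such slack, using that $\rho<1/2$ forces the only exact solutions to be honest ANDs with $f=g$) lets me invoke Theorem~\ref{thm:intro-Tf=g}. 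The key point specific to $\rho<1/2$: among the exact solutions $f=\bigwedge_i\bigoplus_{j\in S_i}x_i$, $g=\bigwedge_i\bigvee_{j\in S_i}x_i$, the eigenvalue of $\Tdown{p}{\rho p}$ on the corresponding level would force $\rho^{|S|}$-type constraints that, combined with $f$ being $[0,1]$-valued and the approximation being in the right regime, collapse all $S_i$ to singletons, giving $g = {\sf AND}_T$ (after possibly shrinking $T$) and $\tilde f = \rho^{-|T|}\lambda\cdot{\sf AND}_T$.

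**I expect the main obstacle** to be gluing the two halves together cleanly --- precisely the ``subtle'' point flagged in the introduction. The junta-approximation step produces a set $T$ whose size $J$ we cannot control before fixing $\eps'$, yet the error $\eta$ we are allowed must be chosen \emph{after} $J$ (so that $\eta\cdot\rho^{-J}$, the blow-up from inverting $\Tdown{p}{\rho p}$ on $\power{T}$, is still small); the quantifier order in the theorem statement ($\forall\zeta\ \exists J\ \forall\eps\ \exists\eta$) is tailored exactly to make this work, and I would have to be careful that the junta size $J$ depends only on $\zeta$ and the \emph{target} $\eps$ (not on $\eta$), which requires feeding a fixed $\eps' = \eps'(\eps)$ into Kindler--Safra up front and only then choosing $\eta$ small. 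A secondary nuisance is verifying invertibility of $\Tdown{p}{\rho p}$ on $\power{T}$ with quantitative control: this follows since in the AND-basis its eigenvalues are $\rho^{|S|} \geq \rho^{|T|} \geq \zeta^{J} > 0$, so the inverse has operator norm at most $\zeta^{-J}$ times a change-of-basis factor depending only on $p$ and $J$ --- all constants, since $p,\rho$ are bounded away from $0$ and $1$ and $J$ is fixed. Finally, the zero-function alternative arises exactly when the junta $h$ is the constant $0$ (equivalently $\lambda$ is forced to be incompatible with any nonconstant exact solution at the available scales), and I would dispatch that case separately at the start.
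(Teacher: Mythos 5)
Your high-level plan --- reduce to a junta, pass to a bounded cube, use an exact classification, and invert the operator --- matches the paper's strategy, but there is a genuine gap right at the first step, and it is exactly the gap that the paper spends most of Section~\ref{sec:large_noise} (and the machinery of Section~\ref{sec:almost_mono}) filling.

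You propose getting the junta by combining the Fourier-tail bound of Lemma~\ref{lem:vanishing_tail} with Kindler--Safra or Bourgain. But Lemma~\ref{lem:vanishing_tail} only shows $W_{\geq k}[g] \leq 2\lambda^{-2}(\eta + \rho^k)$, which for fixed $k$ never drops below the constant $2\lambda^{-2}\rho^k$ no matter how small $\eta$ is. So to make the level-$m$ tail smaller than the $\delta = C(m,\zeta)\eps$ threshold in Theorem~\ref{thm:KS}, you are forced to take $m$ on the order of $\log(1/\eps)$, and then the junta size $J(m,\zeta)$ from Kindler--Safra depends on $\eps$. Your proposal in fact concedes this (``$J$ depends only on $\zeta$ and the target $\eps$''), but that is precisely the thing the theorem forbids: the quantifier order is $\forall\zeta\ \exists J\ \forall\eps\ \exists\eta$, so $J$ must be chosen before $\eps$ is revealed. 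What bridges the gap in the paper is Lemma~\ref{lem:strong_junta_approx_smallrho}, whose engine is Claim~\ref{claim:low_inf_gen_smallrho}: a combinatorial argument (coupled downward random walks, plus Huang's sensitivity theorem to find many simultaneously sensitive coordinates) showing that all \emph{set}-influences $I_M[g]$ for $|M|=m(\zeta)$ are small, which, fed into the generalized Bourgain lemma (Lemma~\ref{lem:Bourgain_gen}), forces $W_{\geq m(\zeta)}[g]$ to be small for a \emph{constant} $m(\zeta)$, after which Kindler--Safra delivers a $J(\zeta)$-junta. That whole chain is absent from your plan; nothing in a naive tail-plus-KS argument produces it, and the case split in Claim~\ref{claim:low_inf_gen_smallrho} genuinely uses $\rho \le \tfrac12 - \zeta$ in case (b).

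A secondary, smaller issue: you cite Theorem~\ref{thm:intro-Tf=g} for the exact case, but that is the $p=\rho=\tfrac12$ statement; for general $p$ and $\rho<\tfrac12$ the paper proves the exact (in fact near-exact, $L_\infty$) classification directly as Lemma~\ref{lemma:main_exact_smallrho}, via a combinatorial analysis of minterms rather than via inversion. Your inversion idea would probably also work once restricted to a bounded cube, and the blow-up $\zeta^{-J}$ you compute for the inverse norm is the right order of magnitude, but you would still need to prove the $(p,\rho)$-analogue of the exact classification and carry a careful perturbation argument from $L_1$ on the full cube to $L_\infty$ on $\power{T}$ (the paper does this via Markov's inequality over restrictions $\beta$, obtaining a family of per-$\beta$ solutions and then arguing they all determine the same AND). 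None of that is a wrong turn, but it is considerably more delicate than ``a limiting/perturbation argument''. The decisive missing ingredient, though, is the strong junta lemma.
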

\noindent (This range corresponds to the operators $\T^{(m)}$ mentioned in Theorem~\ref{thm:intro-m>2}.)

In the second range, $\rho = 1/2$, the structure of $f$ and $g$ may be more complicated (we have already seen an example
in this range where $g = {\sf OR}_T$ and $f = {\sf XOR}_T$ for $T$ of size $2$).
\begin{definition}
  A function $g\colon\power{n}\to\power{}$ is called an AND-OR function of width $m$ if
  there are disjoint sets $A_1,\ldots,A_m$ such that $g(x) = \bigwedge_{i\in[m]}\bigvee_{j\in A_i} x_j$.
\end{definition}
\begin{definition}
   A function $g\colon\power{n}\to\power{}$ is called an AND-XOR function of width $m$ if
  there are disjoint sets $A_1,\ldots,A_m$ such that $g(x) = \bigwedge_{i\in[m]}\bigoplus_{j\in A_i} x_j$.
\end{definition}

\begin{thm}\label{thm:main_2func}
  For any $\zeta>0$ there is $m\in\mathbb{N}$ such that for any $\eps>0$ there are $\eta>0$, $J\in\mathbb{N}$ such that the following holds
  for all $p\in[\zeta,1-\zeta]$ and $\lambda\in[\zeta,1]$.
  If $f\colon\power{n}\to[0,1]$ and $g\colon\power{n}\to \power{}$ satisfy $\norm{\Tdown{p}{p/2} f - \lambda g}_1\leq \eta$.
  Then there is a set $T\subseteq[n]$ of size at most $J$, and a partition $T = A_1\cup\dots \cup A_r$ for $r\leq m$ such that either $f,g$ are $\eps$-close to the zero function, or:
  \begin{itemize}
    \item $g$ is $\eps$-close to $\bigwedge_{i\in[r]}\bigvee_{j\in A_i} x_j$ (i.e.\ to an AND-OR function of width at most $m$).
    \item After averaging outside $T$, $f$ is $\eps$-close to a multiple of $\bigwedge_{i\in[r]}\bigoplus_{j\in A_i} x_j$.
    More precisely, the function $\tilde{f}\colon \power{T}\to[0,1]$ given by $\tilde{f}(x) = \Expect{{\bf y}\sim\mu_{p/2}^{[n]\setminus T}}{f({\bf y},x)}$
    is $\eps$-close to $2^r\lambda \cdot\bigwedge_{i\in[r]}\bigoplus_{j\in A_i} x_j$.
  \end{itemize}
\end{thm}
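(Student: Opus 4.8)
The plan follows the two-step strategy sketched in Section~1.4: first reduce the Boolean function $g$ to a junta, then solve the resulting essentially exact equation on the junta coordinates via the classification of exact solutions. The engine of the first step is an observation that substitutes for the missing orthogonality of the eigenspaces of $\Tdown{p}{p/2}$: in the biased Fourier bases the operator $\Tdown{p}{\rho p}$ is diagonal with exponential damping, namely it sends each $\mu_{\rho p}$-character $\chi_S$ to $\theta^{|S|/2}\chi_S$ (now a $\mu_p$-character), where $\theta = \rho(1-p)/(1-\rho p)$ is bounded away from $0$ and $1$ when $p$ and $\rho$ are; this is immediate from the one-coordinate computation together with the fact that $\Tdown{p}{\rho p}$ is a tensor power. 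Consequently, for any $f\colon\power n\to[0,1]$ the level-$\geq d$ $\mu_p$-Fourier weight of $\Tdown{p}{\rho p}f$ is at most $\theta^d\norm{f}_2^2 \leq \theta^d$, exactly as for the two-sided noise operator.

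\emph{First step.} From $\norm{\Tdown{p}{p/2}f-\lambda g}_1\leq\eta$ and $\lambda\geq\zeta$, the observation above shows that, once $d=d(\eps',\zeta)$ is large enough and then $\eta$ small enough, $g$ is $O(\eps')$-close in $L_2(\mu_p)$ to a function of degree at most $d$. The biased junta theorem of Kindler--Safra~\cite{KS} (or Bourgain~\cite{Bourgain}), applied to the Boolean function $g$, then yields a set $T$ of at most $J=J(\eps',d)$ coordinates and a junta $h$ on $T$ with $\norm{g-h}_1\leq F(\eps')$, where $F(\eps')\to 0$ as $\eps'\to 0$. (This is essentially Lifshitz's argument~\cite{Lifshitz}.)

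\emph{Second step.} Let $\tilde f\colon\power T\to[0,1]$ be the average of $f$ over the coordinates outside $T$, as in the theorem statement. Since $\Tdown{p}{p/2}$ acts coordinate-wise and the $([n]\setminus T)$-coordinates of $x\wedge z$ have law $\mu_{p/2}$, averaging the hypothesis and the relation $g\approx h$ over $[n]\setminus T$ gives $\norm{\Tdown{p}{p/2}^{(T)}\tilde f-\lambda h}_1\leq\eta+\lambda F(\eps')=:\eta'$, where $\Tdown{p}{p/2}^{(T)}$ is the same operator acting on functions of the boundedly many coordinates of $T$. Thus we are within $\eta'$ of the exact equation $\Tdown{p}{p/2}^{(T)}\tilde f=\lambda h$, which is $p$-independent --- since $\Tdown{p}{p/2}{\sf AND}_S=2^{-|S|}{\sf AND}_S$ and $\Tdown{p}{p/2}(\bigoplus_{j\in A}x_j)=\tfrac12\bigvee_{j\in A}x_j$ regardless of $p$ --- and is classified by Theorem~\ref{thm:intro-Tf=g}: either $\tilde f=h=0$, or $h=\bigwedge_{i\leq r}\bigvee_{j\in A_i}x_j$ with $r\leq\log_2(1/\lambda)\leq\log_2(1/\zeta)=:m$ and $\tilde f=2^r\lambda\bigwedge_{i\leq r}\bigoplus_{j\in A_i}x_j$. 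A compactness argument over the bounded-dimensional space $\power T$ upgrades this to the approximate statement: if some sequence $\eta'_k\to0$ admitted $(\tilde f_k,h_k)$ on $\power{[J]}$ (relabel $T$ as $[J]$, padding with dummy coordinates) and parameters $p_k\to p$, $\lambda_k\to\lambda$ with $\norm{\Tdown{p_k}{p_k/2}^{([J])}\tilde f_k-\lambda_k h_k}_1\leq\eta'_k$ but $h_k$ $\eps$-far from every AND-OR of width $\leq m$ (or $\tilde f_k$ far from the corresponding multiple of an AND-XOR), then passing to a subsequence along which $h_k$ is constant and $\tilde f_k$ converges would produce an exact solution contradicting the classification.

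Putting the pieces together, $g$ is $(F(\eps')+\eps)$-close to $\bigwedge_i\bigvee_{j\in A_i}x_j$ and $\tilde f$ is $\eps$-close to $2^r\lambda\bigwedge_i\bigoplus_{j\in A_i}x_j$, while the degenerate branch follows from $h=0$ since $\Tdown{p}{p/2}^{(T)}$ is invertible, so $\tilde f\approx0$. Fixing $m=\lceil\log_2(1/\zeta)\rceil$, then choosing $\eps'$ small (which enlarges $J$) and finally $\eta$ small makes every accumulated error fall below the target $\eps$. I expect the main difficulty to be this composition rather than either ingredient on its own: one must run the junta theorem with a very small closeness parameter $\eps'$ --- accepting that $J$ and the admissible $\eta$ degrade accordingly --- so that the averaged equation on $T$ is genuinely a small perturbation of the exact equation, and one must transfer the conclusion about $\tilde f$ (not about $f$ itself, which stays uncontrolled off the junta) back through the compactness step while keeping the two measures $\mu_{p/2}^{T}$ and $\mu_p^{T}$ distinct.
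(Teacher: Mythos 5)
Your first step — bounding the Fourier tail of $g$ via the spectral contraction of $\Tdown{p}{p/2}$ and then invoking Bourgain/Kindler--Safra to approximate $g$ by a $J(\eps')$-junta on a set $T$ — is essentially the paper's Lemma~4.2 plus Theorem~3.3, and is fine. The gap is in the second step, specifically in the sentence where you replace $g$ by the junta $h$ and pass to the averaged equation $\|\Tdown{p}{p/2}^{(T)}\tilde f-\lambda h\|_1\leq\eta+\lambda F(\eps')$. The compactness threshold for the constant-dimensional problem on $\power{[J]}$ — equivalently, the $\eta_0$ of Lemma~4.1, which is $\zeta 4^{-J^2-O(J)}$ — shrinks \emph{doubly} fast as $J$ grows, while $J(\eps')$ already grows quasi-polynomially in $1/\eps'$. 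Consequently the inequality $\lambda F(\eps')\leq\eta''(J(\eps'),\eps)$ that you need has no solution: decreasing $\eps'$ decreases the left side like $\eps'$ but decreases the right side like $4^{-J(\eps')^2}$, which is incomparably smaller. Choosing $\eta$ last cannot save this, because the dominant term $\lambda F(\eps')$ on the left is already fixed once $\eps'$ is. Your closing paragraph half-senses the tension (``$J$ and the admissible $\eta$ degrade accordingly'') but the scheme as written cannot be made to close.

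The paper sidesteps this by never inserting the junta approximation error into the eigenvalue equation. Rather than averaging $g$, it fixes the restriction $g_\beta(\alpha)=g(\alpha,\beta)$ and the conditional average $\tilde f_\beta(\alpha)=\EE_{\beta'\leq\beta}f(\alpha,\beta')$, then applies Markov to $\EE_\beta\|\T\tilde f_\beta-\lambda g_\beta\|_1=\|\T f-\lambda g\|_1\leq\eta$: for a $1-\eps$ fraction of $\beta$ the restricted $L_\infty$ error is $O(\eta/\eps)$ \emph{up to a $2^J$ factor}, and since $\eta$ is chosen \emph{after} $J$, this can be forced below the $\eta_0(\zeta,J)$ threshold of Lemma~4.1. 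The junta theorem is then used only for the weaker purpose of making the $g_\beta$'s close to one another across $\beta$ (with error merely $\eps^2$), and a pruning lemma (Lemma~4.3, truncating wide ORs) upgrades ``mostly close AND-ORs'' to a single AND-OR $\psi$; this pruning step is also what lets the paper then average $f$ against $\psi$ rather than against the possibly-$\beta$-dependent AND-XORs. If you want to salvage your outline, replace ``average $g$ over $\beta$ and compare to $h$'' with ``restrict at $\beta$ and apply Markov to $\T f-\lambda g$'', keeping $\eta$ as the last quantifier; the compactness argument, as a non-constructive stand-in for Lemma~4.1, would then go through on each good $\beta$-slice.
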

Both Theorem~\ref{thm:main_2func_smallrho} and Theorem~\ref{thm:main_2func} can be shown to be qualitatively tight.
For Theorem~\ref{thm:main_2func}, for example, any pair of functions $f,g$ where $g$ is an AND-OR function
and $f$ is the corresponding AND-XOR function is an exact solution. To see that some averaging is needed
to get a structure for $f$, note that given a pair of approximate solutions $f,g$, one may
sub-sample $f$, i.e.\ change the value on each $x$ such that $f(x) = 1$ with probability $1/2$,
to get a new approximate solution with $\lambda/2$, and $f$ has no apparent structure (other than being a multiple
of AND-XOR after averaging).

Quantitatively, the dependence of $\eta$ on $\eps$ in Theorem~\ref{thm:main_2func_smallrho}
is quasi-polynomial $\eta = \exp(-\Theta_{\zeta}(\log^2(1/\eps)))$.
In contrast, the dependence in Theorem~\ref{thm:main_2func} is exponentially worse,
i.e.\ $\eta = \exp(-\exp(\Theta_{\zeta}(\log^2(1/\eps))))$. The source of this difference is that in
the case of Theorem~\ref{thm:main_2func_smallrho} (and also in Theorem~\ref{thm:basic} and Theorem~\ref{thm:mono_and})
we are able to prove stronger approximation by junta results than in Theorem~\ref{thm:main_2func}.
Namely, we show that there is $J(\zeta)$ (independent of the proximity to junta parameter $\eps$), such
that if $\eta$ is a sufficiently small function of $\eps$, then $g$ is $\eps$-close to a $J$-junta.
In the case of Theorem~\ref{thm:main_2func}, we are forced to allow the size of the junta $J$ to also depend on $\eps$.
As far as we know, in both cases the dependence of $\eta$ on $\eps$ could be much better, perhaps even polynomial.

In the third range of parameters, $\half < \rho < 1$, the solutions to the problem have a richer structure. It can be shown,
for example, that there are $\rho\in(\half,1)$, $\lambda\in(0,1)$ and a function $f\colon\power{n}\to[0,1]$ such that $f$ and $g(x) = {\sf Maj}(x_1,x_2,x_3)$
are an exact solution to $\Tdown{1/2}{\rho/2} f = \lambda g$. In this case we only show a relatively weak structure, namely that $g$ is close
to a monotone junta (see Theorem~\ref{thm:one_sided_error}). We remark that in order to get a stronger structure, one would only need
to classify all exact solutions to the equation $\Tdown{p}{\rho p} f = \lambda g$ for $\rho > 1/2$.

\subsection{Special cases}
We next present our result for a few special cases of interest, in which we are able to prove a stronger structure.
The first result is concerned with the case when the approximate eigenvalue is large:
\begin{thm}\label{thm:basic}
  For every $\zeta,\eps>0$ there is $\eta > 0$
  such that the following holds for any $\rho, p\in[\zeta,1-\zeta]$ and $\lambda\geq \rho + \zeta$.
  If $f\colon(\power{n},\mu_{\rho p})\to[0,1]$ and $g\colon(\power{n},\mu_p)\to\power{}$ satisfy
  $\norm{\Tdown{p}{\rho p} f - \lambda g}_1\leq \eta$, then $g$ is $\eps$-close to a constant function
  $\Gamma \in \{0,1\}$, and $\Expect{{\bf x}\sim\mu_{\rho p}}{f({\bf x})}$ is $\eps$-close to $\lambda \Gamma$.
\end{thm}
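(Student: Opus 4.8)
The plan is to first show that the Boolean function $g$ must be close to a junta of bounded size, then pass to an equation on a bounded number of coordinates, and finally solve the \emph{exact} equation $\Tdown{p}{\rho p}F=\lambda G$ in the regime $\lambda>\rho$, gluing the two parts together by a compactness argument. The starting point for the first step is that the one-sided operator strictly contracts every Fourier level: writing functions on $\mu_{\rho p}$ and on $\mu_p$ in the respective biased orthonormal bases $\{\chi_S\}$, one has $\Tdown{p}{\rho p}\chi_S=\sigma^{|S|}\chi_S$ with $\sigma=\sqrt{\rho(1-p)/(1-\rho p)}$, and since $\rho\le 1-\zeta$ a direct computation gives $\sigma^2\le (1-\zeta)^2/(1-\zeta+\zeta^2)<1$, a bound depending only on $\zeta$. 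Consequently, for $f$ with $0\le f\le 1$ (so $\norm{f}_2^2\le \Expect{\mu_{\rho p}}{f}\le 1$) the Fourier weight of $\Tdown{p}{\rho p}f$ above level $k$ is at most $\sigma^{2k}$.

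Next I would turn this into an approximation-by-junta statement. Since $g$ is Boolean and $\lambda^{-1}\Tdown{p}{\rho p}f$ takes values in $[0,\zeta^{-1}]$, the $\ell_1$ hypothesis $\norm{\Tdown{p}{\rho p}f-\lambda g}_1\le\eta$ upgrades to an $\ell_2$ bound $\norm{g-\lambda^{-1}\Tdown{p}{\rho p}f}_2^2\le\eta/\zeta^2$; combined with the tail estimate, the Fourier weight of $g$ above level $k$ is at most $2\sigma^{2k}/\zeta^2+2\eta/\zeta^2$. Fixing $k=k(\zeta,\eps)$ so that $2\sigma^{2k}/\zeta^2$ is tiny and then taking $\eta$ small enough, $g$ has negligible Fourier mass above a constant level, so by the $p$-biased form of the Bourgain / Kindler--Safra structure theorem (using that $p$ is bounded away from $0,1$), $g$ is $\eps'$-close in $\mu_p$ to a Boolean junta $h$ on a set $T$ with $|T|\le J=J(\zeta,\eps)$, where $\eps'$ can be made as small as desired by shrinking $\eta$; crucially $J$ and the attained closeness depend only on $\zeta,\eps$, not on $n$.

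Now average $f$ over the coordinates outside $T$ to obtain $\tilde f\colon\power{T}\to[0,1]$. Since the one-sided operator commutes with this averaging, $\Expect{\mathbf{x}'\sim\mu_p^{[n]\setminus T}}{(\Tdown{p}{\rho p}f)(\mathbf{x}',x_T)}=(\Tdown{p}{\rho p}\tilde f)(x_T)$ (the right-hand operator understood on $\power{T}$), and since $h$ depends only on $T$, averaging the hypothesis yields $\norm{\Tdown{p}{\rho p}\tilde f-\lambda h}_{1}\le\eta+\lambda\eps'$, as small as we like. It therefore suffices to prove the finite-dimensional statement: for each $\zeta$, each $N$, and each $\eps''$ there is $\eta''$ such that on any cube of dimension at most $N$, whenever $(\rho,p,\lambda)$ lies in the compact set $\{\rho,p\in[\zeta,1-\zeta],\ \lambda\in[\rho+\zeta,1]\}$ and $\norm{\Tdown{p}{\rho p}F-\lambda G}_1\le\eta''$, then $G$ is $\eps''$-close to a constant $\Gamma\in\power{}$ and $\Expect{\mu_{\rho p}}{F}$ is $\eps''$-close to $\lambda\Gamma$. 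As the dimension, the Boolean functions $G$, the cube $[0,1]^{2^N}$, and the parameter set are all finite or compact, and $\Tdown{p}{\rho p}$ depends continuously on all of its arguments, this follows by a routine compactness argument from its exact version $\eta''=0$. For the exact version: if $\Tdown{p}{\rho p}F=\lambda G$ with $0\le F\le 1$, $G$ Boolean and $\lambda>\rho$, then if $G(\mathbf{1})=0$ we get $\Expect{\mathbf{z}\sim\mu_\rho}{F(\mathbf{z})}=0$, hence $F\equiv 0$ and $G\equiv 0$; and if $G(\mathbf{1})=1$ but $G\not\equiv 1$, pick $x_0$ maximal in the Boolean poset with $G(x_0)=0$ and a coordinate $i\notin\mathrm{supp}(x_0)$, note that $F$ vanishes on the down-set $\{y\le x_0\}$ (since $(\Tdown{p}{\rho p}F)(x_0)=0$ and $F\ge 0$) while maximality gives $G(x_0\lor e_i)=1$, hence $(\Tdown{p}{\rho p}F)(x_0\lor e_i)=\lambda$; but conditioning on the $i$-th noise bit, with probability $1-\rho$ the point $(x_0\lor e_i)\land\mathbf{z}$ lies in $\{y\le x_0\}$ and contributes $0$, so $(\Tdown{p}{\rho p}F)(x_0\lor e_i)\le\rho<\lambda$, a contradiction. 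Hence $G$ is a constant $\Gamma$, and $\Expect{\mu_{\rho p}}{F}=\Expect{\mu_p}{\Tdown{p}{\rho p}F}=\lambda\Gamma$.

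To assemble: run the junta step with $\eps'$ a small multiple of $\eps$ and $\eta$ small enough, apply the finite-dimensional statement with $N=J$ and $\eps''$ a small multiple of $\eps$ to the pair $(\tilde f,h)$, and translate back --- $g$ is $\eps'$-close to the constant $\Gamma$, hence $\eps$-close, and $\Expect{\mu_{\rho p}}{f}=\Expect{\mu_{\rho p}}{\tilde f}$ because $\mu_{\rho p}$ is a product measure, so it is $\eps$-close to $\lambda\Gamma$. I expect the approximation-by-junta step to be the main obstacle: one needs the structure theorem in exactly the right $p$-biased form, one must be careful that the quantity actually controlled is an $\ell_1$ distance between functions bounded by $1/\lambda$ rather than by $1$, and one must do the bookkeeping that keeps the junta size and the target closeness functions of $\zeta,\eps$ alone, which is what makes the subsequent compactness step legitimate rather than circular. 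A more combinatorial route to the junta step, along the lines of the argument used for Theorem~\ref{thm:intro-one-sided}, would also work but does not seem necessary here.
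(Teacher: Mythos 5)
Your proof is correct, but it takes a genuinely different route from the paper's.

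The paper proves the key step---that $g$ has no non-constant structure when $\lambda>\rho$---directly and quantitatively via Claim~\ref{claim:low_inf}: it couples downward random walks from the two endpoints of any influential edge of $g$ and concludes that if $I_i[g]\ge\tau$ then $\lambda-\rho\le O(\eta/\tau)$, so for $\eta$ small every individual influence of $g$ is tiny. It then invokes the additional feature of Bourgain's theorem (Theorem~\ref{thm:Bourgain}) that the approximating junta depends \emph{only on influential variables}, so the junta is automatically constant; no reduction to bounded dimension or compactness is needed, and the final $\eta(\eps,\zeta)$ is explicit. You instead use the generic form of the junta theorem, reduce to a bounded number of coordinates by averaging $f$ outside the junta set, classify the \emph{exact} solutions $\Tdown{p}{\rho p}F=\lambda G$ with $\lambda>\rho$ via the nice maximal-counterexample argument, and invoke compactness. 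Your exact-case argument and the paper's Claim~\ref{claim:low_inf} are really the same phenomenon in two guises---both say that a ``$G=0$ below / $G=1$ above'' flip across a single coordinate forces $\T F\le\rho<\lambda$ at the upper point---but the paper packages it as a direct influence bound, which is what lets it skip the finite-dimensional detour and the ineffective compactness step. Your route is longer and loses all quantitative control over $\eta$, but it is sound, and it has the conceptual advantage of isolating and exactly solving the rank-one equation (a pattern the paper itself follows in the harder case $\rho=\tfrac12$ via Lemma~\ref{lemma:main_exact}). One small arithmetic slip: your claimed bound $\sigma^2\le(1-\zeta)^2/(1-\zeta+\zeta^2)$ on the contraction factor $\sigma^2=(1-p)\rho/(1-\rho p)$ is not correct as written; the clean bound you actually want is $\sigma^2\le\rho\le 1-\zeta$, which follows from $1-p\le 1-\rho p$ and is all you need.
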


Next, we consider the case in which $f$ is a monotone function. In this case (and actually for a more relaxed case
in which $f$ is ``almost monotone''), we show that $g$ must be an AND function and $f$ must be a multiple of
that AND function after averaging. We also get quantitatively stronger relation between $\eps$ and $\eta$.
\begin{thm}\label{thm:mono_and}
  For every $\zeta>0$, $\eps>0$ there exists $\eta > 0$
  such that the following holds for all $p,\rho\in[\zeta,1-\zeta]$ and $\lambda\in[\zeta,1]$.
  If $f\colon(\power{n},\mu_{\rho p})\to[0,1]$ is monotone and $g\colon(\power{n},\mu_p)\to\power{}$ satisfies
  $\norm{\Tdown{p}{\rho p} f - \lambda g}\leq \eta$, then:
  \begin{itemize}
    \item There exists $T\subseteq[n]$ of size at most $\ceil{\log(2/\lambda)}$ and a function $h$
    that is either constant (in which case $T = \emptyset$)  or ${\sf AND}_T$, such that $\norm{g-h}_1\leq\eps$.
    \item $\tilde{f}\colon\power{T}\to[0,1]$ given by $\tilde{f}(x) = \Expect{{\bf y}\sim \mu_{\rho p}}{f(x,{\bf y})}$
    is $\eps$-close in $L_{\infty}$-norm to $\rho^{-\card{T}}\lambda \cdot h$.
  \end{itemize}
\end{thm}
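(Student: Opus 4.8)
\emph{Overall strategy.}
The plan is to reduce the approximate equation on $\power{n}$ to a near-exact instance of the equation on a bounded number of coordinates, solve the exact equation using the monotonicity of $f$, and bridge the gap between ``near-exact'' and ``exact'' by a compactness argument. The first step is to show that $g$ is $\eps_1$-close to a Boolean junta $h$ on a set $T\subseteq[n]$ of size $J(\zeta)$, with $\eps_1\to0$ as $\eta\to0$. The point is that $\Tdown{p}{\rho p}f$ is smooth: in the $\mu_p$-biased Fourier basis the operator contracts the weight at level $k$ by a geometric factor $\gamma(\zeta)^k$ with $\gamma(\zeta)<1$, so $W^{>k}[\Tdown{p}{\rho p}f]\le\gamma(\zeta)^{2k}$ using $\norm{f}_{2,\mu_{\rho p}}^2\le\norm{f}_\infty\norm{f}_1\le1$ (this is the observation of Lifshitz~\cite{Lifshitz}). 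Since $\Tdown{p}{\rho p}f$ and $\lambda g$ are both $[0,1]$-valued and $\norm{\Tdown{p}{\rho p}f-\lambda g}_1\le\eta$, we also have $\norm{\Tdown{p}{\rho p}f-\lambda g}_2^2\le\eta$, and hence, using $\lambda\ge\zeta$, that $W^{>k}[g]=O_\zeta(\gamma(\zeta)^{2k}+\eta)$ for every $k$. Feeding this into a Boolean junta theorem (Bourgain~\cite{Bourgain} or Kindler--Safra~\cite{KS}) produces the junta $h$ on $T$; this is where the monotonicity of $f$ is exploited, through a sharp-threshold-flavored refinement, to keep $J$ independent of the proximity parameter $\eps$ and obtain the stronger quantitative relation between $\eta$ and $\eps$ claimed in the statement.

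\emph{Averaging out the irrelevant coordinates.}
Let $\tilde f\colon\power{T}\to[0,1]$ be the average of $f$ over $[n]\setminus T$ with respect to $\mu_{\rho p}$; it is still monotone, being an average of the monotone function $f$. Because $\Tdown{p}{\rho p}$ acts coordinatewise (recall $\Tdown{p}{\rho p}f(x)=\Expect{{\bf z}\sim\mu_\rho}{f(x\land{\bf z})}$), averaging the relation $\Tdown{p}{\rho p}f\approx\lambda g$ over the coordinates outside $T$ gives $\norm{\Tdown{p}{\rho p}\tilde f-\lambda\,\Expect{}{g\mid T}}_1\le\eta$, where $\Tdown{p}{\rho p}\tilde f$ is now understood as the same operator acting on functions on $\power{T}$ and $\Expect{}{g\mid T}$ denotes the average of $g$ over $[n]\setminus T$ (averaging is an $L_1$-contraction). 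Since $g$ is $\eps_1$-close to the $T$-junta $h$, we have $\norm{\Expect{}{g\mid T}-h}_1\le\eps_1$, and therefore $\norm{\Tdown{p}{\rho p}\tilde f-\lambda h}_1\le\eta+\eps_1=:\eps_2$ on the fixed finite cube $\power{T}$, with $\tilde f$ monotone.

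\emph{Solving the bounded-dimensional equation and transferring back.}
On $\power{T}$ with $|T|\le J(\zeta)$ the operator $\Tdown{p}{\rho p}$ is diagonal in the AND-basis with eigenvalues $\rho^{|S|}\ge\rho^{J(\zeta)}$, hence invertible with bounded inverse, and $L_1(\mu_p)$, $L_2$ and $L_\infty$ on this cube are equivalent up to $O_\zeta(1)$ factors. The exact solutions $\set{(u,b,\mu):\Tdown{p}{\rho p}u=\mu b,\ u\colon\power{T}\to[0,1]\text{ monotone},\ b\colon\power{T}\to\power{},\ \mu\in[\zeta,1]}$ form a compact set, and by the classification of exact solutions (Theorem~\ref{thm:intro-Tf=g} and its extension to general $\rho$ underlying Theorems~\ref{thm:main_2func_smallrho} and~\ref{thm:main_2func}, in particular the ``Moreover'' clause that monotonicity of $u$ forces $b$ to be an AND) every such solution has $b$ equal to a constant or to ${\sf AND}_{T'}$ for some $T'\subseteq T$ of bounded size, and $u=\rho^{-|T'|}\mu\cdot b$. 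Since there are only finitely many candidate pairs $(b,T)$ and $\mu$ ranges over a compact set, a continuity argument shows that if $\eps_2$ is small enough (in terms of $\eps$ and $\zeta$) then $(\tilde f,h,\lambda)$ lies within $\eps$, in $L_\infty$ on $\power{T}$, of such an exact solution $(\tilde f^\ast,h^\ast,\lambda^\ast)$. Choosing $\eta$ small as a function of $\eps,\zeta$ makes $\eps_2$ this small. Consequently $g$ is $\eps$-close in $L_1(\mu_p)$ to $h^\ast\in\{\text{constant},{\sf AND}_{T'}\}$, and $\tilde f$ is $\eps$-close in $L_\infty$ to $\rho^{-|T'|}\lambda^\ast h^\ast$, which in turn is $\eps$-close to $\rho^{-|T'|}\lambda\,h^\ast$ since $|\lambda-\lambda^\ast|$ is small and $\rho^{-|T'|}$ is bounded; renaming $T:=T'$ gives the statement.

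\emph{Main obstacle.}
The delicate part, as flagged in the introduction, is the interface between the junta step and the exact step: the junta set $T$ produced in the first step may be strictly larger than the support $T'$ of the final AND, so one must check that having averaged over $[n]\setminus T$ rather than over $[n]\setminus T'$ is harmless --- it is, because the exact solution $\tilde f^\ast$ does not depend on the coordinates in $T\setminus T'$, so one can further average them out at negligible cost. The second technically substantial ingredient is obtaining the junta approximation with size independent of $\eps$: unlike the $\rho=\half$ case of Theorem~\ref{thm:main_2func}, here the monotonicity of $f$ (or ``almost monotonicity'') is what prevents an $\eps$-dependent blow-up of $J$, and this is precisely what yields the quasi-polynomial dependence $\eta=\exp(-\Theta_\zeta(\log^2(1/\eps)))$.
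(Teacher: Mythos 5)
Your high-level plan follows the same junta-reduction philosophy as the paper, but the second half diverges: where you reduce to a near-exact equation on $\power{T}$ and then appeal to the classification of exact solutions plus a compactness argument, the paper instead proves directly (Propositions~\ref{prop:basic_mono_and} and \ref{prop:mono_and_abv_minterm}) that the junta $h$ is monotone and has a unique minterm, using only the approximate solution condition, Claim~\ref{claim:low_inf_neg}, and the monotonicity of $f$. This direct route is exactly what lets the paper handle all $\rho\in[\zeta,1-\zeta]$ uniformly.

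This is where the gap in your argument lies. You invoke the ``Moreover'' clause of Theorem~\ref{thm:intro-Tf=g} and ``its extension to general $\rho$ underlying Theorems~\ref{thm:main_2func_smallrho} and~\ref{thm:main_2func}'' to conclude that monotonicity of $u$ forces $b$ to be an AND. But Theorem~\ref{thm:intro-Tf=g} and Lemma~\ref{lemma:main_exact} are proved for $\rho=1/2$, Lemma~\ref{lemma:main_exact_smallrho} (behind Theorem~\ref{thm:main_2func_smallrho}) covers only $\rho<1/2$, and the paper explicitly points out that for $\rho>1/2$ exact solutions are not classified and can be richer (e.g.\ $g={\sf Maj}$ occurs). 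None of the cited results establish the classification you need for the whole range $\rho\in[\zeta,1-\zeta]$. The missing fact (for monotone $f$ and $\T f=\lambda g$ exactly, $g$ has a unique minterm, hence is an AND or constant, and $f=\rho^{-|T|}\lambda\cdot{\sf AND}_T$) is in fact true and has a short proof: if $M_1,M_2$ are distinct minterms then $f$ vanishes below each, so $f(M_i)=\lambda\rho^{-|M_i|}$; evaluating $\T f(M_1\cup M_2)$ and splitting according to whether the downward sample contains $M_1$ or equals $M_2$ gives $\T f(M_1\cup M_2)\geq \lambda+(1-\rho)^{|M_1\setminus M_2|}\lambda>\lambda$, a contradiction. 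But as written your proposal asserts the classification rather than proving it, and attributes it to statements in the paper that do not contain it. A secondary point: the compactness step produces a purely existential $\eta(\eps,\zeta)$, so it does not by itself justify the quasi-polynomial $\eta=\exp(-\Theta_\zeta(\log^2(1/\eps)))$ you quote at the end; the paper obtains explicit bounds precisely because its combinatorial argument on $\power{T}$ is quantitative.
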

The monotonicity condition in Theorem~\ref{thm:mono_and} can be relaxed to ``almost monotonicity'',
in the sense that flipping any coordinate from $0$ to $1$ cannot decrease the value of the function too
much. To define this relaxation more precisely we need the notion of negative influences:
\begin{definition}\label{def:neg_inf}
  Let $f\colon(\power{n},\mu_p)\to [0,1]$ and let $i\in [n]$. The negative influence of
  a variable $i$ on $f$, denoted by $I_i^{-}[f]$, is defined to be
  \[
  \Expect{{\bf x}\sim\mu_p}{\max(0, f({\bf x}_1,\ldots,{\bf x}_{i-1},0,{\bf x}_{i+1},\ldots,{\bf x}_n) - f({\bf x}_1,\ldots,{\bf x}_{i-1},1,{\bf x}_{i+1},\ldots,{\bf x}_n))}.
  \]
\end{definition}
\noindent (Note that whereas $I_i[f]$ is the average of \emph{squared differences}, $I_i^-[f]$ is an average of \emph{differences}.)

With this definition, Theorem~\ref{thm:mono_and} also holds
when we relax the condition of monotonicity of $f$ to the condition
that all of its individual negative influences are small, i.e.\ $I_i^{-}[f]\leq \eta$
for all $i\in [n]$ (the proof of Theorem~\ref{thm:mono_and} in Section~\ref{sec:almost_mono} achieves this
stronger statement). One benefit of this relaxation is that it is able
to capture the case of ``judgement aggregation'' as an immediate consequence.
\begin{thm}\label{thm:and_homomorphism}
  For all $\zeta,\eps>0$ there is $\eta>0$ such that the following holds for all $p,\rho\in[\zeta,1-\zeta]$.
  If $f\colon(\power{n},\mu_{\rho p})\to\power{}$,
  $g\colon(\power{n},\mu_{p})\to\power{}$ and
  $h\colon(\power{n},\mu_{\rho})\to\power{}$ satisfy
  $\Prob{{\sf x}\sim\mu_p, {\bf y}\sim\mu_{\rho}}{f({\bf x}\land {\bf y}) = g({\bf x})\land h({\bf y})}\geq 1-\eta$, then one of the following cases must happen.
  \begin{enumerate}
    \item $f$ and at least one of the functions $g$ or $h$ are $\eps$-close to the constant $0$ function.
    \item There is a set $T\subseteq[n]$ such that $f,g,h$ are all $\eps$-close to ${\sf AND}_T$ (each with respect to their input distribution).
  \end{enumerate}
\end{thm}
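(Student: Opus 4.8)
The plan is to derive this from the monotone (almost-monotone) two-function result, Theorem~\ref{thm:mono_and}. First I would translate the probabilistic hypothesis into an approximate-eigenvector statement. Writing $\mathbb{D}(\rho, 1)$-style sampling as ${\bf x}\sim\mu_p$, ${\bf z}\sim\mu_\rho$ and ${\bf y}={\bf x}\land{\bf z}$, the event $f({\bf x}\land{\bf y})=g({\bf x})\land h({\bf y})$ is not quite in the $\Tdown{p}{\rho p}$ form; so the first move is to fix the ``inner'' variable. Sample ${\bf x}\sim\mu_p$ and ${\bf y}\sim\mu_\rho$ independently; then ${\bf x}\land{\bf y}\sim\mu_{\rho p}$, and the hypothesis says $f({\bf x}\land{\bf y})=g({\bf x})\land h({\bf y})$ with probability $\geq 1-\eta$. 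Averaging over ${\bf x}$ for fixed outcome of the pair $({\bf x}\land{\bf y})$ is awkward because $g({\bf x})$ depends on ${\bf x}$ itself, so instead I would first deal with $h$: for a typical ${\bf y}$ with $h({\bf y})=0$ the right side is $0$, forcing $f({\bf x}\land{\bf y})=0$ typically, and for typical ${\bf y}$ with $h({\bf y})=1$ we get $f({\bf x}\land{\bf y})=g({\bf x})$ typically. The cleaner route is to condition on ${\bf y}$, define $\lambda=\EE[h]$ (the fraction of ${\bf y}$ with $h({\bf y})=1$, morally), and observe that $\Tdown{p}{\rho p}f(x)=\EE_{\bf z}[f(x\land{\bf z})]$; combining over the two cases for $h$ one gets that $\Tdown{p}{\rho p}f$ is $O(\eta)$-close in $L_1(\mu_p)$ to $\lambda\cdot g$ after identifying $\lambda$ appropriately (the contribution of ${\bf y}$ with $h=1$ gives a $g(x)\cdot\Pr[h({\bf y})=1\mid {\bf x}\land{\bf y} \text{ fixed outcome}]$ term, which concentrates). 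I would need to check the concentration here carefully, but since $p,\rho$ are bounded away from $0,1$ this is a routine Chernoff/Markov argument.

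Second, I would verify the almost-monotonicity of $f$. Since $f$ is Boolean, the negative influence $I_i^-[f]$ is exactly $\Pr_{\bf x}[f(x^{i\to 0})=1, f(x^{i\to 1})=0]$. Using the homomorphism relation twice — comparing the outcome on an input where coordinate $i$ of the ``AND'' is forced to $0$ versus $1$ — one shows that a nonzero negative influence would, with noticeable probability, produce a violation of $f({\bf x}\land{\bf y})=g({\bf x})\land h({\bf y})$: namely if setting coordinate $i$ from $1$ to $0$ in ${\bf x}\land{\bf y}$ makes $f$ jump from $0$ up to $1$, one compares two correlated samples differing only in that coordinate's inner value and checks that $g\land h$ cannot move the same way consistently. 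Concretely, I expect $I_i^-[f]\leq O(\eta/(p\rho))$ for every $i$, which is the hypothesis needed to invoke the almost-monotone version of Theorem~\ref{thm:mono_and}.

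Third, having placed $f$ in the hypothesis of (the relaxed) Theorem~\ref{thm:mono_and} with noise rate $\rho$ and proximity $O(\eta)$, I conclude: either $f$ is $\eps$-close to $0$, or there is $T$ with $|T|\leq\lceil\log(2/\lambda)\rceil$ such that $g$ is $\eps$-close to ${\sf AND}_T$ and the averaged $\tilde f$ is $\eps$-close in $L_\infty$ to $\rho^{-|T|}\lambda\cdot{\sf AND}_T$. In the first case, $f\approx 0$ forces $g({\bf x})\land h({\bf y})\approx 0$ typically, which by a simple averaging argument (if both $\EE[g]$ and $\EE[h]$ were bounded below, $g\land h$ would be nonzero noticeably often) gives that at least one of $g,h$ is $\eps'$-close to $0$ — this is case (1). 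In the second case I still need to push the structure back onto $f$ itself (not just its average $\tilde f$) and onto $h$. For $f$: since $f$ is Boolean and its average over $[n]\setminus T$ is $L_\infty$-close to $\rho^{-|T|}\lambda\cdot{\sf AND}_T$, on the $x$ with ${\sf AND}_T(x)=0$ the average is $\approx 0$ so $f\approx 0$ there, and on ${\sf AND}_T(x)=1$ the average is $\approx\rho^{-|T|}\lambda$; combined with the homomorphism relation feeding back (the inner variable ${\bf y}$ is ${\bf x}\land{\bf z}$ with ${\bf z}\sim\mu_\rho$, and $f$ Boolean) I would argue $\rho^{-|T|}\lambda$ must be $\approx 1$ and $f$ itself is $\eps$-close to ${\sf AND}_T$. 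Then $h$: from $f({\bf x}\land{\bf y})=g({\bf x})\land h({\bf y})$ with $f\approx g\approx{\sf AND}_T$, plug in ${\bf x}$ with all of $T$ equal to $1$ (a constant-probability event since $p$ is bounded away from $0$) to get ${\sf AND}_T({\bf y})\approx h({\bf y})$ typically, i.e. $h$ is $\eps$-close to ${\sf AND}_T$ — this is case (2).

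The main obstacle I anticipate is the bookkeeping in the reduction step: correctly identifying $\lambda$ and showing that conditioning on the inner variable ${\bf y}$ (equivalently on $({\bf x}\land{\bf z})$) turns the three-function relation into a clean two-function approximate-eigenvector statement, because $g$ is evaluated at ${\bf x}$ rather than at the AND, so the averaging that produces $\Tdown{p}{\rho p}f$ entangles $g$'s argument with the conditioning event. Handling that requires care about which randomness is integrated out at which stage, and making sure the error stays $O(\eta)$ rather than blowing up with $n$; once that is set up, the negative-influence bound and the invocation of Theorem~\ref{thm:mono_and} are comparatively mechanical, and extracting the structure of $f$ and $h$ from that of $g$ uses only elementary Boolean/averaging arguments together with the fact that $p,\rho$ are bounded away from $0$ and $1$.
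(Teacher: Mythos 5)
Your overall skeleton --- reduce to $\norm{\Tdown{p}{\rho p} f - \lambda g}_1 \leq O(\eta)$ with $\lambda = \E[h]$, show all negative influences of $f$ are $O(\eta/\mathrm{poly}(\zeta))$, then invoke the almost-monotone version of Theorem~\ref{thm:mono_and} --- is exactly the paper's plan, and steps one and two are fine (the reduction is even simpler than you expect: conditioning on ${\bf x}$ and using the triangle inequality gives $|\Tdown{p}{\rho p} f(x) - \lambda_h g(x)| \leq \Pr_{{\bf y}}[f(x\land{\bf y})\neq g(x)\land h({\bf y})]$ pointwise, so there is no concentration argument to worry about; and the negative-influence bound indeed comes from a short case analysis on three correlated pairs of inputs, exactly in the spirit you describe).

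The gap is in your third step. After Theorem~\ref{thm:mono_and} you have $g \approx {\sf AND}_T$ and $\tilde f \approx \rho^{-|T|}\lambda \cdot {\sf AND}_T$ in $L_\infty$, and you then \emph{assert} that $\rho^{-|T|}\lambda \approx 1$, use that to upgrade the average structure of $\tilde f$ to $f \approx {\sf AND}_T$ itself, and only then recover $h \approx {\sf AND}_T$. But the theorem you are invoking gives no direct control on $\rho^{-|T|}\lambda$; it is only the averaged $\tilde f$ that it constrains, and an approximate solution in which $f$ looks like ${\sf AND}_T$ times an independent Bernoulli on the off-$T$ coordinates is a priori consistent with everything you have proved up to that point. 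Making ``$\rho^{-|T|}\lambda \approx 1$'' rigorous requires going back to the three-function hypothesis and using it a second time, and you have not done the bookkeeping. The paper avoids this entirely by exploiting the symmetry of the hypothesis between $g$ and $h$: it applies Theorem~\ref{thm:mono_and} twice --- once as you do, and once with the roles of $(g,p)$ and $(h,\rho)$ swapped --- to obtain $g \approx {\sf AND}_{T_1}$ and $h \approx {\sf AND}_{T_2}$, and then shows $T_1 = T_2$ by a short direct argument (assuming $i\in T_1\setminus T_2$ and sampling ${\bf x},{\bf y}$ with ${\bf y}_i = 0$ to force a contradiction), from which $f \approx {\sf AND}_{T_1}$ falls out immediately. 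A second, smaller omission: Theorem~\ref{thm:mono_and} requires $\lambda \in [\zeta',1]$, so before you can invoke it you must first handle the case $\E[h]$ (respectively $\E[g]$) small --- the paper does this up front by assuming both $g,h$ are $10\eps$-far from the zero function, which is precisely case~(1) of the statement; your plan defers case~(1) to the end, where the monotone theorem no longer applies.
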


\subsection{One-sided error}
Finally, we consider a more relaxed version of approximate solutions to $\Tdown{p}{\rho p} f = g$.
We say functions $f\colon\power{n}\to[0,1]$ and $g\colon\power{n}\to\power{}$ are one-sided error solutions
with $\lambda>0$ and error $\eta$ if the following two conditions occur:
\begin{enumerate}
  \item $\Tdown{p}{\rho p} f$ is very small on typical inputs $x$ such that $g(x) = 0$:
  \[
  \Expect{{\bf x}\sim \mu_p}{(1-g({\bf x})) \cdot\Tdown{p}{\rho p} f({\bf x})}\leq \eta.
  \]
  \item $\Tdown{p}{\rho p} f$ is bounded away from $0$ on typical inputs $x$ such that $g(x) = 1$:
  \[
  \Prob{{\bf x}\sim\mu_p}{g({\bf x}) = 1, \Tdown{p}{\rho p} f({\bf x})\leq \lambda}\leq \eta.
  \]
\end{enumerate}
\begin{thm}\label{thm:one_sided_error}
  For any $\eps,\zeta>0$ there are $\eta>0$ and $J\in\mathbb{N}$
  such that the following holds for any $p,\rho\in[\zeta,1-\zeta]$
  and $\lambda\in[\zeta,1]$. If $f\colon\power{n}\to[0,1]$ and $g\colon\power{n}\to\power{}$ are one-sided error
  solutions with $\lambda$ and error $\eta$, then $g$ is $\eps$-close to a monotone, Boolean $J$-junta.
\end{thm}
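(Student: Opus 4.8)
The plan is to first pin down $g$ up to $O(\sqrt\eta)$ by a ``bimodal threshold'' function, and then deal with monotonicity and the junta bound separately. As the excerpt notes, the one-sided hypothesis is too weak for the Fourier-analytic ``approximation by junta'' argument used for the other theorems, so the junta bound will have to be obtained combinatorially.

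\emph{Reduction and bimodality.} I would first reduce to the case that $f$ is Boolean. Replacing $f$ by $f' := \mathbbm{1}[f \geq \lambda/2]$ affects $\Tdown{p}{\rho p} f$ only up to bounded factors: on the one hand $f' \leq (2/\lambda) f$, so the first condition is preserved up to a factor $2/\lambda \leq 2/\zeta$; on the other hand, whenever $\Tdown{p}{\rho p} f(x) > \lambda$ Markov gives $\Pr_{\mathbf z \sim \mu_\rho}[f(x \wedge \mathbf z) \geq \lambda/2] \geq \lambda/2$, i.e.\ $\Tdown{p}{\rho p} f'(x) \geq \lambda/2$, so the second condition is preserved with $\lambda$ replaced by $\lambda/2 \geq \zeta/2$. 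Write $F := \Tdown{p}{\rho p} f$. Applying Markov's inequality to the first condition gives $\mu_p(\{x : g(x)=0,\ F(x) > \sqrt\eta\}) \leq \sqrt\eta$, while the second condition directly gives $\mu_p(\{x : g(x)=1,\ F(x) < \lambda\}) \leq \eta$. Hence, outside a set of $\mu_p$-measure $O(\sqrt\eta)$, $F$ takes values in $[0,\sqrt\eta] \cup [\lambda,1]$ and $g$ agrees with $g' := \mathbbm{1}[F \geq \lambda]$. So it suffices to prove that this \emph{bimodal threshold} $g'$ is $\eps$-close to a monotone junta, where ``bimodal'' means $\mu_p(\{x : \sqrt\eta < F(x) < \lambda\})$ is tiny.

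\emph{Approximate monotonicity.} The set $\{F \leq \sqrt\eta\}$ is almost a down-set: conditioning the inner noise $\mathbf z$ to be $0$ on the coordinates in $x' \setminus x$ shows $F(x') \geq (1-\rho)^{|x'|-|x|} F(x)$ whenever $x \leq x'$, so a small value of $F$ forces small values of $F$ on all neighbours within bounded Hamming distance. Combining this with the concentration of Hamming weight under $\mu_p$ and the bimodality --- which forbids the intermediate values of $F$ that a genuine violation of monotonicity would have to traverse along a monotone chain --- I would show that $g'$ is $O_\zeta(\sqrt\eta)$-close to the monotone function $\hat g(x) := \mathbbm{1}[\exists\, x' \geq x \text{ with } F(x') \geq \lambda]$ (or a slightly truncated variant). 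The delicate point is controlling comparable pairs at large Hamming distance, where the naive bound above is vacuous; this is where bimodality, rather than any pointwise monotonicity of $F$ (which is false), does the work.

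\emph{The junta bound (main obstacle).} It remains to bound the number of coordinates that influence $g'$. The key observation is that every edge $(x, x \vee e_i)$ on the boundary of $\{F \geq \lambda\}$ must, by bimodality, carry a jump $|F(x \vee e_i) - F(x)| \geq \lambda - \sqrt\eta \geq \zeta/2$; and the jump across such an edge equals $\rho \cdot \E_{\mathbf z \sim \mu_\rho}[\partial_i f(x \wedge \mathbf z)]$, so coordinate $i$ must be genuinely and monotonically pivotal for $f$ on a $\zeta$-constant fraction of the downsamples of $x$. I would then run a double-counting/energy argument --- the combinatorial substitute for the KKL/Friedgut-type Fourier bounds, with bimodality supplying the cancellation that a spectral expansion of $\Tdown{p}{\rho p} f$ would otherwise provide --- to show that at most $O_\zeta(1)$ coordinates can be pivotal at this scale on a non-negligible fraction of inputs; equivalently, the total influence of $g'$ is $O_\zeta(1)$. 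Friedgut's junta theorem then makes $g'$ be $\eps$-close to a $J(\eps,\zeta)$-junta, and combining this with the previous step (averaging $\hat g$ over the non-relevant coordinates and rounding produces a monotone Boolean $J$-junta close to $g'$) completes the proof. I expect this last argument to be the crux: the naive estimate, that the sum of the $\ell_1$-influences of $\Tdown{p}{\rho p} f$ is at most $\rho$ times that of $f$, is off by a $\Theta(\sqrt n)$ factor precisely because it ignores the smoothing that bimodality must be made to exploit, so the whole point is to recover this smoothing by a purely combinatorial counting argument.
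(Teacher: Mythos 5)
Your proposal has the right broad shape (reduce to a thresholded version of $F=\Tdown{p}{\rho p} f$, then prove monotonicity and a junta bound separately) and the small pieces you do flesh out are correct: the reduction to Boolean $f$ is fine, the Markov argument for bimodality of $F$ outside a small set is fine, and the inequality $F(x) \geq (1-\rho)^{|x\setminus x'|}F(x')$ for $x\geq x'$ is a correct one-sided coupling bound. But the two load-bearing steps are not actually given, and I don't believe they can be completed along the lines you sketch.

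The monotonicity step is unconvincing as stated. Your coupling bound gives a \emph{lower} bound on $F(x)$ in terms of $F(x')$ when $x\geq x'$; that tells you a large $F$ value propagates downward with exponential decay, not that a point below a large-$F$ point must itself have $F\geq\lambda$. So it doesn't control $\hat g - g'$ on $\{x\leq x' : F(x')\geq\lambda,\ F(x)<\lambda\}$, which is exactly the set you need to bound. In the paper this is handled very differently: one proves that each individual \emph{negative influence} $I_i^-[g]$ is small (a short coupled-downward-walk argument, Claim~\ref{claim:low_inf_neg_onesided}), notes that averaging out non-junta coordinates only decreases negative influences (Fact~\ref{fact:avg_dec_neg_inf}), and then uses the shifting lemma Fact~\ref{fact:gglrs} to replace an almost-monotone junta by a genuinely monotone one. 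That route only works after you already have a junta bound, which is the other gap.

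The junta bound is the real problem, and you flag it yourself. Your target — total influence of $g'$ is $O_\zeta(1)$, followed by Friedgut — is genuinely different from, and strictly stronger than, what the paper needs. The paper shows only that the \emph{noise sensitivity} $\mathsf{NS}_\nu[g]$ is $o(\nu)$ (Lemma~\ref{lem:one_sided_NS}), which is compatible with total influence growing like $\sqrt n$; it then passes to a Fourier-tail bound (Claim~\ref{claim:ns_tail}) and applies Bourgain's theorem, not Friedgut's. The noise-sensitivity bound is itself the heart of the matter and is proved by the coupled four-tuple distribution $\mathcal D_\nu$ (Definition~\ref{def:dist_coupled}) together with a change-of-measure estimate (Proposition~\ref{prop:1}) and the subsampling Lemma~\ref{lem:2}; it is not a naive counting argument. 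By contrast, your proposed ``double-counting/energy argument'' is unspecified, and the naive edge count you mention does indeed lose $\Theta(\sqrt n)$: even after invoking bimodality, the bound you would get for the number of boundary edges of $g'$ is of the form $n\cdot O(\sqrt\eta)$, which cannot be made $O_\zeta(1)$ for $\eta$ depending only on $\eps,\zeta$. In short, the plan rests on an unproven (and possibly unachievable) influence bound; the paper's route via noise sensitivity and Bourgain is the piece you are missing.
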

We remark that any monotone junta $g$ is a one-sided error approximate solution (by taking $f=g$),
so Theorem~\ref{thm:one_sided_error} is tight with respect to the structure of $g$.

\paragraph{Organization.}
The proof of Theorem~\ref{thm:main_2func} is given in Section~\ref{sec:main_2func}.
In Section~\ref{sec:special} we prove Theorems~\ref{thm:basic},~\ref{thm:mono_and} and~\ref{thm:and_homomorphism}.
In Section~\ref{sec:one_sided} we prove Theorem~\ref{thm:one_sided_error}, and finally in Section~\ref{sec:large_noise}
we prove Theorem~\ref{thm:main_2func_smallrho}.

\section{Preliminaries} \label{sec:prel}
For any $p\in (0,1)$, we consider the space of function $f\colon(\power{n},\mu_p)\to\mathbb{R}$
equipped with the inner product $\inner{f}{g} = \Expect{{\bf x}\sim\mu_p}{f({\bf x})g({\bf x})}$. We will
use the Fourier--Walsh orthonormal basis $\set{\chi^p_S}_{S\subseteq [n]}$,
where for each $S\subseteq[n]$ we define $\chi^p_S\colon\power{n}\to\mathbb{R}$ by
$\chi^p_S(x) = \prod\limits_{i\in S}\left[(x_i - p)/\sqrt{p(1-p)}\right]$. This way,
we may write the Fourier expansion of a function $f\colon\power{n}\to\mathbb{R}$
by
\[
f(x) = \sum\limits_{S\subseteq[n]}{\widehat{f}_p(S) \chi^p_S(x)}, \qquad\qquad \text{where }\widehat{f}_p(S) = \inner{f}{\chi^p_S}.
\]

Since $\set{\chi^p_S}_{S\subseteq [n]}$ is an orthonormal basis, we have Parseval's identity $\norm{f}_2^2 = \sum\limits_{S\subseteq[n]}\widehat{f}_p(S)^2$.
We will need a few more notions and results from Fourier analysis, such as the Junta Theorems of \cite{Bourgain,KS} and the Sensitivity Conjecture proved
recently by \cite{Huang}, which we present below.


\subsection{Influences}
For a function $f\colon (\power{n},\mu_p)\to\mathbb{R}$ and a coordinate $i\in[n]$,
we define the $p$-biased influence of variable $i$ to be $I^p_i[f] = \Expect{{\bf x}\sim\mu_p}{(f({\bf x}) - f({\bf x}\xor e_i))^2}$.
When the bias parameter is clear from context, we often write $I_i[f]$.

We will also use the notion of negative influences as given in Definition~\ref{def:neg_inf}. We have the following simple
fact, stating that averaging may only decrease negative influences.
\begin{fact}\label{fact:avg_dec_neg_inf}
  Let $f\colon(\power{n},\mu_p)\to\mathbb{R}$ be a function,
  and let $i\in [n]$. Consider the function $g\colon(\power{n-1},\mu_p)\to\mathbb{R}$
  defined by $g(z) = \cExpect{{\bf x}\sim\mu_p}{{\bf x}_{[n]\setminus\set{i}} = z}{f({\bf x})}$
  (i.e.\ averaging $f$ over the coordinate $i$). Then $I_j^{-}[g]\leq I_j^{-}[f]$ for
  any $j\in [n]\setminus \set{i}$.
\end{fact}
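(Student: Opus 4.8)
To prove Fact~\ref{fact:avg_dec_neg_inf}, the plan is to reduce the inequality to a single application of convexity of the positive-part function $t\mapsto\max(0,t)$. Fix the averaged-out coordinate $i$ and a coordinate $j\in[n]\setminus\set{i}$, and write a generic input of $f$ as $(x_i,x_j,w)$ with $w\in\power{[n]\setminus\set{i,j}}$; then $g$, viewed as a function of $(x_j,w)$, is $g(x_j,w) = (1-p)f(0,x_j,w) + p f(1,x_j,w)$. Since the integrand defining $I_j^-$ does not actually depend on the value of the $j$-th coordinate, linearity of expectation gives
\[
I_j^-[g] = \Expect{{\bf w}\sim\mu_p}{\max\big(0,\ g(0,{\bf w})-g(1,{\bf w})\big)},
\]
where $g(b,{\bf w})$ denotes $g$ with its $j$-th coordinate set to $b$ and ${\bf w}$ is $\mu_p$-distributed on $[n]\setminus\set{i,j}$. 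Writing $a=a({\bf w}) = f(0,0,{\bf w})-f(0,1,{\bf w})$ and $b=b({\bf w}) = f(1,0,{\bf w})-f(1,1,{\bf w})$ (first slot $=$ coordinate $i$, second slot $=$ coordinate $j$), we have $g(0,{\bf w})-g(1,{\bf w}) = (1-p)a+p\,b$.

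First I would record the matching expression for $f$, splitting the expectation over the $i$-th coordinate:
\[
I_j^-[f] = \Expect{{\bf w}\sim\mu_p}{(1-p)\max(0,a) + p\max(0,b)}.
\]
Thus it suffices to establish the pointwise inequality $\max(0,(1-p)a+p\,b)\leq (1-p)\max(0,a)+p\max(0,b)$ for all reals $a,b$. This is exactly Jensen's inequality applied to the convex function $t\mapsto\max(0,t)$ and the two-point distribution placing mass $1-p$ on $a$ and $p$ on $b$. Taking expectation over ${\bf w}\sim\mu_p$ finishes the proof.

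I do not expect any genuine obstacle here: the only point needing mild care is the observation that averaging $f$ over coordinate $i$ against the $p$-biased measure is literally the convex combination $(1-p)\,f|_{x_i=0} + p\,f|_{x_i=1}$, after which monotonicity of the negative influence follows immediately from convexity of the positive-part function. The same one-coordinate argument, iterated, shows that averaging out any subset of coordinates can only decrease the negative influence of each surviving coordinate, which is the form in which the fact is used later.
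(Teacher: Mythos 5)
Your argument is correct and is essentially identical to the paper's: both reduce $I_j^-[g]\leq I_j^-[f]$ to the pointwise inequality $\max\bigl(0,(1-p)a+pb\bigr)\leq(1-p)\max(0,a)+p\max(0,b)$ on each parallelepiped, the only cosmetic difference being that you invoke Jensen/convexity of $t\mapsto\max(0,t)$ while the paper verifies the same inequality by a direct two-case check. No gaps.
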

\begin{proof}
  Fix $j\neq i$, and assume without loss of generality that
  $j=n-1$ and $i=n$. We prove that for each $x\in\power{n-2}$,
  the contribution of the edge between $(x,0)$ to $(x,1)$ to
  the negative influence of $g$ is upper-bounded by the contribution
  of the parallelepiped of $x$ in $f$.

  Denote
  $a = f(x,0,0)$, $b = f(x,0,1)$, $c = f(x,1,0)$, $d = f(x,1,1)$,
  $\bar{a} = (1-p)a + p b$ $(=g(x,0))$,
  $\bar{c} = (1-p)c + p d$ $(=g(x,1))$.
  If $\bar{c} > \bar{a}$, the point $x$
  does not contribute to the negative influence of
  $g$ and there is nothing to prove. Otherwise, the contribution
  is
  $\mu_p(x)(\bar{a} - \bar{c})
  =\mu_p(x,0)(a-c) + \mu_p(x,1)(b-d)
  \leq \mu_p(x,0)\max(a-c,0) + \mu_p(x,1)\max(b-d,0)$,
  and the right-hand side is exactly the contribution of
  the parallelepiped of $x$ in $f$.
\end{proof}

We also need the following fact that relates negative influences and
distance from monotonicity.
\begin{fact}\label{fact:gglrs}
  For all $p\in(0,1)$, $n\in\mathbb{N}$ and $\tau > 0$,
  if $f\colon(\power{n},\mu_p)\to\mathbb{R}$ is a function
  such that $I_i^{-}[f]\leq \tau$ for all $i\in [n]$,
  then there is a monotone function $h\colon(\power{n},\mu_p)\to\mathbb{R}$ such that
  $\norm{f-h}_1\leq ((1-p)p)^{-n}n\tau$.
\end{fact}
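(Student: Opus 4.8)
The plan is to reduce to the uniform (unweighted counting) setting, monotonize one coordinate at a time using the classical sorting operator, and then transfer the bound back to $\mu_p$, paying a factor that reflects how unbalanced the point masses of $\mu_p$ are. (The argument works verbatim for real-valued $f$, not just $[0,1]$-valued.)

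For a function $f\colon\power{n}\to\mathbb{R}$ and a coordinate $i$, let $M_i f$ be its \emph{monotonization in direction $i$}: for each edge $\{x,x\xor e_i\}$ with $x_i=0$, set $(M_i f)(x)=\min(f(x),f(x\xor e_i))$ and $(M_i f)(x\xor e_i)=\max(f(x),f(x\xor e_i))$. I will use two standard facts, both of which reduce to checks on $2\times 2$ faces: (i) $M_i f$ is monotone in coordinate $i$, and if $f$ is monotone in some coordinate $j\neq i$ then so is $M_i f$; hence $h:=M_nM_{n-1}\cdots M_1 f$ is monotone. (ii) Writing $D_i(f):=\sum_{x:\,x_i=0}\max(0,f(x)-f(x\xor e_i))$ for the unweighted negative influence of coordinate $i$, one application changes $f$ by exactly $\sum_x|f(x)-(M_if)(x)|=2D_i(f)$ in unweighted $\ell_1$, since on each violating edge both endpoints move by the same amount.

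The heart of the argument is the claim that $M_i$ does not increase $D_j(f)$ for any $j\neq i$. Fixing the $n-2$ coordinates other than $i,j$ reduces this to the inequality
\[
\max\!\big(0,\min(a,c)-\min(b,d)\big)+\max\!\big(0,\max(a,c)-\max(b,d)\big)\ \le\ \max(0,a-b)+\max(0,c-d)
\]
on the four values $a,b,c,d$ of the $2\times 2$ face ($a,c$ on the $x_j=0$ side, $b,d$ on the $x_j=1$ side), which I would prove by a four-way case analysis on whether $a\le c$ and whether $b\le d$, using in the two nontrivial cases that $s\mapsto\max(0,s-\alpha)-\max(0,s-\beta)$ is nondecreasing. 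Granting this, set $f^{(0)}=f$, $f^{(i)}=M_if^{(i-1)}$, $h=f^{(n)}$; then $D_i(f^{(i-1)})\le D_i(f^{(i-2)})\le\cdots\le D_i(f)$, so
\[
\sum_x|f(x)-h(x)|\ \le\ \sum_{i=1}^n\sum_x|f^{(i-1)}(x)-f^{(i)}(x)|\ =\ \sum_{i=1}^n 2D_i(f^{(i-1)})\ \le\ 2\sum_{i=1}^n D_i(f).
\]

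It remains to pass between the weighted and unweighted worlds. Each configuration of the $n-1$ coordinates other than $i$ carries $\mu_p$-mass at least $\min(p,1-p)^{n-1}$, whence $D_i(f)\le\min(p,1-p)^{-(n-1)}I_i^{-}[f]\le\min(p,1-p)^{-(n-1)}\tau$; and every point carries $\mu_p$-mass at most $1$, so $\norm{f-h}_1=\sum_x\mu_p(x)|f(x)-h(x)|\le\sum_x|f(x)-h(x)|$. Combining, $\norm{f-h}_1\le 2n\min(p,1-p)^{-(n-1)}\tau$; and since $p(1-p)=\min(p,1-p)\max(p,1-p)$ with $\max(p,1-p)\le 1$ and $\min(p,1-p)\le\tfrac12$, we get $2\big(p(1-p)\big)^n\le\min(p,1-p)^{n-1}$, i.e.\ $2\min(p,1-p)^{-(n-1)}\le\big((1-p)p\big)^{-n}$, giving $\norm{f-h}_1\le\big((1-p)p\big)^{-n}n\tau$ as required. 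The main obstacle is the monotonicity claim for $M_i$: it is genuinely false for $\mu_p$-weighted negative influences once $p\neq\tfrac12$ (a small $2\times2$ example shows that sorting one coordinate can increase the weighted negative influence of another), which is exactly why the proof must detour through the unweighted setting; the remaining steps are routine bookkeeping.
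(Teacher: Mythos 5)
Your proof is correct, and it takes a genuinely different route from the paper's, although both use the same sorting operator ($M_i$ in your notation, $S_i$ in the paper's). The paper stays entirely in the $\mu_p$-weighted world: its Claim~3.5 shows that sorting in direction $i$ can \emph{increase} the $\mu_p$-weighted negative influence $I_j^-$, but only by a bounded multiplicative factor $\tfrac{1}{p(1-p)}$, and this bounded increase is then chained through the $n$ sorting steps together with the identity $\|S_i g - g\|_1 \le I_i^-[g]$. Your proof instead reduces to the unweighted (counting) negative influence $D_j$, where sorting genuinely does not increase $D_j$ at all — the cleaner classical fact — and then pays the $\min(p,1-p)^{-(n-1)}$ conversion factor at the end when passing between counting and $\mu_p$. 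The $2\times 2$-face case analysis is the essential content in both proofs, just phrased against different weighings. One small overstatement in your closing remark: you say the weighted-setting no-increase claim fails for $p\neq\tfrac12$ and therefore the proof ``must detour through the unweighted setting'' — the first part is correct (your example with a large value migrating to a heavier edge shows it), but the conclusion is not forced: the paper demonstrates that the weaker \emph{bounded}-increase property $I_j^-[S_i g]\le \tfrac{1}{p(1-p)}I_j^-[g]$ already suffices and avoids the detour entirely. Both routes deliver the same final $\big((1-p)p\big)^{-n} n\tau$ bound, which is adequate since the paper only applies this fact with constant $n$.
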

We remark that the above fact is inspired by
\cite{GGLRS}, wherein a similar statement was proved for Boolean functions for $p=1/2$,
with a better bound ($n\tau$). The argument we present is essentially
the same and can also recover the bound $n\tau$ for $p=1/2$, but since
we will only use this statement with constant $n$ and very small $\tau$, the
bound we achieve is sufficient for our purposes. We defer the proof to the end of the preliminaries.

\subsection{Junta theorems}
We will use Bourgain's Theorem~\cite{Bourgain}; the sharp version below is proved in~\cite{KKO}.
For $k\in\mathbb{N}$, the Fourier tail $W_{\geq k}[f]$ is defined to be $\sum\limits_{\card{S}\geq k}{\widehat{f}_p(S)^2}$.

\begin{thm}\label{thm:Bourgain}
  For any $\zeta>0$ there is a constant $C(\zeta)>0$ such that for any $k\in\mathbb{N}$,
  $\eps>0$
  there are $\tau = (k/\eps)^{-C\cdot k}$, $J = (k/\eps)^{C\cdot k}$ such that the following holds for all $p\in[\zeta,1-\zeta]$.
  If $f\colon(\power{n},\mu_p)\to\power{}$ satisfies $W_{\geq k}[f] \leq \frac{\eps}{C\sqrt{k}\log^{1.5}(k)}$,
  then $f$ is $\eps$-close to a $J$-junta $h$.

  Furthermore, $h$ only depends on variables $i$ such that $I_i[g]\geq \tau$.
\end{thm}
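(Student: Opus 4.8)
The plan is to prove Bourgain's Theorem (Theorem~\ref{thm:Bourgain}) by combining Bourgain's original junta theorem with a standard tail-to-structure argument. First I would recall the sharp form of Bourgain's theorem as proved in~\cite{KKO}: there is a constant $C = C(\zeta)$ such that any Boolean $f\colon(\power{n},\mu_p)\to\power{}$ with $W_{\geq k}[f] \leq \frac{\eps}{C\sqrt{k}\log^{1.5}(k)}$ is $\eps$-close to a function depending on few variables. The content of the statement I have to prove is really just a clean packaging of this: turning ``few variables'' into a concrete junta size $J = (k/\eps)^{Ck}$ and, crucially, pinning down that the relevant coordinates are exactly the influential ones, with influence threshold $\tau = (k/\eps)^{-Ck}$.

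The key steps, in order, would be: (1) Apply the Bourgain--KKO bound to $f$ to get a $J_0$-junta $h_0$ with $\norm{f - h_0}_1 \le \eps/2$, where $J_0$ and the implicit parameters are controlled by $k$ and $\eps$ as in~\cite{KKO}. (2) Argue that every coordinate on which $h_0$ genuinely depends must have non-negligible influence in $f$: if $I_i[f]$ were below the threshold $\tau$, then restricting/averaging $h_0$ over coordinate $i$ changes it by little (since $h_0$ is close to $f$ and $f$ is nearly insensitive to $i$), so we may delete $i$ from the junta at a small cost in distance. Iterating this pruning over all $J_0$ coordinates, and choosing $\tau$ small enough that the total incurred error is at most $\eps/2$, yields a junta $h$ depending only on coordinates $i$ with $I_i[f] \ge \tau$, with $\norm{f-h}_1 \le \eps$. (3) Bound the number of surviving coordinates: since $\sum_i I_i[f] = W_{\ge 1}[f]$-type quantities are controlled (each influence is at most $1$, and in fact Bourgain's hypothesis gives additional control on total influence of the low-degree part), there can be at most $O(1/\tau)$ coordinates with influence $\ge \tau$; combined with the KKO junta size this gives the stated $J = (k/\eps)^{Ck}$ after adjusting the constant $C$.

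I expect the main obstacle to be making the pruning argument in step (2) quantitatively tight enough to land the claimed bounds $\tau = (k/\eps)^{-Ck}$ and $J = (k/\eps)^{Ck}$ with a single constant $C(\zeta)$, rather than something with worse dependence. The subtlety is that the $p$-biased setting ($p$ bounded away from $0,1$ but not equal to $1/2$) introduces $p$-dependent factors into the relationship between $\ell_1$-distance, influences, and Fourier tails; one has to track these carefully so they get absorbed into $C(\zeta)$. Concretely, one uses that for Boolean functions $I_i[f]$ is comparable (up to factors depending on $p$) to the probability that coordinate $i$ is pivotal, and that averaging a near-junta over a low-influence coordinate and rounding back to Boolean costs $\ell_1$-distance proportional to that influence. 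Once these $p$-biased analogues of the standard $p=1/2$ estimates are in hand, the argument is routine; the ``furthermore'' clause then follows immediately from the construction, since by design $h$ only retains coordinates with $I_i[f] \ge \tau$.

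I should also double-check one boundary issue: the statement is quantified ``for any $k \in \mathbb{N}$, $\eps > 0$,'' so the constant $C(\zeta)$ must be uniform in $k$ and $\eps$. This is fine because the KKO version of Bourgain's theorem already provides such a uniform constant; the only new work is the influence-pruning bookkeeping, which introduces no new dependence on $k$ or $\eps$ beyond polynomial-in-$(k/\eps)^{k}$ factors that are swallowed by enlarging $C$. With that caveat handled, the proof is complete.
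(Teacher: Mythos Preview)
The paper does not prove this statement at all: Theorem~\ref{thm:Bourgain} is presented in the Preliminaries as a known result, attributed to Bourgain~\cite{Bourgain} with the sharp quantitative form due to~\cite{KKO}, and is invoked as a black box throughout. There is therefore no proof in the paper to compare your proposal against.

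That said, a brief remark on your outline. Your step~(1) simply cites the very result you are asked to prove, so the proposal is not really a proof of Bourgain's theorem but rather an argument that the ``furthermore'' clause (the junta can be taken on the high-influence coordinates, with the stated $\tau$ and $J$) follows from the base junta-approximation statement by a pruning/averaging argument. That reduction is indeed standard and your sketch of it is fine; but the substantive content of Theorem~\ref{thm:Bourgain}---that a small Fourier tail above level $k$ forces closeness to a junta, with the sharp $\sqrt{k}\log^{1.5}k$ dependence---is precisely the Bourgain/KKO theorem, and your plan does not address it beyond citation. Since the paper itself only cites the result, this is not a defect relative to the paper, but you should be aware that what you have written is a derivation of the packaging, not of the theorem.
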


We also need the following result of Kindler and Safra. We remark that its important feature
at is has (lacking from Bourgain's result above) is that the size of the junta only depends
on the level $k$ and not on the closeness parameter $\eps$ that we wish to get.

\begin{thm}[\cite{KS}]\label{thm:KS}
  For any $\zeta > 0$, $m\in \mathbb{N}$ there exists $J(m,\zeta)\in\mathbb{N}$, $C(m,\zeta) > 0$ such that the following holds
  for all $p\in[\zeta,1-\zeta]$.
  For any $\eps>0$ there exists $\delta = C(m,\zeta)\cdot \eps$ such that if
  $f\colon(\power{n},\mu_p)\to\power{}$ is a function such that $W_{\geq m}[f]\leq \delta$,
  then $f$ is $\eps$-close to a junta of size $J(m,\zeta)$.
\end{thm}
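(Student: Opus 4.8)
This final statement is the Kindler--Safra junta theorem, quoted from \cite{KS}; I sketch how one would establish it. The plan is to reduce to a rigidity statement about low-degree functions that are close to Boolean-valued, and then run an induction on the level $m$.

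\textbf{Step 1 (reduce to the low-degree truncation).} Let $g = \sum_{\card{S} < m} \widehat{f}_p(S)\,\chi^p_S$ be the projection of $f$ onto the levels below $m$. Since $W_{\geq m}[f] \leq \delta$ we have $\norm{f-g}_2^2 \leq \delta$, and since $f$ is $\{0,1\}$-valued, the distance of $g(x)$ to $\set{0,1}$ is at most $\card{f(x)-g(x)}$ for every $x$, so $\Expect{{\bf x}\sim\mu_p}{\mathrm{dist}(g({\bf x}),\set{0,1})^2}\leq\delta$. Thus it suffices to show that a degree-$<m$ real function $g$ with $\Expect{}{\mathrm{dist}(g,\set{0,1})^2}\leq\delta$ is $O_{m,\zeta}(\delta)$-close (after rounding) to a Boolean $J(m,\zeta)$-junta; closeness then transfers back to $f$ with only a constant-factor loss, because $f$ and $g$ disagree on a set of measure at most $O(\delta)$ by Markov. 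This is where the linear dependence $\delta = C(m,\zeta)\cdot\eps$ originates: every step below loses only constant (in $\eps$) factors.

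\textbf{Step 2 (influential coordinates at a universal scale).} The heart is to isolate a set $T$ of ``heavy'' coordinates whose size depends only on $m,\zeta$. First, a level-$k$ inequality in the $p$-biased cube shows that for a Boolean $f$ with $W_{\geq m}[f]\leq \delta$ the total influence $\sum_i I_i^p[f]$ is $O_{m,\zeta}(1)$, which already bounds by $O_{m,\zeta}(1/\tau)$ the number of coordinates of influence $\geq \tau$ for any \emph{fixed} $\tau$. The point of Kindler--Safra over a naive influence threshold is that $\tau$ must not shrink with $\eps$. This is supplied by the rigidity of $g$: running an FKN-type $L_4$ argument on $g$ --- first showing $\card{g}$ is typically bounded, then bounding $\Expect{}{(g^2-g)^2} = O_{m,\zeta}(\delta)$ and expanding it as a low-degree polynomial in the Fourier coefficients of $g$ --- shows that the non-negligible Fourier mass of $g$ is carried by a bounded set of coordinates at a universal scale: there are $\tau_0(m,\zeta)>0$ and $J_0(m,\zeta)$ such that $T_0=\set{i: I_i^p[g]\geq\tau_0}$ satisfies $\card{T_0}\leq J_0$, and off $T_0$ the function $g$ has Fourier weight $O_{m,\zeta}(\delta)$.

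\textbf{Step 3 (induction on $m$).} The base case $m\leq 1$ is immediate: $W_{\geq 1}[f]\leq\delta$ means $f$ is within $\delta$ in $L_2^2$ of the constant $\Expect{}{f}$, which for a Boolean function forces $f$ to be $O(\delta)$-close to a constant, i.e.\ a $0$-junta. For the inductive step take $T=T_0$ from Step 2. For each $y\in\power{T}$ the restriction $f_y:=f(\,\cdot\,,y)$ is Boolean, and since the coordinates outside $T$ carry little Fourier weight for $g$ (and hence, via $\norm{f-g}_2^2\leq\delta$, restricting or averaging them moves little Fourier mass up a level) one gets $\Expect{y}{W_{\geq m-1}[f_y]} = O_{m,\zeta}(\delta)$. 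Apply the inductive hypothesis at level $m-1$ to each $f_y$: it is $O_{m,\zeta}(\delta)$-close to a Boolean $J(m-1,\zeta)$-junta $h_y$ on a coordinate set $T_y$ of that size. Since there are only $2^{\card{T}}$ restrictions, the function $h(x) = h_{x_T}(x)$ is a junta on $\bigcup_{y}(T\cup T_y)$, of size at most $J(m,\zeta):=\card{T}+2^{\card{T}}J(m-1,\zeta)$, which depends only on $m,\zeta$; summing the per-restriction errors gives $\Prob{{\bf x}\sim\mu_p}{f({\bf x})\neq h({\bf x})} \leq C(m,\zeta)\,\delta$. Choosing $\delta = \eps/C(m,\zeta)$ finishes the proof.

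\textbf{Expected main obstacle.} The difficulty is entirely in Step 2: keeping the count of influential coordinates, hence $J(m,\zeta)$, independent of $\eps$. A naive ``keep every coordinate of influence $\geq\tau$'' argument forces $\tau$ (and the junta size) to grow as $\eps\to 0$, which would only reprove Bourgain/Friedgut-type bounds. The Kindler--Safra insight exploited above is that $L_2^2$-closeness of the degree-$<m$ truncation to $\set{0,1}$ is rigid enough --- through the $(g^2-g)^2$ computation --- to pin the relevant Fourier mass onto a bounded set of coordinates at a fixed scale, after which the degree induction closes with no further $\eps$-dependence in the junta size.
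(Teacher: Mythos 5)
This statement is quoted from \cite{KS}; the paper cites it as a black-box ingredient and offers no proof of its own, so there is nothing in the paper to compare against. Judged on its own terms, your sketch follows the standard Kindler--Safra strategy --- truncate $f$ to degrees below $m$, argue that the truncation $g$ is $L_2$-close to Boolean, extract a rigidity statement from the smallness of $\Expect{}{(g^2-g)^2}$ to locate a bounded set $T_0$ of heavy coordinates at a scale independent of $\eps$, and close with an induction on the level --- so the architecture is right and the emphasis on why the junta size must not grow with $\eps$ is exactly the crux.

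One local inaccuracy worth flagging in Step~2: the sentence asserting that a level-$k$ inequality gives $\sum_i I^p_i[f] = O_{m,\zeta}(1)$ for a Boolean $f$ with $W_{\geq m}[f]\leq\delta$ is not correct as written. The identity $\sum_i I^p_i[f] = \sum_S \card{S}\widehat{f}_p(S)^2$ splits into a part from levels $<m$ (at most $m-1$) and a part from levels $\geq m$ which is only bounded by $n\cdot W_{\geq m}[f]\leq n\delta$, so the total influence of $f$ is not a priori $O_{m,\zeta}(1)$ when $\delta$ is independent of $n$. The argument should instead bound the total influence of the truncation $g$, which has degree $<m$, so $\sum_i I^p_i[g]\leq (m-1)\norm{g}_2^2\leq (m-1)(1+\delta)$ --- this is what actually lets you bound the number of $\tau$-heavy coordinates of $g$ at a universal scale, and it is $g$'s heavy coordinates (not $f$'s) that the FKN-type $L_4$ step is designed to isolate. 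With that substitution the sketch is a faithful outline of the cited result.
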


\subsection{Degree and sensitivity}
For any $f\colon\power{n}\to\power{}$ and $x\in\power{n}$, the sensitivity of
$f$ at $x$ is equal to the number of coordinates $i\in [n]$ such that $f(x)\neq f(x\xor e_i)$.
The max-sensitivity of a function $f$ is $s(f) = \max_{x} s(f,x)$. The degree of a function
${\sf deg}(f)$ is the maximal size of $S$ such that $\widehat{f}_p(S)\neq 0$ (we remark that
this is easily seen to be independent of $p$).

We will use the following recent result of Huang \cite{Huang}
(formerly known as the sensitivity conjecture \cite{NisanSzegedy}) in our proof.
We remark that quantitatively weaker results there were proven earlier (such as the bound $s(f)\geq \Omega(\log ({\sf deg}(f)))$) would have been enough for us,
but yield to a loss in several parameters.
\begin{thm}[\cite{Huang}]\label{thm:sensitivity}
  For any $f\colon \power{n}\to\power{}$ we have that $s(f) \geq \sqrt{{\sf deg}(f)}$.
\end{thm}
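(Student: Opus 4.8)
The statement is Huang's resolution of the sensitivity conjecture, so the plan is to follow the spectral route: reduce the inequality to a purely graph-theoretic fact about the hypercube via the Gotsman--Linial criterion, and then establish that fact using a cleverly signed adjacency matrix together with eigenvalue interlacing.

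First I would reduce to the full-degree case. Write $d = {\sf deg}(f)$ and pick $S\subseteq[n]$ with $\card S = d$ and $\widehat f_{1/2}(S)\neq 0$; expanding this Fourier coefficient over the assignments to the coordinates outside $S$, some restriction $g\colon\power{S}\to\power{}$ must have degree exactly $d$. Fixing coordinates can only delete sensitive directions, so $s(f)\geq s(g)$, and it suffices to prove $s(g)\geq\sqrt d$ for every $g\colon\power{d}\to\power{}$ of full degree $d$. For such $g$ I would invoke the Gotsman--Linial parity criterion: $g$ has degree $d$ iff $\sum_{x\in\power{d}}(-1)^{x_1\xor\cdots\xor x_d}(-1)^{g(x)}\neq 0$, so one of the two sets $V=\set{x : g(x) = (x_1\xor\cdots\xor x_d)\oplus b}$, $b\in\power{}$, has more than $2^{d-1}$ elements. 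Let $H$ be the subgraph of the $d$-dimensional hypercube induced on this $V$. Every hypercube edge flips the parity $x_1\xor\cdots\xor x_d$, hence its two endpoints carry different $g$-values; so a vertex $v\in V$ of degree $\Delta(H)$ in $H$ has $s(g,v)\geq\Delta(H)$, and we are reduced to showing that every induced subgraph of the $d$-dimensional hypercube on more than $2^{d-1}$ vertices has maximum degree at least $\sqrt d$.

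For this combinatorial core I would introduce the signed adjacency matrices $A_d$ defined recursively by
\[
A_1 = \begin{pmatrix} 0 & 1 \\ 1 & 0\end{pmatrix}, \qquad A_d = \begin{pmatrix} A_{d-1} & I \\ I & -A_{d-1}\end{pmatrix},
\]
which has the same support as the adjacency matrix of the hypercube and, by a one-line induction, satisfies $A_d^2 = dI$. Consequently $A_d$ has only the eigenvalues $\pm\sqrt d$, each with multiplicity $2^{d-1}$. If $H$ is induced on $m = 2^{d-1}+1$ vertices and $B$ is the corresponding principal submatrix of $A_d$, then Cauchy interlacing gives $\lambda_1(B)\geq\lambda_{2^{d-1}}(A_d) = \sqrt d$. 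On the other hand $B$ is symmetric with entries in $\set{0,\pm1}$ and support that of the adjacency matrix of $H$, so $\lambda_1(B)\leq\max_v\sum_w\card{B_{vw}} = \Delta(H)$. Therefore $\Delta(H)\geq\sqrt d$, which closes the chain and yields $s(f)\geq s(g)\geq\sqrt d = \sqrt{{\sf deg}(f)}$.

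The main obstacle is the construction of the signing itself: once one has a $\pm1$-weighting of the hypercube edges whose signed adjacency matrix squares to $dI$, everything else — the interlacing bound, the comparison of $\lambda_1$ with the maximum degree, and the Gotsman--Linial reduction — is routine. Arriving at the recursive block form above, equivalently realizing that one should look for a matrix with exactly the hypercube's zero pattern but a two-point spectrum $\set{\pm\sqrt d}$, is the crux of the argument.
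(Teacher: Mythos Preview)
The paper does not give its own proof of this statement: Theorem~\ref{thm:sensitivity} is quoted from \cite{Huang} as a black-box preliminary, so there is nothing in the paper to compare your argument against. Your sketch is a faithful and correct outline of Huang's original proof --- the Gotsman--Linial reduction to the induced-subgraph statement, the recursively defined signed adjacency matrix with $A_d^2 = dI$, and the Cauchy interlacing bound --- so nothing is missing.
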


\subsection{Proof of Fact~\ref{fact:gglrs}}\label{sec:proof_gglrs}
For each $i\in [n]$, define the shifting operator $S_i$ on functions $g\colon\power{n}\to\mathbb{R}$
that operates by $(S_i f)(x_{-i},x_i) = \max(f(x_{-i},0),f(x_{-i},1))$ if $x_i=1$
and $(S_i f)(x_{-i},x_i) = \min(f(x_{-i},0),f(x_{-i},1))$ if $x_i=0$.

\begin{claim}\label{claim:inf_remain_small}
  For any $i,j\in[n]$ and $g\colon\power{n}\to\mathbb{R}$
  we have that $I_j^{-}[S_i g]\leq \frac{1}{p(1-p)} I_j^{-}[g]$.
\end{claim}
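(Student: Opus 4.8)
Claim~\ref{claim:inf_remain_small} asserts that applying the shifting operator $S_i$ in a coordinate $i$ blows up the negative influence in any other coordinate $j$ by at most a factor $\tfrac{1}{p(1-p)}$. Here is my plan.

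Fix $i \ne j$; as in the proof of Fact~\ref{fact:avg_dec_neg_inf}, I would reduce to $n=2$ by conditioning on the coordinates outside $\{i,j\}$. So assume $i,j$ are the only two coordinates, with $i=2$, $j=1$, and write the four values $a = g(0,0)$, $b = g(0,1)$, $c = g(1,0)$, $d = g(1,1)$ — where the first argument is coordinate $j=1$ and the second is coordinate $i=2$. After applying $S_2$, the value at $(x_1, \cdot)$ with $x_2 = 0$ becomes the min over $x_2$, and with $x_2 = 1$ becomes the max. Thus $S_2 g$ has values $\min(a,b), \max(a,b), \min(c,d), \max(c,d)$ at $(0,0),(0,1),(1,0),(1,1)$ respectively. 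The contribution of coordinate $j=1$ to $I_j^-[S_2 g]$ is
\[
\mu_p(\cdot,0)\cdot\max\bigl(0,\min(a,b)-\min(c,d)\bigr) + \mu_p(\cdot,1)\cdot\max\bigl(0,\max(a,b)-\max(c,d)\bigr),
\]
where $\mu_p(\cdot, 0) = 1-p$ and $\mu_p(\cdot,1) = p$ are the marginals in coordinate $i=2$. The contribution to $I_j^-[g]$ itself is $(1-p)\max(0,a-c) + p\max(0,b-d)$.

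The heart of the matter is therefore the elementary numerical inequality: for all reals $a,b,c,d$ and $p\in(0,1)$,
\[
(1-p)\max\bigl(0,\min(a,b)-\min(c,d)\bigr) + p\max\bigl(0,\max(a,b)-\max(c,d)\bigr) \leq \frac{1}{p(1-p)}\Bigl[(1-p)\max(0,a-c)+p\max(0,b-d)\Bigr].
\]
I would prove this by a short case analysis according to which of $a,b$ is the min and which of $c,d$ is the min (four cases). In the "aligned" cases — e.g. $a \le b$ and $c \le d$, so $\min(a,b)-\min(c,d) = a-c$ and $\max(a,b)-\max(c,d) = b-d$ — the left side is literally $(1-p)\max(0,a-c)+p\max(0,b-d)$, i.e.\ equal to the bracket on the right, and since $\tfrac{1}{p(1-p)} \ge 4 > 1$ the inequality is immediate. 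The work is in the "crossed" cases, say $a \le b$ but $d \le c$: then $\min(a,b)-\min(c,d) = a - d$ and $\max(a,b)-\max(c,d) = b - c$, and one must bound $(1-p)\max(0,a-d) + p\max(0,b-c)$ by $\tfrac{1}{p(1-p)}[(1-p)\max(0,a-c)+p\max(0,b-d)]$. Here I would use that $a-d = (a-c)+(c-d)$ with $c - d \ge 0$, hmm — that goes the wrong way; instead note $a - d \le a - c$ is false in general too. The right bookkeeping: when $d\le c$, one has $a - d \le b - d$ and $b - c \le b - d$? No. The clean route is: $a-d \le \max(a-c, b-d) + (\text{something controlled})$; more robustly, observe $\max(0,a-d) \le \max(0,a-c) + \max(0,b-d)$ when... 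Actually the simplest valid bound is $\max(0, a-d) \le \max(0,a-c) + \max(0, c-d)$ and in the crossed case $c - d \ge 0$, but $c-d$ need not be controlled. The correct and clean inequality — and this is the step I would double-check carefully — is $\max(0,a-d) + \max(0,b-c) \le \max(0,a-c) + \max(0,b-d)$ whenever the pair $(c,d)$ is "crossed" relative to $(a,b)$; this is a rearrangement/supermodularity fact for $\max(0,\cdot)$ of differences, and it already gives the bound with constant $1$ after pulling out $\tfrac14 \le \tfrac{1}{p(1-p)}$ isn't even needed. So in fact I expect $\tfrac{1}{p(1-p)}$ is a safe overestimate and the inequality holds with room to spare.

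The main obstacle, then, is purely the crossed-case numerical inequality above; everything else (the reduction to $n=2$, summing contributions over the conditioned coordinates) is routine bookkeeping exactly parallel to Fact~\ref{fact:avg_dec_neg_inf}. Once Claim~\ref{claim:inf_remain_small} is in hand, Fact~\ref{fact:gglrs} follows by applying the shifts $S_1,\dots,S_n$ in turn: each $S_i$ makes $f$ monotone in coordinate $i$ while, by the claim, keeping all negative influences bounded by $(p(1-p))^{-1}$ times the previous ones, so after $n$ shifts one obtains a monotone $h$ with $\|f-h\|_1$ bounded by the total mass moved, which telescopes to at most $n\tau\,(p(1-p))^{-n}$ since each $S_i$ moves $\ell_1$-mass at most $\sum_j I_j^-$ of the current function and those influences have grown by at most $(p(1-p))^{-1}$ each step. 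I would also need to check the easy facts that $S_i$ preserves the property of being monotone in already-shifted coordinates and that $S_i f$ is indeed monotone in coordinate $i$, both of which are immediate from the definition.
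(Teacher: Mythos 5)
Your reduction is exactly the paper's: fix $i\neq j$, and for each setting $x$ of the remaining $n-2$ coordinates compare the parallelepiped contributions to $I_j^-[S_ig]$ and $I_j^-[g]$. The two contribution expressions you write down are also correct (as a side note, the paper's display for the $S_ig$ contribution accidentally attaches $1-p$ to the $\max$-term and $p$ to the $\min$-term, the reverse of yours; this is a harmless typo since the bound that follows is symmetric in the weights). Where you part ways with the paper is in how the pointwise four-real inequality is finished.

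The paper dispenses with case analysis by one uniform estimate: $\max(a,b)-\max(c,d)\le\max(a-c,b-d)$ and $\min(a,b)-\min(c,d)\le\max(a-c,b-d)$. Since $\max(\cdot,0)$ is nondecreasing and the two weights $1-p,p$ sum to one, the $S_ig$ contribution is at most $\mu_p(x)\max(a-c,b-d,0)$; on the other side, $(1-p)\max(a-c,0)+p\max(b-d,0)\ge p(1-p)\big(\max(a-c,0)+\max(b-d,0)\big)\ge p(1-p)\max(a-c,b-d,0)$. The ratio $1/(p(1-p))$ drops out immediately, with no cases.

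Your case-split route also works, but as written has two soft spots worth flagging. First, in the second "aligned" case $b\le a$, $d\le c$ the left side becomes $(1-p)\max(0,b-d)+p\max(0,a-c)$, which is \emph{not} literally the bracket on the right: the weights $p$ and $1-p$ sit on the wrong terms, so you still need a factor $\max\bigl(p/(1-p),\,(1-p)/p\bigr)\le 1/(p(1-p))$; the statement "equal to the bracket, so immediate" only covers the first aligned case. Second, the crossed-case inequality you flagged as needing a check, namely $\max(0,a-d)+\max(0,b-c)\le\max(0,a-c)+\max(0,b-d)$ when $a\le b$ and $d\le c$, is in fact correct: under those constraints $a-c\le a-d,\,b-c\le b-d$ and $(a-d)+(b-c)=(a-c)+(b-d)$, so $(a-d,b-c)$ is majorized by $(a-c,b-d)$ and the inequality follows from the convexity of $\max(0,\cdot)$ (Karamata); combining with $(1-p)\max(0,a-c)+p\max(0,b-d)\ge\min(p,1-p)\bigl(\max(0,a-c)+\max(0,b-d)\bigr)$ gives you even the better constant $1/\min(p,1-p)$. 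With those two points patched your proof closes; but the paper's $\max(a-c,b-d,0)$ intermediate bound is cleaner and I'd recommend it.
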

\begin{proof}
  If $i = j$ the claim is clear as $I_i^{-}[S_i g]=0$, so we assume that $i\neq j$.
  Without loss of generality $j=n-1, i=n$. We show that for every $x\in\power{n-2}$,
  the contribution of the parallelepiped of $x$ in $S_i g$ is at most $\frac{1}{p(1-p)}$
  its contribution in $g$.

  Write $a = g(x,0,0)$, $b = g(x,0,1)$, $c = g(x,1,0)$, $d = g(x,1,1)$,
  and note that the value of $S_i g$ on these points, in the same order,
  is $\min(a,b),\max(a,b),\min(c,d),\max(c,d)$. The contribution of $x$
  to the parallelepiped to $S_i g$ is
  \[
  \mu_p(x)\left[(1-p)\max(\max(a,b) - \max(c,d),0) + p\max(\min(a,b) - \min(c,d),0)\right].
  \]
  Using $\max(a,b)-\max(c,d)\leq \max(a-c,b-d)$ and
  $\min(a,b)-\min(c,d)\leq \max(a-c,b-d)$, we conclude
  that the contribution of the parallelepiped of $x$ to $S_i g$
  is at most $\mu_p(x)\max(a-c,b-d,0)$.

  On the other hand, the contribution of the parallelepiped of $x$ to $g$
  is
  \[
  \mu_p(x)\left[(1-p)\max(a-c,0) + p\max(b-d,0)\right]
  \geq p(1-p)\mu_p(x)\max(a-c,b-d,0),
  \]
  and we are done.
\end{proof}

\begin{claim}\label{claim:dist}
  For any $i\in[n]$ and $g\colon\power{n}\to\mathbb{R}$.
  We have that $\norm{S_i g - g}_1\leq I_i^{-}[g]$.
\end{claim}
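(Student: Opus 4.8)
For any $i\in[n]$ and $g\colon\power{n}\to\mathbb{R}$, we have $\norm{S_i g - g}_1\leq I_i^{-}[g]$.

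The plan is to expand both quantities edge-by-edge along coordinate $i$ and compare them on each edge separately. Fix $x\in\power{n-1}$ indexing an edge in direction $i$, and write $a = g(x,0)$ and $b = g(x,1)$, where here the last coordinate plays the role of the $i$-th coordinate. By definition of the shifting operator, $(S_ig)(x,0)=\min(a,b)$ and $(S_ig)(x,1)=\max(a,b)$, so $S_ig$ and $g$ agree on this edge unless $a>b$. Also, since the integrand defining $I_i^-[g]$ does not depend on the $i$-th coordinate, we may write $I_i^-[g] = \sum_{x\in\power{n-1}}\mu_p(x)\max(0,a-b)$, where $\mu_p(x)$ is the $p$-biased weight of $x$ in $\power{n-1}$, so that $\mu_p(x,0)=(1-p)\mu_p(x)$ and $\mu_p(x,1)=p\,\mu_p(x)$.

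Next I would compute the contribution of this edge to $\norm{S_ig-g}_1$, namely
\[
\mu_p(x)\left[(1-p)\bigl|\min(a,b)-a\bigr| + p\bigl|\max(a,b)-b\bigr|\right].
\]
If $a\leq b$ this is $0$, and if $a>b$ it equals $\mu_p(x)\left[(1-p)(a-b)+p(a-b)\right]=\mu_p(x)(a-b)$. In either case it equals exactly $\mu_p(x)\max(0,a-b)$, which is precisely the contribution of $x$ to $I_i^-[g]$. Summing over all $x\in\power{n-1}$ then gives $\norm{S_ig-g}_1 = I_i^-[g]$, which in particular implies the claimed inequality (we actually get equality, which is all the cleaner).

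There is no genuine obstacle in this argument; it is a direct computation. The only points requiring mild care are bookkeeping of the $p$-biased edge weights $(1-p)\mu_p(x)$ and $p\,\mu_p(x)$, observing that the $i$-th coordinate is a dummy variable in the definition of $I_i^-$, and noting that an edge with $a\leq b$ contributes zero to both sides so that the shifting operator only ``moves mass'' across edges that already witness negative influence.
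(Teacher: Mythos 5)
Your proof is correct and follows essentially the same route as the paper's: decompose $\norm{S_ig-g}_1$ and $I_i^-[g]$ edge-by-edge along direction $i$, and observe that each edge contributes the same amount $\mu_p(x)\max(0,a-b)$ to both. The paper only states the per-edge inequality $\leq$, whereas you correctly note it is in fact an equality — a harmless and slightly sharper observation.
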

\begin{proof}
  Assume without loss of generality that $i=n$.
  We show that for each $x\in\power{n-1}$, the contribution of
  the points $(x,0), (x,1)$ to the left-hand side is at most
  the contribution of the edge between them on the right-hand side.

  Write $a = g(x,0)$, $b = g(x,1)$,
  and note that the value of $S_i g$ on these points, in the same order,
  is $\min(a,b),\max(a,b)$. Consider the contribution of $(x,0)$ and $(x,1)$ to the
  left-hand side; if $b\geq a$ it is $0$ and there is nothing to prove, so assume $b<a$.
  Then the contribution is equal to $\mu_p(x) \card{b-a}$, which is the same as
  the contribution of this edge to $I_i^{-}[g]$, and we are done.
\end{proof}

We are now ready to prove Fact~\ref{fact:gglrs}.
\begin{proof}[Proof of Fact~\ref{fact:gglrs}]
  Let $g$ be a function as in the proof, and define $h_i = S_i\circ S_{i-1}\circ\dots\circ S_1 g$ for each $0\leq i\leq n$
  (where $h_0 = g$). Clearly $h_n$ is monotone, and we show that it is close to $g$.
  By Claim~\ref{claim:inf_remain_small} all negative influences of each $h_i$
  are at most $((1-p)p)^{-n}\tau$, and hence by Claim~\ref{claim:dist} we get that
  $\norm{h_{i} - h_{i-1}}_1\leq ((1-p)p)^{-n}\tau$ for each $i=1,\ldots,n$. By the triangle inequality
  we get that $\norm{g - h_n}_1 = \norm{h_0 - h_n}_1\leq ((1-p)p)^{-n}n\tau$.
\end{proof}

\section{Proof of Theorem~\ref{thm:main_2func}}\label{sec:main_2func}
In this section, we prove Theorem~\ref{thm:main_2func}. Since we will always consider
the downwards noise operator $\Tdown{p}{p/2}$, we denote it succinctly by $\T$.

\subsection{Main Lemma}\label{sec:exact_rho_half}
\begin{lemma}\label{lemma:main_exact}
  For any $\zeta>0$ and $n\in\mathbb{N}$
  there exists $\eta_0>0$ such that the following holds for all $p\in[\zeta,1-\zeta]$, $\lambda \in[\zeta,1]$
  and $\eta\in[0,\eta_0]$.
  If $\norm{\Tdown{p}{p/2} f - \lambda g}_{\infty}\leq \eta$
  then:
  \begin{itemize}
    \item $g$ is an AND-OR function of width $r$, where $r\leq \ceil{\log(2/\zeta)}$.
    \item Let $\phi$ be the corresponding AND-XOR function.
    Then $\norm{f - 2^r\lambda\cdot \phi}_\infty \leq 3^n \eta$.
  \end{itemize}
\end{lemma}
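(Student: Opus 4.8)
The plan is to first understand the exact equation $\T f = \lambda g$ (the $\eta = 0$ case), which is essentially Theorem~\ref{thm:intro-Tf=g} specialized to $\rho = 1/2$, and then propagate a perturbation argument to handle $\eta > 0$. Since $n$ is a fixed constant here, all bounds are allowed to depend on $n$ (as witnessed by the $3^n \eta$ in the conclusion), so the key point is purely qualitative robustness of the algebraic structure, made quantitative crudely.

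First I would analyze the exact case. The operator $\T = \Tdown{p}{p/2}$ satisfies $\T f(x) = \E_{{\bf z}\sim\mu_{1/2}}[f(x\land {\bf z})]$, so $\T f(x)$ depends only on the coordinates of $x$ that are $1$; more precisely $\T f(x) = 2^{-|x|}\sum_{y \le x} f(y)$, so $\T$ is (up to normalization) the downward zeta/Möbius-type transform and is invertible on functions of a fixed ground set. In particular $\T f = \lambda g$ with $g$ Boolean forces $\T f$ to be two-valued, taking values in $\{0,\lambda\}$. Evaluating at $x = \mathbf{0}$ gives $f(\mathbf 0) = \lambda g(\mathbf 0) \in \{0,\lambda\}$. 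If $g \equiv 0$ then $\T f \equiv 0$ and invertibility of $\T$ gives $f \equiv 0$. Otherwise I would use the structure of $g$: writing $g$ as an $\R$-linear combination of $\mathsf{AND}$'s (the eigenbasis argument from the footnote), and exploiting that $\T(\mathsf{AND}_S) = 2^{-|S|}\mathsf{AND}_S$, I want to show $g$ must be an AND-OR. The cleanest route is to induct on $n$ by conditioning on a single coordinate $x_i$: restricting to $x_i = 0$ kills any clause containing $i$, restricting to $x_i = 1$ collapses the $i$-literal; tracking how $\T$ and the Boolean-valuedness of $g$ interact with these restrictions, one recovers the disjoint-clause AND-OR form for $g$ and, via $\T^{-1}$ on the corresponding slice, the AND-XOR form $\phi$ for $f$ with the scaling $f = 2^r \lambda\, \phi$ where $r$ is the number of clauses. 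The width bound $r \le \lceil\log(2/\zeta)\rceil$ comes from $\lambda \ge \zeta$: since on a minterm of $g$ we need $\T f = \lambda$, and $\T$ applied to the AND-XOR contributes a factor $2^{-r}$, consistency forces $2^{-r}\cdot(\text{something} \le 2) \ge \lambda \ge \zeta$... more carefully, from $f = 2^r\lambda\phi$ being $[0,1]$-valued and $\phi$ attaining value $1$, we get $2^r \lambda \le 1$, hence $r \le \log_2(1/\lambda)$; but the bound stated is $\lceil \log(2/\zeta)\rceil$, so I'd just use $\lambda \ge \zeta$ crudely (and note $\phi = 0$ is impossible when $g \ne 0$, giving a factor-$2$ slack).

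Second, the perturbation step. Suppose $\|\T f - \lambda g\|_\infty \le \eta$. Since $n$ is constant, the set of possible Boolean $g$ and the finite family of candidate AND-OR/AND-XOR pairs is finite, so there is a gap: any $g$ that is \emph{not} an AND-OR of width $\le r_{\max}$ cannot be exactly consistent, and by finiteness there is $\eta_0 = \eta_0(n,\zeta) > 0$ below which the near-equation still forces $g$ to be exactly an AND-OR. Concretely, I'd run the exact argument ``with slack'': $\T f(x) \in [\lambda g(x) - \eta, \lambda g(x)+\eta]$, so $\T f$ is within $\eta$ of being $\{0,\lambda\}$-valued; choosing $\eta_0 < \lambda/3 \le$ (gap) ensures that the sign pattern of $\T f$ — i.e. which inputs have $\T f(x)$ near $\lambda$ versus near $0$ — is exactly $g$, and that this pattern is \emph{consistent} (no contradictions of the form that a Boolean eigen-constraint would be violated), forcing $g$ itself to be an exact AND-OR. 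Then, with $g$ pinned down exactly, $\T f = \lambda g + (\text{error of }\|\cdot\|_\infty \le \eta)$, and applying $\T^{-1}$ — which on functions over $[n]$ has operator norm (in $\|\cdot\|_\infty$) at most $3^n$, since $\T^{-1}$ has a Möbius-inversion expansion with $\sum$ of absolute values of coefficients at most $\sum_k \binom n k 2^k = 3^n$ — yields $\|f - 2^r\lambda\phi\|_\infty = \|\T^{-1}(\T f - \lambda g)\|_\infty \le 3^n\eta$.

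The main obstacle I expect is the \textbf{classification of exact solutions} — showing that Boolean-valuedness of $g = \frac1\lambda \T f$ forces the rigid disjoint-clause AND-OR structure, rather than some messier two-valued function. The induction-on-coordinates bookkeeping (keeping track of how clauses split under $x_i = 0$ vs $x_i = 1$, ensuring the clauses stay disjoint, and that the inverted $f$ is genuinely $[0,1]$-valued and of AND-XOR form with the right scalar) is where all the real content sits; the $\eta$-perturbation part is then soft, relying only on finiteness of the configuration space for fixed $n$ and a crude operator-norm bound $\|\T^{-1}\|_{\infty\to\infty}\le 3^n$. A secondary subtlety is making sure the width bound and the ``$2/\zeta$'' slack are handled consistently — in particular ruling out $\phi \equiv 0$ (equivalently some clause $A_i$ being empty) in the nonzero case, which is what buys the factor of $2$ inside the logarithm.
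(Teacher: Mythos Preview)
Your M\"obius-inversion step is correct and matches the paper exactly: $\T^{-1}h(B) = \sum_{A\subseteq B}(-1)^{|B\setminus A|}2^{|A|}h(A)$ has $\|\cdot\|_\infty$-operator norm at most $3^n$, and this is precisely how the paper extracts the $3^n\eta$ bound on $\|f - 2^r\lambda\phi\|_\infty$ once $g$ is pinned down. Your compactness reduction (finitely many Boolean $g$ for fixed $n$, positive gap over the compact range of $f,\lambda$) is also a valid way to choose $\eta_0$; the paper instead carries explicit error terms through every step and produces a concrete $\eta_0$.

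The genuine gap is the classification of $g$, which you correctly flag as ``where all the real content sits'' but do not actually carry out. Your induction-on-coordinates does not go through as stated: restricting to $x_n=0$ gives $\T f_0=\lambda g_0$, fine, but restricting to $x_n=1$ gives $\T f_1 = \lambda(2g_1-g_0)$, and $2g_1-g_0$ is not Boolean. One can rewrite this as $\T\bigl(\tfrac{f_0+f_1}{2}\bigr)=\lambda g_1$ and deduce by induction that $g_1$ is also AND-OR, but then you must argue that two AND-OR functions $g_0\le g_1$ glue to an AND-OR on one more variable --- and this is simply \emph{false} without using the constraint $f_1 = 2F - f_0 \in [0,1]$. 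For instance $g_0 = x_1$, $g_1 = x_1\lor x_2$ are both AND-OR with $g_0\le g_1$, yet force $f_1(1,1) = -2\lambda < 0$; correspondingly the glued $g$ has minterms $\{1\}$ and $\{2,3\}$ of different sizes and is not AND-OR. Turning the $[0,1]$-constraint into ``the clauses of $g_0$ and $g_1$ line up'' is exactly the hard part, and your sketch contains no mechanism for it. The paper takes an entirely different route: it shows $g$ is monotone, proves (via an inclusion--exclusion identity, Proposition~\ref{prop:below_1}) that all minterms of $g$ have the same size $m$, shows $f$ is essentially $\{0,2^m\lambda\}$-valued, and then runs a hypergraph-coloring argument to prove the minterms form a complete $m$-partite hypergraph --- which is equivalent to $g$ being AND-OR of width $m$.
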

This section is devoted for the proof of this lemma, and the proof is
divided into several claims. It will be convenient for us to identify vectors
in $\power{n}$ with subsets of $[n]$ by identifying a vector with its support,
and consequently think of the inputs of functions as subsets of $[n]$.
The definition of the operator $\T$ to these language is immediate:
$\T f(B) = \Expect{{\bf C}\subseteq B}{f({\bf C})} = 2^{-\card{B}}\sum\limits_{C\subseteq B} f(C)$.

Fix $\zeta,n$, and choose $\eta = \zeta 4^{-n^2-4n-4}$. Let
$f,g$ be functions as in the statement of the lemma.

\begin{claim}\label{claim:g_mono_rho_half}
  $g$ is monotone.
\end{claim}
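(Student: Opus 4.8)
The plan is to show that $g$ is monotone by exploiting the near-eigenvalue equation $\|\Tdown{p}{p/2} f - \lambda g\|_\infty \le \eta$ together with the fact that $\T f$ is, by construction, a monotone-increasing quantity whenever $f$ takes values in $[0,1]$. First I would recall the combinatorial reading of $\T$: for a subset $B \subseteq [n]$, we have $\T f(B) = 2^{-|B|}\sum_{C \subseteq B} f(C)$. The key structural observation is a monotonicity-type inequality for $\T$: if $B \subseteq B'$, then $\T f(B)$ and $\T f(B')$ cannot differ too wildly, and more importantly, adding an element to $B$ can only ``blend in'' more subsets. Concretely, I would fix a coordinate $i \notin B$ and compare $\T f(B)$ with $\T f(B \cup \{i\})$, writing the latter as an average $\T f(B \cup \{i\}) = \tfrac12 \T f(B) + \tfrac12 \T f_i(B)$, where $f_i(C) = f(C \cup \{i\})$ is the restriction of $f$ with coordinate $i$ set to $1$. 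Since $f \ge 0$ and $\lambda g = \T f \pm \eta$, this identity forces $\lambda g(B \cup \{i\})$ to be roughly the average of $\lambda g(B)$ and something nonnegative, hence $\lambda g(B\cup\{i\}) \ge \tfrac12 \lambda g(B) - O(\eta)$; in particular $g$ cannot drop from $1$ to $0$ when we add a coordinate, provided $\eta$ is small relative to $\zeta \le \lambda$.

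More carefully, to rule out $g(B) = 1$ and $g(B \cup \{i\}) = 0$, I would use the eigenvalue equation at $B$: $\lambda = \lambda g(B) \le \T f(B) + \eta$, so $\T f(B) \ge \lambda - \eta \ge \zeta - \eta$. On the other hand, $\T f(B \cup \{i\}) \le \lambda g(B \cup \{i\}) + \eta = \eta$. But $\T f(B \cup \{i\}) = \tfrac12\big(\T f(B) + \T f_i(B)\big) \ge \tfrac12 \T f(B) \ge \tfrac12(\zeta - \eta)$, since $f_i \ge 0$ implies $\T f_i(B) \ge 0$. This gives $\tfrac12(\zeta - \eta) \le \eta$, i.e.\ $\zeta \le 3\eta$, which contradicts the choice $\eta = \zeta 4^{-n^2 - 4n - 4}$ (any bound $\eta < \zeta/3$ suffices here, so the crude choice of $\eta$ is more than enough). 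Hence no such ``downward'' edge exists, and applying this to every pair $B \subseteq B \cup \{i\}$ shows $g$ is monotone.

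The main subtlety — though it is minor for this particular claim — is making sure the decomposition $\T f(B \cup \{i\}) = \tfrac12 \T f(B) + \tfrac12 \T f_i(B)$ is set up with the right normalization: summing over $C \subseteq B \cup \{i\}$ splits into the $2^{|B|}$ subsets not containing $i$ and the $2^{|B|}$ subsets containing $i$, and dividing by $2^{|B|+1}$ yields exactly this convex combination. I do not expect any genuine obstacle here; the hard work of the lemma lies in the later claims (identifying the AND-OR structure of $g$ and inverting $\T$ on $f$), where the full strength of the tiny $\eta = \zeta 4^{-n^2 - 4n - 4}$ and the bound $3^n \eta$ will actually be consumed. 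For this first claim, only the weak inequality $\eta < \zeta/3$ is needed, and it follows immediately.
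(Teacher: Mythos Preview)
Your proof is correct and essentially identical to the paper's: both use that $\T f(B\cup\{i\}) \ge \tfrac12 \T f(B)$ (equivalently, $\T f(A) \le 2\,\T f(B)$ when $A\subsetneq B$ differ in one coordinate), which follows from non-negativity of $f$, and combine the bounds $\T f(\text{smaller}) \ge \lambda - \eta$ and $\T f(\text{larger}) \le \eta$ to reach a contradiction. The paper simply writes the inequality by conditioning on whether $i$ lies in the random subset, rather than via your explicit decomposition $\T f(B\cup\{i\}) = \tfrac12 \T f(B) + \tfrac12 \T f_i(B)$, but the content is the same.
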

\begin{proof}
  Suppose $g$ is not monotone. Then there is an edge $(A,B)$ of the hypercube
  where $A\subsetneq B$ such that $g(A) = 1$, $g(B)=0$. We have that $\T f(B)\leq \lambda g(B) + \eta = \eta$,
  which by definition of $\T$ implies that $\Expect{{\bf C}\subseteq B}{f({\bf C})} \leq \eta$. Denote $\set{i} = B\setminus A$,
  and note that with probability $1/2$ we have $i\not\in {\bf C}$, in which case ${\bf C}\subseteq A$.
  Thus, the non-negativity of $f$ implies that $\Expect{{\bf C}\subseteq A}{f({\bf C})} \leq 2\Expect{{\bf C}\subseteq B}{f({\bf C})}\leq 2\eta$,
  i.e. $\T f(A)\leq 2\eta$. This is in contradiction to $\T f(A)\geq \lambda g(A) - \eta = \lambda - \eta$ (by the choice of $\eta$).
\end{proof}

Since $g$ is monotone, one can discuss its minterms, i.e.\ sets $M\subseteq [n]$ such that $g(M) = 1$ but for all $A\subsetneq M$, $g(A) = 0$.
The following lemma asserts that the value of $f$ on any minterm of $g$ is determined (up to a small error).
\begin{claim}\label{claim:value_in_minterm}
  If $M$ is a minterm of $g$, then $\card{f(M) - \lambda 2^{\card{M}}}\leq 4^{\card{M}}\eta$.
\end{claim}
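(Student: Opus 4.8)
The plan is to isolate $f(M)$ from the defining identity $\T f(M) = 2^{-\card{M}}\sum_{C\subseteq M} f(C)$ and control every term on the right other than $f(M)$ itself using the minterm hypothesis on $g$.

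First I would use that $M$ is a minterm, so $g(M)=1$, and hence $\norm{\T f - \lambda g}_\infty \le \eta$ gives $\card{\T f(M)-\lambda}\le \eta$, i.e. $\bigl|\sum_{C\subseteq M} f(C) - \lambda 2^{\card{M}}\bigr|\le 2^{\card{M}}\eta$. Next I would bound the contribution of the proper subsets $C\subsetneq M$. Every such $C$ is contained in $M\setminus\set{i}$ for some $i\in M$, and since $M$ is a minterm we have $g(M\setminus\set{i})=0$, so the hypothesis yields $\T f(M\setminus\set{i})\le\eta$, that is $\sum_{C\subseteq M\setminus\set{i}} f(C)\le 2^{\card{M}-1}\eta$. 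Because $f\ge 0$, summing these inequalities over $i\in M$ only overcounts (never cancels), so $\sum_{C\subsetneq M} f(C)\le \sum_{i\in M}\sum_{C\subseteq M\setminus\set{i}} f(C)\le \card{M}\,2^{\card{M}-1}\eta$.

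Combining the two estimates via the triangle inequality, and writing $f(M) = \sum_{C\subseteq M}f(C) - \sum_{C\subsetneq M}f(C)$, I would get
\[
\card{f(M)-\lambda 2^{\card{M}}} \;\le\; 2^{\card{M}}\eta + \card{M}\,2^{\card{M}-1}\eta \;=\; 2^{\card{M}}\eta\Bigl(1+\tfrac{\card{M}}{2}\Bigr) \;\le\; 4^{\card{M}}\eta,
\]
using $1+\card{M}/2\le 2^{\card{M}}$ for every integer $\card{M}\ge 0$ (the case $M=\emptyset$, which occurs when $g\equiv 1$, is already covered since then $\T f(\emptyset)=f(\emptyset)$ and the bound is just $\card{f(\emptyset)-\lambda}\le\eta$).

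There is no serious obstacle here; the argument is a short computation. The only point that needs a little care is the use of non-negativity of $f$ to absorb the overcounting in the union bound over maximal proper subsets $M\setminus\set{i}$, and noting that the resulting constant $4^{\card{M}}$ is deliberately loose (a sharper polynomial-in-$\card{M}$ factor would suffice, but is unnecessary given the choice of $\eta$ in the surrounding proof).
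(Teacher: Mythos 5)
Your proof is correct and is essentially the same argument as the paper's: isolate $f(M)$ from $\T f(M)=2^{-\card{M}}\sum_{C\subseteq M}f(C)$, bound $\sum_{C\subsetneq M}f(C)$ by $\card{M}\,2^{\card{M}-1}\eta$ using the fact that $g$ vanishes on the proper maximal subsets $M\setminus\set{i}$, and finish with the triangle inequality. The only cosmetic difference is that the paper's main text averages over a random $(\card{M}-1)$-subset to extract a single good $B$, while you sum over all $i\in M$ and absorb the overcounting by non-negativity of $f$ --- which is precisely the alternative the paper records in its footnote, and gives the same bound.
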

\begin{proof}
  Since $g(M)=1$, we have that $\card{\T f(M) - \lambda}\leq \eta$, and by definition of
  $\T$ we have $\T f(M) = 2^{-\card{M}}\sum\limits_{A\subseteq M} f(A)$, so by the triangle
  inequality it follows that $\card{f(M) - 2^{\card{M}}\lambda}\leq 2^{\card{M}}\eta + \sum\limits_{A\subsetneq M} f(A)$
  , and to finish the proof, we upper bound the last sum. Note that for every $A\subsetneq M$, choosing
  ${\bf B}\subsetneq M$ randomly of size $\card{M}-1$, we have that $A\subseteq {\bf B}$ with probability at least $1/\card{M}$,
  hence by the non-negativity of $f$ there is $B$ of size $\card{M}-1$ such that $\sum\limits_{A \subsetneq M} f(A)\leq \card{M}\sum\limits_{A\subseteq B} f(A)$,
  and we fix such $B$.\footnote{Alternatively, note that $\sum_{A \subsetneq M} f(A) \leq \sum_{|B|=m-1} \sum_{A \subseteq B} f(A)$, and take $B$ maximizing $\sum_{A \subseteq B} f(A)$.}
  Since $B\subsetneq M$ and $M$ is a minterm of $g$, we have that $g(B)=0$, and therefore
  $\T f(B)\leq \eta$ or equivalently $\sum\limits_{A \subseteq B} f(A)\leq 2^{\card{B}} \eta$. Plugging that in we get
  that $\sum\limits_{A \subsetneq M} f(A)\leq \card{M}2^{\card{M}-1} \eta$ and the claim follows.
\end{proof}

We next wish to argue all minterms of $g$ are of the same size, and towards this end
(and also in other places in the argument) the following proposition will be useful.
\begin{proposition}\label{prop:below_1}
  Let $B,Z\subseteq[n]$ be disjoint such that $g(B) = 1$. Then
  \[
  \card{\sum\limits_{A\subseteq B}{f(A\cup Z)} - 2^{\card{B}}\lambda}\leq 2^{\card{B}} 3^{\card{Z}}\eta.
  \]
\end{proposition}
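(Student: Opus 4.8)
The plan is to push everything back to the hypothesis $\norm{\T f - \lambda g}_{\infty}\leq \eta$, which in the set language of this section says that for every $x\subseteq[n]$ with $g(x)=1$ we have $\card{\sum_{C\subseteq x}f(C) - 2^{\card{x}}\lambda}\leq 2^{\card{x}}\eta$, since $\T f(x) = 2^{-\card{x}}\sum_{C\subseteq x}f(C)$. Because $g$ is monotone (Claim~\ref{claim:g_mono_rho_half}) and $g(B)=1$, this estimate holds not only at $B$ but at every set of the form $B\cup W$ with $W\subseteq Z$, as $B\cup W\supseteq B$ forces $g(B\cup W)=1$. This observation — applying the hypothesis along the whole chain above $B$ inside $B\cup Z$ — is the one place monotonicity is used, and is really the only thing one must be careful about.

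First I would expand the sum $\sum_{C\subseteq B\cup W}f(C)$ by the unique decomposition $C = A\cup V$ with $A = C\cap B\subseteq B$ and $V = C\cap W\subseteq W$ (valid since $B$ and $Z\supseteq W$ are disjoint). Writing $G(V) := \sum_{A\subseteq B}f(A\cup V)$ for $V\subseteq Z$, this gives $\sum_{C\subseteq B\cup W}f(C) = \sum_{V\subseteq W}G(V)$. The quantity we want to bound is exactly $G(Z)$, so I would apply M\"obius inversion over the Boolean lattice of subsets of $Z$ to get $G(Z) = \sum_{W\subseteq Z}(-1)^{\card{Z}-\card{W}}\sum_{C\subseteq B\cup W}f(C)$.

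Next I would split off the main term. For $W\subseteq Z$ set $e_W := \sum_{C\subseteq B\cup W}f(C) - 2^{\card{B}+\card{W}}\lambda$; the previous paragraph gives $\card{e_W}\leq 2^{\card{B}+\card{W}}\eta$. Then $G(Z) = \sum_{W\subseteq Z}(-1)^{\card{Z}-\card{W}}e_W + 2^{\card{B}}\lambda\sum_{W\subseteq Z}(-1)^{\card{Z}-\card{W}}2^{\card{W}}$, and the last sum is $(2-1)^{\card{Z}}=1$ by the binomial theorem, so $G(Z) - 2^{\card{B}}\lambda = \sum_{W\subseteq Z}(-1)^{\card{Z}-\card{W}}e_W$. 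Taking absolute values and using $\sum_{W\subseteq Z}2^{\card{B}+\card{W}}\eta = 2^{\card{B}}\eta\,(1+2)^{\card{Z}} = 2^{\card{B}}3^{\card{Z}}\eta$ yields the claimed inequality $\card{G(Z) - 2^{\card{B}}\lambda}\leq 2^{\card{B}}3^{\card{Z}}\eta$.

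There is no substantial obstacle; the proof is just inclusion--exclusion, and the $3^{\card{Z}}$ factor in the error is precisely the price paid for summing $\card{e_W}\leq 2^{\card{B}+\card{W}}\eta$ over all $W\subseteq Z$ — this is the source of the $3^n$-type losses that propagate into Lemma~\ref{lemma:main_exact}. The only modeling point worth stating carefully is that the hypothesis is legitimately invoked at every $B\cup W$, which is where $g$'s monotonicity is essential.
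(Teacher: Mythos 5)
Your proof is correct and is essentially the same as the paper's: both rely on monotonicity of $g$ to apply the $L_\infty$ hypothesis at every $B\cup W$ with $W\subseteq Z$, then use inclusion--exclusion (M\"obius inversion on the Boolean lattice of subsets of $Z$) to isolate $\sum_{A\subseteq B}f(A\cup Z)$, and bound the error by the alternating sum of the $2^{\card{B\cup W}}\eta$ terms, giving $2^{\card{B}}3^{\card{Z}}\eta$ with the main term extracted via the binomial identity $\sum_{W\subseteq Z}(-1)^{\card{Z\setminus W}}2^{\card{W}}=1$.
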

\begin{proof}
  For any $W\subseteq Z$, we have that $\card{\T f(B\cup W) - \lambda g(B\cup W)}\leq \eta$. Since $g$ is monotone
  and $g(B) = 1$, we must have $g(B\cup Y) = 1$ and we get that
  \[
    \card{\sum\limits_{A\subseteq B, Y\subseteq W} f(A\cup Y) - 2^{\card{B\cup W}}\lambda}\leq 2^{\card{B\cup W}}\eta.
  \]
  Note that for any $Y\subseteq Z$, $\sum\limits_{W\colon Y\subseteq W\subseteq Z}(-1)^{\card{Z\setminus W}} = 0$
  unless $Y=Z$, in which case the sum is $1$, and so we get that
  \[
  \sum\limits_{A\subseteq B} f(A\cup Z)
  =\sum\limits_{\substack{A\subseteq B, Y\subseteq Z\\ W\colon Y\subseteq W\subseteq Z}}(-1)^{\card{Z\setminus W}} f(A\cup Y)
  =\sum\limits_{W\subseteq Z}(-1)^{\card{Z\setminus W}}\sum\limits_{A\subseteq B, Y\subseteq W} f(A\cup Y).
  \]
   Therefore
   the triangle inequality implies that
   \[
   \card{\sum\limits_{A\subseteq B} f(A\cup Z) - \lambda\sum\limits_{W\subseteq Z}(-1)^{\card{Z\setminus W}} 2^{\card{B\cup W}}}
   \leq \sum\limits_{W\subseteq Z}\card{\sum\limits_{A\subseteq B, Y\subseteq W} f(A\cup Y)- 2^{\card{B\cup W}}\lambda}
   \leq \sum\limits_{W\subseteq Z}2^{\card{B\cup W}}\eta,
   \]
   which is equal to $2^{\card{B}} 3^{\card{Z}}\eta$. To complete the proof, we observe that by the binomial
   formula
   \[
   \sum\limits_{W\subseteq Z}(-1)^{\card{Z\setminus W}} 2^{\card{B\cup W}}
   =2^{\card{B}}\sum\limits_{W\subseteq Z}{2^{\card{W}} (-1)^{\card{Z} - \card{W}}} = 2^{\card{B}}(2-1)^{\card{Z}} = 2^{\card{B}}.
   \qedhere
   \]
\end{proof}

We now show two consequences of the above proposition. First,
we show that all minterms of $g$ have the same size.
\begin{claim}
  Let $M,M'$ be two minterms of $g$. Then $\card{M} = \card{M'}$.
\end{claim}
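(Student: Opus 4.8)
The plan is to show that all minterms of $g$ have the same size by a contradiction argument, using Proposition~\ref{prop:below_1} together with the value constraint from Claim~\ref{claim:value_in_minterm}. Suppose toward contradiction that $M, M'$ are minterms of $g$ with $|M| < |M'|$ (without loss of generality). The key idea is that $f$ is $[0,1]$-valued, so all of its values are bounded by $1$, but Claim~\ref{claim:value_in_minterm} forces $f(M') \approx \lambda 2^{|M'|}$, while we would like to derive from the larger minterm some quantity that is forced to be large and yet bounded by $\sum_{A \subseteq B} f(A \cup Z) \le 2^{|B|}$ (since each $f$-value is at most $1$). More precisely, I would apply Proposition~\ref{prop:below_1} with $B = M$ (so $g(B) = 1$) and a suitable disjoint $Z$, chosen so that $2^{|B|}\lambda$ is as large as possible relative to the trivial upper bound $2^{|B|}$ on $\sum_{A \subseteq B} f(A \cup Z)$ — but since $\lambda \le 1$ this alone is not contradictory, so the real contradiction must come from comparing the two minterms directly.

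Here is the more careful route I would take. Since $M'$ is a minterm and $|M'| > |M|$, there is an element $i \in M' \setminus M$ (if $M' \subseteq M$ then since both are minterms we'd get $M = M'$, contradiction; and $M \not\subseteq M'$ similarly, but in fact I only need $M' \setminus M \neq \emptyset$, which holds as $|M'| > |M|$ rules out $M' \subseteq M$). Actually the cleanest contradiction: apply Proposition~\ref{prop:below_1} with $B = M$ and $Z$ a set of size $|M'| - |M|$ disjoint from $M$; this gives $\sum_{A \subseteq M} f(A \cup Z) \approx 2^{|M|}\lambda$. Now instead I would look for a minterm-based lower bound: since $M'$ is a minterm with $g(M') = 1$, and $M'$ has more elements, consider that any proper subset of $M'$ has $g$-value $0$, forcing (via $\T f(B) \le \eta$ for $g(B) = 0$) that $f$ is tiny on all subsets of any $(|M'|-1)$-subset of $M'$. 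Combined with Claim~\ref{claim:value_in_minterm} giving $f(M') \approx \lambda 2^{|M'|}$, we get $f(M') > 1$ once $|M'|$ is large enough — but $\lambda \ge \zeta$ and the width bound $r \le \lceil \log(2/\zeta) \rceil$ means $|M'|$ can't be too large on its own. The genuine contradiction is therefore between the two minterms: I would set $B = M$, $Z = M' \setminus M$ with $|Z| = |M'| - |M| \ge 1$, and observe that among the sets $A \cup Z$ with $A \subseteq M$, the set $M \cup Z \supseteq M'$ (if $M' \subseteq M \cup Z$) or more carefully a set containing $M'$ — hmm, $M' \subseteq M \cup (M'\setminus M)$ always holds, so $M' \subseteq M \cup Z$ and by monotonicity $g(M \cup Z) = 1$; but then $M \cup Z$ is not a minterm unless $M \cup Z = M'$, i.e.\ $M \subseteq M'$.

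So the argument splits on whether $M \subseteq M'$. If $M \subseteq M'$ with $M \neq M'$, then $M$ being a minterm with $M \subsetneq M'$ contradicts $M'$ being a minterm directly. If $M \not\subseteq M'$ (so pick $j \in M \setminus M'$), then I use Proposition~\ref{prop:below_1} with $B = M' $ (so $g(M') = 1$, $|B| = |M'|$) and $Z = \{j\}$ or a larger disjoint set, to bound $\sum_{A \subseteq M'} f(A \cup Z)$; meanwhile each proper subset $A \subsetneq M'$ has $g(A) = 0$ hence (taking $B' \supseteq A$ of size $|M'|-1$ inside $M'$) $f$ is $O(\eta)$ on it via the same covering trick as in Claim~\ref{claim:value_in_minterm}. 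This pins down that the sum $\sum_{A \subseteq M'} f(A) \approx f(M') \approx \lambda 2^{|M'|}$, and then invoking Proposition~\ref{prop:below_1} symmetrically from $B = M$ should force $f$ to exceed $1$ on some point, contradicting $f \colon \power{n} \to [0,1]$. The step I expect to be the main obstacle is arranging the combinatorics of which disjoint $Z$ to use so that a single $f$-value is cornered into being $> 1$: one wants $2^{|B|}\lambda$ to dominate the count $2^{|B|} \cdot 1$ of available terms, which needs most terms to be near $0$ — and that near-vanishing is exactly what minterm-ness buys, but only for subsets of the smaller minterm, so the bookkeeping of error terms $2^{|B|} 3^{|Z|} \eta$ against the choice $\eta = \zeta 4^{-n^2 - 4n - 4}$ must be tracked carefully to ensure the contradiction is quantitative and not swamped by accumulated error.
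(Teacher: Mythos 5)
You have identified the right pair of tools---Proposition~\ref{prop:below_1} and Claim~\ref{claim:value_in_minterm}---and even the right choices $B=M$, $Z=M'\setminus M$, but you never assemble them into the one-line inequality that actually closes the argument, and the route you gesture at instead (forcing some $f$-value above $1$) is not the right one. After applying Proposition~\ref{prop:below_1} with $B=M$ and $Z=M'\setminus M$, the decisive observation is that \emph{one particular term} of the sum $\sum_{A\subseteq M} f(A\cup Z)$, namely the term $A = M\cap M'$, has $A\cup Z = (M\cap M')\cup(M'\setminus M) = M'$. Since all terms are non-negative, this single term lower-bounds the whole sum, and Claim~\ref{claim:value_in_minterm} makes that term $\geq \lambda 2^{|M'|} - 4^{|M'|}\eta$. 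Combined with the Proposition's upper bound $\sum_{A\subseteq M} f(A\cup Z) \leq 2^{|M|}\lambda + 2^{|M|}3^{|Z|}\eta$, you get $\lambda 2^{|M'|} \leq \lambda 2^{|M|} + (\text{error} \leq \lambda/2)$, which forces $|M'|\leq|M|$; the reverse follows by symmetry. Nowhere is the bound $f\leq 1$ used in this claim, so trying to ``corner an $f$-value into exceeding $1$'' cannot work here: the minterm size is bounded by $\lceil\log(2/\lambda)\rceil$, so $\lambda 2^{|M'|}$ can be as small as $\approx 1$ and you would never get the contradiction. (The $f\leq 1$ argument does appear in the paper, but only right \emph{after} this claim, to bound the common minterm size $m$.)

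Two further remarks. First, your case split on $M\subseteq M'$ is unnecessary: the argument above applies uniformly, and in the degenerate case $M\subseteq M'$ one simply gets $Z=M'\setminus M$, $A=M\cap M'=M$, and $A\cup Z=M'$ as before. (Your observation that $M\subsetneq M'$ would directly contradict $M'$ being a minterm is correct, but it is a distraction.) Second, you at one point take $Z$ to be ``a set of size $|M'|-|M|$ disjoint from $M$,'' which is not the same as $Z=M'\setminus M$ when $M\not\subseteq M'$; you need $Z=M'\setminus M$ specifically so that $(M\cap M')\cup Z = M'$. In short, the tools are right but the key step---isolating the single summand $A=M\cap M'$ and comparing $2^{|M'|}$ against $2^{|M|}$ at scale $\lambda$---is missing, and the alternative contradiction you pursue would not go through.
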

\begin{proof}
  Let $Z=M'\setminus M$. By Proposition~\ref{prop:below_1} we have
  \[
  \sum\limits_{A\subseteq M}{f(A\cup Z)}\leq 2^{\card{M}}\lambda + 2^{\card{M}} 3^{\card{Z}}\eta.
  \]
  By the non-negativity of $f$ the left-hand side is at least the value of $f$ for $A=M\cap M'$, i.e.\ on $A\cup Z = M'$;
  furthermore, by Claim~\ref{claim:value_in_minterm} we have $f(M')\geq \lambda 2^{\card{M'}} - 4^{\card{M'}}\eta$,
  so combining we get
  \[
  \lambda 2^{\card{M'}}\leq 2^{\card{M}}\lambda + \eta(2^{\card{M}} 3^{\card{Z}}+4^{\card{M'}})
  \leq 2^{\card{M}}\lambda + \lambda/2,
  \]
  where the last inequality is by the choice of $\eta$. This implies that $\card{M'}\leq \card{M}$. The second inequality
  is proved analogously.
\end{proof}

Denote the size of a minterm of $g$ by $m$, and note that $m\leq \ceil{\log(2/\lambda)}$.
Indeed, letting $M$ be any minterm of $g$, by Claim~\ref{claim:value_in_minterm} we get that
$\lambda 2^m\leq f(M) + 4^m\eta\leq 2$.

We next show that the value of $f$ in a point $B$ must be either close to $0$ or
close to $2^{m}\lambda$.
\begin{claim}\label{claim:f_2_val}
  For any $B\in\power{n}$, either $f(B)\leq 4^{\card{B}^2}\eta$ or $\card{f(B) - 2^{m} \lambda}\leq 4^{\card{B}}\eta$.
\end{claim}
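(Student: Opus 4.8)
The plan is to prove the dichotomy by induction on $\card{B}$, using Proposition~\ref{prop:below_1} as the key tool to control sums of $f$-values over subcubes. The base case is $B$ with $g(B)=1$: then Proposition~\ref{prop:below_1} with $Z=\emptyset$ (equivalently the triangle inequality applied directly to $\T f(B)=\lambda$) forces $\card{f(B)-2^{\card{B}}\lambda}$ small once we subtract off the contributions of proper subsets $A\subsetneq B$ — and those are controlled since $g$ is monotone, so every $A\subsetneq B$ with $g(A)=0$ has $\T f(A)\leq\eta$, hence $\sum_{A\subseteq A'}f(A)\leq 2^{\card{A'}}\eta$ for the maximal such $A'$. (Actually the cleanest route: if $g(B)=1$ then $B$ contains a minterm $M$, and I would combine Claim~\ref{claim:value_in_minterm} applied to $M$ with the observation that $f(B)$ is a single summand controlled from above by $\T f(M')$ for a suitable $M'$ of size $m$; but more robust is to note $f(B)\le\sum_{A\subseteq B}f(A)=2^{\card B}\T f(B)\le 2^{\card B}(\lambda+\eta)$, and conversely $f(B)\ge 2^{\card B}\T f(B)-\sum_{A\subsetneq B}f(A)$, with the latter sum bounded as in Claim~\ref{claim:value_in_minterm}.)

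For the inductive step, fix $B$ with $g(B)=0$, so $\T f(B)\leq\eta$ and hence $\sum_{A\subseteq B}f(A)\leq 2^{\card{B}}\eta$. If $g(A)=0$ for \emph{every} $A\subseteq B$ — in particular this happens whenever $g(B)=0$ and $B$ is ``below'' all minterms in the sense that no subset of $B$ satisfies $g$ — then $f(B)\leq\sum_{A\subseteq B}f(A)\leq 2^{\card{B}}\eta\leq 4^{\card{B}^2}\eta$ and we are in the first case. Otherwise, $B$ contains some $A_0\subseteq B$ with $g(A_0)=1$; pick a maximal such $A_0$, and write $B=A_0\cup Z$ with $Z=B\setminus A_0$ disjoint from $A_0$ and every $A$ with $A_0\subsetneq A\subseteq B$ satisfying $g(A)=0$. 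Now apply Proposition~\ref{prop:below_1} with this $A_0$ (playing the role of $B$ there) and this $Z$: it gives $\card{\sum_{A\subseteq A_0}f(A\cup Z) - 2^{\card{A_0}}\lambda}\leq 2^{\card{A_0}}3^{\card{Z}}\eta$. The term $A=A_0$ in this sum is exactly $f(B)$. The remaining terms $f(A\cup Z)$ with $A\subsetneq A_0$: each such set $A\cup Z$ is strictly contained in $B$, so by induction each is either $\leq 4^{\card{A\cup Z}^2}\eta\leq 4^{(\card{B}-1)^2}\eta$ or within $4^{\card{B}-1}\eta$ of $2^m\lambda$. The crux is to rule out the possibility that any of these lower terms is close to $2^m\lambda$: if $f(A\cup Z)\approx 2^m\lambda$ for some $A\subsetneq A_0$, then since $A\cup Z\subsetneq B$ and $g$ is monotone with $g(A_0)=1$... — here I would argue that $A\cup Z\subseteq B$ and $A\subsetneq A_0\subseteq B$ together with maximality of $A_0$ and Claim~\ref{claim:value_in_minterm}-type reasoning forces such an $A\cup Z$ to contain no minterm, hence $g(A\cup Z)=0$, and then re-running the ``all subsets are $0$'' sub-case down from $A\cup Z$ would contradict $f(A\cup Z)\approx 2^m\lambda$ unless the chain of subsets itself hits a minterm, which by maximality of $A_0$ it cannot if $Z\neq\emptyset$ is handled carefully.

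\textbf{Main obstacle.} The delicate point — and where I expect to spend the real effort — is the bookkeeping in the inductive step when several of the ``lower'' terms $f(A\cup Z)$ could a priori be near $2^m\lambda$: Proposition~\ref{prop:below_1} only pins down their \emph{signed sum}, so I need a separate monotonicity-based argument showing that $f(A\cup Z)$ near $2^m\lambda$ implies $g(A\cup Z)=1$ (e.g.\ because by the already-established structure $f$ near $2^m\lambda$ at a point essentially certifies that point lies above a minterm), and then to use that $A\cup Z\subsetneq B$ with $g(A\cup Z)=1$ contradicts maximality of $A_0$ in a way that closes the induction. Getting the constant $4^{\card{B}^2}$ to absorb the accumulated errors — note the exponent is \emph{quadratic} in $\card{B}$, precisely to give slack for $\card{B}$ nested applications each multiplying the error by a factor like $4^{\card B}3^{\card B}$ — is then a routine but somewhat fiddly computation that I would carry out at the end, checking that $\sum_{A\subsetneq A_0}4^{(\card B-1)^2}\eta + 2^{\card{A_0}}3^{\card Z}\eta \le 4^{\card B^2}\eta$ comfortably given the tiny choice $\eta=\zeta 4^{-n^2-4n-4}$.
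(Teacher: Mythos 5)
Your proposal misplaces the induction. The paper handles the two sub-cases very differently: when $g(B)=0$, the bound $2^{-|B|}f(B)\le \T f(B)\le \eta$ settles the claim in one line (first alternative), no induction needed. The induction is on the case $g(B)=1$. Your inductive step, by contrast, "fixes $B$ with $g(B)=0$" and then considers "$A_0\subseteq B$ with $g(A_0)=1$" — but by Claim~\ref{claim:g_mono_rho_half}, $g$ is monotone, so $g(B)=0$ forces $g(A)=0$ for every $A\subseteq B$, and your "otherwise" branch is vacuous. You never actually set up the induction for $g(B)=1$, where the real work is: there one picks a minterm $M\subseteq B$, writes $Z=B\setminus M$, and applies Proposition~\ref{prop:below_1} to the pair $(M,Z)$.

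Your "main obstacle" is also a red herring. You worry that Proposition~\ref{prop:below_1} only pins the "signed sum" and that several lower terms could simultaneously be near $2^m\lambda$, so you try to \emph{rule out} any lower term being large — an argument you correctly sense you cannot complete. But the sum $\sum_{A\subseteq M} f(A\cup Z)$ is a sum of non-negative terms ($f\ge 0$), so there is no cancellation, and the dichotomy closes itself: if some $A^\star\subsetneq M$ has $f(A^\star\cup Z)$ near $2^m\lambda$, then non-negativity plus the sum constraint forces the top term $f(B)=f(M\cup Z)$ to be small (giving the \emph{first} alternative, not a contradiction); if no $A\subsetneq M$ has $f(A\cup Z)$ large, then by the induction hypothesis all those terms are $\le 4^{(|B|-1)^2}\eta$-small, and the sum constraint forces $f(B)$ near $2^m\lambda$ (second alternative). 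Your plan to show "near $2^m\lambda$ implies $g=1$ and then contradict maximality of $A_0$" is neither needed nor provable; the proof does not rule this out, it absorbs it.

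One more detail: your exponent bookkeeping uses $4^{|B|-1}\eta$ for the "close to $2^m\lambda$" alternative at level $|B|-1$, which is correct, but the error in the first alternative grows like $4^{|A\cup Z|^2}\eta$, and it is exactly the possibility that \emph{all} $A\subsetneq M$ fall into that case that drives the quadratic exponent in the claim (you sum $\le 2^m$ such terms, each $\le 4^{(|B|-1)^2}\eta$, plus the Proposition error $\le 2^m 3^{|Z|}\eta$, and check this is $\le 4^{|B|^2}\eta$). You gesture at this but in the wrong place, attaching it to the sub-case you should not be worrying about.
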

\begin{proof}
  If $g(B) = 0$, the claim is immediate since $2^{-\card{B}} f(B) \leq \T f(B)\leq \lambda g(B) + \eta = \eta$,
  so we assume that $g(B) = 1$. We prove the claim by induction on $\card{B}$. If $\card{B} = m$, then
  $B$ is a minterm and the claim follows from Claim~\ref{claim:value_in_minterm}. Assume the claim holds for
  all $\card{B}\leq i$, and let $B$ be of size $i+1$. Since $g(B)=1$ we get that there must be a minterm
  $M\subseteq B$ of $g$. Let $Z = B\setminus M$, then by Proposition~\ref{prop:below_1},
  \[
  \card{\sum\limits_{A\subseteq M}{f(A\cup Z)} - 2^{m}\lambda}\leq 2^{m} 3^{\card{Z}}\eta.
  \]
  For each $A\subsetneq M$, by the induction hypothesis $f(A\cup Z)$ is either close to $0$ (i.e.\ at most $4^{\card{A\cup Z}^2}\eta$)
  or close to $2^m\lambda$ (more precisely, up to $\pm 4^{\card{A\cup Z}}\eta$).
  \begin{itemize}
    \item If there is $A^{\star}\subsetneq M$ that falls into
  the second case, then by non-negativity of $f$ we get
  \begin{align*}
  f(B) = f(M \cup Z)
  &\leq \sum\limits_{A\subseteq M}{f(A\cup Z)} - f(A^{\star}\cup Z)\\
  &\leq \card{\sum\limits_{A\subseteq M}{f(A\cup Z)} - 2^m\lambda}
  +\card{f(A^{\star}\cup Z)-2^m\lambda}\\
  &\leq 2^{m} 3^{\card{Z}}\eta + 4^{\card{A^{\star}\cup Z}}\eta
  \leq 4^{\card{B}}\eta,
  \end{align*}
  where in the last inequality we used $\card{A^{\star}}\leq m-1$, and the claim is proved for $B$.
    \item Otherwise, by the triangle inequality
  \begin{align*}
    \card{f(M\cup Z) - 2^m\lambda}
    &\leq \card{\sum\limits_{A\subsetneq M} f(A\cup Z)} + \card{\sum\limits_{A\subseteq M} f(A\cup Z) - 2^{m}\lambda}\\
    &\leq \sum\limits_{A\subsetneq M} 4^{\card{A\cup Z}^2} + 2^{m} 3^{\card{Z}}\eta\\
    &\leq \left(4^{(m+\card{Z}-1)^2 + m} + 3^{m+\card{Z}}\right)\eta,
  \end{align*}
  which is at most $4^{(m+\card{Z})^2}\eta = 4^{\card{B}^2}\eta$. Hence the claim is proved for
  $B$ (as $B = M\cup Z$).
  \qedhere
  \end{itemize}
\end{proof}

We are now able to restate Proposition~\ref{prop:below_1} in a more convenient form.
For each pair of disjoint sets $B,Z\subseteq[n]$ such that $g(B) = 1$,
denote $X(B,Z) = \sett{ A\cup Z }{A\subseteq B}$.
\begin{corollary}\label{corr:prop_reformulate}
  Suppose $B,Z\subseteq[n]$ are disjoint and $g(B) = 1$.
  Then there is a unique $A^{\star}\subseteq B$ such that:
  \begin{itemize}
  \item $\card{f(A^{\star}\cup Z) - 2^m\lambda}\leq 4^{n}\eta$.
  \item For any other $A\subset B$ we have that $f(A\cup Z)\leq 4^{n^2}\eta$.
  \item $g(A^{\star}\cup Z) = 1$ and for any $A\subsetneq A^{\star}$ we have $g(A\cup Z) = 0$.
  \end{itemize}
\end{corollary}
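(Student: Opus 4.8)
The corollary simply repackages the averaging identity of Proposition~\ref{prop:below_1} together with the value dichotomy of Claim~\ref{claim:f_2_val}, so the proof is short; I would organize it in three steps. \textbf{Step 1 (the dichotomy).} For every $A\subseteq B$ apply Claim~\ref{claim:f_2_val} to the point $A\cup Z$: each value $f(A\cup Z)$ is either \emph{small}, meaning $f(A\cup Z)\leq 4^{|A\cup Z|^2}\eta\leq 4^{n^2}\eta$, or \emph{large}, meaning $|f(A\cup Z)-2^m\lambda|\leq 4^{|A\cup Z|}\eta\leq 4^n\eta$. These are precisely the quantitative bounds appearing in the first two bullets, so all that remains is to show there is exactly one large point and to match it with the minimal $g$-satisfier in the third bullet.

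\textbf{Step 2 (counting the large points).} Let $k$ be the number of $A\subseteq B$ for which $f(A\cup Z)$ is large. Summing the dichotomy over all $A\subseteq B$ and comparing with Proposition~\ref{prop:below_1} gives
\[
\big|\,k\cdot 2^m\lambda-2^{|B|}\lambda\,\big|\;\leq\;2^{|B|}3^{|Z|}\eta+k\cdot 4^n\eta+2^{|B|}4^{n^2}\eta .
\]
Using $|B|,|Z|\leq n$, $k\leq 2^n$, $\lambda\geq\zeta$, and the choice $\eta=\zeta 4^{-n^2-4n-4}$, the right-hand side is below $\lambda/2$; since $k\cdot 2^m-2^{|B|}$ is an integer this forces $k\cdot 2^m=2^{|B|}$. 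Hence exactly $2^{|B|-m}$ of the points are large, which equals one precisely when $|B|=m$, i.e.\ when $B$ is itself a minterm of $g$ --- the regime in which the corollary is invoked. Uniqueness of an $A^\star$ meeting all three bullets is in any case automatic: the second bullet for one candidate says every other $A$ is small, contradicting the first bullet of a second candidate.

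\textbf{Step 3 (locating $A^\star$).} Take $A^\star\subseteq B$ inclusion-minimal with $g(A^\star\cup Z)=1$; it exists because $g(B)=1$ and, by Claim~\ref{claim:g_mono_rho_half}, $g$ is monotone, and it satisfies the third bullet by construction. Monotonicity of $g$ supplies a minterm $M$ of $g$ with $M\subseteq A^\star\cup Z$; since every minterm has the common size $m$ and $M\setminus Z\subseteq A^\star$, minimality of $A^\star$ identifies $A^\star\cup Z$ with a minterm, and Claim~\ref{claim:value_in_minterm} then places $f(A^\star\cup Z)$ in the large alternative --- the first bullet. Finally, for any $A\subseteq B$ with $A\neq A^\star$: either $g(A\cup Z)=0$, so that $2^{-|A\cup Z|}f(A\cup Z)\leq\T f(A\cup Z)\leq\eta$ and $f(A\cup Z)$ is small; or $g(A\cup Z)=1$, in which case $A\cup Z$ strictly contains the minterm just found and the count $k=1$ of Step~2 forces the small alternative --- the second bullet.

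The argument presents no real obstacle. The only delicate points are the constant bookkeeping in Step~2 --- verifying that each error term really is below $\lambda/2$ for the prescribed $\eta$ --- and the identification in Step~3 of $A^\star\cup Z$ with a minterm of $g$, which is exactly where monotonicity of $g$ (Claim~\ref{claim:g_mono_rho_half}) and the uniform minterm size $m$ come in.
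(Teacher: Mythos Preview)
Your Steps~1 and~2 are correct, and Step~2 in fact sharpens the statement: the identity $k\cdot 2^m=2^{|B|}$ shows that uniqueness of the ``large'' point really requires $|B|=m$, i.e.\ that $B$ is a minterm, which is indeed how the corollary is always invoked. (The paper's own uniqueness argument writes the Proposition~\ref{prop:below_1} bound as $2^m\lambda+6^n\eta$ rather than $2^{|B|}\lambda+6^n\eta$, silently using this.)

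Step~3 has a genuine gap. From a minterm $M\subseteq A^\star\cup Z$ and minimality of $A^\star$ you correctly deduce $A^\star=M\setminus Z$, but then $A^\star\cup Z=(M\setminus Z)\cup Z=M\cup Z$, which is a minterm only when $Z\subseteq M$. For instance with $g(x)=x_1$, $B=\{1\}$, $Z=\{2\}$ one gets $A^\star=\{1\}$ and $A^\star\cup Z=\{1,2\}$, not a minterm; Claim~\ref{claim:value_in_minterm} does not apply, and you have not shown that your $g$-minimal $A^\star$ is the unique ``large'' point produced by Step~2. The paper goes in the opposite direction: it first isolates $A^\star$ as the unique large point directly from Proposition~\ref{prop:below_1} and Claim~\ref{claim:f_2_val} (if no $A$ is large the sum is too small; if two are large the sum is too big), and only afterwards reads off the $g$-values of the third bullet from $|\T f-\lambda g|_\infty\leq\eta$. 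Your route is salvageable: since $g(A^\star\cup Z)=1$, apply Proposition~\ref{prop:below_1} with $A^\star$ in place of $B$ to find some $A'\subseteq A^\star$ with $f(A'\cup Z)$ large; any large value forces $g$-value~$1$ via $\lambda g\geq\T f-\eta\geq 2^{-n}f-\eta$, so minimality of $A^\star$ gives $A'=A^\star$.
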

\begin{proof}
  For the first item, if for all $A\subseteq B$ it holds that $f(A\cup Z)\leq 4^{n^2}\eta$, then
  by Proposition~\ref{prop:below_1} we have
  $2^m\lambda\leq \sum\limits_{A\subseteq B}{f(A\cup Z)} + 6^n\eta\leq 4^{n^2+3n}\eta$,
  which contradicts the choice of $\eta$. Therefore, by Claim~\ref{claim:f_2_val}
  there is $A^{\star}\subseteq B$ such that $\card{f(A^{\star}\cup y)-2^m\lambda}\leq 4^n\eta$.

  For the second item, assume towards contradiction there are two such $A_1, A_2$.
  By Proposition~\ref{prop:below_1} we have
  \[
  2\cdot(2^m\lambda - 4^n\eta)\leq f(A_1\cup Z) + f(A_2\cup Z)\leq \sum\limits_{A\subseteq B}{f(A\cup Z)} \leq 2^m\lambda + 6^n\eta,
  \]
  and therefore $2^m\lambda \leq 6^{n+1}\eta$, which is a contradiction to the choice of $\eta$.

  For the third item, note that
  \[
  g(A^{\star}\cup Z)
  \geq \T f(A^{\star}\cup Z) - \eta
  \geq 2^{-\card{A^{\star} \cup Z}}f(A^{\star}\cup Z)-\eta
  \geq 2^{-n}\lambda - (4^n + 1)\eta > 0,
  \]
  and since $g$ is Boolean-valued it follows that $g(A^{\star} \cup Z) = 1$. Also,
  for any $A\subsetneq A^{\star}$ we have
   \[
    g(A\cup Z)\leq \T f(A\cup Z) + \eta
    \leq 4^{n^2}\eta + \eta < 1,
  \]
  where in the second inequality we used the definition of $\T$ and the second item. Since
  $g$ is Boolean we get that $g(A\cup Z) = 0$.
\end{proof}
To simplify notation, for the rest of the section we often say ``the value of $f(S)$ is close to $2^m\lambda$''
to express that $\card{f(S) - 2^m\lambda}\leq 4^{n}\eta$ and ``the value of $f(S)$ is close to $0$'' to express that $f(S)\leq 4^{n^2}\eta$.

\skipi

Consider the $m$-uniform hypergraph $H = ([n],E)$ whose edges are the minterms of $g$. In the remainder of this
section we show that $H$ is a complete $m$-partite hypergraph, which is easily seen to be
equivalent to $g$ being an ANR-OR function of width $m$. Towards this end, we will define a coloring
$\chi\colon [n]\to[m]$ and show that (a) each edge $e\in E$ is rainbow colored (i.e.\ no two vertices
in it are colored in the same color), and (b) any rainbow colored set
$A\subseteq[n]$ is an edge.

Fix a minterm $B\subseteq [n]$, and write $B = \set{b_1,\ldots, b_m}$, where $b_1,\ldots,b_m\in [n]$.
We define $\chi(b_i) = i$. We now define $\chi(v)$ for any $v\in [n]\setminus B$. Fix $v\in [n]\setminus B$,
and consider the set $X(B, \set{v})$; by Corollary~\ref{corr:prop_reformulate} there exists a unique $A\subseteq B$
such that $f(A\cup\set{v})$ is close to $2^m\lambda$, and its $g$-value is $1$. Since $g(A\cup\set{v})=1$,
we must have $\card{A\cup\set{v}}\geq m$,
and there are two options:
\begin{itemize}
  \item If $\card{A\cup\set{v}} = m+1$, i.e. $A = B$, define $\chi(v) = \bot$.
  \item Otherwise, there is $i\in[m]$ such that $A = B\setminus \set{b_i}$,
  and we define $\chi(v) = \chi(b_i) = i$.
\end{itemize}

We first show that each minterm of $g$ is colored using only elements from $[m]$
(as opposed to $\bot$).
\begin{claim}
  Let $M\in E$ be a minterm of $g$. Then for each $v\in M$ we have $\chi(v)\neq \bot$.
\end{claim}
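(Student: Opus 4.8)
The plan is to argue by contradiction: suppose $M\in E$ is a minterm and some $v\in M$ has $\chi(v)=\bot$, and then produce an inequality that contradicts the choice $\eta=\zeta 4^{-n^2-4n-4}$.

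Two preliminary observations set things up. First, since $\chi(b_i)=i$ for every $b_i\in B$, the hypothesis $\chi(v)=\bot$ forces $v\notin B$. Second, I would unpack what $\chi(v)=\bot$ actually says: by the construction of the colouring it means that the \emph{unique} set $A\subseteq B$ produced by Corollary~\ref{corr:prop_reformulate} for the pair $(B,\{v\})$ equals $B$ itself, so the first bullet of that corollary gives $\card{f(B\cup\{v\})-2^m\lambda}\le 4^n\eta$; in the shorthand of the section, $f(B\cup\{v\})$ is close to $2^m\lambda$. Independently, $B$ is a minterm, so Claim~\ref{claim:value_in_minterm} gives $\card{f(B)-2^m\lambda}\le 4^{\card B}\eta=4^m\eta\le 4^n\eta$, i.e.\ $f(B)$ is also close to $2^m\lambda$.

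The main step is to invoke Proposition~\ref{prop:below_1} with the minterm $M$ in the role of the proposition's ``$B$'' and with $Z:=B\setminus M$ (which is disjoint from $M$, and $g(M)=1$). This yields
\[
 \Bigl|\,\sum_{A\subseteq M} f\bigl(A\cup(B\setminus M)\bigr)-2^{\card M}\lambda\,\Bigr|\le 2^{\card M}3^{\card{B\setminus M}}\eta\le 6^n\eta.
\]
Now I would exhibit two distinct subsets of $M$ each of which contributes a term close to $2^m\lambda$ to the sum on the left: taking $A_1:=M\cap B$ gives $A_1\cup(B\setminus M)=B$, hence the term $f(B)$; taking $A_2:=(M\cap B)\cup\{v\}$ gives $A_2\cup(B\setminus M)=B\cup\{v\}$, hence the term $f(B\cup\{v\})$. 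These are different subsets of $M$ precisely because $v\in M$ but $v\notin B\supseteq M\cap B$. Since $f\ge0$ everywhere, the left-hand sum is at least $f(B)+f(B\cup\{v\})\ge 2\bigl(2^m\lambda-4^n\eta\bigr)$, and combined with the upper bound this forces $2^m\lambda\le(2\cdot4^n+6^n)\eta<2\cdot6^n\eta$. But $\card M=m\ge1$, so $2^m\lambda\ge2\lambda\ge2\zeta$, and hence $\zeta<6^n\eta=6^n\zeta\,4^{-n^2-4n-4}$, which is impossible since $6^n<4^{2n}\le 4^{n^2+4n+4}$. This is the desired contradiction.

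The computation is entirely routine; the only point requiring a little care is to keep the two ``closeness'' scales of the section straight ($4^n\eta$ for being close to $2^m\lambda$ versus $4^{n^2}\eta$ for being close to $0$) and to check that the tiny $\eta$ really dominates the benign powers $4^n,6^n$. The one genuine idea is the observation that $\chi(v)=\bot$ makes $f$ large at \emph{both} $B$ and $B\cup\{v\}$, whereas Proposition~\ref{prop:below_1} says that the subcube $\{A\cup(B\setminus M):A\subseteq M\}$ carries total $f$-mass only $\approx 2^m\lambda$ — room for essentially one large value, not two.
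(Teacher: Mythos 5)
Your proof is correct and follows essentially the same route as the paper's: exhibit $f(B)$ and $f(B\cup\{v\})$ as two distinct points of $X(M,B\setminus M)$ whose $f$-values are both close to $2^m\lambda$, and derive a contradiction from the uniqueness of such a point. The only deviations are that you re-derive the uniqueness clause of Corollary~\ref{corr:prop_reformulate} inline via Proposition~\ref{prop:below_1} rather than citing it, and the intermediate inequality $2\cdot 4^n+6^n<2\cdot 6^n$ is false for $n=1$, though the weaker bound $2^m\lambda\le(2\cdot 4^n+6^n)\eta$ still contradicts $\lambda\ge\zeta$ with the stated $\eta=\zeta\,4^{-n^2-4n-4}$.
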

\begin{proof}
  Assume towards contradiction that this is not the case, and let $v\in M$ be such that $\chi(v) = \bot$.
  Then by definition of $\chi$ this means that $f(B\cup\set{v})$ is close to $2^m\lambda$, and since $B$ is a minterm of $g$
  we also know, by Claim~\ref{claim:value_in_minterm}, that $f(B)$ is close to $2^m\lambda$.
  This gives us two points in $X(M, B\setminus M)$ whose $f$-value is close to $2^m \lambda$,
  in contradiction to Corollary~\ref{corr:prop_reformulate}.
\end{proof}

Next, we show that each minterm of $g$ is rainbow colored by $\chi$.
\begin{claim}\label{claim:minterm_rainbow}
  Let $M\in E$ be a minterm of $g$. Then $M$ is rainbow colored.
\end{claim}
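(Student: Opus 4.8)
The plan is to argue by contradiction via the uniqueness statement in Corollary~\ref{corr:prop_reformulate}, exactly in the spirit of the preceding claim, but with a more carefully chosen ``shift set'' $Z$. First I would recall that the preceding claim shows every $v\in M$ has $\chi(v)\neq\bot$, so $\chi$ restricts to a map $M\to[m]$; since $\card{M}=m$, ``$M$ is rainbow'' is equivalent to $\chi|_M$ being injective. So I would assume for contradiction that there are distinct $u,v\in M$ with $\chi(u)=\chi(v)=i$ for some $i\in[m]$, and aim to produce two distinct subsets of the minterm $M$ whose $f$-values on a common shift are both close to $2^m\lambda$, contradicting Corollary~\ref{corr:prop_reformulate} applied to $M$.

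The key auxiliary observation I would establish is that \emph{for any vertex $w$ with $\chi(w)=i$}, two things hold: first, $w\notin\{b_j : j\neq i\}$ (otherwise $\chi(w)=j\neq i$ by definition of $\chi$ on $B$); and second, $f\big((B\setminus\{b_i\})\cup\{w\}\big)$ is close to $2^m\lambda$. The second point is immediate from the definition of $\chi$ when $w\notin B$ (then the unique element of $X(B,\{w\})$ with $f$-value close to $2^m\lambda$ is $(B\setminus\{b_i\})\cup\{w\}$ by Corollary~\ref{corr:prop_reformulate}), and when $w=b_i$ it reduces to the fact that $(B\setminus\{b_i\})\cup\{w\}=B$ is a minterm, so Claim~\ref{claim:value_in_minterm} gives $\card{f(B)-2^m\lambda}\leq 4^m\eta\leq 4^n\eta$.

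Now I would set $Z:=(B\setminus\{b_i\})\setminus M$, which is disjoint from $M$, and for each $w\in\{u,v\}$ define $A_w:=\big((B\setminus\{b_i\})\cap M\big)\cup\{w\}$. Then $A_w\subseteq M$ (both pieces lie in $M$, using $w\in M$), and $A_w\cup Z=(B\setminus\{b_i\})\cup\{w\}$, so by the observation above $f(A_u\cup Z)$ and $f(A_v\cup Z)$ are both within $4^n\eta$ of $2^m\lambda$. Moreover $A_u\neq A_v$: since $w\notin\{b_j:j\neq i\}$ and $w\in M$, we have $w\notin B\setminus\{b_i\}$, hence $w\notin(B\setminus\{b_i\})\cap M$, so $u\in A_u\setminus A_v$. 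Applying Corollary~\ref{corr:prop_reformulate} to the minterm $M$ and the disjoint set $Z$ now yields a contradiction, as it forces a \emph{unique} subset of $M$ with this property (alternatively one can avoid invoking the corollary and use Proposition~\ref{prop:below_1} directly: $\sum_{A\subseteq M}f(A\cup Z)\leq 2^m\lambda+6^n\eta$ while the left side is at least $f(A_u\cup Z)+f(A_v\cup Z)\geq 2^{m+1}\lambda-2\cdot 4^n\eta$, contradicting the choice $\eta=\zeta 4^{-n^2-4n-4}$). Hence $\chi|_M$ is injective, i.e.\ $M$ is rainbow colored.

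I do not expect a genuine obstacle here; the only steps requiring care are checking that $A_w$ is actually contained in $M$ and that $A_u\neq A_v$, both of which hinge on the single fact that neither $u$ nor $v$ can be $b_j$ for $j\neq i$. Everything else is bookkeeping with the set identities and the ``close to $2^m\lambda$'' / ``close to $0$'' dichotomy already in place.
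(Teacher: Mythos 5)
Your proof is correct and takes essentially the same approach as the paper's: exhibit two distinct subsets of the minterm $M$ whose $f$-values on a common shift are both close to $2^m\lambda$, contradicting the uniqueness in Corollary~\ref{corr:prop_reformulate}. The only substantive difference is your shift set $Z=(B\setminus\{b_i\})\setminus M$, where the paper uses $B\setminus M$; your choice is the more careful one (the points $(B\setminus\{b_i\})\cup\{u\}$ and $(B\setminus\{b_i\})\cup\{v\}$ always lie in $X(M,(B\setminus\{b_i\})\setminus M)$, but need not lie in $X(M,B\setminus M)$ when $b_i\notin M$), and the rest of your bookkeeping, including the alternative via Proposition~\ref{prop:below_1}, is sound.
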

\begin{proof}
  Write $M = \set{v_1,\ldots,v_m}$, and assume towards contradiction the statement is false.
  Then there are $v_i,v_j$ that are assigned the same color by $\chi$, and without loss of generality
  we may assume $\chi(v_1) = \chi(v_2) = m$. By definition of $\chi$ it follows that
  $f(\set{v_1}\cup (B\setminus\set{b_m}))$ and $f(\set{v_2}\cup(B\setminus\set{b_m}))$ are both close to $2^m\lambda$.
  However, note that these are two distinct points in $X(M,B\setminus M)$, and thus we get a contradiction
  to the second item in Corollary~\ref{corr:prop_reformulate}.
\end{proof}
Note that the definition of the coloring $\chi$ may depend on the minterm $B$ chosen initially to define
it. The following claim shows that this is actually not the case --- and more precisely that if we use a
different minterm $B'$ to define a coloring $\chi'$, then there is a permutation $\pi$ on $[m]$
such that $\chi' = \pi\circ\chi$.

\begin{claim}\label{claim:coloring_independent}
  Let $B' = \set{b_1',\ldots,b_m'}$ be any minterm of $g$, and let
  $\chi'$ be a coloring defined as above using $B'$ in place of $B$.
  Then there exists $\pi\in S_m$ such that $\chi = \pi\circ \chi'$.
\end{claim}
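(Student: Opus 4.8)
The plan is to give a description of the colour classes of $\chi$ that refers only to $g$ (and not to the base minterm $B$ or the auxiliary function $f$), so that $\chi$ and $\chi'$ are seen to partition $[n]$ in the same way and therefore agree up to a relabelling of $[m]$. The first ingredient is a purely local description of $\chi$: for every $v\in[n]$ and $i\in[m]$ we have $\chi(v)=i$ if and only if $(B\setminus\{b_i\})\cup\{v\}$ is a minterm of $g$. This is obtained by unwinding the definition of $\chi$ through Corollary~\ref{corr:prop_reformulate} applied with $Z=\{v\}$: $\chi(v)=i$ says exactly that the unique $A\subseteq B$ with $f(A\cup\{v\})$ close to $2^m\lambda$ is $A=B\setminus\{b_i\}$, that $g((B\setminus\{b_i\})\cup\{v\})=1$, and --- via the third bullet of the corollary together with $g(B\setminus\{b_i\})=0$ (as $B$ is a minterm) --- that every proper subset of $(B\setminus\{b_i\})\cup\{v\}$ is a $0$ of $g$; conversely, if $(B\setminus\{b_i\})\cup\{v\}$ is a minterm then Claim~\ref{claim:value_in_minterm} makes $f$ on it close to $2^m\lambda$ and the uniqueness in the corollary forces $\chi(v)=i$ (the cases $v\in B$ being trivial). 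In particular $\chi(v)\neq\bot$ exactly when $v$ belongs to some minterm of $g$. Everything in this step applies verbatim to $\chi'$ with $B'$ in place of $B$, as do Claim~\ref{claim:minterm_rainbow} and the claim just above it: every minterm of $g$ is rainbow-coloured by $\chi$, and by $\chi'$, using precisely the $m$ colours of $[m]$.

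Next I would prove the intrinsic characterization: for distinct $u,v\in[n]$, one has $\chi(u)=\chi(v)\neq\bot$ if and only if there is a minterm $M$ of $g$ with $u\in M$, $v\notin M$, and $(M\setminus\{u\})\cup\{v\}$ again a minterm. For the ``if'' direction we only need that minterms are rainbow with colour set $[m]$: writing $\chi(u)=i$, the vertex $u$ is the unique vertex of $M$ of colour $i$, so $M\setminus\{u\}$ realizes the colours $[m]\setminus\{i\}$; since $(M\setminus\{u\})\cup\{v\}$ is also rainbow with colour set $[m]$, the remaining vertex $v$ must get colour $i$, i.e.\ $\chi(v)=\chi(u)$. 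For the ``only if'' direction, suppose $\chi(u)=\chi(v)=i$; then by the local description $M:=(B\setminus\{b_i\})\cup\{u\}$ is a minterm (it equals $B$ when $u=b_i$), it contains $u$, one checks $v\notin M$ (using $\chi(v)=i$ to pin down where $v$ can lie relative to $B$), and $(M\setminus\{u\})\cup\{v\}=(B\setminus\{b_i\})\cup\{v\}$, which is a minterm because $\chi(v)=i$. The decisive point is that the right-hand side of this equivalence is a statement about $g$ alone, hence the identical equivalence holds with $\chi$ replaced by $\chi'$.

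It then remains to assemble these observations. Combining the equivalence for $\chi$ with that for $\chi'$ gives $\chi(u)=\chi(v)\iff\chi'(u)=\chi'(v)$ for all distinct $u,v$, while ``$\chi(v)\neq\bot\iff v$ lies in some minterm of $g\iff\chi'(v)\neq\bot$''. Thus $\chi$ and $\chi'$ induce exactly the same partition of $[n]$, with the same distinguished block, namely the set of vertices coloured $\bot$. Both colourings restrict to bijections from the $m$ non-$\bot$ blocks onto $[m]$ --- surjectivity holding because $\chi(b_i)=i$ and $\chi'(b_i')=i$ --- so defining $\pi\in S_m$ by sending, for each non-$\bot$ block, its $\chi'$-colour to its $\chi$-colour (and setting $\pi(\bot)=\bot$) yields a permutation with $\chi=\pi\circ\chi'$, which is precisely the assertion of the claim.

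The step I expect to be the main obstacle is the ``only if'' direction of the intrinsic characterization: one must verify that the candidate minterm $M=(B\setminus\{b_i\})\cup\{u\}$ does the job, and in particular that $v\notin M$, correctly handling the boundary situations in which $u$ or $v$ already lies in the base minterm $B$. The local description established in the first step is exactly the tool that makes this case analysis routine.
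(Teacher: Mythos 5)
Your proof is correct, and it takes a genuinely different route from the paper's. The paper's argument is the direct one: let $\pi$ be the permutation determined by the two colorings on $B$ (which is rainbow under both $\chi$ and $\chi'$), and then, for $v \notin B$ with $\chi(v) = m$, observe that $\{b_1,\dots,b_{m-1},v\}$ is a minterm, hence rainbow under $\tilde\chi = \pi\circ\chi'$, forcing $\tilde\chi(v) = m$. You instead derive a description of the $\chi$-partition that refers only to $g$: first the ``local'' reformulation that $\chi(v) = i$ precisely when $(B\setminus\{b_i\})\cup\{v\}$ is a minterm of $g$, and then the ``intrinsic'' reformulation that $\chi(u) = \chi(v) \neq \bot$ precisely when there is a minterm $M$ with $u\in M$, $v\notin M$ and $(M\setminus\{u\})\cup\{v\}$ again a minterm (with $\chi(v) = \bot$ iff $v$ lies in no minterm). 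Since the latter criterion is basis-independent, $\chi$ and $\chi'$ induce the same partition of $[n]$, and the permutation $\pi$ is read off block by block. Your argument is a bit longer and requires a small case analysis around vertices of $B$, but it is more conceptual in that it identifies the color classes as a canonical invariant of $g$ rather than an artifact of the chosen base minterm; both proofs ultimately rest on the same two facts from the paper, namely the rainbow-coloring claim and the uniqueness assertion in Corollary~\ref{corr:prop_reformulate}. One minor point worth spelling out in a final write-up: the paper's direct argument silently restricts to $\chi(v)\neq\bot$; your ``$\chi(v)\neq\bot$ iff $v$ lies in some minterm iff $\chi'(v)\neq\bot$'' handles the $\bot$ block explicitly, which is a nice small bonus of your approach.
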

\begin{proof}
  Since $B$ is a minterm of $g$, it follows by Claim~\ref{claim:minterm_rainbow} that
  it is rainbow colored by both $\chi$ and $\chi'$, so we define $\pi\in S_m$ by $\pi(\chi'(b_i)) = \chi(b_i)$.
  We define $\tilde{\chi} = \pi\circ \chi'$ and show that $\chi = \tilde{\chi}$.

  Let $v\not\in B$, and assume without loss of generality $\chi(v) = m$.
  Then by definition of $\chi$ we must have that $\set{b_1,\ldots,b_{m-1},v}$
  is a minterm of $g$, and hence by Claim~\ref{claim:minterm_rainbow} it must be
  rainbow colored by $\tilde{\chi}$. Since $\tilde{\chi}$ agrees with $\chi$ on
  $b_1,\ldots,b_{m-1}$, we must have $\tilde{\chi}(v) = m$, and we are done.
\end{proof}

Lastly, we show that each rainbow colored set of size $m$ is a minterm of $g$.
\begin{claim}
  For any minterm $B$ of $g$ and a coloring $\chi$ defined by it, the following holds.
  If $M\subseteq[n]$ of size $m$ is rainbow colored by $\chi$, then $g(M)=1$.
  Consequently, $M$ is a minterm of $g$.
\end{claim}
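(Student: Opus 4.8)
The plan is to prove the claim by induction on the distance $d := \card{M\setminus B}$ from the reference minterm, using the consistency of the coloring established in Claim~\ref{claim:coloring_independent} together with the very definition of $\chi$. The base case $d=0$ is immediate: then $M\subseteq B$, and since $\card{M}=\card{B}=m$ we get $M=B$, so $g(M)=1$ and $M$ is a minterm.

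For the inductive step, assume the claim for every rainbow set at distance $d-1$ from $B$, and let $M$ be rainbow with $\card{M\setminus B}=d\geq 1$. Write $M=\set{v_1,\ldots,v_m}$ with $\chi(v_i)=i$. Since $M\setminus B\neq\emptyset$, some $v_i$ lies outside $B$, and since the claim is unchanged under simultaneously permuting the colour labels and the $b_j$'s, we may assume without loss of generality that $v_m\in M\setminus B$; in particular $v_m\neq b_m$. I then set $M' := (M\setminus\set{v_m})\cup\set{b_m}$. A short check gives $b_m\notin M$ (else $b_m=v_j$ would force $j=\chi(b_m)=m$, contradicting $v_m\notin B$), so $\card{M'}=m$; its colour set is $(\,[m]\setminus\set{m}\,)\cup\set{m}=[m]$, so $M'$ is rainbow; and $M'\setminus B=(M\setminus B)\setminus\set{v_m}$ has size $d-1$. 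By the induction hypothesis, $M'$ is a minterm of $g$.

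It remains to deduce $g(M)=1$ from the fact that $M'$ is a minterm. I would apply Claim~\ref{claim:coloring_independent} with $M'$ in place of $B$: letting $\chi'$ be the coloring defined from $M'$, there is $\pi\in S_m$ with $\chi=\pi\circ\chi'$. Since $v_m\notin M'$ and $b_m\in M'$, while $\chi(v_m)=\chi(b_m)=m$, applying $\pi^{-1}$ gives $\chi'(v_m)=\chi'(b_m)=\pi^{-1}(m)=:j_0\in[m]$. Now unwind the definition of $\chi'$ at the point $v_m\notin M'$: because $\chi'(v_m)=j_0\neq\bot$, the unique $A\subseteq M'$ for which $f(A\cup\set{v_m})$ is close to $2^m\lambda$ and $g(A\cup\set{v_m})=1$ (supplied by Corollary~\ref{corr:prop_reformulate} applied to $M'$ and $Z=\set{v_m}$) is $A=M'\setminus\set{w}$, where $w$ is the element of $M'$ carrying colour $j_0$ under $\chi'$. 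Since $\chi'(b_m)=j_0$ and $\chi'$ is injective on the minterm $M'$, we get $w=b_m$, hence $g\bigl((M'\setminus\set{b_m})\cup\set{v_m}\bigr)=1$. Finally $(M'\setminus\set{b_m})\cup\set{v_m}=(M\setminus\set{v_m})\cup\set{v_m}=M$, so $g(M)=1$; as $g$ is monotone and all its minterms have size $m=\card{M}$, the set $M$ must itself be a minterm, closing the induction.

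The set-theoretic bookkeeping (that $M'$ is a size-$m$ rainbow set at distance $d-1$ from $B$, and that $(M'\setminus\set{b_m})\cup\set{v_m}=M$) is routine. I expect the genuinely delicate point to be the last step: one must transport the colour index through $\pi^{-1}$ correctly and match it with the element that the definition of $\chi'$ ``swaps out'' of $M'$, so that the resulting minterm is exactly $M$ rather than merely some rainbow set of size $m$. This is precisely where both Claim~\ref{claim:coloring_independent} and the exact form of the coloring's definition enter.
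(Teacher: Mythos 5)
Your proof is correct, and it takes a genuinely different route from the paper's — a ``dual'' induction. Both arguments induct on the symmetric difference between $M$ and the reference minterm $B$ (your $d=|M\setminus B|$ equals $m-|B\cap M|$, which is the paper's parameter), and both swap a single vertex per step, but they move in opposite directions. The paper keeps $M$ fixed within a step and replaces the \emph{reference} minterm $B$ by $B'=(B\setminus\{b_{k+1}\})\cup\{v_{k+1}\}$: it first deduces directly from the definition of $\chi$ (via Corollary~\ref{corr:prop_reformulate}) that $B'$ is a minterm, then invokes Claim~\ref{claim:coloring_independent} only to conclude that $M$ is still rainbow under the coloring $\chi'$ defined from $B'$, and finally applies the inductive hypothesis to the pair $(B',\chi',M)$. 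You instead keep $B$ fixed, shrink the \emph{rainbow set} to $M'=(M\setminus\{v_m\})\cup\{b_m\}$, apply the inductive hypothesis to $(B,\chi,M')$ to deduce that $M'$ is a minterm, and only then define $\chi'$ from $M'$, use Claim~\ref{claim:coloring_independent} to transport the color coincidence $\chi(v_m)=\chi(b_m)$ to $\chi'$, and read off $g(M)=1$ from the defining property of $\chi'(v_m)$ via Corollary~\ref{corr:prop_reformulate}. So in the paper the definition-of-$\chi$ deduction precedes the IH; in yours the IH precedes it. Your version has the mild structural advantage that the reference minterm $B$ is genuinely fixed throughout the induction rather than varying from step to step, while the paper's version applies the IH to a moving pair $(B',M)$. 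Both proofs ultimately rest on the same two ingredients — the uniqueness in Corollary~\ref{corr:prop_reformulate} built into the coloring, and the consistency Claim~\ref{claim:coloring_independent} — and the bookkeeping you flag as delicate ($b_m\notin M$, that $\chi'$-injectivity on the minterm $M'$ identifies $b_m$ as the swapped-out element, and that the resulting set is exactly $M$) is all in order.
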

\begin{proof}
  We prove the statement for all $B,\chi,M$ by induction on $\card{B\cap M}$.

  Write $M = \set{v_1,\ldots,v_m}$, $B = \set{b_1,\ldots,b_m}$, and assume without loss of generality that $\chi(v_i) = \chi(b_i)$.
  The base case is $\card{B\cap M} = m$, in which case $M=B$ and the
  claim is obvious.

  Let $k\leq m-1$, assume the statement is true whenever $\card{B\cap M}\geq k+1$, and let $M$ be such that $\card{B\cap M} = k$.
  Without loss of generality we may assume that $v_i = b_i$ for all $1\leq i\leq k$. Since $\chi(v_{k+1}) = \chi(b_{k+1})$
  we get that $B' = (B\setminus\set{b_{k+1}})\cup\set{v_{k+1}}$ is a minterm of $g$. Let $\chi'$ be the coloring defined by
  $B'$.
  By Claim~\ref{claim:minterm_rainbow} we get that $\chi' = \pi\circ\chi$ for some $\pi\in S_m$,
  and in particular as $M$ is rainbow colored by $\chi$ it is also rainbow colored by $\chi'$.
  Since $\card{B' \cap M} = k+1$ we may apply the induction hypotehsis on $B'$ with the coloring $\chi'$
  to conclude that $M$ is a minterm of $g$, as required.
\end{proof}
It follows that the function $g$ is the function $\bigwedge_{i\in[m]}\bigvee_{j\in A_i} x_j$
where $A_i = \chi^{-1}(i)$. To complete the proof of Lemma~\ref{lemma:main_exact} we
must establish the structural result for $f$, which we do by inverting $f$.
\begin{claim}\label{claim:aux_mobius}
The operator $\T$ has an inverse $\T^{-1}$ given by
$\T^{-1} h(B) = \sum\limits_{A\subseteq B}{(-1)^{\card{B\setminus A}} 2^{\card{A}}h(A)}$.
\end{claim}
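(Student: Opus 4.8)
The plan is to verify directly that the displayed formula defines a two-sided inverse of $\T$; this is simply M\"obius inversion on the Boolean lattice. Since the space of real-valued functions on $\power{n}$ is finite-dimensional, it is enough to check that the operator $\T^{-1}$ given in the statement is a \emph{left} inverse of $\T$ — invertibility of $\T$, and the fact that $\T^{-1}$ is then also a right inverse, follow automatically from linear algebra.

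First I would recall the explicit form $\T f(B) = 2^{-\card{B}}\sum_{C\subseteq B} f(C)$ and substitute it into $\T^{-1}(\T f)(B) = \sum_{A\subseteq B}(-1)^{\card{B\setminus A}}2^{\card{A}}\,\T f(A)$. The factors $2^{\card{A}}$ and $2^{-\card{A}}$ cancel, leaving $\sum_{A\subseteq B}(-1)^{\card{B\setminus A}}\sum_{C\subseteq A}f(C)$. Exchanging the order of summation over the chain $C\subseteq A\subseteq B$ regroups this as $\sum_{C\subseteq B}f(C)\sum_{A\colon\, C\subseteq A\subseteq B}(-1)^{\card{B\setminus A}}$, so everything reduces to evaluating the inner sum.

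For the inner sum, writing $D = B\setminus A$ so that $D$ ranges over all subsets of $B\setminus C$, we get $\sum_{D\subseteq B\setminus C}(-1)^{\card{D}} = (1-1)^{\card{B\setminus C}}$, which equals $1$ if $C=B$ and $0$ otherwise. Hence $\T^{-1}(\T f)(B) = f(B)$ for every $B\in\power{n}$, i.e.\ $\T^{-1}\T = \mathrm{id}$, which proves the claim. I do not expect any real obstacle here: the only nontrivial ingredient is the elementary identity $\sum_{D\subseteq S}(-1)^{\card{D}} = 0$ for $S\neq\emptyset$ (the binomial theorem), and the rest is bookkeeping. If one prefers an argument that does not invoke finite-dimensionality, the identical computation with the roles reversed shows $\T\T^{-1} = \mathrm{id}$ as well.
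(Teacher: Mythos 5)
Your proof is correct, and it takes a slightly different route from the paper. The paper argues by induction on $\card{B}$: starting from the identity $2^{\card{B}} h(B) = \tilde{h}(B) + \sum_{A\subsetneq B}\tilde{h}(A)$ (where $h = \T\tilde{h}$), it substitutes the inductive formula for $\tilde h(A)$ into the sum, swaps the order of summation, and uses the alternating-sign cancellation over the interval $[C,B)$ to solve for $\tilde h(B)$. You instead verify $\T^{-1}\T = \mathrm{id}$ directly by one substitution and one swap of sums, reducing everything to $\sum_{D\subseteq S}(-1)^{\card{D}} = \mathbbm{1}[S=\emptyset]$. The two arguments use the same combinatorial fact, but your version avoids the induction and is somewhat shorter; the paper's inductive phrasing has the minor cosmetic advantage that it derives the formula for $\T^{-1}$ rather than verifying it post hoc. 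Both appeal (implicitly or explicitly) to finite-dimensionality — or equivalently to the observation that $\T$ is upper-triangular in the subset order with nonzero diagonal — to get a two-sided inverse from a one-sided check, so there is no gap on that point either.
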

\begin{proof}
Let $h$ be in the image of $\T$,
i.e.\ $h(B) = \T \tilde{h}(B) = 2^{-\card{B}}\sum\limits_{A\subseteq B} \tilde{h}(A)$ for some $\tilde{h}$.
We prove by induction on $B$ that $\tilde{h}(B) = \sum\limits_{A\subseteq B}{(-1)^{\card{B\setminus A}} 2^{\card{A}}h(A)}$.

The base case $\card{B} = 0$ is clear. Assume the statement holds for all $B$ such that $\card{B}\leq k$, and let $B$
be of size $k+1$. By definition of $h$ we have that $2^{\card{B}} h(B) = \sum\limits_{A\subsetneq B} \tilde{h}(A) + \tilde{h}(B)$.
Using the induction hypothesis on each $A\subsetneq B$ we get that
\[
\sum\limits_{A\subsetneq B} \tilde{h}(A)
=\sum\limits_{A\subsetneq B}\sum\limits_{C\subseteq A} (-1)^{\card{A\setminus C}} 2^{\card{C}} \tilde{h}(C)
=\sum\limits_{C}2^{\card{C}} \tilde{h}(C) \sum\limits_{\substack{A\colon \\ C\subseteq A\subsetneq B}}{(-1)^{\card{A\setminus C}}}
=\sum\limits_{C}2^{\card{C}} \tilde{h}(C) (-1)^{\card{B\setminus C}+1},
\]
where in the last equality we used the fact that adding the summand corresponding to $A=B$, the sum would be $0$.
Plugging that into the previous equality and rearranging finishes the inductive step.
\end{proof}

Define $\psi = \lambda\sum\limits_{A\subseteq B}{(-1)^{\card{B\setminus A}} 2^{\card{A}} g(B)}$.
By Claim~\ref{claim:aux_mobius} we have that $\lambda g = \T\psi$.
\begin{claim}\label{claim:f_psi_dist}
  $\norm{f - \psi}_{\infty}\leq 3^n\eta$.
\end{claim}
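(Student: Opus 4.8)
The plan is to invoke the explicit inverse of $\T$ furnished by Claim~\ref{claim:aux_mobius}; once that is in hand the estimate is just the binomial theorem. Set $h = f - \psi$. Since $\psi$ was defined so that $\T\psi = \lambda g$ (this is exactly the content of Claim~\ref{claim:aux_mobius} applied to $\lambda g$), we have $\T h = \T f - \lambda g$, and the hypothesis $\norm{\Tdown{p}{p/2} f - \lambda g}_\infty \leq \eta$ of Lemma~\ref{lemma:main_exact} says precisely that $\card{(\T h)(A)} \leq \eta$ for every $A\subseteq[n]$.

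First I would apply Claim~\ref{claim:aux_mobius} to express $h$ in terms of $\T h$. Note that $\T$ is a bijection on the space of real-valued functions on $\power{n}$ — its matrix in the subset-indicator basis is triangular with diagonal entries $2^{-\card{A}}\neq 0$ — so $h$ itself lies in the image of $\T$ and the inversion formula applies:
\[
h(B) = \sum_{A\subseteq B}{(-1)^{\card{B\setminus A}}\,2^{\card{A}}\,(\T h)(A)}.
\]
Taking absolute values, using the triangle inequality and $\card{(\T h)(A)}\leq\eta$, gives
\[
\card{h(B)} \leq \eta\sum_{A\subseteq B}2^{\card{A}} = \eta\sum_{k=0}^{\card{B}}\binom{\card{B}}{k}2^k = 3^{\card{B}}\eta \leq 3^n\eta,
\]
where the penultimate equality is the binomial theorem. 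Taking the maximum over $B$ yields $\norm{f-\psi}_\infty = \norm{h}_\infty \leq 3^n\eta$, which is exactly Claim~\ref{claim:f_psi_dist}.

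I do not expect any real obstacle here: the only point worth spelling out is the legitimacy of applying Claim~\ref{claim:aux_mobius} to $h$ itself rather than to something already presented as $\T$ of another function, which is fine because $\T$ is invertible and hence every function is in its image. One could instead argue by induction on $\card{B}$ using $h(B) = 2^{\card{B}}(\T h)(B) - \sum_{A\subsetneq B}h(A)$, but that route is more cumbersome and gives a worse constant, whereas the Möbius computation above produces the sharp bound $3^n\eta$ with essentially no work.
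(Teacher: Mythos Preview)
Your proof is correct and is essentially the same as the paper's. The paper writes out the inversion formula for $f$ (with $h := \T f$) and for $\psi$ (using its definition as $\T^{-1}(\lambda g)$) and subtracts termwise; you first subtract to form $h := f-\psi$ and then apply the inversion formula to $h$, which by linearity of $\T^{-1}$ is the same computation and yields the identical bound $\sum_{A\subseteq B}2^{\card{A}}\eta = 3^{\card{B}}\eta$.
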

\begin{proof}
Let $h = \T f$. Using the formula for $h$ from Claim~\ref{claim:aux_mobius}
and the definition of $\psi$ we have that for all $B\subseteq[n]$,
\[
\card{f(B) - \psi(B)}
\leq\sum\limits_{A\subseteq B}{2^{\card{A}}\card{h(A) - \lambda g(A)}}
\leq\sum\limits_{A\subseteq B}{2^{\card{A}}\eta}
= 3^{\card{B}}\eta.\qedhere
\]
\end{proof}
Let $\phi = \bigwedge_{i=1}^{m}\left(\bigoplus_{j\in A_i} x_j\right)$.
We show that $\T\phi = 2^{-m} g$. Since $\T$ is invertible by Claim~\ref{claim:aux_mobius} and $\T\psi = \lambda g$, we get $\psi = 2^m\lambda\phi$, and hence Claim~\ref{claim:f_psi_dist}
implies that $f$ is $3^n\eta$-close to $2^m\lambda \phi$ in $L_{\infty}$,
as required.

\begin{claim}\label{claim:aux_OR_XOR}
Let $A_1,\ldots,A_m$ be disjoint, non-empty sets,
and let $\phi = \bigwedge_{i=1}^{m}\bigoplus_{j\in A_i} x_j$,
$g = \bigwedge_{i=1}^{m}\bigvee_{j\in A_i} x_j$.
Then $\T \phi = 2^{-m} g$.
\end{claim}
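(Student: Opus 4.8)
\emph{Proof proposal.} The plan is to fix an arbitrary $B\subseteq[n]$ and evaluate $\T\phi(B)=\Expect{{\bf C}\subseteq B}{\phi({\bf C})}$ directly, where ${\bf C}$ is a uniformly random subset of $B$, i.e.\ the point of $\power{n}$ obtained by setting ${\bf C}_j=0$ for $j\notin B$ and letting ${\bf C}_j$ be i.i.d.\ uniform bits for $j\in B$. With this description, $\phi({\bf C})=1$ exactly when for every block $i\in[m]$ the parity $\xi_i:=\bigoplus_{j\in A_i}{\bf C}_j=\bigoplus_{j\in A_i\cap B}{\bf C}_j$ equals $1$ (the two XORs agree since ${\bf C}_j=0$ off $B$).

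First I would dispose of the degenerate case in which some block misses $B$. If $A_i\cap B=\emptyset$ for some $i$, then $\xi_i\equiv 0$, so $\phi({\bf C})=0$ for every ${\bf C}\subseteq B$ and hence $\T\phi(B)=0$; on the other hand $\bigvee_{j\in A_i}x_j$ vanishes at $B$, so $g(B)=0$ as well, and the identity $\T\phi(B)=2^{-m}g(B)$ holds with both sides equal to $0$.

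In the main case, where $A_i\cap B\neq\emptyset$ for all $i$, the point is that the $m$ bits $\xi_1,\dots,\xi_m$ are mutually independent and each is uniform. Independence holds because the sets $A_1\cap B,\dots,A_m\cap B$ are pairwise disjoint (the $A_i$ are) and distinct coordinates of ${\bf C}$ are independent; uniformity of $\xi_i$ holds because $A_i\cap B$ is non-empty, so $\xi_i$ is a XOR of at least one free uniform bit. Therefore $\Prob{}{\xi_1=\cdots=\xi_m=1}=2^{-m}$, i.e.\ $\T\phi(B)=2^{-m}$. Since every $\bigvee_{j\in A_i}x_j$ equals $1$ at $B$ we have $g(B)=1$, so again $\T\phi(B)=2^{-m}=2^{-m}g(B)$. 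As $B$ was arbitrary, this gives $\T\phi=2^{-m}g$.

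I do not anticipate a genuine obstacle: the only things requiring care are (i) checking that the measure "keep each element of $B$ with probability $1/2$" really does realize ${\bf C}$ as the coordinatewise object used above, so that $\phi({\bf C})$ factors over blocks, and (ii) noting that disjointness of the $A_i$ is precisely what yields independence of the parities $\xi_i$ — and is in fact necessary, since overlapping blocks would make the claimed identity false.
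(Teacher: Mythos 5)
Your proof is correct and follows essentially the same route as the paper: case-split on whether some block $A_i$ misses $B$, and in the nondegenerate case use disjointness of the $A_i$ to get that the per-block parities are independent uniform bits, giving $2^{-m}$. The paper phrases the parities as $|{\bf C}\cap A_i| \bmod 2$ rather than $\bigoplus_{j\in A_i}{\bf C}_j$, but this is only notational.
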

\begin{proof}
Fix $B\subseteq [n]$, and let $B_i = B\cap A_i$ for each $i$.

If $g(B) = 0$, then $B_i = \emptyset$ for some $i$, without loss
of generality $i=1$. Thus, for any $C\subseteq B$ we have that $C\cap A_1 = \emptyset$,
and hence $\phi(C) = 0$, so $\T\phi(B)=0$.

If $g(B) = 1$, then $B_i\neq\emptyset$ for all $i$. Let ${\bf C}\subseteq B$ be chosen
uniformly at random, and denote ${\bf C}_i = {\bf C}\cap A_i$. Note that the distribution
of ${\bf C}_1,\ldots,{\bf C}_m$ is of independent uniform subsets of $B_1,\ldots,B_m$,
and as such the parity of the size of each ${\bf C}_i$ is a uniform and independent bit.
Thus, $\T\phi(B) = \Prob{{\bf C}\subseteq B}{\phi({\bf C}) = 1} = \Prob{{\bf C}\subseteq B}{\card{{\bf C}_i} \equiv 1\pmod{2} \text{ for all } i \in [m]} = 2^{-m}$.
\end{proof}

This completes the proof of Lemma~\ref{lemma:main_exact}.
\qed

\subsection{Deducing Theorem~\ref{thm:main_2func}}
In this section we use Lemma~\ref{lemma:main_exact} to deduce Theorem~\ref{thm:main_2func},
and we first sketch the argument. Given an approximate solution $f,g$, we first observe
that the function $g$ is noise insensitive --- that is, has a small Fourier tail --- and hence deduce
from Theorem~\ref{thm:Bourgain} that it is close to a junta. We then show that for almost
all restrictions $\beta$ outside the junta variables, we can associate a bounded function $\tilde{f}_{\beta}$
such that $\tilde{f}_{\beta},g_{\beta}$ are a solution to the equation in $L_{\infty}$, and we may
deduce some structure for $g_{\beta}$ and $\tilde{f}_{\beta}$. Using the fact that the restricted
variables barely affect the function $g$ (since it is junta) one can thus deduce the necessary
AND-OR structure from $g$. To get the structural result for the function $f$, slightly more work is needed.
We show that eliminating ORs that are too wide from the $g_{\beta}$'s, almost all of them become the same
function, and we show that after averaging over the removed variables, $f$ is close to a multiple
of the corresponding AND-XOR function.

We first give several statements that will be useful for us in the proof.
The following lemma from~\cite{Lifshitz} shows the
effect of the operator $\Tdown{p}{p\rho}$ on the Fourier
expansion of a function.
\begin{lemma} \label{lem:T-ev}
 If $f = \sum_S \hat{f}(S) \chi_S^{(\rho p)}$ then $\Tdown{p}{\rho p} f = \sum_S \left(\frac{(1-p)\rho}{1-\rho p}\right)^{|S|/2} \hat{f}(S) \chi_S^{(p)}$.
\end{lemma}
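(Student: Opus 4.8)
The plan is to use two facts about the operator $\Tdown{p}{\rho p}$: it is linear, and it acts coordinatewise. Linearity is immediate from its definition as an expectation, so it suffices to understand $\Tdown{p}{\rho p}$ on the basis functions $\chi_S^{(\rho p)}$ and then sum against the Fourier coefficients $\widehat{f}_{\rho p}(S)$. For the coordinatewise action, recall $\Tdown{p}{\rho p} h(x) = \Expect{{\bf z}\sim\mu_\rho}{h(x\land {\bf z})}$, and that both the coordinatewise AND and the characters factor over coordinates: $\chi_S^{(\rho p)}(x) = \prod_{i\in S}\chi_{\{i\}}^{(\rho p)}(x_i)$ and $(x\land{\bf z})_i = x_i\land{\bf z}_i$. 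Since the coordinates of ${\bf z}\sim\mu_\rho$ are independent, $\Tdown{p}{\rho p}\chi_S^{(\rho p)}(x) = \prod_{i\in S}\Expect{{\bf z}_i\sim\mathrm{Ber}(\rho)}{\chi_{\{i\}}^{(\rho p)}(x_i\land {\bf z}_i)}$, so everything reduces to the single-coordinate case.

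For the single-coordinate computation, fix a coordinate and write $\chi^{(q)}(w) = (w-q)/\sqrt{q(1-q)}$ for $w\in\power{}$. I would evaluate $\Expect{{\bf z}\sim\mathrm{Ber}(\rho)}{\chi^{(\rho p)}(x\land {\bf z})}$ at the two values of $x$: when $x=0$ we have $x\land {\bf z}=0$ deterministically, giving $\chi^{(\rho p)}(0) = -\rho p/\sqrt{\rho p(1-\rho p)}$; when $x=1$ we have $x\land {\bf z} = {\bf z}$, giving $\rho\,\chi^{(\rho p)}(1) + (1-\rho)\chi^{(\rho p)}(0) = \big(\rho(1-\rho p) - (1-\rho)\rho p\big)/\sqrt{\rho p(1-\rho p)} = \rho(1-p)/\sqrt{\rho p(1-\rho p)}$. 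Comparing with $\chi^{(p)}(0) = -p/\sqrt{p(1-p)}$ and $\chi^{(p)}(1) = (1-p)/\sqrt{p(1-p)}$, one checks that both values coincide with $c\cdot\chi^{(p)}(x)$ for the single constant $c = \sqrt{\rho(1-p)/(1-\rho p)}$ (squaring $c$ makes this a one-line verification). Taking the product over $i\in S$ then gives $\Tdown{p}{\rho p}\chi_S^{(\rho p)} = c^{|S|}\chi_S^{(p)} = \big(\frac{(1-p)\rho}{1-\rho p}\big)^{|S|/2}\chi_S^{(p)}$, and applying linearity to $f = \sum_S \widehat{f}_{\rho p}(S)\chi_S^{(\rho p)}$ yields the claimed identity.

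There is essentially no genuine obstacle here; the only thing to keep in mind is a consistency check rather than a difficulty. The input basis $\{\chi_S^{(\rho p)}\}$ is orthonormal for $\mu_{\rho p}$ while the output basis $\{\chi_S^{(p)}\}$ is orthonormal for $\mu_p$, which is exactly consistent with $\Tdown{p}{\rho p}$ mapping functions on $(\power{n},\mu_{\rho p})$ to functions on $(\power{n},\mu_p)$; and indeed if ${\bf x}\sim\mu_p$ and ${\bf z}\sim\mu_\rho$ then ${\bf x}\land {\bf z}\sim\mu_{\rho p}$, so evaluating a function expanded in the $\chi^{(\rho p)}$ basis at the point $x\land{\bf z}$ is legitimate. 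With this in place, the three steps above --- reduce to one coordinate, compute the one-coordinate expectation, reassemble by a product over $S$ and by linearity --- complete the proof, and the only arithmetic involved is the short single-coordinate calculation.
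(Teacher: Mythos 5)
Your proof is correct; the arithmetic in the single-coordinate computation checks out, and the reduction to one coordinate via the product structure of the characters, the coordinatewise AND, and the independence of the bits of $\mathbf{z}\sim\mu_\rho$ is exactly the right idea. The paper does not actually supply its own proof of this lemma --- it is quoted from the reference \cite{Lifshitz} --- so there is no in-paper argument to compare against, but your coordinate-by-coordinate verification is the standard and natural way to establish it, and the final observation that $\Tdown{p}{\rho p}$ maps the $\mu_{\rho p}$-orthonormal basis to multiples of the $\mu_p$-orthonormal basis (consistent with ${\bf x}\land{\bf z}\sim\mu_{\rho p}$ when ${\bf x}\sim\mu_p$, ${\bf z}\sim\mu_\rho$) is the right sanity check.
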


Using the previous lemma we may show that if $f,g$ are approximate solutions,
then $g$ has an exponentially small tail.
\begin{lemma}\label{lem:vanishing_tail}
  Let $f\colon(\power{n},\mu_{\rho p})\to[0,1]$, $g\colon(\power{n},\mu_p)\to\power{}$ be functions
  such that $\norm{\Tdown{p}{\rho p} f - \lambda g}\leq \eta$. Then for any $k\in\mathbb{N}$
  we have that $W_{\geq k}[g] \leq 2\lambda^{-2}(\eta + \rho^{k})$.
\end{lemma}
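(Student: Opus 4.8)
The plan is to pass to the $p$-biased Fourier expansion and exploit the spectral decay of $\Tdown{p}{\rho p}$ furnished by Lemma~\ref{lem:T-ev}. Write $e = \lambda g - \Tdown{p}{\rho p} f$, so that $\norm{e}_1 \leq \eta$ by hypothesis and $\lambda g = \Tdown{p}{\rho p} f + e$. Taking $p$-biased Fourier coefficients and applying Lemma~\ref{lem:T-ev}, for every $S \subseteq [n]$ we get
\[
\lambda \widehat{g}_p(S) = \theta^{\card{S}/2}\,\widehat{f}_{\rho p}(S) + \widehat{e}_p(S), \qquad \theta := \frac{(1-p)\rho}{1-\rho p}.
\]
The first observation I would record is that $0 < \theta \leq \rho < 1$, since $\theta \leq \rho$ is equivalent to $1-p \leq 1-\rho p$, which holds as $\rho < 1$.

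Next I would bound $W_{\geq k}[g]$ by squaring and summing over $\card{S}\geq k$. Using $(a+b)^2 \leq 2a^2 + 2b^2$ and Parseval's identity,
\[
\lambda^2 W_{\geq k}[g] = \sum_{\card{S}\geq k}\bigl(\theta^{\card{S}/2}\widehat{f}_{\rho p}(S) + \widehat{e}_p(S)\bigr)^2 \leq 2\sum_{\card{S}\geq k}\theta^{\card{S}}\widehat{f}_{\rho p}(S)^2 + 2\sum_{\card{S}\geq k}\widehat{e}_p(S)^2.
\]
For the first sum, $\theta^{\card{S}} \leq \theta^{k} \leq \rho^{k}$ whenever $\card{S}\geq k$, so it is at most $\rho^{k}\norm{f}_2^2 \leq \rho^{k}$, using that $f$ is $[0,1]$-valued and hence $\norm{f}_2^2 = \Expect{}{f^2} \leq 1$. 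For the second sum, Parseval bounds it by $\norm{e}_2^2$.

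The only remaining point — and the only place where any care is needed — is to upgrade the $L_1$ control on $e$ to $L_2$ control. For this I would use that $e$ is pointwise bounded: both $\lambda g$ and $\Tdown{p}{\rho p} f = \Expect{{\bf z}\sim\mu_{\rho}}{f(x\land {\bf z})}$ take values in $[0,1]$, so $e(x)\in[-1,1]$ for every $x$, whence $e(x)^2 \leq \card{e(x)}$ pointwise and $\norm{e}_2^2 \leq \norm{e}_1 \leq \eta$. Combining the three estimates yields $\lambda^2 W_{\geq k}[g] \leq 2\rho^{k} + 2\eta$, i.e.\ $W_{\geq k}[g] \leq 2\lambda^{-2}(\eta + \rho^{k})$, as claimed. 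I do not expect a genuine obstacle here: the argument is a short spectral computation, with the two elementary inputs being $\theta \leq \rho$ and the boundedness $\norm{e}_\infty \leq 1$.
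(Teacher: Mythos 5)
Your proof is correct and follows essentially the same route as the paper's: pass to $L_2$ via the pointwise bound $\T f - \lambda g \in [-1,1]$, apply Parseval together with Lemma~\ref{lem:T-ev}, use $(a+b)^2 \leq 2a^2+2b^2$, and bound the tail of $\T f$ via $\theta \leq \rho$ and $\norm{f}_2^2 \leq 1$. The only cosmetic difference is that you name the error $e$ and split its Fourier coefficient explicitly, whereas the paper works directly with the single sum $\sum_S(\rho_2^{|S|/2}\widehat{f}(S)-\lambda\widehat{g}(S))^2\leq\eta$.
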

\begin{proof}
Since $f,g$ are bounded between $0$ and $1$, $\T f - \lambda g$ is bounded between $-1,1$
at each point and therefore we get that
$\norm{\T f - \lambda g}_2^2\leq \norm{\T f - \lambda g}_1\leq \eta$. Using Parseval's inequality
(and Lemma~\ref{lem:T-ev} to get the Fourier coefficients of $\T f$ on $\mu_p$),
we get that, denoting $\rho_2 = \frac{(1-p)\rho}{1-\rho p}\leq \rho$, we have
\[
\sum\limits_{S}(\rho_2^{\card{S}/2}\widehat{f}_{p\rho}(S) - \lambda\widehat{g}_{p}(S))^2 \leq \eta.
\]
Therefore, using $(a+b)^2\leq 2(a^2+b^2)$ we get that
\[
\lambda^2\sum\limits_{\card{S}\geq k}\widehat{g}_{p}(S)^2
\leq 2\eta + 2\sum\limits_{\card{S}\geq k}(\rho_2^{\card{S}/2}\widehat{f}_{p\rho}(S))^2
\leq 2\eta + 2\cdot \rho_2^{k}\sum\limits_{\card{S}\geq k}\widehat{f}_{p\rho}(S)^2
\leq 2\eta + 2\cdot \rho^{k},
\]
where in the last inequality we used Parseval to bound the sum of Fourier coefficients of $f$ by $1$ and $\rho_2\leq \rho$.
\end{proof}

Second, the following will be useful for us in the pruning process of the
wide ORs.
\begin{lemma} \label{lem:truncation}
 Suppose that $g_1,g_2$ are AND-OR functions of width at most $d$ that are $\eps$-close with respect to $\mu_p$,
 and let $\gamma = 2p^{-d} \eps$. Let $\psi_1,\psi_2$ be the truncations of $g_1,g_2$ respectively resulting
 by removing all ORs containing more than $\log_{1/(1-p)}(1/\gamma)$ variables.

 Then $\psi_1 = \psi_2$, and furthermore this function is $d\gamma$-close to $g_1$.
\end{lemma}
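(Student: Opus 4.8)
The plan is to analyze the effect of deleting a ``wide'' OR (one on more than $\ell := \log_{1/(1-p)}(1/\gamma)$ variables) from an AND-OR function, and to show that each such deletion changes the function by very little in $\mu_p$-measure, and moreover never changes the function on the ``core'' where the kept clauses are satisfied. First I would record the elementary estimate: if $\bigvee_{j\in A}x_j$ is an OR on $|A|>\ell$ variables, then $\Prob{{\bf x}\sim\mu_p}{\bigvee_{j\in A}{\bf x}_j = 0} = (1-p)^{|A|} < (1-p)^{\ell} = \gamma$. Writing $g_1 = \bigwedge_{i\in[r_1]}\mathrm{OR}_{A_i}$ and reordering so that $A_1,\dots,A_s$ are the wide clauses and $A_{s+1},\dots,A_{r_1}$ the narrow ones, we have $\psi_1 = \bigwedge_{i>s}\mathrm{OR}_{A_i}$ and $\psi_1 \ge g_1$ pointwise. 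A union bound over the at most $d$ deleted clauses gives $\Prob{}{\psi_1 \ne g_1} \le \Prob{}{\exists i\le s:\ \mathrm{OR}_{A_i}=0} \le d\gamma$, which is exactly the claimed $d\gamma$-closeness of $\psi_1$ to $g_1$ (and symmetrically for $\psi_2$ and $g_2$).

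It then remains to show $\psi_1 = \psi_2$. The idea is that the truncations are, up to measure $< d\gamma$, the ``same'' function as $g_1$ and $g_2$, which are themselves $\eps$-close; so $\psi_1$ and $\psi_2$ agree with probability at least $1 - 2d\gamma - \eps > 0$. But two AND-OR functions that are all-ORs-narrow and that agree on a set of positive $\mu_p$-measure must in fact be \emph{identical}: an AND-OR function is determined by its set of minterms, and if $\psi_1 \ne \psi_2$ then, say, there is a minterm of $\psi_1$ that is not accepted by $\psi_2$, i.e.\ some narrow clause $A$ of $\psi_2$ avoids it; one then checks that the event ``$\psi_1 = 1$ and $A\cap\mathrm{supp}({\bf x}) = \emptyset$'' has $\mu_p$-probability bounded below by a constant depending only on $p$ and $d$ (since $\psi_1$ has at most $d$ narrow clauses, each satisfied with probability $\ge 1-(1-p)\ge p$ even after conditioning $A$ to be empty, and these can be handled by a further union/inclusion bound), and this constant is $\gg 2d\gamma + \eps$ once $\gamma = 2p^{-d}\eps$ and $\gamma$ is small. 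Hence $\psi_1 = \psi_2$.

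The main obstacle I anticipate is the last step — turning ``$\psi_1$ and $\psi_2$ agree on a set of positive measure'' into ``$\psi_1 = \psi_2$ exactly'' — because it is the only place where one genuinely uses that all clauses are narrow and that there are at most $d$ of them, and one has to be careful that conditioning on one clause being unsatisfied does not make the other (narrow) clauses too unlikely to be satisfied simultaneously. The cleanest route is probably to argue contrapositively at the level of minterms: pick a minterm $M_1$ of $\psi_1$ with $\psi_2(M_1)=0$, let $A$ be a clause of $\psi_2$ disjoint from $M_1$, and lower-bound $\Prob{{\bf x}}{\psi_1({\bf x})=1=\psi_2(M_1)\text{-pattern}}$ by restricting to ${\bf x}\supseteq$ one chosen variable per narrow clause of $\psi_1$ outside $A$ and ${\bf x}$ avoiding $A$; since $|A|\le\ell$ this costs a factor $(1-p)^{|A|}\ge\gamma$ times a factor $p^{d}$ for the at most $d$ forced variables, giving probability $\ge p^{d}\gamma = 2\eps$ of witnessing $\psi_1=1,\psi_2=0$ — contradicting $\Prob{}{\psi_1\ne\psi_2}\le \Prob{}{\psi_1\ne g_1}+\Prob{}{g_1\ne g_2}+\Prob{}{g_2\ne\psi_2}\le d\gamma+\eps+d\gamma<2\eps$ once $\gamma$ is small relative to $\eps/d$. (One checks $\gamma = 2p^{-d}\eps$ indeed makes $2d\gamma+\eps < p^{d}\gamma$ fail to hold in the wrong direction — so some mild tuning of the constant in the definition of $\gamma$, or of $\ell$, may be needed, but the structure of the argument is as above.) Everything else is a routine union bound.
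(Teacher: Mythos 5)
The first part of your proposal — showing $\psi_1$ is $d\gamma$-close to $g_1$ via a union bound over the at most $d$ deleted wide clauses, each false with probability $<(1-p)^{\ell}=\gamma$ — is correct and matches the paper.

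The second part, proving $\psi_1=\psi_2$, has a genuine gap that you correctly flag but wrongly assess as fixable by ``mild tuning.'' Your strategy routes through the truncated functions: you get the upper bound $\Pr[\psi_1\ne\psi_2]\le 2d\gamma+\eps$ from the triangle inequality through $g_1,g_2$, and the lower bound $\Pr[\psi_1\ne\psi_2]\ge p^d\gamma$ from the minterm-witness event (force $A$ all zero, force one variable in each clause of $\psi_1$ to one). But $p^d<1\le d$ for every relevant $p$ and $d$, so the truncation-error term $d\gamma$ in your upper bound always dominates the separation $p^d\gamma$ in your lower bound. Rescaling the constant in $\gamma$ cannot help, since both quantities scale linearly in $\gamma$. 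Decoupling the clause-width threshold $\ell$ from $\gamma$ does not help either: writing $\delta=(1-p)^{\ell}$, the inequality you would need becomes $p^d\delta>2\eps+2d(1-p)\delta$, and the coefficient $p^d-2d(1-p)$ is negative for $p$ bounded away from $1$. So the obstruction is structural, not numerical.

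The paper sidesteps this by never comparing $\psi_1$ to $\psi_2$ through their $L_1$ distance. Instead it argues one narrow clause at a time, directly at the level of $g_1,g_2$: for a small OR $A_1$ of $g_1$, if no OR of $g_2$ is nested with $A_1$, restrict $A_1\to 0$. Then $g_1$ vanishes identically, while $(g_2)_{A_1\to 0}$ is a nonzero AND-OR of width at most $d$, hence equals $1$ with probability $\ge p^d$; combined with $\Pr[A_1\text{ all zero}]\ge\gamma$, this gives $\Pr[g_1\ne g_2]\ge p^d\gamma=2\eps>\eps$ — a contradiction carrying no $d\gamma$ penalty at all. The two-sided containment plus disjointness of the clauses within each $g_i$ then forces the small ORs of $g_1$ and $g_2$ to coincide exactly, from which $\psi_1=\psi_2$ is immediate. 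Your restriction-to-zero idea is the right ingredient; it just has to be applied to $g_1,g_2$ rather than to $\psi_1,\psi_2$.
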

\begin{proof}
 We say an OR of $g_1$ is \emph{small} if it contains at most $\log_{1/(1-p)}(1/\gamma)$ variables,
 and let $A_1$ be a small OR of $g_1$. We claim that there is a small OR of $g_2$, which will be denoted by $A_2$, that contains $A_1$.
 Assume towards contradiction that this is not the case. Thus, restricting $A_1$-variables to $0$,
 the restricted function $(g_2)_{A_1\rightarrow 0}$ does not become identically $0$ and it is
 still an AND-OR function of width at most $d$, and therefore it gets the value $1$ with probability at least $p^d$.
 Since the probability that all of the variables in $A_1$ get the value $0$ is at least $(1-p)^{\log_{1/(1-p)}(1/\gamma)} = \gamma$, we get that
 $\Prob{{\bf x}}{{\sf OR}_{A_1}({\bf x}) = 0, g_2({\bf x})=1}\geq \gamma\cdot p^d$. However, note that on any such $x$ we have
 $g_1(x) = 0$ and $g_2(x) = 1$, and by assumption the probability mass on such $x$'s is at most $\eps$, so we get that
 $\eps\geq \gamma p^d$ and contradiction.

 Therefore, for each small OR of $g_1$ there is a small OR in $g_2$ containing it and vice versa.
 As the ORs in each function are disjoint in variables, it follows that each small OR of $g_1$ appears in $g_2$
 and vice versa, so in other words $\psi_1 = \psi_2$.

 Finally, since $g_1$ and $\psi_1$ may differ only when there is an OR of size at least $\log_{1/(1-p)}(1/\gamma)$
 in $g_1$ that evaluates to~$0$, and there are at most $d$ such clauses, it follows from the union bound that
 $\Prob{x}{g_1(x)\neq \psi_1(x)} \leq d\gamma$.
\end{proof}

We are now ready to prove Theorem~\ref{thm:main_2func}.
\begin{proof}[Proof of Theorem~\ref{thm:main_2func}]
Fix $\zeta,\eps>0$ from Theorem~\ref{thm:main_2func} (we assume $\eps>0$ is small enough) and choose $m=\ceil{\log(2/\zeta)}$.
Let $C=C(\zeta)$ be from Theorem~\ref{thm:Bourgain},
choose $\eta_1 = \zeta^2\eps^2/(4C\log(1/\eps))$, and pick $\tau,J$ from Theorem~\ref{thm:Bourgain} for
$\eps$ and $k = \ceil{\log(1/\eta_1)}$.
Later in the proof we will also define $\eta_2$ and subsequently take $\eta = \min(\eta_1,\eta_2)$.

Let $f,g$ be functions as in the statement of Theorem~\ref{thm:main_2func},
and set $k = \ceil{\log(1/\eta_1)}$. From Lemma~\ref{lem:vanishing_tail} we have
that $W_{\geq k}[g]\leq \lambda^{-2}(\eta + 2^{-k}) \leq \eps/(C k)$,
and Theorem~\ref{thm:Bourgain} implies that there is $T\subseteq[n]$ of size $J$ such that $g$ is $\eps^2$-close to a $T$-junta.

Take $\eta_{\ref{lemma:main_exact}}$ from Lemma~\ref{lemma:main_exact} for $\zeta$
and $n=J$, and set $\eta_2 = 6^{-J}\eta_{\ref{lemma:main_exact}}\eps^2$.
We write points $x\in \power{n}$ as $(\alpha,\beta)$ where $\alpha\in\power{T}$ and $\beta\in\power{[n]\setminus T}$.
For each $\beta\in\power{[n]\setminus T}$, define $\tilde{f}_{\beta}\colon\power{T}\to[0,1]$ and $g_\beta\colon \power{T} \to \power{}$ by
\[
\tilde{f}_{\beta}(\alpha) = \Expect{\bm{\beta'}\leq\beta}{f(\alpha,\bm{\beta'})}, \quad
g_\beta(\alpha) = g(\alpha,\beta),
\]
and let
$B = \sett{\beta\in\power{[n]\setminus T}}{\norm{\tilde{f}_{\beta} - \lambda g_{\beta}}_{\infty}\leq 3^{-J}\eps\eta_{\ref{lemma:main_exact}}}$.
Since for any $\alpha,\beta$ we have $\T\tilde f_{\beta}(\alpha) = \T f(\alpha,\beta)$, it follows that
$\Expect{\bm{\beta}}{\norm{\T\tilde{f}_{\bm{\beta}} - \lambda g_{\bm{\beta}}}_1} = \norm{\T f - \lambda g}_1\leq \eta_2$.
Therefore Markov's inequality
implies that with probability at least $1-\eps$ over $\bm{\beta}\sim\mu_p$ we have
$\norm{\T\tilde{f}_{\bm{\beta}} - \lambda g_{\bm{\beta}}}_1\leq 6^{-J}\eps\eta_{\ref{lemma:main_exact}}$,
in which case $\bm{\beta}\in B$. In particular, we conclude that $\Prob{\bm{\beta}\sim\mu_p^{[n]\setminus T}}{\bm{\beta}\in B}\geq 1-\eps$.

For each $\beta\in B$, Lemma~\ref{lemma:main_exact} implies that $g_{\beta}$ is an AND-OR function of width $r(\beta)$ which is at most $m(\zeta) = O(\log(1/\zeta))$,
and that $\tilde{f}_{\beta}$ is $\eps$-close to $2^r\lambda\cdot \text{AND-XOR}$ in $L_{\infty}$ for the corresponding AND-XOR function; we will use that only later when we establish
the structure for $f$. Since $g$ is $\eps^2$-close to a $T$-junta, if we choose $\bm{\beta},\bm{\beta'} \in B$ independently (according to $\mu_p$)
then on average $g_{\bm{\beta}}$ and $g_{\bm{\beta'}}$ are $\delta$-close, where $\delta \leq 2\eps^2/\Pr[B] \leq 2\eps^2$.
Thus there is $\beta^{\star} \in B$ such that $\Expect{\bm{\beta} \in B}{\norm{g_{\bm{\beta}} - g_{\beta^{\star}}}_1}\leq 2\eps^2$,
so by Markov's inequality, defining $B'\subseteq B$ by $B' = \sett{\beta\in B}{\norm{g_{\beta} - g_{\beta^{\star}}}_1\leq \eps}$ we have that
$\Prob{\bm{\beta}\sim\mu_p^{[n]\setminus T}}{\bm{\beta}\in B'}\geq 1-\frac{2\eps}{\Pr[B]} \geq 1-4\eps$.
We may already argue that $g$ is close to the AND-OR function $g_{\beta^{\star}}$, however that will not be strong enough to establish the
structural result for $f$ and hence we prove a stronger statement. Namely, we show that if we truncate $g_{\beta}$ by removing the wide ORs,
then almost all of them will produce the same AND-OR function $\psi$.

\paragraph{Proving the structural result for $g$.}
For each $\beta\in B'$, let $\psi_{\beta}$ be the AND-OR function $g_{\beta}$ where we remove from it all ORs whose width exceeds
$\log_{1/(1-p)}(1/(2p^{-m}\eps))$. From Lemma~\ref{lem:truncation} we get that there is an AND-OR function $\psi$ (namely, $\psi_{\beta^{\star}}$)
such that $\psi_{\beta} = \psi$ and $\norm{g_{\beta} - \psi}_1\leq 2mp^{-m}\eps = O_{\zeta}(\eps)$ for each $\beta\in B'$. Therefore, it follows that
\[
\Prob{\bm{\alpha},\bm{\beta}}{g(\bm{\alpha},\bm{\beta})\neq \psi(\bm{\alpha})}
\leq \Prob{\bm{\beta}}{\bm{\beta}\not\in B'} + \cProb{\bm{\alpha},\bm{\beta}}{\bm{\beta}\in B'}{g(\bm{\alpha},\bm{\beta})\neq \psi(\bm{\alpha})}
 = O_{\zeta}(\eps),
\]
i.e.\ $g$ is close to the AND-OR function $\psi$. Let $T'\subseteq T$ be the set of variables that appear in $\psi$,
and write $\alpha\in\power{T}$ as $\alpha = (\alpha_1,\alpha_2)$, where $\alpha_1\in \power{T'}$ and $\alpha_2\in\power{T\setminus T'}$.

\paragraph{Proving the structural result for $f$.} We show that averaging outside $T'$ makes $f$ close to
a multiple of $\phi$, where $\phi$ is the AND-XOR function corresponding to $\psi$. For each $\beta\in B'$
we denote by $\phi_{\beta}$ the AND-XOR function corresponding to $g_{\beta}$.

For each $\beta\in B'$, define $A_{\beta}(\alpha_1) =  \Expect{\bm{\alpha_2'}\sim\mu_{p\rho}^{T\setminus T'}}{\tilde{f}_{\beta}(\alpha_1,\bm{\alpha_2'})}$,
and $A\colon \power{T}\to [0,1]$ by
$A(\alpha_1) = \Expect{\bm{\beta}\sim\mu_{p}^{[n]\setminus T}}{A_{\bm{\beta}}(\alpha_1)}$.
For each $\beta \in B'$, Lemma~\ref{lemma:main_exact} implies that $\tilde{f}_{\beta}$ is $\eps$-close to
$2^{r(\beta)}\lambda\cdot \phi_{\beta}$ in $L_{\infty}$, hence by averaging over $\alpha_2$ we conclude that $A_{\beta}(\alpha_1)$
is $\eps$-close to $2^{r(\beta)}\lambda\cdot \Expect{\bm{\alpha_2}\sim\mu_{p\rho}^{T\setminus T'}}{\phi_{\beta}(\alpha_1,\bm{\alpha_2})} = c(\beta)\cdot\phi(\alpha_1)$,
where $c(\beta) \leq 2^m$.

Set $K = \Expect{\bm{\beta}}{c(\bm{\beta}) 1_{\bm{\beta}\in B'}}$. Then
\begin{multline*}
\norm{A - K\phi}_1
\leq
2^m \Prob{\bm{\beta}}{\bm{\beta}\notin B'} + \Expect{\bm{\beta}}{\norm{A_{\bm{\beta}} - c(\bm{\beta})\cdot \phi}_1 1_{\bm{\beta}\in B'}}
\leq 2^m \Prob{\bm{\beta}}{\bm{\beta}\notin B'} +3^J\eta
\\ \leq 2^m O(\eps) + \eps = O_{\zeta}(\eps).
\end{multline*}
Since $A(\alpha_1) = \Expect{\bm{\alpha_2},\bm{\beta'}\sim\mu_{p\rho}}{f(\alpha_1,\bm{\alpha_2},\bm{\beta'})}$,
this shows that after averaging outside $T'$, the function $f$ is $O_{\zeta}(\eps)$-close to $K\phi$,
and we next show that one may replace $K$ by $\lambda 2^r$, where $r$ is the width of $\phi$, and retain
this closeness.

If $\psi = 0$ then $\phi=0$ so the value of $K$ does not matter, so we assume henceforth that $\psi \neq 0$, in which case we clearly have $\norm{\psi}_1 \geq p^m$.
Since $\T$ is a contraction, $\norm{\T A - K\cdot\T \phi}_1\leq \norm{A - K\phi}_1  = O_{\zeta}(\eps)$. Since by Claim~\ref{claim:aux_OR_XOR} we have
$\T\phi = 2^{-r}\psi$, it follows that $\norm{\lambda g - K \cdot 2^{-r} \psi}_1 = O_{\zeta}(\eps)$. Since $g$ is $O_{\zeta}(\eps)$-close to $\psi$,
this gives $\card{\lambda - K2^{-r}}\norm{\psi}_1 = O_{\zeta}(\eps)$, and so $\card{\lambda - K2^{-r}} = O(p^{-m} \eps) = O_{\zeta}(\eps)$,
implying that $\card{K - 2^r\lambda} = O_{\zeta}(2^r\eps)=O_{\zeta}(\eps)$. Thus, $K\cdot \phi$ is $O_{\zeta}(\eps)$-close
to $2^r\lambda\cdot \phi$, and by the triangle inequality we conclude that $\norm{A - 2^r\lambda\cdot\phi}_1\leq O_{\zeta}(\eps)$, finishing the proof.
\end{proof}

\section{Stronger structural results for special cases}\label{sec:special}
In this section we prove Theorems~\ref{thm:basic},~\ref{thm:mono_and} and~\ref{thm:and_homomorphism}.
The key idea in the proofs of these results is to use an appropriate natural coupling of downwards random walks.


\subsection{Proof of Theorem~\ref{thm:basic}}\label{sec:large_lambda}
Throughout this section we fix $\zeta,p,\rho,\eps,\lambda$ as in Theorems~\ref{thm:basic}.
Let $f\colon\power{n}\to[0,1]$ and $g\colon\power{n}\to\power{}$ and $\eta>0$ (to be determined) be such that
$\norm{\Tdown{p}{\rho p} f - \lambda g}_1\leq \eta$.
\begin{claim}\label{claim:low_inf}
  Suppose that $\lambda>\rho$. Then for every $\tau>0$, there exists $\eta_1 >0$ such that if
  $\norm{\Tdown{p}{\rho p} f - \lambda g}_1\leq \eta_1$ then $I_i[g] < \tau$ for all $i$.
\end{claim}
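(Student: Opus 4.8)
The plan is to run a bounded-derivative argument for $\T \defeq \Tdown{p}{\rho p}$: toggling a single coordinate from $0$ to $1$ can move $\T f$ by at most $\rho$, whereas it moves $\lambda g$ by exactly $\lambda > \rho$ whenever $g$ actually changes. Feeding this discrepancy through the hypothesis $\norm{\T f - \lambda g}_1 \le \eta_1$ will force the set of ``sensitive'' assignments for any coordinate — and hence each influence $I_i[g]$ — to be small. This is the simplest instance of the ``coupling of downwards random walks'' idea announced at the start of this section.

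First I would establish the key pointwise bound. Fix $i \in [n]$ and an assignment $\beta \in \power{[n]\setminus\{i\}}$ to the other coordinates, and couple the two downwards walks using the same ${\bf z}$ in $\T f(x) = \Expect{{\bf z} \sim \mu_\rho}{f(x \land {\bf z})}$. The walks from $(\beta,0)$ and $(\beta,1)$ land on $(\beta\land {\bf z}_{-i},0)$ and $(\beta\land {\bf z}_{-i},{\bf z}_i)$, which agree unless ${\bf z}_i=1$, an event of probability $\rho$; since $f$ is $[0,1]$-valued this gives
\[
\left| \T f(\beta,1) - \T f(\beta,0) \right| \;=\; \rho \left| \Expect{{\bf z}_{-i} \sim \mu_\rho}{f(\beta\land {\bf z}_{-i},1) - f(\beta\land {\bf z}_{-i},0)} \right| \;\leq\; \rho .
\]
Next I would push $\norm{\T f - \lambda g}_1 \le \eta_1$ onto the two fibers $x_i=0$ and $x_i=1$: sampling ${\bf x} = (\bm{\beta},{\bf x}_i)$ with $\bm{\beta}\sim\mu_p^{[n]\setminus\{i\}}$ and ${\bf x}_i\sim\mu_p$ independently, the expectation splits as $\Expect{\bm{\beta}}{(1-p)\,|\T f(\bm{\beta},0)-\lambda g(\bm{\beta},0)| + p\,|\T f(\bm{\beta},1)-\lambda g(\bm{\beta},1)|} \le \eta_1$, so using $p,1-p\ge\zeta$ both $\Expect{\bm{\beta}}{|\T f(\bm{\beta},0)-\lambda g(\bm{\beta},0)|}$ and $\Expect{\bm{\beta}}{|\T f(\bm{\beta},1)-\lambda g(\bm{\beta},1)|}$ are at most $\eta_1/\zeta$.

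Then I would combine the two. On any $\beta$ with $g(\beta,0)\neq g(\beta,1)$ we have $\lambda = |\lambda g(\beta,1) - \lambda g(\beta,0)|$, so routing through $\T f(\beta,1)$ and $\T f(\beta,0)$ and invoking the pointwise bound yields
\[
\lambda \;\leq\; |\lambda g(\beta,1) - \T f(\beta,1)| + \rho + |\T f(\beta,0) - \lambda g(\beta,0)| .
\]
Since $g$ is Boolean and $(g(x)-g(x\oplus e_i))^2$ depends only on the coordinates outside $i$, we have $I_i[g] = \Prob{\bm{\beta}\sim\mu_p^{[n]\setminus\{i\}}}{g(\bm{\beta},0)\neq g(\bm{\beta},1)}$; multiplying the indicator of this event by the displayed inequality and averaging over $\bm{\beta}$ gives $(\lambda-\rho)\,I_i[g] \le 2\eta_1/\zeta$, i.e. $I_i[g] \le \frac{2\eta_1}{\zeta(\lambda-\rho)}$. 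Choosing $\eta_1 \defeq \tfrac{1}{2}\zeta(\lambda-\rho)\tau$ — positive because $\lambda>\rho$ — then makes $I_i[g] < \tau$ for every $i$, and $\eta_1$ depends only on $\zeta,\rho,\lambda,\tau$ as required.

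I do not anticipate a genuine obstacle; the only points needing care are the $p$-biased bookkeeping (the identity $I_i[g] = \Prob{}{g_{i\to0}\neq g_{i\to1}}$ with respect to $\mu_p$, which holds for the reason just noted, and the fact that $p,1-p\ge\zeta$ lets us transfer the $L_1$ bound to each fiber up to a factor $1/\zeta$) and making sure the one-coordinate bound on $\T f$ is proved pointwise in $\beta$ rather than only on average — that pointwise inequality is really the whole content of the claim.
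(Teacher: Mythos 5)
Your argument is correct and is essentially the paper's proof: both hinge on the coupled downwards walk showing that toggling coordinate $i$ moves $\T f$ by at most $\rho$ while it must move $\lambda g$ by $\lambda$ on sensitive $\beta$'s, and both transfer the $L_1$ hypothesis to the two fibers $x_i\in\{0,1\}$ and average over the sensitive set to get $(\lambda-\rho) I_i[g] \lesssim \eta_1/\zeta$. The only cosmetic differences are that you phrase the coupling as a pointwise Lipschitz bound on $\T f$ and argue directly rather than by contradiction, and your $\eta_1 = \tfrac12\zeta(\lambda-\rho)\tau$ gives $I_i[g]\le\tau$ rather than $<\tau$, so shrink it by another constant factor.
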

\begin{proof}
 We prove the statement for $\eta_1 = \frac{(\lambda - \rho)\zeta(1-\zeta)}{4} \tau$.

 Assume towards contradiction that there is $i$ such that $I_i[g] \geq \tau$. This means that if we sample $\mathbf{x} \sim \mu_p^{[n] \setminus \{i\}}$ then $\Pr[g(\mathbf{x},0) \neq g(\mathbf{x},1)] \geq \tau$. Denote this event by $E$, and name the endpoints $\mathbf{x(0)},\mathbf{x(1)}$ so that $g(\mathbf{x(0)}) = 0$, $g(\mathbf{x(1)}) = 1$.


 Since $\norm{\Tdown{p}{\rho p} f - \lambda g} \leq \eta_1$ and $\mu_p(x_0),\mu_p(x_1) \geq \zeta(1-\zeta) \mu_p(x)$, we have
\begin{gather*}
 \sum_{x \in E} \mu_p(x) \Expect{\mathbf{z} \sim \mu_\rho}{f(x(0) \land \mathbf{z})} \leq \frac{1}{\zeta(1-\zeta)}
 \sum_{x \in E} \mu_p(x(0)) \Expect{\mathbf{z} \sim \mu_\rho}{f(x_0 \land \mathbf{z})} \leq \frac{\eta_1}{\zeta(1-\zeta)}, \\
 \sum_{x \in E} \mu_p(x) \left({\lambda - \Expect{\mathbf{z} \sim \mu_\rho}{f(x(1) \land \mathbf{z})}}\right) \leq \frac{1}{\zeta(1-\zeta)}
 \sum_{x \in E} \mu_p(x(1)) \card{\Expect{\mathbf{z} \sim \mu_\rho}{f(x(1) \land \mathbf{z})} - \lambda} \leq \frac{\eta_1}{\zeta(1-\zeta)}.
\end{gather*}
 In particular, if we choose $\mathbf{x} \sim \mu_p^{[n] \setminus \{i\}}$ subject to $\mathbf{x} \in E$, then
\[
 \Expect{\mathbf{x},\mathbf{z}}{f(\mathbf{x(0)} \land \mathbf{z})} \leq \frac{\eta_1}{\zeta(1-\zeta)\tau}, \quad
 \lambda - \Expect{\mathbf{x},\mathbf{z}}{f(\mathbf{x(1)} \land \mathbf{z})} \leq \frac{\eta_1}{\zeta(1-\zeta)\tau},
\]
and so
\[
 \lambda - \frac{2\eta_1}{\zeta(1-\zeta)\tau} \leq \Expect{\mathbf{x},\mathbf{z}}{f(\mathbf{x(1)} \land \mathbf{z}) - f(\mathbf{x(0)} \land \mathbf{z})}
 \leq \Prob{\mathbf{x},\mathbf{z}}{\mathbf{x(1)} \land \mathbf{z} \neq \mathbf{x(0)} \land \mathbf{z}} = \rho,
\]
 since $\mathbf{x_0} \neq \mathbf{x_1}$ iff $\mathbf{z}_i = 1$. We conclude that $\lambda - \rho \leq \frac{2\eta_1}{\zeta(1-\zeta)\tau} = \frac{\lambda-\rho}{2}$, reaching a contradiction.
\end{proof}

Theorem~\ref{thm:basic} now follows by combining Claim~\ref{claim:low_inf} with Lemma~\ref{lem:vanishing_tail}
and Bourgain's Theorem~\ref{thm:Bourgain}.
\begin{proof}[Proof of Theorem \ref{thm:basic}]
  Let $C = C(\zeta)$ be from Theorem~\ref{thm:Bourgain}, and set
  $\eta_2 = \zeta^2\eps/(4C\log(1/\eps))$, $k = 1+\ceil{\frac{\log(1/\eta_2)}{\log(1/\rho)}}$.
  Choose $\tau = \half(k/\eps)^{-C\cdot k}$, pick $\eta_1$ for $\tau$ from Claim~\ref{claim:low_inf},
  and finally set $\eta = \min(\eta_1,\eta_2)$.

  By Lemma~\ref{lem:vanishing_tail} we have that
  $W_{\geq k}[g]\leq 2\lambda^{-2}(\eta + \rho^{k})\leq \frac{\eps}{C\cdot \sqrt{k}\log^{1.5}(k)}$,
  and therefore by Theorem~\ref{thm:Bourgain} the function $g$ is $\eps$-close to a junta $h$, where $h$ may only depend on variables $i$
  such that $I_i[g]\geq (k/\eps)^{-C k}$. However, by Claim~\ref{claim:low_inf}, as $\eta\leq \eta_1$,
  all individual influences of $g$ are smaller than $\tau$, and hence $h$ must be constant.
  Let $\Gamma \in \{0,1\}$ be the closest Boolean constant function; then $g$ is $2\eps$-close to $\Gamma$.
  By the triangle inequality we have $\card{\Expect{{\bf x}\sim\mu_{\rho p}}{f}-\lambda \Expect{{\bf x}\sim\mu_p}{g}} \leq \eta$,
  and so $\card{\Expect{{\bf x}\sim\mu_{\rho p}}{f}-\lambda \Gamma} \leq \eta + 2\eps \leq 3\eps$.
\end{proof}

\subsection{Preliminary results about almost monotone functions}\label{sec:almost_mono}
Throughout this section and the subsequent one, we fix $p = \rho = 1/2$ to simplify notations.
The proofs carry over easily, as in the previous section, to any $p$ and $\rho$ that are bounded away from $0$ and $1$.

The proof of Theorem~\ref{thm:mono_and} consists of two parts. First, in this section we show that if $f,g$ are approximate solutions,
then $g$ is close to a junta, i.e., it is close to a function depending on constantly many coordinates.
Importantly, we prove that
the number of variables in the junta only depends on $\lambda$ (and doesn't increase as $\eps$, the distance from the junta, gets smaller), which
allows us to deduce that for any restriction of the junta variables, $g$ becomes nearly constant.%
\footnote{Proving
that $g$ is close to a junta, where the size of the junta may depend on $\eps$, is easy; it follows immediately from
Lemma~\ref{lem:vanishing_tail} and Bourgain's Theorem~\ref{thm:Bourgain}.} In Section~\ref{sec:mono_and} we use this
to show that $g$ is close to an AND function.

\skipi
Contrary to the previous section, when $\lambda \leq \rho$ the function $g$ may have influential variables,
and we do not know how to directly bound their number. Instead, we prove below a replacement for Claim~\ref{claim:low_inf} that considers
influences of sets of variables. We phrase the claim for the more general case in which the monotonicity
condition of $f$ is relaxed to ``almost-monotonicity'' (as discussed in the paragraph subsequent to Theorem~\ref{thm:mono_and}).
But first, we observe that all negative influences of $g$ must be small.
\begin{claim}\label{claim:low_inf_neg}
  For every $\zeta,\tau > 0$ there is $\eta_1 > 0$ such that the following holds for all $\lambda\in[\zeta,1]$.
  If $\norm{\Tdown{1/2}{1/4} f - \lambda g}_1\leq \eta_1$, then for any $i\in[n]$ we have $I_i^{-}[g]\leq \tau$.
\end{claim}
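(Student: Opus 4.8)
The plan is to show that any variable $i$ with non-negligible negative influence forces a proportional amount of $\ell_1$-error into the approximate eigenvalue equation, so that a small $\eta_1$ bounds every $I_i^{-}[g]$ uniformly. Throughout write $\T = \Tdown{1/2}{1/4}$ and recall that in subset notation $\T f(B) = 2^{-\card{B}}\sum_{C\subseteq B} f(C)$. Since $g$ is Boolean, $I_i^{-}[g]$ is exactly the probability, over ${\bf y}\sim\mu_{1/2}^{[n]\setminus\set{i}}$, of the event $\mathcal{B}_i$ that $g({\bf y},0)=1$ while $g({\bf y},1)=0$. For $y\in\mathcal{B}_i$ let $A_y$ be the point with coordinate $i$ set to $0$ (so $g(A_y)=1$) and $B_y = A_y\cup\set{i}$ the point with coordinate $i$ set to $1$ (so $g(B_y)=0$). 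As $y$ ranges over $\mathcal{B}_i$ the edges $\{A_y,B_y\}$ are pairwise disjoint, and $\mu_{1/2}(A_y)=\mu_{1/2}(B_y)=\tfrac12\mu_{1/2}^{[n]\setminus\set{i}}(y)$.

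The key observation — which uses the non-negativity of $f$ and the \emph{downward} nature of $\T$, exactly the place where the two-sided operator would fail — is that for $A\subseteq B$ with $\card{B}=\card{A}+1$ one has $\T f(B) = 2^{-\card{B}}\sum_{C\subseteq B} f(C) \geq 2^{-\card{B}}\sum_{C\subseteq A} f(C) = \tfrac12\T f(A)$, i.e.\ $\T f(A_y)\leq 2\,\T f(B_y)$. Now split $\mathcal{B}_i$ according to the size of $\T f(A_y)$. If $\T f(A_y) < \lambda/2$, then, since $g(A_y)=1$, the contribution of the point $A_y$ to $\norm{\T f - \lambda g}_1$ is $\mu_{1/2}(A_y)\card{\T f(A_y)-\lambda} \geq \mu_{1/2}(A_y)\cdot\tfrac{\lambda}{2}\geq \tfrac{\zeta}{4}\mu_{1/2}^{[n]\setminus\set{i}}(y)$. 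If instead $\T f(A_y)\geq \lambda/2$, then by the key inequality $\T f(B_y)\geq \lambda/4 \geq \zeta/4$, and since $g(B_y)=0$ the contribution of the point $B_y$ to $\norm{\T f - \lambda g}_1$ is $\mu_{1/2}(B_y)\,\T f(B_y) \geq \tfrac{\zeta}{8}\mu_{1/2}^{[n]\setminus\set{i}}(y)$. In either case each $y\in\mathcal{B}_i$ accounts for at least $\tfrac{\zeta}{8}\mu_{1/2}^{[n]\setminus\set{i}}(y)$ worth of $\ell_1$-error, on point sets that are disjoint across distinct $y$.

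Summing over $y\in\mathcal{B}_i$ therefore gives $\eta_1 \geq \norm{\T f - \lambda g}_1 \geq \tfrac{\zeta}{8}\sum_{y\in\mathcal{B}_i}\mu_{1/2}^{[n]\setminus\set{i}}(y) = \tfrac{\zeta}{8}I_i^{-}[g]$, so it suffices to take $\eta_1 = \zeta\tau/8$. I do not expect any genuine obstacle: the only points needing care are the measure bookkeeping for $A_y,B_y$ together with the disjointness of the edges across $y$, and verifying that the one-sided bound $\T f(A)\leq 2\,\T f(B)$ really does the work — once that is in place the case split and summation are immediate. The same argument runs verbatim for general $p,\rho$ bounded away from $0$ and $1$, replacing the factor $2$ by $1/\rho$ (and adjusting the constants), consistent with the remark at the start of the section that the $p=\rho=1/2$ normalization is only for notational convenience.
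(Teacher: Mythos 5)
Your proof is correct, and it rests on the same key observation as the paper's: the downward nature of $\T$ forces $\T f(A) \leq 2\,\T f(B)$ along any edge $A\subset B$, which in the paper is encoded as the coupled-walk statement that conditioning on ${\bf z}_i=0$ makes the two downward walks collide with probability $1/2$. Where the paper then averages over the negatively influential edges and compares conditional expectations $\cExpect{}{E}{f({\bf y})}$ and $\cExpect{}{E}{f({\bf y'})}$, you instead account for the $\ell_1$ error pointwise by a case split on each edge (either the bottom endpoint $A_y$ with $g=1$ has $\T f$ far from $\lambda$, or the top endpoint $B_y$ with $g=0$ has $\T f$ bounded below) and then sum over the disjoint edges. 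This bookkeeping is a bit more elementary and avoids the conditional-expectation formalism, at the cost of a slightly worse constant ($\zeta\tau/8$ versus $\zeta\tau/4$); it carries over to general $p,\rho$ exactly as you say.
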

\begin{proof}
We prove the statement for $\eta_1 = \zeta\tau/4$.

Sample an edge $({\bf x},{\bf x'})$ in the $i$th direction such that ${\bf x}_i = 0$, let $E$ be the event that
it is negatively influential for $g$, that is $g({\bf x}) = 1$, $g({\bf x'}) = 0$, and assume that $\Prob{}{E}\geq \tau$.
Denote by $X(E),X'(E)$ the set of $x,x'$, respectively, for which $E$ holds.
Let $({\bf y},{\bf y'})$ be a coupled random walk downwards from ${\bf x},{\bf x'}$: ${\bf y} = {\bf x}\land {\bf z}$, ${\bf y'} = {\bf x'}\land {\bf z}$ for an independently chosen ${\bf z}\sim\mu_{1/2}$.

Consider the conditioning ${\bf z}_i = 0$. Since ${\bf x}_i = 0$, the distribution of ${\bf y}$ is unaffected by
this conditioning, and since ${\bf x},{\bf x'}$ only differ in coordinate $i$, we have that ${\bf y'} = {\bf y}$. It follows
that
\begin{equation}\label{eq12}
\cExpect{{\bf x}}{{\bf x}\in X(E)}{f({\bf y'})} \geq \Prob{{\bf z}}{{\bf z}_i = 0} \cExpect{{\bf x}}{{\bf x}\in X(E)}{f({\bf y})} = \half \cExpect{{\bf x}}{{\bf x}\in X(E)}{f({\bf y})}.
\end{equation}
On the other hand, by the approximate eigenvector condition on $x'$, we have that
\[
\sum\limits_{x\in X'(E)}{\mu_{1/2}(x')\cdot \cExpect{({\bf y'},{\bf v})\sim\mathbb{D}(1/4,1/2)}{{\bf v}=x'}{f({\bf y'})}}\leq \eta_1,
\]
which implies, since $\mu_{1/2}(X'(E))\geq \tau$, that $\cExpect{{\bf x}}{{\bf x}\in X(E)}{f({\bf y'})}\leq \eta_1/\tau$.
Similarly, the approximate eigenvector condition on $x$ implies that $\cExpect{{\bf x}}{{\bf x}\in X(E)}{f({\bf y})}\geq \lambda - \eta_1/\tau$.
Plugging both inequalities to~\eqref{eq12} yields, after rearranging, that $\eta_1\geq \lambda\tau/3$, in contradiction
to the choice of $\eta_1$.
\end{proof}

\begin{claim}\label{claim:low_inf_gen}
  For every $\zeta > 0$, there is $m\in \mathbb{N}$, such that for all $\tau > 0$ there is $\eta_2 > 0$ such that the following holds
  for all $\lambda\in[\zeta,1]$.
  Let $g\colon(\power{n},\mu_{1/2})\to\power{}$, $f\colon(\power{n},\mu_{1/4})\to[0,1]$ be such that
  $\norm{\Tdown{1/2}{1/4} f - \lambda g}_1\leq \eta_2$, and assume that for all $i\in[n]$ we have $I_i^{-}[f]\leq \eta_2$.
  Then for each $M\subseteq[n]$ of size $m$,
  \[
    I_M[g] \defeq \sum\limits_{S\supseteq M}{\widehat{g}^2(S)}\leq \tau.
  \]
\end{claim}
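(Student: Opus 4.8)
The plan is to show that $g$ must be $O(\tau)$-close to ${\sf AND}_S$ for some set $S$ of size at most $\ceil{\log(2/\zeta)}$, and then to take $m=\ceil{\log(2/\zeta)}+1$, which depends only on $\zeta$. This reduction suffices: if $g$ is $\delta$-close to ${\sf AND}_S$ with $\card{S}<m$, then every $M$ of size $m$ satisfies $M\not\subseteq S$, so $\widehat{{\sf AND}_S}(R)=0$ for all $R\supseteq M$, and hence by Parseval $I_M[g]=\sum_{R\supseteq M}\widehat{g}(R)^2=\sum_{R\supseteq M}\widehat{g-{\sf AND}_S}(R)^2\le\norm{g-{\sf AND}_S}_2^2=\norm{g-{\sf AND}_S}_1\le\delta$, using that $g-{\sf AND}_S$ is $\{-1,0,1\}$-valued. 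So I only need to produce such an $S$ with $\delta=O(\tau)$.

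First I would rerun the junta-restriction argument from the proof of Theorem~\ref{thm:main_2func}. By Lemma~\ref{lem:vanishing_tail} the function $g$ has an exponentially small Fourier tail, so Bourgain's Theorem~\ref{thm:Bourgain} produces a coordinate set $T$ of size $J=J(\zeta,\tau)$ such that $g$ is $\tau'$-close to a $T$-junta, for a target $\tau'=\Theta(\tau)$ fixed in advance. Writing $x=(\alpha,\beta)$ with $\alpha\in\power{T}$ and $\beta\in\power{[n]\setminus T}$, and putting $\tilde f_\beta(\alpha)=\Expect{\bm{\beta'}\le\beta}{f(\alpha,\bm{\beta'})}$, $g_\beta(\alpha)=g(\alpha,\beta)$, one has $\Tdown{1/2}{1/4}\tilde f_\beta(\alpha)=\Tdown{1/2}{1/4}f(\alpha,\beta)$, hence $\Expect{\bm{\beta}}{\norm{\Tdown{1/2}{1/4}\tilde f_{\bm{\beta}}-\lambda g_{\bm{\beta}}}_1}=\norm{\Tdown{1/2}{1/4}f-\lambda g}_1\le\eta_2$. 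By Markov, for all $\beta$ outside a set of $\mu_{1/2}$-measure $O(\tau)$ the $L_1$ distance of $\Tdown{1/2}{1/4}\tilde f_\beta$ and $\lambda g_\beta$ on $\power{T}$ is tiny, and since $\power{T}$ has constant size one may pass to $L_\infty$ at the cost of a factor $2^J$; choosing $\eta_2$ small enough in terms of $\eta_{\ref{lemma:main_exact}}(\zeta,J)$, Lemma~\ref{lemma:main_exact} then applies for each such $\beta$: $g_\beta$ is an ${\sf AND}$-${\sf OR}$ function of width $r(\beta)\le\ceil{\log(2/\zeta)}$, and $\tilde f_\beta$ is $L_\infty$-close to $2^{r(\beta)}\lambda\cdot\phi_\beta$, where $\phi_\beta$ is the matching ${\sf AND}$-${\sf XOR}$ function (the error here is $3^J\eta_{\ref{lemma:main_exact}}(\zeta,J)$ times a Markov factor, which can be made $\ll\tau$).

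The crux is to use the hypothesis $I_i^-[f]\le\eta_2$ (negative influences here and below taken with respect to $\mu_{1/4}$) to eliminate ${\sf XOR}$-clauses of width at least $2$. A Jensen inequality for the coordinates of $[n]\setminus T$, together with Fact~\ref{fact:avg_dec_neg_inf}, gives $\Expect{\bm{\beta}\sim\mu_{1/2}^{[n]\setminus T}}{I_i^-[\tilde f_{\bm{\beta}}]}\le\eta_2$ for every $i\in T$ (conveniently with no change-of-measure loss). On the other hand, if $g_\beta$ had a ${\sf XOR}$-clause $A$ of width $\ge 2$, then for $l\in A$ the influence $I_l^-[\phi_\beta]$ is bounded below by a constant $c(\zeta)>0$ (flipping $l$ from $0$ to $1$ flips $\phi_\beta$ with probability bounded below whenever the remaining clauses are satisfied, and it flips downward a bounded fraction of those times), so by $L_\infty$-closeness $I_l^-[\tilde f_\beta]\ge 2^{r(\beta)}\lambda\,c(\zeta)-o(1)\ge\zeta\,c(\zeta)-o(1)$. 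A union bound over $i\in T$ and a final Markov step then bound the measure of $\beta$ for which some $I_l^-[\tilde f_\beta]$ ($l\in T$) is large by $O(J\eta_2/\zeta)=O(\tau)$, once $\eta_2$ is small enough in terms of $J$ and $\zeta$. Intersecting the good events, for $\beta$ in a set of measure $1-O(\tau)$ we get $g_\beta={\sf AND}_{S_\beta}$ with $\card{S_\beta}\le\ceil{\log(2/\zeta)}$; the same averaging trick as in the proof of Theorem~\ref{thm:main_2func} (using that $g$ is $\tau'$-close to a $T$-junta) then singles out one good $\beta^\star$, so that $g$ is $O(\tau)$-close to ${\sf AND}_{S}$ with $S=S_{\beta^\star}$, which closes the argument by the first paragraph.

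I expect the main obstacle to be the ordering of the quantifiers rather than any single estimate: $m$ is fixed first as $\ceil{\log(2/\zeta)}+1$, then $\tau$, and only then $\eta_2$, so one must ensure the Bourgain junta size $J$ — which necessarily grows as $\tau\to0$ — is determined from $\tau'$ and the tail level alone, before the various loss factors (the $2^J$ for the $L_1\to L_\infty$ conversion on $\power{T}$, the $3^J$ appearing in Lemma~\ref{lemma:main_exact}, and the $1/\zeta,1/\tau$ Markov factors) get absorbed into the choice of $\eta_2$. The second delicate point is transferring the negative-influence bound from $f$ to the mixed restriction-and-average $\tilde f_\beta$: Fact~\ref{fact:avg_dec_neg_inf} handles the averaged coordinates of $[n]\setminus T$, while a Jensen inequality handles the restricted ones, and one should keep track of the measures so that this step does not itself lose a factor depending on $J$.
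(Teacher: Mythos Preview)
Your argument is correct and yields a valid proof of the claim, but it proceeds along a genuinely different route from the paper's.

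The paper's proof is direct and combinatorial: it takes $m=\lceil\log(2/\zeta)\rceil^2$, assumes $I_M[g]\geq\tau$ for some $|M|=m$, and invokes Huang's sensitivity theorem (Theorem~\ref{thm:sensitivity}) to find, for a $\tau/2^{2m}$-fraction of restrictions outside $M$, a point with at least $\sqrt{m}$ sensitive coordinates. A coupled downward random walk from this point and its $\sqrt{m}$ neighbours then forces a contradiction, handled in two cases depending on whether $g=1$ or $g=0$ at the sensitive point; the almost-monotonicity of $f$ is used only in case~(b). In particular the paper does not touch Lemma~\ref{lemma:main_exact} here, and obtains $\eta_2=\Theta_\zeta(\tau)$.

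Your route instead leverages the structural Lemma~\ref{lemma:main_exact}: Bourgain gives a $T$-junta approximation with $|T|=J(\zeta,\tau)$, Lemma~\ref{lemma:main_exact} forces each restriction $g_\beta$ to be an AND--OR of bounded width with $\tilde f_\beta$ close to a scaled AND--XOR, and the hypothesis $I_i^{-}[f]\leq\eta_2$ (transferred to $\tilde f_\beta$ via the Jensen step you describe, which is indeed lossless since $\beta\sim\mu_{1/2}$ followed by $\beta'\leq\beta$ uniform gives $\beta'\sim\mu_{1/4}$) rules out any XOR clause of width $\geq 2$. Hence $g$ is $O(\tau)$-close to a single ${\sf AND}_S$ with $|S|\leq\lceil\log(2/\zeta)\rceil$, and the claim follows with $m=\lceil\log(2/\zeta)\rceil+1$. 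This actually proves more than Claim~\ref{claim:low_inf_gen}: it essentially short-circuits Lemma~\ref{lem:strong_junta_approx} and the first half of Section~\ref{sec:mono_and} in one stroke. The price is quantitative: because Lemma~\ref{lemma:main_exact} needs its $L_\infty$ threshold $\eta_0(\zeta,J)$ and $J$ itself is quasi-polynomial in $1/\tau$, your $\eta_2$ is doubly-exponentially small in $\log(1/\tau)$, whereas the paper's is linear in $\tau$. Your handling of the quantifier order is correct and this does not affect the validity of the claim as stated.
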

\begin{proof}
  Let $m = \ceil{\log(2/\zeta)}^2$
  and choose $\eta_1$ from Claim~\ref{claim:low_inf_neg} for $\zeta$ and $\zeta\tau/12$.
  We prove the statement for $\eta_2 = \min(\eta_1, \frac{\zeta}{2^{2m+7}(\sqrt{m}+1)}\tau)$.

  Assume towards contradiction that there exists $M$ of size $m$ such that $I_M[g]\geq \tau$. Consider the function
  $h(x) = \widehat{g_{[n]\setminus M\rightarrow x}}(M)$. Then $h$ can have values between
  $-1$ and $1$, and $\norm{h}_2^2 = I_M[g] \geq \tau$, and therefore $\Prob{{\bf x}}{h({\bf x})\neq 0} \geq \tau$.
  By Theorem \ref{thm:sensitivity}, whenever $h({\bf x})\neq 0$ the function
  $g_{[n]\setminus M\rightarrow {\bf x}}$ has a point $a$ with at least $\sqrt{m}$ sensitive coordinates. It follows that
  \[
    \Prob{{\bf x}}{\exists a\in \power{M}, A\subseteq M, \card{A}\geq \sqrt{m}, \text{ all coordinates in }A \text{ are sensitive
    on $({\bf x},a)$ for $g$}}\geq \tau .
  \]
  Thus, there exist a set $A$ of size $\sqrt{m}$ and $a\in \power{M}$, such that
  \[
    \Prob{{\bf x}}{\text{all coordinates in $A$ are sensitive
    on $({\bf x},a)$ for $g$}}\geq \frac{\tau}{2^{2m}} \defeq t.
  \]

  Assume without loss of generality that $A = \set{1,\ldots,\sqrt{m}}$.
  We now consider a coupled random walk as in Claim~\ref{claim:low_inf_neg}.
  Sample ${\bf x}\sim\mu_{1/2}$, let $\bf{x'(1)},\dots,{\bf x'(\sqrt{m})}$ be given by ${\bf x'(i)} = {\bf x}\xor e_i$, and consider
  a coupled downwards random walk from ${\bf x}$ and the ${\bf x'(i)}$'s to
  ${\bf y} = {\bf x}\land {\bf z},{\bf y'(1)} = {\bf x'(1)}\land {\bf z},\dots,{\bf y'(\sqrt{m})} = {\bf x'(\sqrt{m})}\land {\bf z}$, where ${\bf z}\sim \mu_{1/2}$.
  Then with probability at least $t$ we have that $g({\bf x})\neq g({\bf x'(1)})=\dots=g({\bf x'(\sqrt{m})})$, and conditioned
  on that there are two cases: (a) $g({\bf x}) = 1$ with probability at least $\half$, or (b) $g({\bf x}) = 0$ with probability at least $\half$.

  \paragraph{Case (a).}
    Denote by $E$ the event that $g({\bf x}) = 1$ and all of $A$'s coordinates are sensitive on $x$,
    and let $X(E)$ denote the set of $x$'s for which this event holds.
    The idea is that while by the approximate eigenvalue condition, random walks down from ${\bf x}$ should reach a point with $f$-value
    bounded away from $0$ and random walks down ${\bf x'(i)}$'s should reach points with $f$-value close to $0$, some walk from
    the latter group is likely to collide with the former.
    \skipi
    Formally, by the approximate eigenvalue condition on $x$ we have that
    \[
    \sum\limits_{x\in X(E)}\mu_{1/2}(x) \card{\cExpect{({\bf y},{\bf v})\sim \mathbb{D}(1/4,1/2)}{{\bf v}=x}{f({\bf y})}-\lambda}\leq \eta_2,
    \]
    implying (since in this case $\mu_{1/2}(X)\geq t/2$) that $\cExpect{{\bf x},{\bf y}}{{\bf x}\in X(E)}{f({\bf y})}\geq \lambda - \frac{\eta_2}{t/2}$.
    Similarly, by the approximate eigenvalue condition on ${\bf x'(i)}$ we have that $\cExpect{{\bf x},{\bf y'(i)}}{{\bf x}\in X(E)}{f({\bf y'(i)})}\leq \frac{\eta_2}{t/2}$.
    Therefore we get that
    \begin{align*}
    \lambda - (\sqrt{m} +1)\frac{\eta_2}{t/2}
    &\leq \cExpect{{\bf x},{\bf y},{\bf y'(i)}\text{'s}}{{\bf x}\in X(E)}{f({\bf y})-\sum\limits_{i\in A}{f({\bf y'(i)})}}\\
    &\leq \cProb{{\bf x},{\bf y},{\bf y'(i)}\text{'s}}{{\bf x}\in X(E)}{{\bf y}\neq {\bf y'(i)} ~\forall i\in A}\\
    &=\Prob{{\bf z}}{{\bf z}_i = 1 ~\forall i\in A}= 2^{-\sqrt{m}}\leq \lambda/2,
    \end{align*}
    where in the second transition we used the fact that $f$ is bounded between $0$ and $1$. Rearranging and plugging
    in expression for $t$ yields a contradiction for the choice of $\eta_2$.

   \paragraph{Case (b).}
   Let $E_0$ be the event that $g({\bf x}) = 0$ and all of $A$'s coordinates are sensitive on ${\bf x}$. We would like to identify a large subevent $E \subseteq E_0$ such that ${\bf x}_1 = {\bf x}_2 = 0$ and $g(1,1,{\bf x}_3,\dots,{\bf x}_n) = 1$.

   Let $E'\subseteq E_0$ be the event that ${\bf x}_1 = 1$ or ${\bf x}_2 = 1$.
   For any $x$ satisfying the event $E'$, we have that $(x'(1),x)$ or $(x'(2),x)$
   is a negatively influential edge in direction $i=1$ or $i=2$. It follows that
   $\Prob{}{E'}\leq I_1^{-}[g]+I_2^{-}[g]$, and since $\eta_2\leq \eta_1$ we have by Claim~\ref{claim:low_inf_neg}
   that $I_1^{-}[g],I_2^{-}[g]\leq \tau/12$. Therefore $\Prob{}{E'}\leq \tau/6$.

   Let $E'' \subseteq E_0 \setminus E'$ be the event that $g({\bf x}'(1,2)) = 0$, where ${\bf x}'(1,2) = (1,1,{\bf x}_3,\dots,{\bf x}_n)$. For any $x$ satisfying the event $E''$, we have that $(x'(1),x'(1,2))$ is a negatively influential edge in direction $i=2$. It follows that $\Prob{}{E''} \leq I_2^-[g]$. As above, $\Prob{}{E''} \leq \tau/12$. It follows that the probability of the event $E := E_0 \setminus (E' \cup E'')$ is at least $\tau/2 - \tau/4 = \tau/4$.

   Write inputs $x\in\power{n}$ as $x = (b_1,b_2,\beta)$ for
   $b_1,b_2\in\power{}$ and $\beta\in\power{n-2}$, and let
   $B = \sett{\beta}{(0,0,\beta)\in E}$; then $\mu_{1/2}(B)\geq \tau/4$.
   Let ${\bf z}\sim\mu_{1/2}^{n}$ and write ${\bf z}=({\bf z(1)},{\bf z(2)})$, where ${\bf z(1)}\in\power{2}$ and ${\bf z(2)}\in\power{n-2}$,
   and denote ${\bf y} = \bm{\beta}\land {\bf z(2)}$.
   Using the approximate eigenvector condition on the points $(0,0,\beta)$ for $\beta\in B$
   we get that $\cExpect{\bm{\beta},{\bf z}, {\bf y}}{\bm{\beta}\in B}{f(0,0,{\bf y})}\leq \frac{\eta_2}{\tau/4}$,
   and by the approximate eigenvector condition on points $(1,0,\beta)$ for $\beta\in B$ we get that
   $\cExpect{\bm{\beta},{\bf z},{\bf y}}{\bm{\beta}\in B}{f((1,0)\land {\bf z(1)}, {\bf y})}\geq \lambda - \frac{\eta_2}{\tau/4}$.
   Note that when $z(1)_1 = 0$, the distribution of $((1,0)\land {\bf z(1)}, {\bf y})$ is identical to the distribution of
   $(0,0,{\bf y})$, and thus the expected value of $f$ on these points is at most $\frac{\eta_2}{\tau/4}$. Thus we have
   $\Prob{}{{\bf z(1)_1}=1}\cExpect{\bm{\beta},{\bf z},{\bf y}}{\bm{\beta}\in B}{f(1,0, {\bf y})}\geq \lambda - 2\frac{\eta_2}{\tau/4}$,
   which, since $\Prob{{\bf z}}{z(1)_1=1} = \half$, implies that
   $\cExpect{\bm{\beta},{\bf z},{\bf y}}{\bm{\beta}\in B}{f(1,0, {\bf y})}\geq 2\lambda - 4\frac{\eta_2}{\tau/4}$. Analogously,
   we have $\cExpect{\bm{\beta},{\bf z},{\bf y}}{\bm{\beta}\in B}{f(0,1, {\bf y})}\geq 2\lambda - 4\frac{\eta_2}{\tau/4}$.
   Since $I_1^{-}[f]\leq \lambda\tau/12$, the latter inequality also implies that
   $\cExpect{\bm{\beta},{\bf z},{\bf y}}{\bm{\beta}\in B}{f(1,1, {\bf y})}\geq 2\lambda - 4\frac{\eta_2}{\tau/4}-\frac{\lambda\tau/12}{\tau/4}
   \geq \lambda - 4\frac{\eta_2}{\tau/4}$.

   Combining everything, we get that
   \begin{align*}
   \cExpect{\bm{\beta}\sim\mu_{1/2}}{\bm{\beta}\in B}{\Tdown{1/2}{1/4} f(1,1,\bm{\beta})}
   &\geq \frac{1}{4}
   \cExpect{\bm{\beta},{\bf z},{\bf y}}{\bm{\beta}\in B}{f(1,1, {\bf y})+f(1,0, {\bf y})+f(0,1, {\bf y})}\\
   &\geq \frac{5}{4}\lambda-3\frac{\eta_2}{\tau/4}
    \geq \frac{9}{8}\lambda,
   \end{align*}
   where we used the choice of $\eta_2$. This is a contradiction, since by the approximate solution condition we have that
   \[
   \cExpect{\bm{\beta}\sim\mu_{1/2}}{\bm{\beta}\in B}{\Tdown{1/2}{1/4} f(1,1,\bm{\beta})}
   \leq \cExpect{\bm{\beta}\sim\mu_{1/2}}{\bm{\beta}\in B}{\lambda g(1,1,\bm{\beta})} + \frac{\eta_2}{\tau/4}
   \leq \lambda + \frac{\lambda}{16} < \frac{9}{8}\lambda,
   \]
   and we are done.
\end{proof}

The second ingredient we need is a version of Bourgain's Theorem~\ref{thm:Bourgain}
where the role of influences of variables is played by influences of sets of variables.
Let $C=C(1/2)$ be the constant from Theorem~\ref{thm:Bourgain} for $\zeta = 1/2$.
\begin{lemma}\label{lem:Bourgain_gen}
  For every $\delta>0$, $m,k\in\mathbb{N}$ there is $\tau > 0$ such that if $g\colon \power{n}\to\power{}$
  has $W_{\geq m}[g]\geq \delta$ and $I_{J}[f]\leq \tau$ for all $J\subseteq [n]$ of size $m$, then
  $W_{\geq k}[g] \geq \frac{\delta}{2C\sqrt{k}\log^{1.5}(k)}$.
\end{lemma}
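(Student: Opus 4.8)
Here is how I would prove Lemma~\ref{lem:Bourgain_gen}.

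The plan is to establish the statement in contrapositive form: assuming $W_{\geq k}[g]$ lies below the claimed threshold, I will produce a size-$m$ set of coordinates on which $g$ has influence too large for the hypothesis to hold. The tool is Bourgain's junta theorem (Theorem~\ref{thm:Bourgain}) invoked with closeness parameter $\eps := \delta/2$ — which is exactly the choice for which the threshold $\tfrac{\delta}{2C\sqrt{k}\log^{1.5}(k)} = \tfrac{\eps}{C\sqrt{k}\log^{1.5}(k)}$ coincides with the theorem's hypothesis — followed by a pigeonhole over the $\binom{J_B}{m}$ size-$m$ subsets of the resulting junta's coordinate set. We may assume $k > m$, since otherwise $W_{\geq k}[g] \geq W_{\geq m}[g] \geq \delta$ and the conclusion is immediate.

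So suppose the two hypotheses hold, and suppose toward a contradiction that $W_{\geq k}[g] < \tfrac{\eps}{C\sqrt{k}\log^{1.5}(k)}$ with $\eps=\delta/2$. By Theorem~\ref{thm:Bourgain} (at $\zeta = 1/2$) there is a Boolean $J_B$-junta $h$, with $J_B = (2k/\delta)^{Ck}$, depending on a coordinate set $U$ with $\card{U} \leq J_B$, such that $\Prob{}{g \neq h} \leq \eps$. Since $g$ and $h$ are Boolean, Parseval gives $\sum_S(\widehat{g}(S) - \widehat{h}(S))^2 = \norm{g-h}_2^2 = \norm{g-h}_1 \leq \eps = \delta/2$, and as $\widehat{h}(S) = 0$ whenever $S \not\subseteq U$, we get $\sum_{S \not\subseteq U}\widehat{g}(S)^2 \leq \delta/2$. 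Subtracting this from $W_{\geq m}[g] \geq \delta$ shows that $g$ retains substantial high-degree weight on subsets of $U$:
\[
\sum_{\substack{S \subseteq U \\ \card{S} \geq m}}\widehat{g}(S)^2 \;\geq\; W_{\geq m}[g] - \sum_{S \not\subseteq U}\widehat{g}(S)^2 \;\geq\; \frac{\delta}{2},
\]
and in particular $\card{U} \geq m$, since otherwise the left-hand sum would be empty.

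Next I would sum the size-$m$ set influences over subsets of $U$, exchanging the order of summation:
\[
\sum_{\substack{J \subseteq U \\ \card{J} = m}} I_J[g] \;=\; \sum_S \binom{\card{S \cap U}}{m}\widehat{g}(S)^2 \;\geq\; \sum_{\substack{S \subseteq U \\ \card{S} \geq m}}\binom{\card{S}}{m}\widehat{g}(S)^2 \;\geq\; \sum_{\substack{S \subseteq U \\ \card{S} \geq m}}\widehat{g}(S)^2 \;\geq\; \frac{\delta}{2},
\]
using $\binom{\card{S}}{m} \geq 1$ for $\card{S} \geq m$. Since there are at most $\binom{\card{U}}{m} \leq \binom{J_B}{m} \leq J_B^{m}$ such sets $J$, some $J^\star \subseteq U$ of size $m$ satisfies $I_{J^\star}[g] \geq \tfrac{\delta}{2 J_B^{m}}$. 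Choosing $\tau := \tfrac{\delta}{3 J_B^{m}}$ — a positive quantity depending only on $\delta$, $k$, $m$ and the constant $C = C(1/2)$ — this contradicts the hypothesis $I_{J^\star}[g] \leq \tau$, so the conclusion $W_{\geq k}[g] \geq \tfrac{\delta}{2C\sqrt{k}\log^{1.5}(k)}$ must hold.

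The delicate point — and the only real obstacle — is the bookkeeping of constants: Theorem~\ref{thm:Bourgain} must be applied with $\eps = \delta/2$ precisely so that the present lemma's hypothesis on $W_{\geq k}[g]$ supplies the theorem's premise, and it is exactly the collapse $\norm{g-h}_2^2 = \norm{g-h}_1$ for Boolean $g,h$ that keeps the ``misplaced'' Fourier mass $\sum_{S \not\subseteq U}\widehat{g}(S)^2$ below $\delta/2$, thereby leaving a definite positive amount ($\geq \delta/2$) of genuine high-degree weight on the $J_B$ junta coordinates — which is what makes the final pigeonhole, and hence the choice of $\tau$ as a function of $J_B$ and $m$, possible. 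I would also note that only the basic conclusion of Theorem~\ref{thm:Bourgain} is needed (closeness to a bounded-size Boolean junta), not its ``furthermore'' clause identifying which coordinates the junta depends on.
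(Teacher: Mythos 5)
Your proof is correct and takes essentially the same route as the paper's: apply Theorem~\ref{thm:Bourgain} with $\eps = \delta/2$ to get a $J$-junta, bound the Fourier mass off the junta's coordinate set $U$ by $\delta/2$, deduce $\sum_{S\subseteq U,\,|S|\geq m}\widehat{g}(S)^2 \geq \delta/2$, and double-count this sum over all $m$-subsets of $U$ to force one of them to have set-influence at least $\delta/(2|U|^m)$, contradicting the hypothesis with $\tau$ of that order. The paper writes the double-counting bound more tersely (without spelling out the $\binom{|S\cap U|}{m}$ count) and chooses $\tau = \delta^{1+mCk}/\bigl(4(2k)^{mCk}\bigr)$ rather than your $\delta/(3J_B^m)$, but these are the same quantity up to the choice of constant in the denominator.
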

\begin{proof}
  We prove the statement for $\tau= \frac{\delta^{1+m C\cdot k}}{4 (2k)^{m C\cdot k}}$.

  Assume towards contradiction that $W_{\geq k}[g] < \frac{\delta}{2C\sqrt{k}\log^{1.5}(k)}$.
  By Theorem~\ref{thm:Bourgain}, $g$ is $\delta/2$-close to a junta $h$ depending on $T\subseteq[n]$
  whose size is at most $J = (2k/\delta)^{C\cdot k}$. Thus, $\sum\limits_{S:S\not\subseteq T}{\widehat{g}(S)^2}\leq \delta/2$,
  and since $W_{\geq m}[g]\geq \delta$ it follows that
  \[
  \delta/2\leq \sum\limits_{\substack{S\colon\card{S}\geq m\\ S\subseteq T}}{\widehat{g}(S)^2}
  \leq \sum\limits_{\substack{M\subseteq T\\ \card{M} = m}}{I_M[g]} \leq \card{T}^m \tau\leq J^m \tau,
  \]
  and we get a contradiction to the choice of $\tau$.
\end{proof}

We are now ready to prove the approximation by junta result for $g$.
\begin{lemma}\label{lem:strong_junta_approx}
  For every $\zeta>0$ there is $J\in\mathbb{N}$ such that the following holds for all $\lambda\in[\zeta,1]$.
  For every $\eps>0$ there exists $\eta > 0$
  such that if $f\colon(\power{n},\mu_{1/4})\to[0,1]$, $g\colon(\power{n},\mu_{1/2})\to\power{}$ satisfy
  $\norm{\Tdown{1/2}{1/4} f - \lambda g}\leq \eta$ and $\max_{i} I_i^{-}[f]\leq \eta$, then
  $g$ is $\eps$-close to a junta $h$ that depends on $J$ variables.
\end{lemma}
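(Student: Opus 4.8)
The plan is to reduce the statement to a bound on the Fourier weight of $g$ at a single \emph{fixed} level $m=m(\zeta)$, and then to invoke the Kindler--Safra theorem (Theorem~\ref{thm:KS}), whose crucial feature over Bourgain's theorem is that the junta size it produces depends only on $m$ and not on the closeness parameter $\eps$. So first I would fix $\zeta$, take $m=m(\zeta)$ to be the constant supplied by Claim~\ref{claim:low_inf_gen}, and let $J=J(m,1/2)$ be the junta size from Theorem~\ref{thm:KS} for that $m$; this $J$ depends only on $\zeta$, exactly as demanded. Then, given $\eps$, I would let $\delta=C(m,1/2)\cdot\eps$ be the corresponding Kindler--Safra threshold, so that it suffices to choose $\eta>0$ small enough to guarantee $W_{\geq m}[g]\leq\delta$; from that inequality Theorem~\ref{thm:KS} already yields that $g$ is $\eps$-close to a $J$-junta.

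To guarantee $W_{\geq m}[g]\leq\delta$, I would argue by contradiction, playing off the ``set-influence'' Bourgain theorem (Lemma~\ref{lem:Bourgain_gen}) against the exponential tail decay (Lemma~\ref{lem:vanishing_tail}). Assume $W_{\geq m}[g]>\delta$, and let $C$ be the constant from Lemma~\ref{lem:Bourgain_gen}. I would first pick $k\in\mathbb{N}$ large enough that $2\zeta^{-2}2^{-k}<\tfrac{\delta}{4C\sqrt{k}\log^{1.5}(k)}$ --- possible since the left side decays exponentially in $k$ while the right side decays only polynomially --- then take $\tau>0$ from Lemma~\ref{lem:Bourgain_gen} for $\delta,m,k$, then $\eta_2>0$ from Claim~\ref{claim:low_inf_gen} for that $\tau$, and finally set $\eta=\min\!\left(\eta_2,\ \tfrac{\delta\zeta^2}{8C\sqrt{k}\log^{1.5}(k)}\right)$. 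Since $\eta\leq\eta_2$ and $\lambda\geq\zeta$, the hypotheses $\norm{\Tdown{1/2}{1/4}f-\lambda g}\leq\eta$ and $\max_i I_i^{-}[f]\leq\eta$ let me apply Claim~\ref{claim:low_inf_gen}, obtaining $I_M[g]\leq\tau$ for every $M\subseteq[n]$ with $\card{M}=m$. Combined with $W_{\geq m}[g]>\delta$, Lemma~\ref{lem:Bourgain_gen} forces $W_{\geq k}[g]\geq\tfrac{\delta}{2C\sqrt{k}\log^{1.5}(k)}$, whereas Lemma~\ref{lem:vanishing_tail} (with $\rho=1/2$) gives $W_{\geq k}[g]\leq 2\lambda^{-2}(\eta+2^{-k})\leq 2\zeta^{-2}(\eta+2^{-k})<\tfrac{\delta}{2C\sqrt{k}\log^{1.5}(k)}$ by the choices of $k$ and $\eta$ --- a contradiction. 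Hence $W_{\geq m}[g]\leq\delta$, and Theorem~\ref{thm:KS} completes the proof.

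I do not expect a serious obstacle at this stage: all the heavy lifting is already in Claim~\ref{claim:low_inf_gen} (the combinatorial bound on $m$-set influences of $g$, obtained through the sensitivity theorem and coupled downward walks) and in Lemma~\ref{lem:Bourgain_gen}; the present lemma is essentially a bookkeeping assembly of those two with the tail bound and Kindler--Safra. The one point that genuinely needs care is the quantifier order: $m$, and therefore $J$, must be committed to \emph{before} $\eps$ is revealed, while $k$, $\tau$ and $\eta$ are chosen \emph{after} it; getting this ordering right is precisely what makes the junta size independent of $\eps$, and it is the reason the argument is routed through Theorem~\ref{thm:KS} rather than directly through Bourgain's Theorem~\ref{thm:Bourgain}.
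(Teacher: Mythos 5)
Your proof is correct and follows essentially the same route as the paper's: fix $m=m(\zeta)$ from Claim~\ref{claim:low_inf_gen}, take $J$ from Theorem~\ref{thm:KS} (so that $J$ is committed before $\eps$), and then bound $W_{\geq m}[g]\leq\delta$ by combining the set-influence bound (Claim~\ref{claim:low_inf_gen}), the tail bound (Lemma~\ref{lem:vanishing_tail}), and Lemma~\ref{lem:Bourgain_gen}. Your explicit contradiction framing and the slightly different bookkeeping in the choice of $k$ and $\eta$ are cosmetic; the substance, including the observation that the quantifier order is exactly what forces the detour through Kindler--Safra, matches the paper's argument.
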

\begin{proof}
  Pick $m = m(\zeta)$ from Claim~\ref{claim:low_inf_gen},
  and let $J$, $\delta$ be from Theorem~\ref{thm:KS} for $m$, $\zeta = 1/2$ and $\eps$.
  Let $C = C(1/2)$ be from Theorem~\ref{thm:Bourgain}, and set $\eta_3 = \zeta^2\delta/(16C\log(1/\delta))$,
  $k = \ceil{\log(1/\eta_3)}$. Take $\tau = \tau(\delta,m,k)$ from Lemma~\ref{lem:Bourgain_gen}
  and $\eta_2 = \eta_2(\zeta,m,\tau)$ from Claim~\ref{claim:low_inf_gen}. We prove the statement for $\eta = \min(\eta_2,\eta_3)$.

  Let $f,g$ be as in the statement of the lemma. By
  Claim~\ref{claim:low_inf_gen} we have $I_{M}[g]\leq \tau$ for all $\card{M} = m$, and by Lemma~\ref{lem:vanishing_tail}
  we have $W_{\geq k}[g]\leq 2\lambda^{-2}(\eta + 2^{-k}) < \frac{\delta}{2C\sqrt{k}\log^{1.5}(k)}$. Therefore, Lemma~\ref{lem:Bourgain_gen}
  implies that we must have $W_{\geq m}[g]< \delta$ (otherwise $g$ would be a counterexample to the lemma), and therefore by Theorem~\ref{thm:KS}
  we get that $g$ is $\eps$-close to a $J$-junta.
\end{proof}

\subsection{Proof of Theorem~\ref{thm:mono_and}}\label{sec:mono_and}
In this section we prove Theorem~\ref{thm:mono_and}. As discussed in the paragraph below
the theorem statement, we will actually prove the statement under the more relaxed
assumption that all individual negative influences of $f$ are small.

\skipi

Fix $\zeta,\eps>0$
and let $J = J(\zeta)$ be from Lemma~\ref{lem:strong_junta_approx}.
Set $\eps' = 2^{-6J - 4}\zeta\eps$, and pick $\eta'$ from Lemma~\ref{lem:strong_junta_approx}
for $\zeta,\eps'$. Let $\eta_1$ be from Claim~\ref{claim:low_inf_neg} for $\tau = 2^{-J-2}$.
We prove the statement for $\eta = \min(\eta',\eta_1,\eps')$.

\paragraph{Proving the structure for $g$.}
By Lemma~\ref{lem:strong_junta_approx} we get that $g$ is $\eps'$-close to
a junta $h$ depending on at most $J$ variables, say on $T\subseteq[n]$.
First, observe that $h$ must be monotone. Indeed, otherwise there is $x\in\power{T}$ and $i\in T$
such that $x_i = 0$ and $h(x) = 1 > 0 = h(x\oplus e_i)$. Since $g$ and $h$ are $\eps'$-close, it follows
that the restriction $g_{T\rightarrow x}$ is $2^J \eps'$-close to being constant $1$
and that the restriction $g_{T\rightarrow x\oplus e_i}$ is $2^J \eps'$-close to being constant $0$.
In particular, we conclude that the negative influence of $i$ is at least $2^{-J}(1 - 2\cdot 2^J \eps')\geq 2^{-J-1}$.
This is a contradiction to Claim~\ref{claim:low_inf_neg} since $\eta \leq \eta_1$.

Therefore, we may discuss the minterms of $h$. We prove that $h$ has at most one minterm, which implies that
it is either an AND function or a constant function, and either way we establish the structure for $g$.
Let us write inputs as $x = (\alpha,\beta)$ for $\alpha\in\power{T}$, $\beta\in\power{[n]\setminus T}$.

\begin{proposition}\label{prop:basic_mono_and}
  For any $\alpha\in\power{T}$, we have
  $\Expect{\bm{\beta}}{\card{\Tdown{1/2}{1/4} f(\alpha,\bm{\beta}) - \lambda h(\alpha)}}\leq 2^{2J}\eps'$.
\end{proposition}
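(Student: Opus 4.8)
The plan is to derive the proposition from the two facts already in hand at this point of the argument: the global $\ell_1$ bound $\norm{\Tdown{1/2}{1/4} f - \lambda g}_1 \le \eta$ coming from the hypothesis, and the junta approximation $\norm{g-h}_1\le\eps'$ supplied by Lemma~\ref{lem:strong_junta_approx}. Passing from a global average to the conditional average over $\bm\beta$ with $\alpha$ fixed will cost a factor of $2^{\card{T}}$, and since $\card{T}\le J$ this is affordable.

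First I would note that $h$ depends only on the coordinates in $T$, so $h(\alpha,\beta) = h(\alpha)$ for every $\beta$; thus $\lambda h(\alpha)$ agrees pointwise with $\lambda h(\alpha,\beta)$. Combining the two bounds above via the triangle inequality, and using $\lambda\le 1$ together with $\eta\le\eps'$ and $\norm{g-h}_1\le\eps'$,
\[
\norm{\Tdown{1/2}{1/4} f - \lambda h}_1 \;\le\; \norm{\Tdown{1/2}{1/4} f - \lambda g}_1 + \lambda\,\norm{g-h}_1 \;\le\; \eta + \eps' \;\le\; 2\eps'.
\]
Finally I would restrict the expectation defining this $\ell_1$-norm (which is over $(\bm\alpha,\bm\beta)\sim\mu_{1/2}$) to the event $\set{\bm\alpha=\alpha}$, which has probability $2^{-\card{T}}\ge 2^{-J}$. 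Since the integrand $\card{\Tdown{1/2}{1/4} f - \lambda h}$ is nonnegative, its contribution on this event is at least $2^{-\card{T}}\,\Expect{\bm\beta}{\card{\Tdown{1/2}{1/4} f(\alpha,\bm\beta) - \lambda h(\alpha)}}$ and is at most $2\eps'$; rearranging yields $\Expect{\bm\beta}{\card{\Tdown{1/2}{1/4} f(\alpha,\bm\beta) - \lambda h(\alpha)}} \le 2^{\card{T}+1}\eps' \le 2^{2J}\eps'$.

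I do not expect any genuine obstacle here; the estimate is a single averaging step. The only point that requires care is bookkeeping of the parameters: the choices $\eps' = 2^{-6J-4}\zeta\eps$ and $\eta = \min(\eta',\eta_1,\eps')$ fixed just before the proposition were arranged precisely so that $\eta\le\eps'$ and the Lemma~\ref{lem:strong_junta_approx} error is exactly $\eps'$, which is what makes the chain above close with room to spare in the final constant $2^{2J}$.
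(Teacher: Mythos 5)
Your proof is correct and takes essentially the same route as the paper's: both split via the triangle inequality into the approximate-solution error and the $g$-to-$h$ error, and both absorb the cost of conditioning on $\bm\alpha=\alpha$ into a factor $2^{|T|}\le 2^J$. The only cosmetic difference is that you combine the two $\ell_1$ bounds globally before restricting, whereas the paper restricts first and then applies the triangle inequality to the conditional expectations; the resulting bound $2^{2J}\eps'$ is the same.
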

\begin{proof}
  By the triangle inequality we have
  \[
  \Expect{\bm{\beta}}{\card{\Tdown{1/2}{1/4} f(\alpha,\bm{\beta}) - \lambda h(\alpha)}}
  \leq
  \Expect{\bm{\beta}}{\card{\Tdown{1/2}{1/4} f(\alpha,\bm{\beta}) - \lambda g(\alpha,\bm{\beta})}}
  +\Expect{\bm{\beta}}{|\lambda g(\alpha,\bm{\beta})-\lambda h(\alpha)|}.
  \]
  The first expectation is at most $2^{J}\eta$ by the approximate solution condition (and the fact we restricted at most $J$ variables),
  and the second expectation is at most $2^J\eps'$ by the closeness between $g$ and $h$.
\end{proof}
Assume towards contradiction that $\alpha_1,\alpha_2$ are two distinct minterm of $h$.
By definition, we have
\begin{equation}\label{eq1}
\Expect{\bm{\beta}}{\Tdown{1/2}{1/4} f(\alpha_1,\bm{\beta})} =
2^{-\card{\alpha_1}}\Expect{\bm{\beta}}{\Tdown{1/2}{1/4} f_{T\rightarrow \alpha_1}(\bm{\beta})}
+2^{-\card{\alpha_1}}\sum\limits_{\alpha < \alpha_1} \Expect{\bm{\beta}}{\Tdown{1/2}{1/4} f_{T\rightarrow \alpha}(\bm{\beta})}.
\end{equation}
We want to upper bound the rightmost average over $\alpha$, and for that we may pick $\alpha < \alpha_1$ such that
$\Expect{\bm{\beta}}{\Tdown{1/2}{1/4} f_{T\rightarrow \alpha}(\bm{\beta})}$ is at least half that average. Since $h(\alpha) = 0$, Proposition~\ref{prop:basic_mono_and}
implies that we have $\Expect{\bm{\beta}}{\Tdown{1/2}{1/4} f(\alpha,\bm{\beta})}<2^{2J}\eps'$. Since $h(\alpha_1)=1$, Proposition~\ref{prop:basic_mono_and}
implies that $\Expect{\bm{\beta}}{\Tdown{1/2}{1/4} f(\alpha_1,\bm{\beta})}\geq \lambda - 2^{2J}\eps'$. Plugging these two
bounds into~\eqref{eq1} gives us that
\begin{equation}\label{eq2}
\Expect{\bm{\beta}}{\Tdown{1/2}{1/4} f_{T\rightarrow \alpha_1}(\bm{\beta})}\geq 2^{\card{\alpha_1}}\lambda - 2^{3J + 2}\eps'.
\end{equation}
Similarly, $\Expect{\bm{\beta}}{\Tdown{1/2}{1/4} f_{T\rightarrow \alpha_2}(\bm{\beta})}\geq 2^{\card{\alpha_2}}\lambda - 2^{3J + 2}\eps'$.

\begin{proposition}\label{prop:mono_and_abv_minterm}
  Let $\alpha,\gamma\in\power{T}$ be such that $\gamma\geq \alpha$.
  Then
  \[
  \Expect{\bm{\beta}}{\Tdown{1/2}{1/4} f_{T\rightarrow \gamma}(\bm{\beta})}\geq
  \Expect{\bm{\beta}}{\Tdown{1/2}{1/4} f_{T\rightarrow \alpha}(\bm{\beta})} - J 4^J \eta.
  \]
\end{proposition}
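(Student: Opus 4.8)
The plan is to peel off the coordinates of $T$ one at a time and charge each single‑coordinate move to a negative influence of $f$. The first step is the bookkeeping identity
\[
\Expect{\bm\beta\sim\mu_{1/2}^{[n]\setminus T}}{\Tdown{1/2}{1/4}F(\bm\beta)} \;=\; \Expect{\bm\beta\sim\mu_{1/4}^{[n]\setminus T}}{F(\bm\beta)},
\]
valid for any $F\colon\power{[n]\setminus T}\to[0,1]$, which follows immediately from $\Tdown{1/2}{1/4}F(\beta) = \Expect{{\bf z}\sim\mu_{1/2}}{F(\beta\land {\bf z})}$ together with the fact that $\bm\beta\land {\bf z}\sim\mu_{1/4}$ when $\bm\beta,{\bf z}\sim\mu_{1/2}$ are independent. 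Applying this to $F=f_{T\rightarrow\alpha}$ and $F=f_{T\rightarrow\gamma}$ and writing $\bar{f}(\sigma) \defeq \Expect{\bm\beta\sim\mu_{1/4}^{[n]\setminus T}}{f(\sigma,\bm\beta)}$ for $\sigma\in\power{T}$, the proposition reduces to the inequality $\bar{f}(\gamma)\ge\bar{f}(\alpha) - J4^J\eta$.

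Since $\gamma\ge\alpha$ in $\power{T}$, I would choose a monotone chain $\alpha=\sigma_0<\sigma_1<\dots<\sigma_\ell=\gamma$ with $\ell\le\card{T}\le J$ in which consecutive elements differ in exactly one coordinate of $T$, flipped from $0$ to $1$. By the triangle inequality it then suffices to prove the per‑step bound $\bar{f}(\sigma_{t-1})-\bar{f}(\sigma_t)\le 4^J\eta$ for each $t$ and to sum the at most $J$ resulting inequalities. Fix such a step, let $i\in T$ be the flipped coordinate, and write $\sigma=\sigma_{t-1}$, $\sigma'=\sigma_t=\sigma\oplus e_i$, so that $\sigma_i=0$ and $\sigma'_i=1$. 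Bounding a difference by its positive part,
\[
\bar{f}(\sigma)-\bar{f}(\sigma') \;=\; \Expect{\bm\beta\sim\mu_{1/4}^{[n]\setminus T}}{f(\sigma,\bm\beta)-f(\sigma',\bm\beta)} \;\le\; \Expect{\bm\beta\sim\mu_{1/4}^{[n]\setminus T}}{\max\!\big(0,\,f(\sigma,\bm\beta)-f(\sigma',\bm\beta)\big)}.
\]

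Finally I would relate the right‑hand side to the negative influence $I_i^-[f]$, computed in the domain measure $\mu_{1/4}$. Restricting the expectation in Definition~\ref{def:neg_inf} to the event $\set{{\bf x}_{T\setminus\set{i}}=\sigma_{T\setminus\set{i}}}$, and observing that conditioning on this event together with $x_i=0$ (resp.\ $x_i=1$) fixes the $T$‑block of ${\bf x}$ to $\sigma$ (resp.\ $\sigma'$) while leaving ${\bf x}_{[n]\setminus T}$ distributed as $\mu_{1/4}^{[n]\setminus T}$, yields
\[
I_i^-[f] \;\ge\; \Prob{{\bf x}\sim\mu_{1/4}}{{\bf x}_{T\setminus\set{i}}=\sigma_{T\setminus\set{i}}}\cdot \Expect{\bm\beta\sim\mu_{1/4}^{[n]\setminus T}}{\max\!\big(0,\,f(\sigma,\bm\beta)-f(\sigma',\bm\beta)\big)}.
\]
The prefactor is at least $4^{-\card{T\setminus\set{i}}}\ge 4^{-J}$, since under $\mu_{1/4}$ each coordinate equals any prescribed value with probability at least $1/4$. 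Hence the last expectation is at most $4^J I_i^-[f]\le 4^J\eta$ by the standing assumption that all negative influences of $f$ are at most $\eta$, which gives the per‑step bound and, after summing along the chain, the proposition. There is no real difficulty here: the two points that need care are measuring $I_i^-[f]$ in $\mu_{1/4}$ (the domain of $f$) rather than in $\mu_{1/2}$, and accounting for the factor $4^{\card T}\le 4^J$ lost when conditioning on the remaining coordinates of $T$ — which is exactly what produces the $J4^J\eta$ in the statement.
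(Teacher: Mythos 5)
Your proof is correct and follows essentially the same route as the paper: both reduce the $\T$-averages to $\mu_{1/4}$-averages $\bar f$, telescope along a monotone chain of length at most $J$ from $\alpha$ to $\gamma$, and bound each single-coordinate step by relating the positive part of the decrement to the negative influence $I_i^-[f]$ (measured in $\mu_{1/4}$), incurring a factor $4^J$ from conditioning on the remaining $T$-coordinates. Your version is slightly more careful in conditioning only on ${\bf x}_{T\setminus\{i\}}$ (the paper conditions on all of $\bm\alpha$ and writes the conversion factor as an equality with $4^J$, which is really an inequality), but the idea and the bookkeeping are the same.
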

\begin{proof}
  Let $\alpha_0 = \alpha\rightarrow \alpha_1\rightarrow \cdots \to\alpha_r = \gamma$ be an upwards walking
  path from $\alpha$ to $\gamma$, where $r\leq J$. Clearly by the triangle inequality, it is enough to show
  \[
  \Expect{\bm{\beta}}{\Tdown{1/2}{1/4} f_{T\rightarrow \alpha_{j+1}}(\bm{\beta})}
  -\Expect{\bm{\beta}}{\Tdown{1/2}{1/4} f_{T\rightarrow \alpha_{j}}(\bm{\beta})}
  \geq - 4^J \eta.
  \]
  To see that, let $i$ be the index on which the two inputs $\alpha_{j+1},\alpha_j$ differ, and expand out the definition of $\Tdown{1/2}{1/4}$
  on the left-hand side to write it as
  \[
  \Expect{\bm{\beta},{\bf z}\sim\mu_{1/2}}{f(\alpha_{j+1},\bm{\beta}\land {\bf z})
  -f(\alpha_{j},\bm{\beta}\land {\bf z})}
  = 4^{J}\Expect{(\bm{\alpha},\bm{\beta})\sim \mu_{1/4}}{(f(\bm{\alpha}\oplus e_i,\bm{\beta}) -f(\bm\alpha,\bm{\beta}))1_{\bm{\alpha} = \alpha_j}}.
  \]
  As the last expectation is at least $-I_i^{-}[f]\geq -\eta$, the proof is concluded.
\end{proof}
Consider $\gamma = \alpha_1\lor \alpha_2$, choose ${\bf z}\sim\mu_{1/2}^{T}$, and let $\bm{\alpha} = \gamma \land {\bf z}$.
Let $E_1$ be the event that $\bm{\alpha}\geq \alpha_1$, and let $E_2$ be the event that $\bm{\alpha} < \alpha_1$ and $\bm{\alpha}\geq \alpha_2$.
Clearly, $\Prob{}{E_1} = 2^{-\card{\alpha_1}}$, and since $\alpha_1,\alpha_2$ are incomparable (as they are distinct minterms)
we have $\Prob{}{E_2} \geq 2^{-J}$. Therefore, we get that
\begin{align*}
\Expect{\bm{\beta}}{\Tdown{1/2}{1/4} f(\gamma,\bm{\beta})}
=\Expect{\bm{\alpha}}{\Expect{\bm{\beta}}{\Tdown{1/2}{1/4} f_{T\rightarrow \bm{\alpha}}(\bm{\beta})}}
&\geq 2^{-\card{\alpha_1}}\cExpectop{\bm{\alpha}}{E_1}{\Expect{\bm{\beta}}{\Tdown{1/2}{1/4} f_{T\rightarrow \bm{\alpha}}(\bm{\beta})}}\\
&+2^{-J}\cExpectop{\bm{\alpha}}{E_2}{\Expect{\bm{\beta}}{\Tdown{1/2}{1/4} f_{T\rightarrow \bm{\alpha}}(\bm{\beta})}}.
\end{align*}
Consider the right-hand side. For the first expectation, for any $\alpha$ for which $E_1$ holds,
we have by Proposition~\ref{prop:mono_and_abv_minterm} (using also~\eqref{eq2})
that $\Expect{\bm{\beta}}{\Tdown{1/2}{1/4} f_{T\rightarrow \alpha}(\bm{\beta})}\geq 2^{\card{\alpha_1}}\lambda - 2^{5J+2}\eps'$.
Similarly, for the second expectation, for any $\alpha$ for which $E_2$ holds we have
$\Expect{\bm{\beta}}{\Tdown{1/2}{1/4} f_{T\rightarrow \alpha}(\bm{\beta})}\geq 2^{\card{\alpha_2}}\lambda - 2^{5J+2}\eps'$.
Plugging these two bounds we conclude that
\[
\Expect{\bm{\beta}}{\Tdown{1/2}{1/4} f(\gamma,\bm{\beta})}
\geq (1 + 2^{\card{\alpha_2} - J})\lambda - 2^{5J+3}\eps'.
\]
On the other hand, by the approximate solution condition we have that
$\Expect{\beta}{\Tdown{1/2}{1/4} f(\gamma,\beta)}\leq \lambda + 2^J\eta \leq \lambda +2^J\eps'$.
Combining the two inequalities and rearranging, we conclude that
$\lambda \leq 2^{J-\card{\alpha_2}}(2^{5J+3} + 2^J)\eps'\leq 2^{6J+3}\eps'$,
contradicting the definition of $\eps'$.

Therefore, $h$ is either constant or an AND function. In the latter case, if $\alpha_1$ is a minterm for $h$
then from~\eqref{eq2} and $f\leq 1$ we get that $2^{\card{\alpha_1}}\lambda \leq 2$, implying that $\card{\alpha_1}\leq \log(2/\lambda)$.

\paragraph{Proving the structure for $f$.}
Assume first that $h = {\sf AND}_T$ (possible $T = \emptyset$, in which case $h = 1$).
Define $\tilde{f}(\alpha) = \Expect{\bm{\beta}\sim\mu_{1/4}}{f(\alpha,\bm{\beta})}$
and
$\tilde{g}(\alpha) = \Expect{\bm{\beta}\sim\mu_{1/2}}{g(\alpha,\bm{\beta})}$.
Then by the triangle inequality and the approximate solution condition, we get that
$\norm{\Tdown{1/2}{1/4} \tilde{f} - \lambda\tilde{g}}_1\leq \eta$.

For any $\alpha$ such that $h(\alpha) = 0$, we have that $\tilde{g}(\alpha)\leq 2^J\eps'$,
and therefore $\Tdown{1/2}{1/4} \tilde{f}(\alpha) \leq 2^J\eps' + \eta \leq 2^{J+1}\eps'$.
On the other hand, clearly $\Tdown{1/2}{1/4} \tilde{f}(\alpha)\geq 4^{-J}\tilde{f}(\alpha)$,
so we get that $\tilde{f}(\alpha)\leq 2^{3J+1}\eps'$.

For $\alpha = e_T$ (note that $h(\alpha) = 1$), we have $\tilde{g}(\alpha)\geq 1-2^J\eps'$,
hence we get that $\card{\Tdown{1/2}{1/4} \tilde{f}(\alpha) - \lambda}\leq 2^{J+1}\eps'$. Note that
\[
\Tdown{1/2}{1/4} \tilde{f}(\alpha)
=2^{-\card{T}} \tilde{f}(\alpha) + 2^{-\card{T}}\sum\limits_{\alpha' < \alpha} \tilde{f}(\alpha'),
\]
and as for any $\alpha'<\alpha$ it holds that $h(\alpha') = 0$, we have from the previous paragraph that
$\tilde{f}(\alpha')\leq 2^{3J+1}\eps'$, and in conclusion we get that $\card{\Tdown{1/2}{1/4} \tilde{f}(\alpha) - 2^{-\card{T}} \tilde{f}(\alpha)}\leq 2^{3J+1}\eps'$.
Therefore, by the triangle inequality we get that $\card{2^{-\card{T}} \tilde{f}(\alpha)-\lambda}\leq 2^{3J+2}\eps'$, or in other words,
that $\card{\tilde{f}(\alpha) - 2^{\card{T}}\lambda}\leq 2^{4J+2}\eps'$. It follows that
$\norm{\tilde{f} - 2^{\card{T}}\lambda\cdot {\sf AND}_{T}}_{\infty}\leq 2^{4J+2}\eps'\leq \eps$, and we are done.

When $h = 0$, the same argument shows that $\|\tilde f\|_\infty \leq \epsilon$. \qed

\subsection{Proof of Theorem~\ref{thm:and_homomorphism}} \label{sec:thm:and_homomorphism}
In this section we show that Theorem~\ref{thm:mono_and} implies Theorem~\ref{thm:and_homomorphism}.
Let $\zeta,\eps>0$ be as in the theorem statement. If either $g$ or $h$ are $10\eps$-close to the
constant $0$ function, then we immediately get that $f$ is $11\eps$-close to the constant $0$ function (assuming $\eta \leq \epsilon$),
and the first item is proved (up to a scaling of $\epsilon$). We assume henceforth that $g,h$ are $10\eps$-far from the constant $0$
function, and in particular their averages are at least $10\eps$.

Set $\zeta'=\min(\eps,\zeta)$, and take $\eta_1$ from Theorem~\ref{thm:mono_and} for $\zeta'$ and $\eps^2\zeta^3$,
and prove the statement for $\eta = \eta_1\zeta^2\eps^3/6$.

We argue that $h$ is $\eps$-close to ${\sf AND}_T$ for some $T\subseteq[n]$. Let $\lambda_{h} = \Expect{{\bf y}\sim\mu_\rho}{h({\bf y})}\geq \eps$.
We note that for any $x\in\power{n}$ we have that $\card{\Tdown{p}{\rho p} f(x) - \lambda_h g(x)}\leq \Prob{{\bf y}}{f(x\land {\bf y}) \neq g(x) h({\bf y})}$,
and therefore $\norm{\Tdown{p}{\rho p} f - \lambda_h g}_{1}\leq \eta \leq \eta_1$. To use Theorem~\ref{thm:mono_and}, we next argue that all individual
negative influences of $f$ are at most $\eta_1$.

Assume towards contradiction that there exists $i\in[n]$, without loss of generality $i=n$,
such that $I_{i}^{-}[f]\geq \eta_1$, and sample ${\bf x}\sim\mu_p^{n-1}$, ${\bf y}\sim\mu_{\rho}^{n-1}$,
so that $({\bf x}\land {\bf y},0)$, $({\bf x}\land {\bf y},1)$ is an edge for $f$ in direction $i$; then with probability
at least $\tau$ we have $f({\bf x}\land {\bf y},0) = 1$, $f({\bf x}\land {\bf y},1) = 0$. We show that this must lead to violation
of the AND-homomorphism condition. Considering the pairs of inputs $({\bf x},0), ({\bf y},1)$
and $({\bf x},1),({\bf y},0)$ we conclude that unless the AND-homomorphism condition breaks in one of
them, we have that $g({\bf x},1) = 1$, $h({\bf y},1) = 1$, but then the AND-homomorphism condition fails for the
pair of inputs $({\bf x},1),({\bf y},1)$. Therefore there is one of these three pairs that fails the AND-homomorphism condition
with probability at least $\eta_1/3$, and in any case we get that it fails with probability at least $\min(p,1-p)^2\eta_1/3 > \eta$,
and contradiction.

Therefore, we may apply Theorem~\ref{thm:mono_and} to conclude that there is $T_1\subseteq[n]$ such that $g$ is $\eps^2\zeta^3$-close
to ${\sf AND}_{T_1}$. Analogously, the same argument shows there is $T_2\subseteq[n]$ such that $h$ is $\eps^2\zeta^3$-close to ${\sf AND}_{T_2}$,
and to finish the proof we argue that $T_1 = T_2$. Since $g,h$ have averages at least $10\eps$ and are $\eps^2\zeta^3$-close to ${\sf AND}_{T_1},{\sf AND}_{T_2}$
respectively, the latter functions have expectation at least $9\eps$.

Assume towards contradiction that $T_1\neq T_2$, say there is $i\in T_1\setminus T_2$.
Sample ${\bf x}\sim\mu_p$ and ${\bf y}\sim\mu_{\rho}$ conditioned on ${\bf y}_{T_2} = \vec{1}$, ${\bf y}_i = 0$,
${\bf x}_{T_1\setminus\set{i}} = 1$, ${\bf x}_i = 1$
(note that by the above, the event we condition ${\bf y}$ on has probability $\geq 9\eps\zeta$, as we condition on ${\sf AND}_{T_2}$
being $1$ and on the value of an additional variable; similarly the event we condition ${\bf x}$ on has probability $\geq 9\eps\zeta$).
With probability
at least $1-\eta/(9\eps\zeta)^2\geq 1-\eps/3$ we get that $f({\bf x}\land {\bf y}) = g({\bf x})\land h({\bf y})$ as well as
$f({\bf x}\land {\bf y}) = g({\bf x}\oplus e_i)\land h({\bf y})$ (noting that $({\bf x}\oplus e_i)\land {\bf y} = {\bf x}\land {\bf y}$ as ${\bf y}_i = 0$).
Also, with probability at least $1-3\frac{\eps^2\zeta^3/\zeta}{(9\eps\zeta)^2} = 1-\eps/3$, we get that $h({\bf y}) = {\sf AND}_{T_1}({\bf y}) = 1$,
$g(x) = {\sf AND}_{T_2}({\bf x}) = 1$ and $g({\bf x}\oplus e_i) = {\sf AND}_{T_2}({\bf x}\oplus e_i) = 0$ (using $\mu_p(x \oplus e_i) \leq \mu_p(x)/\zeta$). Therefore with probability
at least $1-2\eps/3 > 0$ we get that
\[
f({\bf x}\land {\bf y}) = g({\bf x})\land h({\bf y}) = 1\neq 0 = g({\bf x}\oplus e_i)\land h({\bf y}) = f({\bf x}\land {\bf y}),
\]
and contradiction. Therefore $T_1 = T_2$. It now follows immediately that $f$ is $2\eps^2\zeta^3+\eta\leq \eps$-close to
${\sf AND}_{T_1}$, and we are done.\qed

\section{Structural results for the one-sided error version}\label{sec:one_sided}
In this section we prove Theorem~\ref{thm:one_sided_error}. To do so,
we show that if $f,g$ are approximate solutions with one-sided error,
then $g$ has a small Fourier tail, in which case we can apply Bourgain's
Theorem~\ref{thm:Bourgain} to conclude that it is junta. Finally, we show
that since the negative influences of $g$ are all small (individually),
$g$ must actually be close to a monotone junta.

Most of the effort in the proof is devoted into showing that $g$
has a small Fourier tail. In the approximate solutions case, there
is a quick Fourier-analytic proof of this fact that we have already
used, namely Lemma~\ref{lem:vanishing_tail}. In the one-sided error
case, however, we are not aware of any such quick Fourier-analytic
proof. Instead, we us combinatorial/probabilistic arguments
similar to the ones we used in Section~\ref{sec:special}.

\subsection{A tail bound for one-sided error solutions}
It will be useful for us to consider a more combinatorial notion closely related to the Fourier tail,
namely the notion of noise sensitivity, and prove that if $f,g$ are approximate solutions with
one-sided error, then $g$ is very stable with respect to small enough noise rate.

Let $p,\nu\in(0,1)$.
The $p$-biased, $(1-\nu)$-correlated distribution over $(x,y)\in \power{n}\times\power{n}$
is defined as follows: sample $x\sim\mu_p$, and for each $i\in[n]$ independently set $y_i = x_i$
with probability $1-\nu$, and otherwise sample $y_i$ to be an independent $p$-biased bit.
\begin{definition}
  Let $g\colon(\power{n},\mu_p)\to\power{}$, and let $\nu\in (0,1)$.
  The noise sensitivity of $g$ at $\nu$ is ${\sf NS}_{\nu}(g) = \Prob{\text{$({\bf x},{\bf y})$ is $(1-\nu)$ correlated}}{g({\bf x}) \neq g({\bf y})}$.
\end{definition}

The following lemma is the main result of this section.
\begin{lemma}\label{lem:one_sided_NS}
  For every $\zeta>0$, there is $\nu_0>0$ such that the following holds
  for any $\rho, p\in[\zeta,1-\zeta]$ and $\lambda\in[\zeta,1]$. For any $\nu\in (0,\nu_0)$
  there is $\eta>0$ such that if $f\colon(\power{n},\mu_{\rho p})\to[0,1]$,
  $g\colon(\power{n},\mu_p)\to\power{}$ are one-sided error solutions with $\lambda$ and error $\eta$, then ${\sf NS}_{\nu}[g]\leq \nu^{3/4}$.
\end{lemma}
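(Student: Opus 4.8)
The plan is to show that if $f,g$ are one-sided error solutions, then $g$ cannot be sensitive to very small noise, and the natural way to do this is to reuse the ``coupled downwards random walk'' technique from Section~\ref{sec:special} (as in Claims~\ref{claim:low_inf_neg} and~\ref{claim:low_inf_gen}) rather than any Fourier-analytic argument, since the one-sided condition is too weak to control $L_2$ norms directly. First I would fix a small noise rate $\nu$ and suppose for contradiction that $\mathrm{NS}_\nu[g] > \nu^{3/4}$, i.e.\ a $(1-\nu)$-correlated pair $({\bf x},{\bf y})$ differs on $g$ with probability more than $\nu^{3/4}$. Since the pair $({\bf x},{\bf y})$ differs in each coordinate independently with probability about $\nu\cdot2p(1-p) = \Theta_\zeta(\nu)$, a typical disagreeing pair differs in only $O(\nu n)$ coordinates, but more importantly the ``reason'' for disagreement is concentrated: one can extract a single coordinate $i$ (or a bounded set of coordinates) such that flipping it accounts for a $\Omega(\nu^{3/4}/(\nu n)) = \Omega(n^{-1}\nu^{-1/4})$-fraction of disagreements — this needs care, and I would instead use the cleaner route of passing through the Fourier weight: $\mathrm{NS}_\nu[g]$ large forces $W_{\geq k}[g] \gtrsim \nu^{3/4} - O(k\nu)$ to be non-negligible for $k$ up to about $\nu^{-1/4}$, hence there is some level $k \leq \nu^{-1/4}$ carrying weight $\gtrsim \nu$, and then by an averaging/restriction argument there is a set $M$ of bounded size $m$ with $I_M[g]$ bounded below.

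Given such an influential set, I would run the coupled-walk contradiction exactly as in Claim~\ref{claim:low_inf_gen}: restricting outside $M$, Huang's sensitivity theorem (Theorem~\ref{thm:sensitivity}) gives, on a $\Omega(\cdot)$-fraction of restrictions, a point with $\sqrt m$ sensitive coordinates forming a set $A$; then sample ${\bf x}\sim\mu_p$, form the neighbours ${\bf x}'(j) = {\bf x}\oplus e_j$ for $j\in A$, and couple a downwards walk ${\bf y} = {\bf x}\wedge{\bf z}$, ${\bf y}'(j) = {\bf x}'(j)\wedge{\bf z}$. The two one-sided conditions now bite: on the ``$g=1$'' side the eigenvalue-type lower bound gives $\Tdown{p}{\rho p}f \gtrsim \lambda$ typically (condition (2) of the one-sided definition), while on each of the $\sqrt m$ ``$g=0$'' neighbours condition (1) forces $\Tdown{p}{\rho p}f$ to be tiny on average; but the coupled walks from the $g=0$ neighbours collide with the walk from the $g=1$ point unless ${\bf z}_j = 1$ for all $j\in A$, an event of probability $2^{-\sqrt m}$, which for $m = m(\zeta)$ chosen large can be made $< \lambda/2$, yielding a contradiction once $\eta$ is small enough. (The ``case (b)'' subtlety from Claim~\ref{claim:low_inf_gen}, where the sensitive point has $g$-value $0$ rather than $1$, is handled the same way using that negative influences of $g$ are small — this itself follows from a one-sided analogue of Claim~\ref{claim:low_inf_neg}, which I would prove first as a lemma.)

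The main obstacle I anticipate is that the one-sided condition only controls $\Tdown{p}{\rho p}f$ in one direction on each side — it says nothing quantitative when $g=1$ about $\Tdown{p}{\rho p}f$ being \emph{close} to $\lambda$, only that it is $\geq \lambda$ — so all the arguments that previously used $|\Tdown{p}{\rho p}f - \lambda g|$ being small must be rewritten to use only the one-directional bounds, and in particular the step deducing $I_i^-[g]$ small (the analogue of Claim~\ref{claim:low_inf_neg}) needs the observation that a negatively influential edge $g({\bf x})=1\to g({\bf x}')=0$ with ${\bf x}'\geq{\bf x}$ forces, via the coupling with ${\bf z}_i=0$ so that the two walks coincide, that $\Tdown{p}{\rho p}f$ is simultaneously $\gtrsim\lambda$ (from ${\bf x}$) and $\lesssim 0$ (from ${\bf x}'$), which is exactly the kind of contradiction available. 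A secondary technical point is getting the quantitative relationship $\mathrm{NS}_\nu[g]\leq\nu^{3/4}$ with the right exponent: one has $\mathrm{NS}_\nu[g] \leq \sum_k (1-(1-\Theta_\zeta(\nu))^k)W_{=k}[g] \lesssim \nu\sum_k k\,W_{=k}[g]$ restricted to low levels plus the tail, so making $W_{\geq \nu^{-1/4}}[g]$ genuinely small (not just ``some level is non-negligible'') requires iterating the contradiction to kill \emph{every} level up to $\nu^{-1/4}$, i.e.\ applying the coupled-walk argument with $\tau$ and $m$ depending only on $\zeta$ but $k$ ranging up to $\nu^{-1/4}$; the bookkeeping here, choosing $\nu_0$ and then $\eta$ in the right order, is where most of the care goes, but it is routine given the machinery of Section~\ref{sec:special}.
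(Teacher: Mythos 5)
Your plan is in the right spirit---the paper also replaces the Fourier-analytic tail bound (Lemma~\ref{lem:vanishing_tail}) with a combinatorial coupled-walk argument, because, as you observe, the one-sided condition does not control $\norm{\Tdown{p}{\rho p}f - \lambda g}_1$. But there are two concrete gaps in your route, and both are essential.

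First, the reduction from ``$\mathrm{NS}_\nu[g]$ large'' to ``some bounded-size set $M$ has $I_M[g]$ bounded below'' does not go through. If $W_{=k}[g]\geq\delta$ for some $k$ that can be as large as $\nu^{-1/4}$, the naive averaging gives $\max_{|M|=m}I_M[g]\geq\binom{k}{m}W_{=k}[g]/\binom{n}{m}$, which vanishes as $n\to\infty$. In the two-sided proof this step is carried by the \emph{combination} of an exponential tail bound (Lemma~\ref{lem:vanishing_tail}) and Lemma~\ref{lem:Bourgain_gen}; the tail bound is exactly the ingredient you cannot re-derive one-sidedly. Without an upper bound on $W_{\geq K}[g]$ for some $K$ depending on $\nu$ only, the Fourier bookkeeping $\mathrm{NS}_\nu[g]\lesssim\nu\sum_{k<K}kW_{=k}[g]+W_{\geq K}[g]$ cannot be closed, and ``iterating to kill every level'' is not supported by the set-influence estimate, which pins down only sets of a fixed size $m(\zeta)$.

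Second, and more fatally, case (b) of the Claim~\ref{claim:low_inf_gen} coupled-walk argument genuinely breaks under the one-sided hypotheses. In that case one derives that $\Tdown{p}{\rho p}f(1,1,\bm{\beta})\gtrsim\tfrac{9}{8}\lambda$ on a noticeable set of $\bm\beta$ with $g(1,1,\bm\beta)=1$, and the contradiction in the two-sided proof comes from the \emph{upper} bound $\Tdown{p}{\rho p}f\leq\lambda g+\eta$ at a $g=1$ point. The one-sided condition gives only $\Tdown{p}{\rho p}f\geq\lambda$ when $g=1$ (and smallness when $g=0$); there is simply no upper bound available at $g=1$ points, so the inequality $\Tdown{p}{\rho p}f(1,1,\bm\beta)\geq\tfrac{9}{8}\lambda$ is not a contradiction at all. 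You flag case (b) as ``handled the same way using small negative influences,'' but the small-negative-influence fact is used only to locate the sub-event; the final clash is a two-sided estimate.

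The paper's actual proof of Lemma~\ref{lem:one_sided_NS} sidesteps both problems at once by never passing through set-influences and never touching a $g=1$ point with an upper bound. It introduces a joint distribution $\mathcal{D}_\nu$ over quadruples $(y,m,x,z)$ in which $(x,z)$ is the $(1-\nu)$-correlated pair and $y$ is a single downwards point lying below both $x$ and $z$, with the marginal of $(y,x)$ and of $(y,z)$ each equal to $\mathbb{D}(\rho p,p)$. One then bounds the quantity $\E[1_{\{g(x)=1,g(z)=0\}}\cdot f(y)]$ from above using one-sided condition (1) applied to $z$ (where $g=0$), and from below using one-sided condition (2) applied to $x$ (where $g=1$) together with the technical Proposition~\ref{prop:1}, which says the conditional distribution of $y$ given the event and $x=a$ is pointwise comparable (up to a factor $c(\nu)$) to the unconditional walk from $a$. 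Comparing the two bounds bounds $\Pr[g(x)\neq g(z)]$, i.e.\ $\mathrm{NS}_\nu[g]$, directly. This is where the genuine new idea lives, and it is not a rewriting of Claim~\ref{claim:low_inf_gen}; that argument is structurally tied to the two-sided hypothesis.
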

\begin{remark}
  The bound $\nu^{3/4}$ in Lemma~\ref{lem:one_sided_NS} can be improved to any
  bound that is $o(\nu)$.
\end{remark}
To make notations easier, we prove Lemma~\ref{lem:one_sided_NS} in the case that $p = \rho = 1/2$;
the proof carries over to the general case with minor adjustments.

We next set up some tools that we need for the proof of Lemma~\ref{lem:one_sided_NS}.
For a parameter $\nu\in(0,1)$ we define a distribution $\mathcal{D}_\nu$
over quadruples $(y,m,x,z)$, that will allow us to couple a $(1-\nu)$-correlated
pair of inputs $x,z$ with points $y,m$ that are below them and marginally correspond
to the downwards-walk distribution from each $x,z$ (for two different walk lengths).
\begin{definition}[The Distribution $\mathcal{D}_\nu$]\label{def:dist_coupled}
  Set $\theta = \frac{\nu}{2+\nu}$.
  The distribution $\mathcal{D}_\nu$ over $y,m,x,z$ in $\power{n}$ is defined in the following manner:
\begin{itemize}
    \item Pick $({\bf y},{\bf m})\sim \mathbb{D}(1/4,1/2-\nu/4)$.
    \item For each $i=1,\ldots,n$ independently do the following:
    \begin{itemize}
      \item If ${\bf m}_i = 1$, set ${\bf x}_i = {\bf z}_i = 1$.
      \item If ${\bf m}_i = 0$, set ${\bf x}_i = 1, {\bf z}_i = 0$ or ${\bf x}_i = 0, {\bf z}_i = 1$, each with probability $\theta$,
    and otherwise ${\bf x}_i = {\bf z}_i = 0$.
    \end{itemize}
  \end{itemize}
\end{definition}
\noindent When $\nu$ is clear from context, we omit the subscript and just use $\mathcal{D}$.

Inspecting the marginal distributions, we see that ${\bf y}\sim \mu_{1/4}$, ${\bf m}\sim \mu_{1/2-\nu/4}$, and ${\bf x},{\bf z}\sim\mu_{1/2}$, since $\Pr[{\bf x}_i=1] = (1/2-\nu/4) + (1/2+\nu/4) \cdot \theta = 1/2$.
Also, for each $i\in [n]$, the probability that ${\bf x}_i={\bf z}_i$ is equal to
$(1/2 - \nu/4) + (1/2+\nu/4)\cdot (1-2\theta)=1 - \nu/2$, and these events are independent, thus the distribution of
$({\bf x},{\bf z})$ is $(1-\nu)$-correlated over $\mu_{1/2}$.
Finally, notice that ${\bf y}_i \leq {\bf m}_i \leq {\bf x}_i,{\bf z}_i$, implying that the marginal distribution of ${\bf y},{\bf x}$ or ${\bf y},{\bf z}$ is $\mathbb{D}(1/4,1/2)$.

Let $E$ be the event that $g({\bf x}) = 1, g({\bf z}) = 0$ when $({\bf y},{\bf m},{\bf x},{\bf z})\sim\mathcal{D}_{\nu}$,
consider the distribution of ${\bf x}$ conditioned on $E$, given by
$p_a = \cProb{({\bf y},{\bf m},{\bf x},{\bf z})\sim \mathcal{D}_\nu}{E}{{\bf x}=a}$, and let $A = \sett{a}{p_a\geq 1/2^{n+1}}$.

It will be important for us to understand the distribution of $y$ when we sample $({\bf y},{\bf m},{\bf x},{\bf z})\sim \mathcal{D}_\nu$
conditioned on $E$. More precisely, for any $a\in A$, we consider the following two distributions:
  \begin{enumerate}
    \item $\mathcal{D}^a_1$: sample $({\bf y},{\bf m},{\bf x},{\bf z})\sim \mathcal{D}_{\nu}$ conditioned on ${\bf x}=a$ and $E$, and output ${\bf y}$.
    \item $\mathcal{D}^a_2$: sample $({\bf y},{\bf m},{\bf x},{\bf z})\sim \mathcal{D}_{\nu}$ conditioned on ${\bf x}=a$, and output ${\bf y}$.
  \end{enumerate}
  We remark that clearly $\mathcal{D}^a_2$ is the distribution of $u$ in $({\bf u},{\bf v})\sim \mathbb{D}(1/4,1/2)$ conditioned on ${\bf v}=a$,
  that is, uniform over ${\bf y}\in \power{{\sf supp}(a)}$. We wrote it in this form to be suggestive of the following proposition,
  that asserts that the distributions $\mathcal{D}^a_1$ and $\mathcal{D}^a_2$ are somewhat close.
  \begin{proposition}\label{prop:1}
    For each $\xi>0$ there are $\nu_0,c>0$
    such that the following holds. If $\Prob{}{E}\geq \half\nu_0^{3/4}$, then for each $a\in A$
    we have $\Prob{{\bf y}\sim \mathcal{D}^a_2}{\mathcal{D}^a_1({\bf y}) \geq c \mathcal{D}^a_2({\bf y})}\geq 1-\xi$.
  \end{proposition}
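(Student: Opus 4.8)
The plan is to first rewrite the density $\mathcal{D}_1^a/\mathcal{D}_2^a$ explicitly and then show it is tightly concentrated around its mean. Fix $a \in A$. Since the event $E$ forces $g(\mathbf{x}) = 1$, we can only have $\cProb{}{\mathbf{x}=a}{E} > 0$ when $g(a) = 1$, so membership of $a$ in $A$ already guarantees $g(a)=1$. Conditioning everything on $\mathbf{x}=a$ and unwinding the coordinatewise recipe of Definition~\ref{def:dist_coupled}, I would first check that for each $i \in {\sf supp}(a)$ one has $\Prob{}{\mathbf{m}_i = 1 \mid \mathbf{x}_i = 1} = \tfrac{1/2-\nu/4}{1/2} = 1-\tfrac\nu2$, that $\mathbf{y}$ keeps each coordinate of $\mathbf{m}$ independently with probability $\tfrac{1}{2-\nu}$, and that all coordinates outside ${\sf supp}(a)$ vanish in $\mathbf{m},\mathbf{y},\mathbf{z}$; consequently $\mathbf{m} = \mathbf{z} \subseteq {\sf supp}(a)$, the marginal of $\mathbf{y}$ is the uniform distribution $\mathcal{D}_2^a$ on $\power{{\sf supp}(a)}$, the pair $(\mathbf{y},\mathbf{m})$ has law $\mathbb{D}(\tfrac12,1-\tfrac\nu2)$ on $\power{{\sf supp}(a)}$, and $E$ becomes $\{g(\mathbf{m})=0\}$. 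Set $q := \cProb{}{\mathbf{x}=a}{E}$; from $a \in A$, i.e.\ $p_a = \cProb{}{E}{\mathbf{x}=a} \ge 2^{-(n+1)}$, together with $\Prob{}{E} \ge \tfrac12\nu_0^{3/4}$ and $\Prob{}{\mathbf{x}=a}=2^{-n}$, Bayes' rule gives $q \ge \tfrac14\nu_0^{3/4}$. Finally, letting $\varphi$ be the function on $\power{{\sf supp}(a)}$ given by $\varphi(\mathbf{y}) := \cProb{}{\mathbf{x}=a,\,\mathbf{y}}{g(\mathbf{m})=0}$ (the posterior probability of $E$ after revealing $\mathbf{y}$), Bayes' rule yields $\mathcal{D}_1^a(\mathbf{y}) = \tfrac{\varphi(\mathbf{y})}{q}\,\mathcal{D}_2^a(\mathbf{y})$, so the proposition is equivalent to $\Prob{\mathbf{y}\sim\mathcal{D}_2^a}{\varphi(\mathbf{y}) \ge cq} \ge 1-\xi$, while the law of total probability gives $\Expect{\mathbf{y}\sim\mu_{1/2}}{\varphi} = q$.

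The crucial point, which I would establish next, is that $\varphi$ is a \emph{lightly-noised Boolean function}. Put $\psi := 1_{\{g=0\}}$ on $\power{{\sf supp}(a)}$, so $\psi$ is $\power{}$-valued with $\Expect{\mathbf{m}\sim\mu_{1-\nu/2}}{\psi}=q$ and $\varphi(\mathbf{y}) = \cExpectop{}{\mathbf{x}=a,\,\mathbf{y}}{\psi(\mathbf{m})}$. Complementing all coordinates turns the downward pair $(\mathbf{y},\mathbf{m})\sim\mathbb{D}(\tfrac12,1-\tfrac\nu2)$ into the downward pair $(\overline{\mathbf{m}},\overline{\mathbf{y}})\sim\mathbb{D}(\tfrac\nu2,\tfrac12)$, where $\overline{(\cdot)}$ denotes coordinatewise complementation; writing $\overline{\psi}(\overline{m}):=\psi(m)$, this gives the identity $\varphi(\mathbf{y}) = \bigl(\Tdown{1/2}{\nu/2}\,\overline{\psi}\bigr)(\overline{\mathbf{y}})$. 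Thus, up to relabelling coordinates, $\varphi$ is the image of the Boolean function $\overline{\psi}$ under the downward noise operator $\Tdown{1/2}{\nu/2}$, whose noise rate is $\rho=\nu$. By Lemma~\ref{lem:T-ev} this operator multiplies the $\mu_{\nu/2}$-Fourier coefficient $\widehat{\overline{\psi}}_{\nu/2}(S)$ by $\bigl(\tfrac{(1/2)\nu}{1-\nu/2}\bigr)^{|S|/2} = \bigl(\tfrac{\nu}{2-\nu}\bigr)^{|S|/2}$, producing the corresponding $\mu_{1/2}$-Fourier coefficient of $\varphi$. In particular $\Expect{\mathbf{y}}{\varphi} = \widehat{\overline{\psi}}_{\nu/2}(\emptyset) = q$ as before, and, using that $\overline{\psi}$ is Boolean,
\[
 \operatorname{Var}(\varphi) \;=\; \sum_{S\neq\emptyset}\Bigl(\tfrac{\nu}{2-\nu}\Bigr)^{|S|}\widehat{\overline{\psi}}_{\nu/2}(S)^2 \;\le\; \tfrac{\nu}{2-\nu}\operatorname{Var}_{\mu_{\nu/2}}(\overline{\psi}) \;=\; \tfrac{\nu}{2-\nu}\,q(1-q) \;\le\; \nu q.
\]

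With this in hand the proposition follows from Chebyshev's inequality: since $\Expect{\mathbf{y}}{\varphi}=q$ and $\operatorname{Var}(\varphi)\le\nu q$,
\[
 \Prob{\mathbf{y}\sim\mu_{1/2}}{\varphi(\mathbf{y}) < \tfrac q2} \;\le\; \Prob{\mathbf{y}}{|\varphi(\mathbf{y})-q|\ge\tfrac q2} \;\le\; \frac{\operatorname{Var}(\varphi)}{(q/2)^2} \;\le\; \frac{4\nu}{q} \;\le\; \frac{16\nu}{\nu_0^{3/4}} \;<\; 16\,\nu_0^{1/4},
\]
using $q\ge\tfrac14\nu_0^{3/4}$ and $\nu<\nu_0$. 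Taking $c=\tfrac12$ and choosing $\nu_0=\nu_0(\xi)$ small enough that $16\nu_0^{1/4}\le\xi$ finishes the argument (and pins down the $\nu_0$ appearing in the statement).

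I do not expect a hard obstacle here once the correct viewpoint is found; the real content is the realization that the Radon--Nikodym density $\mathcal{D}_1^a/\mathcal{D}_2^a$ is, up to the scalar $q$, the image of a $\{0,1\}$-valued function under a noise operator of vanishingly small rate $\nu$, and hence has variance only $O(\nu q)$ — which is $o(q^2)$ precisely because the hypothesis $\Prob{}{E}\gtrsim\nu_0^{3/4}$ (together with $\nu<\nu_0$) forces $q\gg\nu$. The two places that need genuine care are (i) the Bayes computation of the conditional law of $(\mathbf{y},\mathbf{m})$ given $\mathbf{x}=a$, keeping scrupulous track of which biased measure each of $\psi$, $\overline{\psi}$ and $\varphi$ is analysed in; and (ii) the complementation identity exhibiting $\varphi$ as a genuine downward-noise image, so that Lemma~\ref{lem:T-ev} applies verbatim. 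One can instead avoid complementation by noting $\varphi=(\Tdown{1-\nu/2}{1/2})^{*}\psi$ and that $\Tdown{1-\nu/2}{1/2}$ and its adjoint share singular values, but this requires the same bookkeeping.
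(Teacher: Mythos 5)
Your route --- reading the density ratio $\mathcal{D}_1^a/\mathcal{D}_2^a$ as a lightly-noised image of a function on $\power{{\sf supp}(a)}$ and then applying Chebyshev --- is a genuinely different proof from the paper's, which instead proves an anti-concentration statement for $\Tup{\nu}{1/2}$ applied to an averaged family of restrictions $g_{S,s}$ of $g$ (Lemma~\ref{lem:2}) and pieces these together through a pointwise bound of the form $\mathcal{D}_1^a(y)\ge h(a\xor y)\,\mathcal{D}_2^a(y)$. The spectral viewpoint is cleaner and, once correct, gives a constant $c=1/2$ independent of $\nu$, whereas the paper obtains only $c=\nu^{7/4}/16$.

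There is, however, a concrete error in your conditional-law computation that must be repaired. It is not true that ``all coordinates outside ${\sf supp}(a)$ vanish in $\mathbf{z}$'' once you condition on $\mathbf{x}=a$; in particular $\mathbf{m}\ne\mathbf{z}$, and $E$ does not become $\{g(\mathbf{m})=0\}$. Reading Definition~\ref{def:dist_coupled} coordinatewise: when $a_i=0$ the branch $\mathbf{m}_i=0$ is forced, and inside that branch
\[
\Prob{}{\mathbf{z}_i=1\mid\mathbf{x}_i=0}=\frac{\theta}{1-\theta}=\frac{\nu}{2}>0.
\]
Thus, conditional on $\mathbf{x}=a$, the correct decomposition is $\mathbf{z}=\mathbf{m}\lor\mathbf{s}$ with $\mathbf{s}\sim\mu_{\nu/2}^{[n]\setminus{\sf supp}(a)}$ independent of $(\mathbf{y},\mathbf{m})$, and $E$ becomes $\{g(\mathbf{m}\lor\mathbf{s})=0\}$. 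Consequently your $\varphi$ is the noised image of the $[0,1]$-valued function $\psi'(m):=\Expect{\mathbf{s}}{1-g(m\lor\mathbf{s})}$ on $\power{{\sf supp}(a)}$, not of the Boolean $\psi=1_{\{g=0\}}$, and the step ``$\operatorname{Var}_{\mu_{\nu/2}}(\overline\psi)=q(1-q)$, using that $\overline\psi$ is Boolean'' is not available as written. The repair is short: $\psi'$ still takes values in $[0,1]$ and has $\mu_{1-\nu/2}$-mean $q$, so $\operatorname{Var}_{\mu_{\nu/2}}(\overline{\psi'})\le\Expect{}{\overline{\psi'}}-q^2=q(1-q)$ holds just as well, and every subsequent inequality in your Chebyshev calculation survives verbatim. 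This independent $\mathbf{s}$-part of $\mathbf{z}$ is exactly what the paper's auxiliary functions $g_{S,s}$ are introduced to handle; your argument simply needs to absorb it by averaging before invoking Lemma~\ref{lem:T-ev}.
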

  Before proving Proposition \ref{prop:1}, let us
  show that it implies Lemma~\ref{lem:one_sided_NS}.

\begin{proof}[Proof of Lemma~\ref{lem:one_sided_NS}]
  Set $\xi = \lambda/2$, and let $\nu_0,c>0$ be from Proposition \ref{prop:1}.
  Fix $\nu\in(0,\nu_0)$, and prove the statement for $\eta = c\zeta \nu^{3/2}/ 100$.

  Assume towards contradiction that ${\sf NS}_\nu[g]\geq \nu^{3/4}$.
  Throughout the proof we will take $({\bf y},{\bf m},{\bf x},{\bf z})\sim \mathcal{D}_{\nu}$;
  thus, the event $E = \sett{(x,z)}{g(x)=1, g(z) = 0}$ has probability
  $\geq \half \nu^{3/4}$.

  We consider the quantity $\Expect{{\bf y},{\bf x},{\bf z}}{1_E({\bf x},{\bf z})\cdot f({\bf y})}$, and show
  a lower bound as well as an upper bound on it, which combined together
  lead to a contradiction. For the upper bound we note that
  \begin{equation}\label{eq:up_bd}
  \Expect{{\bf y},{\bf x},{\bf z}}{1_E({\bf x},{\bf z})\cdot f({\bf y})}
  \leq \Expect{{\bf y},{\bf z}}{(1-g({\bf z})) f({\bf y})}
  = \Expect{{\bf z}}{(1-g({\bf z})) \Tdown{1/2}{1/4} f({\bf z})}
  \leq \eta,
  \end{equation}
  where the last inequality is by the condition that $f,g$ are one-sided error solutions with error $\eta$.

  The lower bound requires more effort. Recall that the distribution $p_a$ is defined by $p_a = \cProb{}{E}{{\bf x}=a}$,
  and the set $A$ consists of $a$'s such that $p_a\geq \frac{1}{2^{n+1}}$.
  We need the following easy proposition.
  \begin{proposition}
    $\mu_{1/2}(A)\geq \frac{1}{2}\Prob{}{E}$.
  \end{proposition}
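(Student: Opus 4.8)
The plan is a short averaging (``discard the negligible terms'') argument that uses only the fact, already recorded above, that the marginal of $\mathbf{x}$ under $\mathcal{D}_\nu$ is $\mu_{1/2}$. Since $\mathbf{x}\sim\mu_{1/2}$ we have $\Pr[\mathbf{x}=a]=2^{-n}$ for every $a\in\power{n}$, and hence $\mu_{1/2}(A)=|A|\cdot 2^{-n}$. First I would rewrite $p_a$ via Bayes' rule:
\[
p_a \;=\; \Pr_{\mathcal{D}_\nu}[\mathbf{x}=a \mid E] \;=\; \frac{\Pr_{\mathcal{D}_\nu}[E \mid \mathbf{x}=a]}{\Pr[E]}\cdot 2^{-n},
\]
so that $a\in A$ holds precisely when $\Pr[E \mid \mathbf{x}=a] \geq \frac12\Pr[E]$.

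Next I would expand $\Pr[E]$ by the law of total probability and split the sum according to whether $\mathbf{x}\in A$:
\[
\Pr[E] \;=\; \sum_{a} 2^{-n}\,\Pr[E \mid \mathbf{x}=a] \;=\; \sum_{a\in A} 2^{-n}\,\Pr[E \mid \mathbf{x}=a] \;+\; \sum_{a\notin A} 2^{-n}\,\Pr[E \mid \mathbf{x}=a].
\]
For $a\notin A$ we have $\Pr[E \mid \mathbf{x}=a] < \frac12\Pr[E]$, and $\sum_{a\notin A} 2^{-n}\leq 1$, so the second sum is at most $\frac12\Pr[E]$; therefore the first sum is at least $\frac12\Pr[E]$. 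Bounding each summand in the first sum by $2^{-n}$ (since $\Pr[E \mid \mathbf{x}=a]\leq 1$) gives $|A|\cdot 2^{-n}\geq \frac12\Pr[E]$, i.e.\ $\mu_{1/2}(A)\geq \frac12\Pr[E]$, which is exactly the claim.

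I do not expect a genuine obstacle here: the only point that needs (minimal) care is the use of the uniform marginal of $\mathbf{x}$, which is what allows one to pass freely between the posterior $p_a$, the likelihood $\Pr[E\mid\mathbf{x}=a]$, and the measure $\mu_{1/2}(A)$; everything else is the standard observation that points of negligible posterior mass together account for at most half of $\Pr[E]$.
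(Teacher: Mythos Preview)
Your proof is correct and is essentially the same argument as the paper's: both split the total mass over $A$ and $A^c$, bound the $A^c$ part by $\tfrac12$ via the definition of $A$, and bound the $A$ part using the trivial upper bound $p_a\le 2^{-n}/\Pr[E]$ (equivalently $\Pr[E\mid\mathbf{x}=a]\le 1$). The only cosmetic difference is that the paper sums the posteriors $p_a$ (totaling $1$) while you sum the scaled likelihoods $2^{-n}\Pr[E\mid\mathbf{x}=a]$ (totaling $\Pr[E]$).
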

  \begin{proof}
    Note that for every $a$, $p_a \leq \frac{\Prob{{\bf x}}{{\bf x}=a}}{\Prob{}{E}} = \frac{2^{-n}}{\Prob{}{E}}$.
    Thus,
    \[
    1 = \sum\limits_{a}{p_a}
    =\sum\limits_{a\in A}{p_a} + \sum\limits_{a\not\in A}{p_a}
    \leq \card{A}2^{-n}\frac{1}{\Prob{}{E}} + \half
    =\frac{\mu_{1/2}(A)}{\Prob{}{E}} + \half,
    \]
    and the result follows by rearrangement.
  \end{proof}

  We are now ready for the lower bound. We have
  \begin{equation}\label{eq4:lvl0}
    \Expect{{\bf y},{\bf x},{\bf z}}{1_E({\bf x},{\bf z})\cdot f({\bf y})}
    = \Prob{{\bf x},{\bf z}}{E}\cdot \cExpect{{\bf y},{\bf x},{\bf z}}{E}{f({\bf y})},
  \end{equation}
  and we bound the conditional expectation. We have
  \begin{align}\label{eq9}
  \cExpect{{\bf y},{\bf x},{\bf z}}{E}{f({\bf y})}\notag
   = \sum\limits_{a,y}{p_a \mathcal{D}^a_1(y) f(y)}
  &\geq \frac{1}{2^{n+1}}\sum\limits_{a\in A, y}{\mathcal{D}^a_1(y) f(y)}\\
  &\geq \frac{c}{2^{n+1}}\sum\limits_{a\in A}\sum\limits_{y\in Y_a}{\mathcal{D}^a_2(y)f(y)},
  \end{align}
  where $Y_a$ is the set of $y$'s on which $\mathcal{D}^a_1(y)\geq c \mathcal{D}^a_2(y)$.
  By Proposition \ref{prop:1}, $Y_a$ contains all but  $\xi$ of the mass, so since $f\leq 1$
  for each $a\in A$ we get that $\sum\limits_{y\in Y_a}{\mathcal{D}^a_2(y) f(y)} \geq \sum\limits_{y}{\mathcal{D}^a_2(y) f(y)} - \xi$.
  Also, note that $\sum\limits_{y}{\mathcal{D}^a_2(y) f(y)} = \Tdown{1/2}{1/4} f(a) $, so plugging that into~\eqref{eq9} we get
  \begin{equation}\label{eq11}
    \eqref{eq9}\geq \frac{c}{2}\left(2^{-n}\sum\limits_{a\in A}\Tdown{1/2}{1/4} f(a) - \mu_{1/2}(A)\xi\right).
  \end{equation}
  Next, we lower-bound the sum on the right-hand side.
  Let $A'\subseteq A$ be the set of $a\in A$ such that $\Tdown{1/2}{1/4} f(a) > \lambda$.
  Since the probability over $a\sim\mu_{1/2}$ that $g(a) = 1$
  and $\Tdown{1/2}{1/4} f(a) \leq \lambda$ is at most $\eta$, we get that $\mu_{1/2}(A')\geq\mu_{1/2}(A) - \eta$,
  and therefore
  \[
    \frac{1}{2^n}\sum\limits_{a\in A}\Tdown{1/2}{1/4} f(a) \geq
    \frac{1}{2^n}\sum\limits_{a\in A'}\Tdown{1/2}{1/4} f(a)
    >\mu_{1/2}(A')\lambda
    \geq \mu_{1/2}(A)\lambda - \eta.
  \]
  Plugging that into \eqref{eq11},
  \[
  \eqref{eq9}\geq
  \frac{c}{2}(\mu_{1/2}(A)\lambda - \mu_{1/2}(A)\xi-\eta)\geq
  \frac{c}{2}\left(\frac{\lambda}{4}\Prob{}{E} - \eta\right),
  \]
  where we used the choice $\xi = \lambda/2$ and the fact that $\mu_{1/2}(A)\geq \half \Prob{}{E}$.
  By the choice of $\eta$, we have $\eta\leq \frac{\nu^{3/4}\lambda}{16}\leq \frac{\lambda}{8}\Prob{}{E}$,
  hence we get that $\eqref{eq9}\geq \frac{c\lambda }{16}\Prob{}{E}$.
  Plugging all the way back into~\eqref{eq4:lvl0}, we get
  \begin{equation}\label{eq:lower_bd}
   \Expect{{\bf y},{\bf x},{\bf z}}{1_E({\bf x},{\bf z})\cdot f({\bf y})}
  \geq \frac{c\lambda }{16}\Prob{}{E}^2
  \geq \frac{c\lambda \nu^{3/2}}{64}.
  \end{equation}
  Combining inequalities \eqref{eq:lower_bd} and \eqref{eq:up_bd}
  we get that $\eta\geq \frac{c\lambda \nu^{3/2}}{64}$, contradicting
  the choice of $\eta$.
  \end{proof}

\subsection{Proof of Proposition~\ref{prop:1}}
  In this section we prove Proposition \ref{prop:1}.
  We will need the following piece of notation: for a function $g\colon\power{n}\to\mathbb{R}$ and a probability measure $\mu$ on $\power{n}$,
  we denote $\mu(g) = \Expect{{\bf x}\sim\mu}{g({\bf x})}$.
  We begin with an auxiliary lemma that will be helpful.

  \begin{lemma}\label{lem:2}
    For every $\gamma,\xi>0$ there is $\nu>0$ such that the following holds.
    Let $g\colon (\power{n}, \mu_{\nu}) \to \power{}$ be a function such that $\mu_{\nu}(g) \geq \gamma\cdot \nu^{3/4}$.
    Then except with probability $\xi$, for ${\bf x}\sim \mu_{1/2}^n$ we have $\Tup{\nu}{1/2} g({\bf x})\geq \nu$.
  \end{lemma}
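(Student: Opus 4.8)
The plan is a standard second-moment (Paley--Zygmund) argument. Write $S = g^{-1}(1)$ and $\alpha = \mu_\nu(S) = \mu_\nu(g) \geq \gamma \nu^{3/4}$, and set $h = \Tup{\nu}{1/2} g$, a $[0,1]$-valued function on $(\power{n},\mu_{1/2})$; the goal is to show $\Prob{{\bf x}\sim\mu_{1/2}}{h({\bf x}) < \nu} \leq \xi$ once $\nu$ is small enough in terms of $\gamma$ and $\xi$. For the first moment, since the ${\bf y}$-marginal of $\mathbb{D}(\nu,1/2)$ is $\mu_\nu$, the tower property gives at once
\[
\Expect{{\bf x}\sim\mu_{1/2}}{h({\bf x})} = \Expect{({\bf y},{\bf v})\sim\mathbb{D}(\nu,1/2)}{g({\bf y})} = \mu_\nu(g) = \alpha .
\]

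The crux is the second moment. Using $\Tup{\nu}{1/2}g(x) = \Expect{{\bf z}\sim\mu_{2\nu}}{g(x\land {\bf z})}$, for ${\bf x}\sim\mu_{1/2}$ we have $h({\bf x})^2 = \Expect{{\bf z},{\bf z}'\sim\mu_{2\nu}}{g({\bf x}\land {\bf z})\,g({\bf x}\land {\bf z}')}$, and hence $\Expect{{\bf x}}{h({\bf x})^2} = \Expect{}{g({\bf y})g({\bf y}')}$ where ${\bf y} = {\bf x}\land {\bf z}$ and ${\bf y}' = {\bf x}\land {\bf z}'$ are the endpoints of two independent downward walks from a common $\mu_{1/2}$-random source. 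I would then check, coordinate by coordinate, that $({\bf y},{\bf y}')$ is exactly a $\rho$-correlated pair in the $\mu_\nu$-biased cube (that is, ${\bf y}\sim\mu_\nu$, and ${\bf y}'$ agrees with ${\bf y}$ on each coordinate with probability $\rho$ and is resampled from $\mu_\nu$ otherwise), with $\rho = \nu/(1-\nu) \leq 2\nu$. Expanding $g$ in the $\mu_\nu$-biased Fourier basis and using that the correlated noise operator acts by $\widehat{g}_\nu(T)\mapsto \rho^{\card{T}}\widehat{g}_\nu(T)$, together with $\widehat{g}_\nu(\emptyset) = \alpha$ and $\norm{g}_2^2 = \alpha$ (as $g$ is Boolean), we get
\[
\Expect{{\bf x}\sim\mu_{1/2}}{h({\bf x})^2} = \sum_T \rho^{\card{T}}\widehat{g}_\nu(T)^2 \;\leq\; \alpha^2 + \rho\big(\norm{g}_2^2 - \alpha^2\big) \;\leq\; \alpha^2 + 2\nu\alpha \;\leq\; \Big(1 + \tfrac{2\nu^{1/4}}{\gamma}\Big)\alpha^2,
\]
where the last step uses $\alpha \geq \gamma\nu^{3/4}$.

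Finally I would invoke Paley--Zygmund with $\theta = \nu^{1/4}/\gamma$ (which is at most $1$ once $\nu\leq\gamma^4$):
\[
\Prob{{\bf x}\sim\mu_{1/2}}{h({\bf x}) > \theta\alpha} \;\geq\; (1-\theta)^2\,\frac{\Expect{}{h}^2}{\Expect{}{h^2}} \;\geq\; \frac{(1-\nu^{1/4}/\gamma)^2}{1 + 2\nu^{1/4}/\gamma},
\]
and since $\theta\alpha \geq (\nu^{1/4}/\gamma)(\gamma\nu^{3/4}) = \nu$, the event $\{h({\bf x}) > \theta\alpha\}$ is contained in $\{h({\bf x}) \geq \nu\}$; the right-hand side above exceeds $1-\xi$ for $\nu$ sufficiently small depending only on $\gamma$ and $\xi$, which finishes the proof. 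The only delicate point is the second-moment identity: one must correctly recognize two independent downward walks from a shared $\mu_{1/2}$ source as a genuine $\rho$-correlated pair in the $\mu_\nu$-biased cube, with the right $\rho = \nu/(1-\nu)$, so that Parseval applies. (Alternatively one can avoid naming $\rho$ and bound $\Expect{{\bf x}}{h^2} = \Expect{R}{\norm{g_R}_2^2}$ directly, where $R\subseteq[n]$ is the random set of coordinates the two walks share and $g_R$ is $g$ averaged over the coordinates outside $R$; this again reduces to $\sum_T \Prob{}{T\subseteq R}\widehat{g}_\nu(T)^2$.) There is no real obstacle beyond this bookkeeping; the exponent $3/4$ in the hypothesis is used only to ensure $\nu/\alpha = o(1)$, and any exponent strictly below $1$ would work equally well.
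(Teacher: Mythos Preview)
Your proof is correct and takes a genuinely different route from the paper's. The paper argues combinatorially: it picks $k = 1/(2\nu)$ independent points ${\bf y}(1),\ldots,{\bf y}(k) \sim \mu_\nu$, couples ${\bf x}\sim\mu_{1/2}$ to lie above their union, and uses the estimate $(1-\gamma\nu^{3/4})^k \leq \xi/2$ to ensure that with high probability some ${\bf y}(i)$ lands in $g^{-1}(1)$; this yields, for each $x$ in a $(1-\xi)$-fraction of $\power{n}$, that $\Tup{\nu}{1/2} g(x) \geq 1/(2k) = \nu$ by a union bound over the $k$ coordinates of the conditional expectation. Your argument is instead a clean second-moment (Paley--Zygmund) computation: you recognize that two independent downward walks from a common $\mu_{1/2}$ source form an exactly $\rho$-correlated pair in the $\mu_\nu$ cube with $\rho = \nu/(1-\nu)$, so that $\Expect{}{h^2}$ is the noise stability of $g$ and is bounded by $\alpha^2(1+O(\nu^{1/4}/\gamma))$.

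Both arguments are short; yours is more systematic (and, as you note, shows directly that the conclusion holds for all sufficiently small $\nu$ and for any exponent below $1$ in place of $3/4$), while the paper's coupling is more hands-on and avoids Fourier analysis entirely. Your identification of the correlation parameter $\rho=\nu/(1-\nu)$ is the only place requiring care, and it checks out coordinatewise: $\Pr[y_i=y_i'=1]=\tfrac12(2\nu)^2 = 2\nu^2 = \nu\rho + \nu^2(1-\rho)$ forces exactly this value, and since both marginals are $\mu_\nu$ this pins down the joint law.
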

  \begin{proof}
    Fix $\gamma,\xi$.
    Without loss of generality, we restrict ourselves to the case $\xi\leq \min(\gamma,1/10)$ (otherwise we set $\xi$ to be this
    minimum and prove a stronger statement).
    We prove the statement with $\nu = \xi^8$.

    Set $k=\frac{1}{2\nu}$, and consider the following distribution over ${\bf y(1)},\dots,{\bf y(k)},{\bf x}$:
    choose ${\bf y(1)},\dots,{\bf y(k)} \sim \mu_\nu^n$, and set ${\bf z} = {\bf y(1)}\lor {\bf y(2)}\lor \dots\lor {\bf y(k)}$.
    Pick ${\bf x}$ by taking ${\bf x}_i = 1$ if ${\bf z}_i=1$, else
    take ${\bf x}_i = 1$ with probability\footnote{This probability is positive since $(1-\nu)^k \geq 1-\nu\cdot k= 1/2$.}
    \[
      r = \frac{(1-\nu)^k-1/2}{(1-\nu)^k}.
    \]
    Hence the marginal distribution of ${\bf x}$ is $\mu_{1/2}$.

    Since ${\bf y(1)},\dots,{\bf y(k)} $ are chosen independently,
    \[
    \Prob{{\bf y(1)},\dots,{\bf y(k)},{\bf x}}{g({\bf y(i)}) = 0 ~\forall i=1,\ldots,k}\leq (1-\gamma\cdot \nu^{3/4})^k \leq e^{-\gamma / 2\nu^{1/4}} \leq e^{-1/2\xi}
     \leq \xi/2,
    \]
     using $\xi \leq 1/10$.
     Define
    \[
      A = \sett{a}{\cProb{{\bf y(1)},\dots,{\bf y(k)},{\bf x}}{{\bf x}=a}{g({\bf y(i)}) = 0 ~\forall i=1,\ldots,k}\leq \half},
    \]
    then by an averaging argument
    we get that $\mu_{1/2}(A) \geq 1-\xi$.
    We show that every $a\in A$ has a high $\Tup{\nu}{1/2} g(a)$ value. Indeed, we have
    that
    \begin{align*}
    \half
    &\leq \cProb{{\bf y(1)},\dots,{\bf y(k)},{\bf x}}{{\bf x}=a}{\exists i ~g({\bf y(i)})= 1}\\
    &\leq \cExpect{{\bf y(1)},\dots,{\bf y(k)},{\bf x}}{{\bf x}=a}{g({\bf y(1)}) + \dots + g({\bf y(k)})}\\
    &= k\cExpect{{\bf y(1)},\dots,{\bf y(k)},{\bf x}}{{\bf x}=a}{g({\bf y(1)})},
    \end{align*}
    where we used the fact that $y_1,\dots,y_k$ are identically distributed. We note that the distribution
    of $y_1$ is the same as the distribution of $u$, when $({\bf u},{\bf v})\sim\mathbb{D}(\nu,1/2)$ conditioned on ${\bf v}=a$,
    thus by definition of $\Tup{\nu}{1/2}$, the value of the expectation is exactly $\Tup{\nu}{1/2} g(a)$.
    Therefore, we get that $\Tup{\nu}{1/2} g(a)\geq \frac{1}{2k} =\nu$.
  \end{proof}

   We now turn to the proof of Proposition~\ref{prop:1}.
   Fix $\xi>0$, and choose $\nu>0$ from Lemma \ref{lem:2} for $\gamma=1/8$ and $\xi/2$.
   We prove Proposition \ref{prop:1} with $\nu_0 = \nu$ and $c =\nu^{7/4}/16$.

   Fix $a,g$ as in the statement of Proposition \ref{prop:1}.
   Consider the following distributions over $x,y$:
   Let ${\bf S}$ be a random subset of $\sett{i}{a_i = 0}$ in which each element is picked with probability $\nu$,
   and choose ${\bf s}\in\power{{\bf S}}$ uniformly at random. Let $I=\sett{i}{a_i = 1}$, and define
   the function $g_{{\bf S},{\bf s}}\colon \power{I}\to\power{}$ by $g_{{\bf S},{\bf s}}(z) = 1 - g(a\xor z\xor {\bf s})$.

   \begin{proposition}
     $\Expect{{\bf S},{\bf s}}{\mu_{\nu/2}(g_{{\bf S},{\bf s}})} \geq \frac{1}{4}\nu^{3/4}$.
   \end{proposition}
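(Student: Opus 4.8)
The plan is to recognize $\mathbb{E}_{\mathbf{S},\mathbf{s}}[\mu_{\nu/2}(g_{\mathbf{S},\mathbf{s}})]$ as the conditional probability $\Pr_{\mathcal{D}_\nu}[g(\mathbf{z})=0 \mid \mathbf{x}=a]$, and then to lower bound that quantity using $a\in A$ together with the standing assumption $\Pr[E]\geq \tfrac12\nu_0^{3/4}$.

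First I would unpack the definitions. Since $g_{\mathbf{S},\mathbf{s}}(z)=1-g(a\xor z\xor\mathbf{s})$, averaging over $\mathbf{z}\sim\mu_{\nu/2}^{I}$ gives $\mu_{\nu/2}(g_{\mathbf{S},\mathbf{s}}) = \Pr_{\mathbf{z}}[g(a\xor\mathbf{z}\xor\mathbf{s})=0]$, so that $\mathbb{E}_{\mathbf{S},\mathbf{s}}[\mu_{\nu/2}(g_{\mathbf{S},\mathbf{s}})] = \Pr[g(\mathbf{w})=0]$, where $\mathbf{w}:=a\xor\mathbf{z}\xor\mathbf{s}$ is obtained by drawing $\mathbf{S},\mathbf{s},\mathbf{z}$ as in the construction. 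The crucial observation is that $\mathbf{w}$ has exactly the law of $\mathbf{z}$ conditioned on $\mathbf{x}=a$ under $\mathcal{D}_\nu$. Both are product measures on $\power{n}$, so it suffices to match coordinates. For $i$ with $a_i=1$ we have $\mathbf{s}_i=0$, hence $\mathbf{w}_i=1\xor\mathbf{z}_i$ equals $1$ with probability $1-\nu/2$; and in $\mathcal{D}_\nu$, using $\Pr[\mathbf{m}_i=0,\mathbf{x}_i=1]=(\tfrac12+\tfrac\nu4)\theta=\nu/4$ one gets $\Pr[\mathbf{x}_i=1]=\tfrac12$ and hence $\Pr[\mathbf{z}_i=1\mid\mathbf{x}_i=1]=(\tfrac12-\tfrac\nu4)/\tfrac12=1-\nu/2$. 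For $i$ with $a_i=0$ we have $\mathbf{z}_i=0$, hence $\mathbf{w}_i=\mathbf{s}_i$ equals $1$ with probability $\nu\cdot\tfrac12=\nu/2$; and in $\mathcal{D}_\nu$ one gets $\Pr[\mathbf{z}_i=1\mid\mathbf{x}_i=0]=\theta/(1-\theta)=\nu/2$, by the choice $\theta=\nu/(2+\nu)$.

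Having established $\mathbb{E}_{\mathbf{S},\mathbf{s}}[\mu_{\nu/2}(g_{\mathbf{S},\mathbf{s}})] = \Pr_{\mathcal{D}_\nu}[g(\mathbf{z})=0\mid\mathbf{x}=a]$, I would finish with Bayes' rule. Since $a\in A$ we have $p_a>0$, so $\Pr[\mathbf{x}=a,E]>0$, which forces $g(a)=1$; therefore $\Pr[E\mid\mathbf{x}=a]=\Pr[g(\mathbf{z})=0\mid\mathbf{x}=a]$. As $\mathbf{x}\sim\mu_{1/2}$, $\Pr[\mathbf{x}=a]=2^{-n}$, so
\[
p_a = \Pr[\mathbf{x}=a\mid E] = \frac{2^{-n}\,\Pr[g(\mathbf{z})=0\mid\mathbf{x}=a]}{\Pr[E]},
\]
hence $\Pr[g(\mathbf{z})=0\mid\mathbf{x}=a] = 2^{n}p_a\Pr[E] \geq 2^{n}\cdot 2^{-(n+1)}\cdot\tfrac12\nu_0^{3/4} = \tfrac14\nu^{3/4}$, using $p_a\geq 2^{-(n+1)}$ (the definition of $A$) and $\nu_0=\nu$. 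This is the claimed bound.

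The only step carrying any content is the coordinatewise distributional identity between $\mathbf{w}$ and $(\mathbf{z}\mid\mathbf{x}=a)$, which is exactly what the constant $\theta=\nu/(2+\nu)$ was engineered to make true; the rest is bookkeeping with Bayes' rule, and I do not anticipate a real obstacle.
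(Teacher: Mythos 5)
Your proof is correct and takes essentially the same approach as the paper: both reduce $\Expectnoop{{\bf S},{\bf s}}{\mu_{\nu/2}(g_{{\bf S},{\bf s}})}$ to $\Pr[g({\bf z})=0\mid{\bf x}=a]$ under $\mathcal{D}_\nu$ and then invoke Bayes' rule together with $p_a\geq 2^{-(n+1)}$ and $\Pr[E]\geq\tfrac12\nu^{3/4}$. The only cosmetic difference is that you verify the key distributional identity coordinatewise by computing $\Pr[{\bf z}_i=1\mid{\bf x}_i=a_i]$, whereas the paper reaches the same conclusion in one stroke by noting that ${\bf s}\oplus{\bf z}\sim\mu_{\nu/2}^n$ and citing the already-established fact that $({\bf x},{\bf z})$ is $(1-\nu)$-correlated; you also make explicit the (correct) step $g(a)=1$ that the paper leaves implicit.
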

   \begin{proof}
     By definition,
     \[
       \Expect{{\bf S},{\bf s}}{\mu_{\nu/2}(g_{{\bf S},{\bf s}})} = \Expect{{\bf S},{\bf s}}{\Expect{{\bf z}\sim \mu_{\nu/2}^{I}}{1-g(a\xor {\bf z}\xor {\bf s})}}.
     \]
     Note that ${\bf s}$ is distributed according to $\mu_{\nu/2}^{[n]\setminus I}$,
     and clearly ${\bf z}$ is distributed according to $\mu_{\nu/2}^{I}$ independently, so the distribution of ${\bf s}\xor {\bf z}$
     is $\mu_{\nu/2}^n$. Thus, the distribution of $a\xor {\bf z}\xor {\bf s}$ is of a $(1-\nu)$-correlated point with $a$, and therefore we get that
     \begin{align*}
       \Expect{{\bf S},{\bf s}}{\mu_{\nu/2}(g_{{\bf S},{\bf s}})} &= \Prob{{\bf z}\sim(1-\nu)\text{ correlated with } a}{g({\bf z}) = 0} \\
                                  &= 2^n \Prob{({\bf x},{\bf z}) \sim (1-\nu) \text{ correlated}}{{\bf x}=a, g({\bf x}) = 1, g({\bf z}) = 0}\\
                                  &= 2^n \Prob{({\bf x},{\bf z}) \sim (1-\nu) \text{ correlated}}{E} \cdot p_a
                                  \geq \frac{1}{4}\nu^{3/4}.
     \end{align*}
     In the last inequality, we used $p_a\geq 1/2^{n+1}$ and $\Prob{}{E}\geq \half\nu^{3/4}$.
   \end{proof}
   Thus, with probability at least $\frac{1}{8}\nu^{3/4}$ over ${\bf S},{\bf s}$, we have that
   $\mu_{\nu/2}(g_{{\bf S},{\bf s}})\geq \frac{1}{8}\nu^{3/4}$; denote the set of these tuples by $G$.
   For each $(S,s)$, let
   \[ h_{S,s}(z) = \Tup{\nu/2}{1/2} g_{S,s}(z), \]
   and define their average $h(z) = \Expect{{\bf S},{\bf s}}{h_{{\bf S},{\bf s}}(z)}$.
   By Lemma \ref{lem:2}, for every $(S,s)\in G$ we have that for ${\bf z}\sim\mu_{1/2}^I$,
   $h_{S,s}({\bf z}) \geq \nu$ except with probability $\xi/2$, thus for ${\bf z}\sim\mu_{1/2}^I$, $\Expect{({\bf S},{\bf s}) \in G}{h_{{\bf S},{\bf s}}({\bf z})} \geq \nu/2$ except with probability $\xi$. Therefore $h({\bf z})\geq \frac{\nu^{3/4}}{8} \cdot \frac{\nu}{2} = c$ except with probability $\xi$.
   \begin{proposition}
     For each $y\in\power{I}$, $\mathcal{D}_1^a(y) \geq h(a\xor y) \mathcal{D}_2^a(y)$.
   \end{proposition}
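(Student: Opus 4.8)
The plan is to compute the law of ${\bf y}$ and ${\bf z}$ given ${\bf x}=a$ explicitly, then condition further on ${\bf y}=y$, and match the resulting expression against the definition of $h$. Write $I=\mathsf{supp}(a)$. Since $\mathcal{D}_\nu$ is a product measure over the coordinates and the conditioning ${\bf x}=a$ acts coordinatewise, it suffices to record the conditional joint law of $({\bf y}_i,{\bf z}_i)$. For $i\notin I$ one must have ${\bf m}_i={\bf y}_i=0$, and a short computation with $\theta=\nu/(2+\nu)$ gives ${\bf z}_i=1$ with probability $\nu/2$. For $i\in I$, conditioning on ${\bf x}_i=1$ and unwinding $\mathbb{D}(1/4,1/2-\nu/4)$ shows that $({\bf y}_i,{\bf z}_i)$ takes the values $(1,1),(0,1),(0,0)$ with probabilities $1/2,(1-\nu)/2,\nu/2$. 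In particular the marginal law of ${\bf y}$ given ${\bf x}=a$ is uniform on $\power{I}$, so $\mathcal{D}_2^a(y)=2^{-\card{I}}$ for every $y\in\power{I}$, and also $a\xor y=I\setminus y\in\power{I}$, so $h(a\xor y)$ is well-defined.

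Next, fix $y\in\power{I}$ and condition additionally on ${\bf y}=y$. From the per-coordinate table: on $y$ we get ${\bf z}_i=1$; on $I\setminus y$ each ${\bf z}_i$ is independently $1$ with probability $1-\nu$; on $[n]\setminus I$ the restriction of ${\bf z}$ is $\mu_{\nu/2}^{[n]\setminus I}$; and $\Pr[{\bf y}=y\mid{\bf x}=a]=2^{-\card{I}}$. Equivalently, ${\bf z}=(I\setminus{\bf R})\cup{\bf Z}_2$ where ${\bf R}$ is a subset of $I\setminus y$ containing each element independently with probability $\nu$ and ${\bf Z}_2\sim\mu_{\nu/2}^{[n]\setminus I}$ is independent of ${\bf R}$. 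Hence $\Pr[g({\bf z})=0\mid{\bf x}=a,\,{\bf y}=y]=\EE_{{\bf R},{\bf Z}_2}[1-g((I\setminus{\bf R})\cup{\bf Z}_2)]$.

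The main step is to recognize this as $h(a\xor y)=h(I\setminus y)$. Since $g_{S,s}(z)=1-g(a\xor z\xor s)$ does not actually depend on $S$, and the marginal of ${\bf s}$ under the two-stage sampling (${\bf S}$ a $\nu$-random subset of $[n]\setminus I$, then ${\bf s}$ uniform in $\power{{\bf S}}$) is exactly $\mu_{\nu/2}^{[n]\setminus I}$, the average defining $h$ may be taken over ${\bf s}\sim\mu_{\nu/2}^{[n]\setminus I}$ alone. Unwinding $\Tup{\nu/2}{1/2}$ over ground set $I$, this operator replaces its argument $z$ by a random subset ${\bf u}\subseteq z$ keeping each element with probability $\nu$; and $a\xor{\bf u}\xor{\bf s}=(I\setminus{\bf u})\cup{\bf s}$ for ${\bf u}\subseteq I$ and ${\bf s}\subseteq[n]\setminus I$. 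Taking $z=I\setminus y$, this gives $h(I\setminus y)=\EE_{{\bf u},{\bf s}}[1-g((I\setminus{\bf u})\cup{\bf s})]$ with ${\bf u}$ a $\nu$-random subset of $I\setminus y$ and ${\bf s}\sim\mu_{\nu/2}^{[n]\setminus I}$ independent --- exactly the expression from the previous paragraph, so $\Pr[g({\bf z})=0\mid{\bf x}=a,\,{\bf y}=y]=h(a\xor y)$.

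To finish, note that for $a\in A$ we have $g(a)=1$, so conditioning on $E$ given ${\bf x}=a$ is the same as conditioning on $g({\bf z})=0$; therefore
\[
\mathcal{D}_1^a(y)=\frac{\Pr[{\bf y}=y,\,g({\bf z})=0\mid{\bf x}=a]}{\Pr[g({\bf z})=0\mid{\bf x}=a]}\geq\Pr[{\bf y}=y,\,g({\bf z})=0\mid{\bf x}=a]=2^{-\card{I}}h(a\xor y)=\mathcal{D}_2^a(y)\,h(a\xor y),
\]
using that the denominator is at most $1$. The only real difficulty here is the bookkeeping: correctly propagating the per-coordinate conditional probabilities through the two successive conditionings ${\bf x}=a$ and then ${\bf y}=y$, and lining up the somewhat awkward $({\bf S},{\bf s})$ sampling in the definition of $h$ with the conditional law of ${\bf z}$. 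Once the two product measures match, the claimed bound (in fact an equality up to the harmless normalization) is immediate.
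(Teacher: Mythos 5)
Your proof is correct and rests on the same key identity the paper uses, namely $\Pr[g({\bf z})=0\mid {\bf x}=a,\,{\bf y}=y]=h(a\xor y)$, together with the observation that $\mathcal{D}_2^a$ is uniform on $\power{I}$ and the trivial bound $\Pr[g({\bf z})=0\mid{\bf x}=a]\leq 1$. The only difference from the paper is presentational: you compute the coordinatewise conditional law of $({\bf y},{\bf z})$ given ${\bf x}=a$ from scratch and then match it against the definition of $h$, whereas the paper starts from $h(a\xor y)$, rewrites it as a conditional expectation over $\mathcal{D}_\nu$, and reads off the ratio $\mathcal{D}_1^a(y)/\mathcal{D}_2^a(y)$ directly; your version is somewhat more explicit and also makes visible that the two-stage $({\bf S},{\bf s})$ sampling collapses to $\mu_{\nu/2}^{[n]\setminus I}$, a point the paper uses but states more tersely.
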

   \begin{proof}
     By definition,
     \begin{align}\label{eq8}
       h(a\xor y) &= \Expect{S,s}{\cExpect{(w,u)\sim \mathbb{D}(\nu/2,1/2)}{u=a\xor y}{g_{S,s}(w)}}\notag\\
                  &= \Expect{S,s}{\cExpect{(w,u)\sim \mathbb{D}(\nu/2,1/2)}{u=a\xor y}{1-f(a\xor w\xor s)}}.
     \end{align}

     Consider the distribution over $(y',m',x',z')\sim \distD_\nu$ conditioned on $y' = y, x' = a$.
     Note that the distribution of $w$ in~\eqref{eq8} is $\mu_\nu^{J}$, where $J = \sett{i}{a_i = 1, y_i = 0} \subseteq I$,
     which is the same distribution as $x'\xor m' = a\xor m'$, and so the distribution of $a\xor w$ is the same as the distribution of $m'$.
     Since $s$ is independently distributed according to $\mu_{\nu/2}^{[n]\setminus I}$, the distribution of $a\xor w\xor s$ is the same as
     the distribution of $m'\xor s$, which is the same distribution as of $z'$.
     Thus,
     \[
       \eqref{eq8} = \cExpect{(y',m',x',z')\sim \distD_\nu}{y' = y, x' = a}{1 - f(z')}
                 = \frac{\Prob{(y',m',x',z')\sim \distD_\nu}{f(z') = 0 \land y' = y \land x' = a}}
                 {\Prob{(y',m',x',z')\sim \distD_\nu}{y' = y \land x' = a}}.
     \]
     The numerator of the last fraction is equal to
     \begin{align*}
        &\Prob{(y',m',x',z')\sim \distD_\nu}{x' = a, f(z') = 0}\cdot \cProb{(y',m',x',z')\sim \distD_\nu}{x' = a, f(z') = 0}{y' = y}\\
        &=\Prob{(y',m',x',z')\sim \distD_\nu}{x' = a, f(z') = 0}\cdot \distD_1^a(y).
     \end{align*}

     As for the denominator, it is equal to
     \[
     \Prob{(y',m',x',z')\sim \distD_\nu}{x' = a}\cProb{(y',m',x',z')\sim \distD_\nu}{x' = a}{y' = y}
     = \Prob{(y',m',x',z')\sim \distD_\nu}{x' = a}\cdot \distD_2^a(y).
     \]

     Plugging these expressions, we see that
     \begin{align*}
       \eqref{eq8}
       &= \frac{\distD_1^a(y)}{\distD_2^a(y)}
       \frac{\Prob{(y',m',x',z')\sim \distD_\nu}{x' = a, f(z') = 0}}{\Prob{(y',m',x',z')\sim \distD_\nu}{x' = a}}\\
       &\leq \frac{\distD_1^a(y)}{\distD_2^a(y)}. \qedhere
     \end{align*}
   \end{proof}
   We now finish the proof of Proposition~\ref{prop:1}. Sampling ${\bf y}\sim \mu_{1/2}^{I}$, we see that ${\bf z}=a\xor {\bf y}$ is distributed according to
   $\mu_{1/2}^I$, thus with
   probability at least $1-\xi$ we have $h({\bf z})\geq c$. In this case, we would get by the previous proposition that
   $\mathcal{D}_1^a({\bf y}) \geq c \mathcal{D}_2^a({\bf y})$, as desired.
   \qed

\subsection{Proof of Theorem~\ref{thm:one_sided_error}}
In this section we prove Theorem~\ref{thm:one_sided_error}. Again, to make notations easier we write the proof in the case that $p = \rho = 1/2$,
and the proof carries over to the general case with minor adjustments.

We need the following claim that relates the tail of a function and its noise sensitivity.
\begin{claim}\label{claim:ns_tail}
  For any $g\colon\power{n}\to\power{}$ and $k\in\mathbb{N}$ we have
  that $W_{\geq k}[g]\leq {\sf NS}_{1/k}[g]$.
\end{claim}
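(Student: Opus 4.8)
The plan is to compute the noise sensitivity directly in terms of the Fourier expansion of $g$ on the $\mu_{1/2}$ measure and then compare the resulting bound level-by-level. First I would recall that for the $(1-\nu)$-correlated distribution on $\power{n}\times\power{n}$ (with $p=1/2$), the associated noise operator $\T_{1-\nu}$ acts diagonally in the Fourier basis: $\widehat{\T_{1-\nu}g}(S) = (1-\nu)^{|S|}\widehat{g}(S)$. Hence, using that $g$ is Boolean so $g^2 = g$ and $\mathbb{E}[g^2] = \sum_S \widehat g(S)^2$, one gets the standard identity
\[
{\sf NS}_{\nu}[g] = \Pr[g({\bf x})\neq g({\bf y})] = \tfrac12\,\mathbb{E}[(g({\bf x}) - g({\bf y}))^2] = \inner{g}{g} - \inner{g}{\T_{1-\nu}g} = \sum_{S\subseteq[n]} \bigl(1 - (1-\nu)^{|S|}\bigr)\widehat{g}_{1/2}(S)^2.
\]
(Here I should double-check the factor $\tfrac12$ versus the fact that $g({\bf x})\neq g({\bf y})$ is an indicator, not a squared difference of $\{0,1\}$-values — for Boolean $g$ these coincide, since $(g({\bf x})-g({\bf y}))^2 = \mathbbm{1}[g({\bf x})\neq g({\bf y})]$; so actually ${\sf NS}_\nu[g] = \mathbb{E}[(g({\bf x})-g({\bf y}))^2]$ with no $\tfrac12$, giving the cleaner identity ${\sf NS}_\nu[g] = 2\sum_S(1-(1-\nu)^{|S|})\widehat g(S)^2$ — I would reconcile this carefully, but in either normalization the argument below goes through since we only need a lower bound on ${\sf NS}_\nu$.)

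Next I would set $\nu = 1/k$ and bound the coefficient $1 - (1-\nu)^{|S|}$ from below on the high levels. For $|S|\geq k$ we have $(1-1/k)^{|S|}\leq (1-1/k)^{k}\leq e^{-1}$, so $1 - (1-1/k)^{|S|}\geq 1 - e^{-1} > \tfrac12$. Therefore, discarding all the (nonnegative) terms with $|S| < k$,
\[
{\sf NS}_{1/k}[g] \;\geq\; \sum_{|S|\geq k}\bigl(1 - (1-1/k)^{|S|}\bigr)\widehat{g}_{1/2}(S)^2 \;\geq\; (1 - e^{-1})\sum_{|S|\geq k}\widehat{g}_{1/2}(S)^2 \;=\; (1-e^{-1})\,W_{\geq k}[g] \;\geq\; W_{\geq k}[g],
\]
where the very last inequality uses $1 - e^{-1} \geq 1/2$ together with the factor of $2$ that appears in the correct normalization of ${\sf NS}$ for Boolean functions (i.e. ${\sf NS}_\nu[g] = 2\sum_S(1-(1-\nu)^{|S|})\widehat g(S)^2$). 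If instead the paper's convention makes ${\sf NS}$ the probability (with no factor $2$), then $1-e^{-1}>1/2$ still does not quite reach $1$, so I would instead note $(1-1/k)^{|S|}\le (1-1/k)^k$ and, more crudely, that for $|S|\ge k$ one has $1-(1-1/k)^{|S|}\ge 1-(1-1/k)^k\ge 1 - e^{-1}$, and then simply state the claim with the harmless constant, or sharpen using $(1-1/k)^k \le 1/2$ for... — in any case the clean inequality as stated follows once the normalization is pinned down.

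The only real point requiring care — and the step I expect to be the mild obstacle — is matching the normalization convention for ${\sf NS}_\nu$ used elsewhere in the paper (probability of disagreement) with the Fourier identity, and then verifying that the constant works out to give exactly $W_{\geq k}[g]$ rather than $c\cdot W_{\geq k}[g]$ for some $c<1$. Concretely: with ${\sf NS}_\nu[g]$ defined as $\Pr[g({\bf x})\neq g({\bf y})]$, one has ${\sf NS}_\nu[g] = \sum_S (1-(1-\nu)^{|S|})\widehat g(S)^2 \cdot$ (nothing), since $\mathbb{E}[g] - \mathbb{E}[\T_{1-\nu}g\cdot g]$ rewritten via $\mathbb{E}[g]=\sum_S\widehat g(S)^2$ gives $\Pr[g=1,g'=0]+\Pr[g=0,g'=1] = \sum_S(1-(1-\nu)^{|S|})\widehat g(S)^2$ directly, so in fact ${\sf NS}_{1/k}[g] \ge (1-e^{-1})W_{\ge k}[g]$, and since we only need $W_{\ge k}[g]\le {\sf NS}_{1/k}[g]$ one should note that actually $1-(1-1/k)^{|S|}$ can be made $\ge$ anything close to $1$ only for $|S|\gg k$; to get the stated bound with constant exactly $1$, restrict the sum further to $|S|\ge k$ and use monotonicity together with $\sum_{|S|\ge k}(1-(1-1/k)^{|S|})\widehat g(S)^2 \ge \sum_{|S| \ge k}(1-(1-1/k)^k)\widehat g(S)^2$; if $(1-1/k)^k$ is not $\le 1/2$ for small $k$ one can treat $k=1$ separately (where $W_{\ge 1}[g] = \mathbb{E}[g]-\widehat g(\emptyset)^2 = {\sf NS}_1[g]$ trivially) and use $(1-1/k)^k\le 1/e < 1/2$... — the upshot is that the inequality holds, and the write-up is one short paragraph once the bookkeeping of the constant is settled.
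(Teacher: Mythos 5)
Your approach is exactly the paper's: express ${\sf NS}_\nu[g]$ by the Fourier formula, observe that the coefficient $1-(1-\nu)^{|S|}$ is bounded away from $0$ on levels $|S|\geq 1/\nu$, and compare level by level. But the write-up contains a concrete error in the final ``reconciliation'' paragraph that leaves the constant genuinely unresolved, so as it stands the argument does not close. You compute $\mathbb{E}[g]-\mathbb{E}[g\cdot T_{1-\nu}g]=\sum_S\bigl(1-(1-\nu)^{|S|}\bigr)\widehat g(S)^2$ and then assert that this equals $\Pr[g(x)=1,g(y)=0]+\Pr[g(x)=0,g(y)=1]$. It does not: $\mathbb{E}[g(x)(1-g(y))]$ is only $\Pr[g(x)=1,\,g(y)=0]$, which by the symmetry of the $(1-\nu)$-correlated pair is exactly \emph{half} of ${\sf NS}_\nu[g]$. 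So the correct identity is the one you wrote earlier (and then abandoned),
\[
{\sf NS}_\nu[g] \;=\; 2\sum_{S\subseteq[n]}\bigl(1-(1-\nu)^{|S|}\bigr)\widehat g(S)^2,
\]
and with that factor of $2$ in place there is nothing left to settle: for $|S|\geq k$ we have $(1-1/k)^{|S|}\leq e^{-1}$, hence ${\sf NS}_{1/k}[g]\geq 2(1-e^{-1})\,W_{\geq k}[g]\geq W_{\geq k}[g]$ since $2(1-e^{-1})>1$. This is precisely the paper's proof, which cites O'Donnell's Theorem~2.49 for the identity. The special-casing of $k=1$ you contemplate at the end is unnecessary once the constant is pinned down.
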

\begin{proof}
  It is well known (e.g.~\cite[Theorem 2.49]{Odonnell}; our definition of ${\sf NS}_{\delta}$
  corresponds to $4{\sf NS}_{\delta/2}$ in the notation therein) that
  ${\sf NS}_{\delta}[g] = 2\sum\limits_{S\subseteq [n]}{(1-(1-\delta)^{\card{S}})\widehat{g}(S)^2}$,
  and since for $\card{S} \geq 1/\delta$ we have that $(1-\delta)^{\card{S}}\leq e^{-1}$,
  we conclude that ${\sf NS}_{\delta}[g]\geq 2(1-e^{-1}) W_{\geq 1/\delta}[g] \geq W_{\geq 1/\delta}[g]$.
  The claim now follows by choosing $\delta = 1/k$.
\end{proof}
We also need the analog of Claim~\ref{claim:low_inf_neg} for the one-sided error case.
\begin{claim}\label{claim:low_inf_neg_onesided}
  For every $\zeta,\tau > 0$ there is $\eta_2 > 0$ such that the following holds for $\lambda\in[\zeta,1]$.
  Suppose $f\colon(\power{n},\mu_{1/4})\to[0,1]$,
  $g\colon(\power{n},\mu_{1/2})\to\power{}$ are
  one-sided error solutions with $\lambda$ and error $\eta_2$,
  then for any $i\in[n]$ we have $I_i^{-}[g]\leq \tau$.
\end{claim}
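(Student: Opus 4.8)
The plan is to mimic the proof of Claim~\ref{claim:low_inf_neg}, using the coupled downwards random walk, but with the weaker one-sided error hypotheses in place of the approximate eigenvalue condition. First I would set up the combinatorial situation exactly as in Claim~\ref{claim:low_inf_neg}: assume for contradiction that $I_i^-[g] \geq \tau$ for some coordinate $i$, without loss of generality $i = n$; sample an edge $({\bf x},{\bf x'})$ in direction $i$ with ${\bf x}_i = 0$, and let $E$ be the event that this edge is negatively influential, i.e.\ $g({\bf x}) = 1$, $g({\bf x'}) = 0$. By assumption $\Pr[E] \geq \tau$. Couple the downward walks: pick ${\bf z} \sim \mu_{1/2}$ independently and set ${\bf y} = {\bf x} \land {\bf z}$, ${\bf y'} = {\bf x'} \land {\bf z}$. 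Conditioning on ${\bf z}_i = 0$ (probability $1/2$) leaves the distribution of ${\bf y}$ unaffected (since ${\bf x}_i = 0$ already) and forces ${\bf y'} = {\bf y}$, so as in~\eqref{eq12} we get
\[
\cExpect{{\bf x}}{{\bf x}\in X(E)}{f({\bf y'})} \geq \tfrac12 \cExpect{{\bf x}}{{\bf x}\in X(E)}{f({\bf y})}.
\]

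Next I would bound the two sides using the one-sided error conditions instead of the two-sided one. On the ``$g = 0$'' side, the point ${\bf x'}$ satisfies $g({\bf x'}) = 0$, so condition (1) of being a one-sided error solution — namely $\Expect{}{(1-g)\cdot\Tdown{1/2}{1/4}f} \leq \eta_2$ — together with $\mu_{1/2}(X'(E)) \geq \tau/2$ (the edges in direction $i$ with ${\bf x}_i=0$ have measure $1/2$, so $X'(E)$ has measure $\geq\tau/2$ in $\mu_{1/2}$) gives $\cExpect{{\bf x}}{{\bf x}\in X(E)}{f({\bf y'})} \leq 2\eta_2/\tau$. On the ``$g = 1$'' side, I cannot directly say $\Tdown{1/2}{1/4}f({\bf x})$ is close to $\lambda$; but condition (2) says $\Pr[g({\bf x}) = 1,\ \Tdown{1/2}{1/4}f({\bf x}) \leq \lambda] \leq \eta_2$, so outside an event of probability $\leq \eta_2$ (which, relative to $X(E)$ of measure $\geq \tau/2$, is a $\leq 2\eta_2/\tau$ fraction) we have $\Tdown{1/2}{1/4}f({\bf x}) = \cExpect{{\bf z}}{}{f({\bf x}\land{\bf z})} > \lambda$. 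Hence $\cExpect{{\bf x}}{{\bf x}\in X(E)}{f({\bf y})} \geq \lambda(1 - 2\eta_2/\tau) \geq \lambda - 2\eta_2/\tau$. Plugging both estimates into the displayed inequality yields $\lambda - 2\eta_2/\tau \leq 2\eta_2/\tau$, i.e.\ $\eta_2 \geq \lambda\tau/4 \geq \zeta\tau/4$, so choosing $\eta_2 = \zeta\tau/8$ (say) gives the contradiction, with a little room to spare for the $1/\mu_{1/2}(X'(E))$-type factors.

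The main obstacle — and the only real difference from Claim~\ref{claim:low_inf_neg} — is the asymmetry of the hypotheses. In the two-sided case one directly controls $\Tdown{1/2}{1/4}f$ pointwise (up to $\eta$) on both $g=0$ and $g=1$ vertices; here the $g=1$ lower bound $\Tdown{1/2}{1/4}f \geq \lambda$ only holds off an $\eta_2$-measure bad set, so I must be careful to verify that this bad set is a negligible fraction of the conditioning event $X(E)$, which requires $\tau$ (hence the measure of $X(E)$) to stay bounded below independently of $\eta_2$ — this is fine since $\tau$ is fixed first and $\eta_2$ is chosen afterwards as a function of $\zeta$ and $\tau$. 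One small bookkeeping point: the $\mu_{1/2}$-measure of the slice $\{{\bf x} : {\bf x}_i = 0\}$ and of $X(E), X'(E)$ carries constant factors, but since all of $p,\rho,\lambda$ are bounded away from $0,1$ these are absorbed into the choice of $\eta_2$, exactly as the $\zeta(1-\zeta)$ factors are handled in Claim~\ref{claim:low_inf_neg}.
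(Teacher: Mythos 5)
Your proof is correct and matches the paper's intent exactly (the paper simply declares the proof ``identical to the proof of Claim~\ref{claim:low_inf_neg}''), and you have correctly worked out the two small but necessary adaptations: condition~(1) gives the $g=0$ bound directly, while on the $g=1$ side condition~(2) only lower-bounds $\Tdown{1/2}{1/4}f$ off a bad set of measure $\leq\eta_2$, which is a $\leq 2\eta_2/\tau$ fraction of $X(E)$ since $\tau$ is fixed before $\eta_2$. One arithmetic slip at the end: plugging into~\eqref{eq12} gives $2\eta_2/\tau \geq \tfrac{1}{2}(\lambda - 2\eta_2/\tau)$, hence $\eta_2 \geq \lambda\tau/6$, not $\lambda\tau/4$; your choice $\eta_2 = \zeta\tau/8$ still yields the contradiction.
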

\begin{proof}
  The proof is identical to the proof of Claim~\ref{claim:low_inf_neg}.
\end{proof}

We are now ready to prove Theorem~\ref{thm:one_sided_error}.
\begin{proof}[Proof of Theorem~\ref{thm:one_sided_error}]
Fix $\eps,\zeta>0$ as in the theorem statement, and choose $\nu_0$ from Lemma~\ref{lem:one_sided_NS} for $\zeta$.
Let $C = C(\zeta)$ be from Theorem~\ref{thm:Bourgain}, and choose
$\nu = C^{-5} \eps^9\nu_0$, $k = \ceil{1/\nu}$. Pick $\eta_1$ from
Lemma~\ref{lem:one_sided_NS} for $\zeta$ and $\nu$, pick $J = J(k,\zeta)$ from Theorem~\ref{thm:Bourgain}
and set $\tau = 2^{-10 J}\eps$. Finally, take $\eta_2$ from Claim~\ref{claim:low_inf_neg_onesided} for
$\zeta,\tau$; we prove the statement for $\eta = \min(\eta_1,\eta_2)$

Since $\eta \leq \eta_1$ and $\nu <\nu_0$, Lemma~\ref{lem:one_sided_NS} implies that
${\sf NS}_{\nu}[g]\leq \nu^{3/4}$, and therefore by Claim~\ref{claim:ns_tail}
we have
\[
W_{\geq k}[g] \leq \nu^{3/4}\leq \frac{\eps^2\sqrt{\nu}}{C \log^{1.5}(1/\nu)} \leq \frac{\eps^2}{C \sqrt{k}\log^{1.5}(k)}.
\]
Theorem~\ref{thm:Bourgain} now implies that there is a set $T\subseteq[n]$ of size $J$ such that
$g$ is $\eps^2$-close in $L_1$ to a $T$-junta $g'\colon\power{n}\to\power{}$. Write inputs $x\in\power{n}$ as $x = (\alpha,\beta)$, where
$\alpha\in\power{T},\beta\in\power{[n]\setminus T}$, and define $\tilde{g}\colon\power{n}\to [0,1]$ by
$\tilde{g}(\alpha,\beta') = \Expect{\bm{\beta}}{g(\alpha,\bm{\beta})}$. Note that $\tilde{g}$ is the closest function to $g$ in $L_2$ that depends
only on coordinates from $T$,
so $\norm{g-\tilde{g}}_2\leq \norm{g-g'}_2\leq \sqrt{\norm{g-g'}_1}\leq \eps$ (we used the fact that $\card{g-g'}\leq 1$).
Therefore, $\norm{g - \tilde{g}}_1\leq \eps$.

By Claim~\ref{claim:low_inf_neg_onesided}, all of the negative
influences of $g$ are at most $\tau$, and so by Fact~\ref{fact:avg_dec_neg_inf} all of the negative influences of $\tilde{g}$
are also at most $\tau$. Therefore, by Fact~\ref{fact:gglrs} there is a monotone function $h\colon\power{n}\to\mathbb{R}$
such that $\norm{\tilde{g} - h}_1\leq 4^J J \cdot \tau\leq \eps$, hence by the triangle inequality
$\norm{g - h}_1\leq \norm{g - \tilde{g}}_1+\norm{\tilde{g} - h}_1\leq 2\eps$. Finally, for $h'(x) = 1_{h(x)\geq 1/2}$
we have that $h'$ is a monotone, Boolean-valued $J$-junta and $\norm{g - h'}_1\leq 2\norm{g - h}_1\leq 4\eps$, and the proof is concluded.
\end{proof}

\section{Results for large noise rates}\label{sec:large_noise}
The proof of Theorem~\ref{thm:main_2func_smallrho} is composed of two parts.

In the first part, Lemma~\ref{lem:strong_junta_approx_smallrho}, we show that if $f,g$ are approximate solutions, then $g$ can be approximated
by a junta in a stronger manner, similarly to Lemma~\ref{lem:strong_junta_approx}.
The proof of Lemma~\ref{lem:strong_junta_approx_smallrho} is very similar to the proof of Lemma~\ref{lem:strong_junta_approx}:
the only place where monotonicity/almost-monotonicity is used is in Claim~\ref{claim:low_inf_gen}, and
we show it is not needed in the case that $\rho\geq \half + \zeta$.

This approximation by junta result allows us to reduce the problem to the case that $n$ is constant,
which we then prove in a similar way to the proof of Lemma~\ref{lemma:main_exact} (though much simpler).

\subsection{Strong approximation by junta}
We begin with the the following claim asserting that the negative influences of $g$ are all small.
\begin{claim}\label{claim:low_inf_neg_smallrho}
  For every $\zeta,\tau > 0$ there is $\eta_1 > 0$ such that the following holds for $\lambda\in[\zeta,1]$,
  $p\in[\zeta,1-\zeta]$ and $\rho\in [0,1-\zeta]$.
  If $\norm{\Tdown{p}{\rho p} f - \lambda g}_1\leq \eta_1$, then for any $i\in[n]$ we have $I_i^{-}[g]\leq \tau$.
\end{claim}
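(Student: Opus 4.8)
The statement is the exact analogue of Claim~\ref{claim:low_inf_neg}, now for arbitrary $p\in[\zeta,1-\zeta]$ and $\rho\in[0,1-\zeta]$ in place of $p=\rho=1/2$, so the plan is to rerun the same coupled downward-walk argument, tracking the constants. Fix a coordinate $i$ and suppose toward a contradiction that $I_i^{-}[g]\ge\tau$. Writing each input as $({\bf x}_{-i},x_i)$ with ${\bf x}_{-i}\sim\mu_p^{[n]\setminus\{i\}}$, Booleanity of $g$ gives $I_i^{-}[g]=\Prob{}{E}$, where $E$ is the event $g({\bf x}_{-i},0)=1$, $g({\bf x}_{-i},1)=0$; thus $\Prob{}{E}\ge\tau$. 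Let $X(E)$ and $X'(E)$ be the sets of lower endpoints $({\bf x}_{-i},0)$ and upper endpoints $({\bf x}_{-i},1)$, respectively, for which $E$ holds; these have $\mu_p$-mass $(1-p)\Prob{}{E}\ge\zeta\tau$ and $p\,\Prob{}{E}\ge\zeta\tau$.

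Next I would set up the coupling. Given the edge ${\bf x}=({\bf x}_{-i},0)$, ${\bf x}'=({\bf x}_{-i},1)$, draw ${\bf z}\sim\mu_\rho$ and put ${\bf y}={\bf x}\land{\bf z}$, ${\bf y}'={\bf x}'\land{\bf z}$. On the event $\{{\bf z}_i=0\}$, which has probability $1-\rho\ge\zeta$, we have ${\bf y}'={\bf y}$ (the endpoints agree off coordinate $i$, and both have their $i$-th bit killed), and moreover, since ${\bf x}_i=0$ already forces $({\bf y})_i=0$, the law of ${\bf y}$ is unchanged by conditioning on ${\bf z}_i=0$. Using $f\ge0$ this gives, for every fixed edge,
\[
  \Expect{{\bf z}\sim\mu_\rho}{f({\bf y}')}\ \ge\ \Prob{}{{\bf z}_i=0}\cdot\Expect{{\bf z}\sim\mu_\rho}{f({\bf y})}\ =\ (1-\rho)\,\Expect{{\bf z}\sim\mu_\rho}{f({\bf y})}.
\]

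Then I would feed in the hypothesis $\norm{\Tdown{p}{\rho p}f-\lambda g}_1\le\eta_1$. Since $\Tdown{p}{\rho p}f(x)=\Expect{{\bf z}\sim\mu_\rho}{f(x\land{\bf z})}\ge0$ and $g\equiv0$ on $X'(E)$, summing the hypothesis over $X'(E)$ and dividing by $\mu_p(X'(E))\ge\zeta\tau$ yields $\cExpect{{\bf x}_{-i}}{E}{\Expect{{\bf z}}{f({\bf y}')}}\le\eta_1/(\zeta\tau)$; summing over $X(E)$, where $g\equiv1$, and dividing by $\mu_p(X(E))\ge\zeta\tau$ yields $\cExpect{{\bf x}_{-i}}{E}{\Expect{{\bf z}}{f({\bf y})}}\ge\lambda-\eta_1/(\zeta\tau)$. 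Averaging the displayed per-edge inequality over the edges in $E$ and combining the three bounds, $\eta_1/(\zeta\tau)\ge(1-\rho)\big(\lambda-\eta_1/(\zeta\tau)\big)$, hence $\eta_1\ge\frac{(1-\rho)\lambda}{2-\rho}\,\zeta\tau\ge\frac{\zeta^3}{2}\tau$, where I used $1-\rho\ge\zeta$, $\lambda\ge\zeta$ and $2-\rho\le2$. So choosing $\eta_1=\zeta^3\tau/4$, a quantity depending only on $\zeta$ and $\tau$, produces the desired contradiction.

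There is no real obstacle: the argument is literally that of Claim~\ref{claim:low_inf_neg}, and the only care needed is to replace the three appearances of $1/2$ there---the marginal $\Prob{}{{\bf z}_i=0}$ and the mass ratios $\mu_p(X(E))/\Prob{}{E}$ and $\mu_p(X'(E))/\Prob{}{E}$ between an edge's endpoints and its base---by $1-\rho$, $1-p$ and $p$, all bounded below by $\zeta$; the degenerate value $\rho=0$ is fine, since then $1-\rho=1\ge\zeta$. One could therefore legitimately just say ``the proof is identical to that of Claim~\ref{claim:low_inf_neg}'' (as is done for Claim~\ref{claim:low_inf_neg_onesided}), but it is worth recording these substitutions.
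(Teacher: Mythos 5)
Your proof is correct and is exactly the adaptation of Claim~\ref{claim:low_inf_neg} that the paper asserts but omits; you correctly identify the three places where $1/2$ must be replaced by $1-\rho$, $1-p$, and $p$, all bounded below by $\zeta$, and your final bound $\eta_1 = \zeta^3\tau/4$ is valid. (If anything, you are more careful than the paper's own write-up of Claim~\ref{claim:low_inf_neg}, which asserts $\mu_{1/2}(X'(E))\geq\tau$ where $\tau/2$ is what actually holds — a harmless slip in constants that you avoid by tracking the endpoint masses $(1-p)\Pr[E]$ and $p\Pr[E]$ explicitly.)
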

The proof is an easy adaptation of the proof of Claim ~\ref{claim:low_inf_neg} and is omitted.
The next claim is an adaptation of Claim~\ref{claim:low_inf_gen}.

\begin{claim}\label{claim:low_inf_gen_smallrho}
  For every $\zeta > 0$ there is $m\in \mathbb{N}$ such that for all $\tau > 0$ there is $\eta_2 > 0$ such that the following holds
  for all $\lambda\in[\zeta,1]$, $p\in[\zeta,1-\zeta]$ and $\rho\in[0,\half - \zeta]$.
  Let $g\colon(\power{n},\mu_{p})\to\power{}$, $f\colon(\power{n},\mu_{\rho p})\to[0,1]$ be such that
  $\norm{\Tdown{p}{p\rho} f - \lambda g}_1\leq \eta_2$.
  Then for each $M\subseteq[n]$ of size $m$,
  \[
    I_M[g] = \sum\limits_{S\supseteq M}{\widehat{g}^2(S)}\leq \tau.
  \]
\end{claim}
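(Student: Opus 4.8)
The plan is to mimic the proof of Claim~\ref{claim:low_inf_gen} almost verbatim, replacing the single place where it invokes the (almost‑)monotonicity of $f$ by a short argument that uses $\rho<1/2$. Fix $m=m(\zeta)$ large enough that $\rho^{\sqrt m}\le \zeta/2$ for every $\rho\le 1/2-\zeta$ and that $\sqrt m\ge 2$. We may assume $\rho$ is bounded away from $0$: if $\rho$ is small enough that $\rho^m\le \tau\zeta^2/4$ (in particular if $\rho=0$), then Lemma~\ref{lem:vanishing_tail} gives $I_M[g]\le W_{\ge m}[g]\le 2\lambda^{-2}(\eta_2+\rho^m)\le \tau$ and we are done. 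So suppose $\rho\in[\rho_0,1/2-\zeta]$ with $\rho_0=\rho_0(\tau,\zeta)>0$, pick $\eta_1$ from Claim~\ref{claim:low_inf_neg_smallrho} so that every negative influence of $g$ is below $2^{-3m-2}\tau$, and take $\eta_2\le\eta_1$ small enough (depending on $\zeta,m,\tau$). Assume for contradiction that $I_M[g]\ge\tau$ for some $|M|=m$. As in Claim~\ref{claim:low_inf_gen}: with positive probability over the restriction outside $M$ the function $g_{[n]\setminus M\to x}$ has a nonzero level‑$M$ coefficient, hence has degree $m$ and, by Theorem~\ref{thm:sensitivity}, a point of sensitivity $\ge\sqrt m$; an averaging argument yields a fixed $A\subseteq M$ with $|A|=\sqrt m$, a fixed $a\in\power M$, and a set $X(E)$ of outside‑$M$ restrictions of $\mu_p$‑measure $\ge t:=\tau/2^{2m}$ on which all coordinates of $A$ are sensitive for $g$ at $(x,a)$. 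One of the sub‑cases holds on at least half of this mass: (a) $g(x,a)=1$, or (b) $g(x,a)=0$.

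Case (a) is handled exactly as in Claim~\ref{claim:low_inf_gen}: coupling the downward walks from $(x,a)$ and from $(x,a)\oplus e_i$ ($i\in A$) through a common ${\bf z}\sim\mu_\rho$, the approximate eigenvalue condition at these $\sqrt m+1$ points forces $\lambda-O(\eta_2/t)\le \Pr_{{\bf z}}[{\bf z}_i=1\ \forall i\in A]=\rho^{\sqrt m}\le\zeta/2\le\lambda/2$, a contradiction. For case (b), first note that if some $i\in A$ had $a_i=1$, then every $(x,a)$ in the sub‑case would be a negative‑influence witness for coordinate $i$, forcing $I_i^-[g]\gtrsim 2^{-m}t$ and contradicting Claim~\ref{claim:low_inf_neg_smallrho}; hence $a_i=0$ for all $i\in A$, so $g(x,a)=0$ while $g((x,a)\cup\{i\})=1$ for each $i\in A$. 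Pick two coordinates $1,2\in A$, write points as $(\sigma_1,\sigma_2,\beta)$ with $\beta\in\power{[n]\setminus\{1,2\}}$, and set $F_\sigma(\beta)=\E_{{\bf z}\sim\mu_\rho}[f(\sigma_1,\sigma_2,\beta\land{\bf z})]$ for $\sigma\in\power2$, so that $\Tdown{p}{\rho p}f(\sigma_1,\sigma_2,\beta)=\sum_{\tau\le\sigma}\rho^{|\tau|}(1-\rho)^{|\sigma|-|\tau|}F_\tau(\beta)$. Averaging over the set $B$ of relevant $\beta$'s, which has measure $\ge t'':=\tau/2^{3m-1}$, the conditions $g(x,a)=0$, $g((x,a)\cup\{1\})=g((x,a)\cup\{2\})=1$ and $g\le1$ give respectively $\E_B[F_{00}]\le O(\eta_2/t'')$; $\E_B[F_{10}],\E_B[F_{01}]\ge \lambda/\rho-O(\eta_2/(t''\rho))$; and $\E_B[(1-\rho)^2F_{00}+\rho(1-\rho)(F_{10}+F_{01})+\rho^2F_{11}]\le\lambda+O(\eta_2/t'')$. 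Dropping the nonnegative $F_{00},F_{11}$ terms from the last inequality yields $2(1-\rho)\lambda-O(\eta_2/t'')\le\lambda+O(\eta_2/t'')$, i.e.\ $(1-2\rho)\lambda\le O(\eta_2/t'')$; since $1-2\rho\ge2\zeta$ and $\lambda\ge\zeta$ this contradicts the choice of $\eta_2$. Note that the factors $1/\rho$ cancel, so no lower bound on $\rho$ beyond $\rho_0$ is needed.

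The main obstacle is case (b). In Claim~\ref{claim:low_inf_gen} this is exactly the step that uses almost‑monotonicity of $f$, there to lift the value of $f$ to the top corner $(1,1,\cdot)$; the content here is the observation that for $\rho<1/2$ this lifting is superfluous, because the two single‑flip contributions $F_{10},F_{01}$, each of size $\approx\lambda/\rho$, already push $\Tdown{p}{\rho p}f(1,1,\cdot)$ above $\lambda$: their combined coefficient is $2\rho(1-\rho)\cdot\tfrac1\rho=2(1-\rho)>1$. Everything else (the Huang/averaging reduction, the coupled‑walk estimates, and the routine bookkeeping of the measure‑reweighting constants $t,t'',\eta_1,\eta_2$) is transcribed from Claim~\ref{claim:low_inf_gen} with only cosmetic changes.
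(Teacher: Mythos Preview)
Your proof is correct and follows essentially the same route as the paper's. Both arguments reduce via Theorem~\ref{thm:sensitivity} to a fixed $A\subseteq M$ and a pair of cases, handle case~(a) by the coupled-walk bound $\rho^{\sqrt m}\le\lambda/2$, and in case~(b) pick two coordinates, show they are~$0$ via negative influences, bound $F_{00},F_{10},F_{01}$ from the eigenvector condition, and derive $2(1-\rho)\lambda\le\lambda+O(\eta_2/t'')$, which contradicts $\rho\le\tfrac12-\zeta$.

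Two small differences are worth noting. First, the paper also proves $g(1,1,\beta)=1$ (removing an event $E''$ via a further negative-influence bound) before invoking the eigenvector condition at $(1,1,\beta)$; you correctly observe that this is unnecessary, since $g\le1$ already gives the required upper bound $\Tdown{p}{\rho p}f(1,1,\beta)\le\lambda+O(\eta_2/t'')$. Second, your preliminary split on small $\rho$ via Lemma~\ref{lem:vanishing_tail} is a clean way to cover $\rho=0$ (where the intermediate quantity $\lambda/\rho$ is undefined), though as you yourself note the $1/\rho$ factors cancel in the final inequality, so one could also avoid the split by working directly with $\rho F_{10}$ and $\rho F_{01}$.
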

\begin{proof}
  Let $m = \ceil{\log(2/\zeta)}^2$
  and choose $\eta_1$ from Claim~\ref{claim:low_inf_neg_smallrho} for $\tau/12$.
  We prove the statement for $\eta_2 = \min(\eta_1, \frac{\zeta^3(1-\zeta)^2}{2^{2m+3}(\sqrt{m}+1)}\tau)$.

  Assume we have a set $M$ of size $m$ violating this condition. As in the proof of Lemma~\ref{lem:strong_junta_approx},
  we conclude that there is $A\subseteq M$ of size $\sqrt{m}$, without loss of generality $A = \set{1,\ldots,\sqrt{m}}$,
  such that
    \[
    \Prob{{\bf x}}{\text{all coordinates in $A$ are sensitive
    on $({\bf x},a)$ for $g$}}\geq \frac{\tau}{2^{2m}} \defeq t.
    \]
  Sample ${\bf x}\sim\mu_{p}$, let $\bf{x'(1)},\dots,{\bf x'(\sqrt{m})}$ be given by ${\bf x'(i)} = {\bf x}\xor e_i$, and consider
  a coupled downwards random walk from ${\bf x}$ and the ${\bf x'(i)}$'s to
  ${\bf y} = {\bf x}\land {\bf z},{\bf y'(1)} = {\bf x'(1)}\land {\bf z},\ldots,{\bf y'(\sqrt{m})} = {\bf x'(\sqrt{m})}\land {\bf z}$, where ${\bf z}\sim \mu_{\rho}$.
  Then with probability at least $t$ we have that $f({\bf x})\neq f({\bf x'(1)})=\dots=f({\bf x'(\sqrt{m})})$, and conditioned
  on that there are two cases: (a) $g({\bf x}) = 1$ with probability at least $\half$, or (b) $g({\bf x}) = 0$ with probability at least $\half$.

  \paragraph{Case (a).}
    This case is the same as in Lemma~\ref{lem:strong_junta_approx}.
    Denote by $E$ the event that $g({\bf x}) = 1$ and all of $A$'s coordinates are sensitive on $x$,
    and let $X(E)$ denote the set of $x$'s for which this event holds. By the approximate eigenvalue condition on $x$ we have that
    \[
    \sum\limits_{x\in X(E)}\mu_{p}(x) \card{\cExpect{({\bf y},{\bf v})\sim \mathbb{D}(\rho p,p)}{{\bf v}=x}{f({\bf y})}-\lambda}\leq \eta_2,
    \]
    implying (since in this case $\mu_{p}(X)\geq t/2$) that $\cExpect{{\bf x},{\bf y}}{{\bf x}\in X(E)}{f({\bf y})}\geq \lambda - \frac{\eta_2}{t/2}$.
    By the approximate eigenvalue condition on ${\bf x'(i)}$ we have
    \[
    \sum\limits_{x\in X(E)}\mu_{p}(x'(i)) \cExpect{({\bf y},{\bf v})\sim \mathbb{D}(\rho p,p)}{{\bf v}=x}{f({\bf y'(i)})} \leq \eta_2,
    \]
    and since $\mu_p(x'(i))\geq \zeta(1-\zeta)\mu_p(x)$, we conclude that $\cExpect{{\bf x},{\bf y'(i)}}{{\bf x}\in X(E)}{f({\bf y'(i)})}\leq \frac{\eta_2}{\zeta(1-\zeta)t/2}$.
    Therefore we get that
    \begin{align*}
    \lambda - (\sqrt{m} +1)\frac{\eta_2}{\zeta(1-\zeta)t/2}
    &\leq \cExpect{{\bf x},{\bf y},{\bf y'(i)}\text{'s}}{{\bf x}\in X(E)}{f({\bf y})-\sum\limits_{i\in A}{f({\bf y'(i)})}}\\
    &\leq \cProb{{\bf x},{\bf y},{\bf y'(i)}\text{'s}}{{\bf x}\in X(E)}{{\bf y}\neq {\bf y'(i)} ~\forall i\in A}\\
    &=\Prob{{\bf z}}{{\bf z}_i = 1 ~\forall i\in A}= 2^{-\sqrt{m}}\leq \lambda/2,
    \end{align*}
    where in the second transition we used the fact that $f$ is bounded between $0$ and $1$. Rearranging and plugging
    in the expression for $t$ yields a contradiction for the choice of $\eta_2$.

   \paragraph{Case (b).}
   Let $E_0$ be the event that $g({\bf x}) = 0$ and all of $A$'s coordinates are sensitive on ${\bf x}$. We would like to identify a large subevent $E \subseteq E_0$ such that ${\bf x}_1 = {\bf x}_2 = 0$ and $g(1,1,{\bf x}_3,\dots,{\bf x}_n) = 1$.

   Let $E'\subseteq E_0$ be the event that ${\bf x}_1 = 1$ or ${\bf x}_2 = 1$.
   For any $x$ satisfying the event $E'$, we have that $(x'(1),x)$ or $(x'(2),x)$
   is a negatively influential edge in direction $i=1$ or $i=2$. It follows that
   $\Prob{}{E'}\leq I_1^{-}[g]+I_2^{-}[g]$, and since $\eta_2\leq \eta_1$ we have by Claim~\ref{claim:low_inf_neg_smallrho}
   that $I_1^{-}[g],I_2^{-}[g]\leq \tau/12$. Therefore $\Prob{}{E'}\leq \tau/6$.

   Let $E'' \subseteq E_0 \setminus E'$ be the event that $g({\bf x}'(1,2)) = 0$, where ${\bf x}'(1,2) = (1,1,{\bf x}_3,\dots,{\bf x}_n)$. For any $x$ satisfying the event $E''$, we have that $(x'(1),x'(1,2))$ is a negatively influential edge in direction $i=2$. It follows that $\Prob{}{E''} \leq I_2^-[g]$. As above, $\Prob{}{E''} \leq \tau/12$. It follows that the probability of the event $E := E_0 \setminus (E' \cup E'')$ is at least $\tau/2 - \tau/4 = \tau/4$.

   Write inputs $x\in\power{n}$ as $x = (b_1,b_2,\beta)$ for
   $b_1,b_2\in\power{}$ and $\beta\in\power{n-2}$, and let
   $B = \sett{\beta}{(0,0,\beta)\in E}$; then $\mu_p(B)\geq \tau/4$.
   Let ${\bf z}\sim\mu_{\rho}^{n}$ and write ${\bf z}=({\bf z(1)},{\bf z(2)})$, where ${\bf z(1)}\in\power{2}$ and ${\bf z(2)}\in\power{n-2}$,
   and denote ${\bf y} = \bm{\beta}\land {\bf z(2)}$.
   Using the approximate eigenvector condition on points $(0,0,\beta)$ for $\beta\in B$
   we get that $\cExpect{\bm{\beta},{\bf z}, {\bf y}}{\bm{\beta}\in B}{f(0,0,{\bf y})}\leq \frac{\eta_2}{\tau/4}$,
   and by the approximate eigenvector condition on points $(1,0,\beta)$ for $\beta\in B$ we get that
   $\cExpect{\bm{\beta},{\bf z},{\bf y}}{\bm{\beta}\in B}{f((1,0)\land {\bf z(1)}, {\bf y})}\geq \lambda - \frac{\eta_2}{\zeta(1-\zeta)\tau/4}$.
   Note that in case $z(1)_1 = 0$, the distribution of $((1,0)\land {\bf z(1)}, {\bf y})$ is identical to the distribution of
   $(0,0,{\bf y})$, and thus the expected value of $f$ on these points is at most $\frac{\eta_2}{\tau/4}$. We conclude that
   $\Prob{}{{\bf z(1)_1}=1}\cExpect{\bm{\beta},{\bf z},{\bf y}}{\bm{\beta}\in B}{f(1,0, {\bf y})}\geq \lambda - 2\frac{\eta_2}{\zeta(1-\zeta)\tau/4}$,
   which, since $z(1)_1=1$ with probability $\rho$, implies that
   $\cExpect{\bm{\beta},{\bf z},{\bf y}}{\bm{\beta}\in B}{f(1,0, {\bf y})}\geq \frac{\lambda}{\rho} - 2\frac{\eta_2}{\rho\zeta(1-\zeta)\tau/4}$. Analogously,
   we have $\cExpect{\bm{\beta},{\bf z},{\bf y}}{\bm{\beta}\in B}{f(0,1, {\bf y})}\geq \frac{\lambda}{\rho} - 2\frac{\eta_2}{\rho\zeta(1-\zeta)\tau/4}$.

   We show a lower bound and an upper bound on $\Expect{\bm{\beta}}{\Tdown{p}{\rho p} f(1,1,\bm{\beta})}$, which together give a contradiction.
   By definition and non-negativity of $f$ it is equal to
   \begin{align*}
   \cExpect{\bm{\beta}}{\bm{\beta}\in B}{\Expect{{\bf z}}{f((1,1)\land{\bf z(1)},\bm{\beta}\land {\bf z(2)})}}
   &\geq
   \Prob{{\bf z(1)}}{{\bf z(1)} = (1,0)}
   \cdot \cExpect{\bm{\beta},{\bf y}}{\bm{\beta}\in B}{f(1,0,{\bf y})}\\
   &+ \Prob{{\bf z(1)}}{{\bf z(1)} = (0,1)}
   \cdot \cExpect{\bm{\beta},{\bf y}}{\bm{\beta}\in B}{f(0,1,{\bf y})}.
   \end{align*}
   The first expression is equal to
   $\rho(1-\rho)\cdot \cExpect{\bm{\beta},{\bf y}}{\bm{\beta}\in B}{f(1,0,{\bf y})}\geq \rho(1-\rho)(\frac{\lambda}{\rho} - 2\frac{\eta_2}{\rho\zeta(1-\zeta)\tau/4})$,
   and we have the same lower bound for the second expression. Hence we get
   $\Expect{\bm{\beta}}{\Tdown{p}{\rho p} f(1,1,\bm{\beta})}\geq 2(1-\rho)\lambda - 4\frac{(1-\rho)\eta_2}{\zeta(1-\zeta)\tau/4}$,
   which by the assumption that $\rho \leq \half - \zeta$ and choice of $\eta_2$ is at least $(1+2\zeta)\lambda - \zeta^2 \geq (1+\zeta)\lambda$.

   On the other hand, using the approximate eigenvector condition on points of the form $(1,1,\beta)$ for $\beta\in B$, we get
   \[
    \sum\limits_{\beta\in B}\mu_{p}(1,1,\beta) \card{\Tdown{p}{\rho p} f(1,1,\beta) - \lambda g(1,1,\beta)}\leq \eta_2.
   \]
   As $\mu_{p}(1,1,\beta)\geq \zeta^2(1-\zeta)^2\mu_p(\beta)$ and $\mu_p(B)\geq \tau/4$, we get that
   \[
   \cExpect{\bm{\beta}}{\bm{\beta}\in B}{\card{\Tdown{p}{\rho p} f(1,1,\bm{\beta}) - \lambda g(1,1,\bm{\beta})}}
   \leq \frac{\eta_2}{\zeta^2(1-\zeta)^2\tau/4}\leq \lambda\zeta/2,
   \]
   which by $g\leq 1$ implies $\cExpect{\bm{\beta}}{\bm{\beta}\in B}{\Tdown{p}{\rho p} f(1,1,\bm{\beta})}\leq (1+\zeta/2)\lambda$, and contradiction.
\end{proof}

We can now prove the approximation by junta result we will need.
The proof is the same as the proof of Lemma~\ref{lem:strong_junta_approx} except that we use
the above replacement instead of Claim~\ref{claim:low_inf_gen}, and is omitted.
\begin{lemma}\label{lem:strong_junta_approx_smallrho}
  For every $\zeta>0$ there is $J\in\mathbb{N}$ such that for all $\eps>0$ there is $\eta >0$ such that
  the following holds for all $\lambda\in[\zeta,1]$, $p\in[\zeta,1-\zeta]$
  and $\rho\in[0,1/2-\zeta]$. If $f\colon(\power{n},\mu_{\rho p})\to[0,1]$, $g\colon(\power{n},\mu_{p})\to\power{}$ satisfy
  $\norm{\Tdown{p}{\rho p} f - \lambda g}_1\leq \eta$, then
  $g$ is $\eps$-close to a $J$-junta.
\end{lemma}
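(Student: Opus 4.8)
The plan is to mimic the proof of Lemma~\ref{lem:strong_junta_approx} essentially line for line, with the sole substitution of Claim~\ref{claim:low_inf_gen_smallrho} in place of Claim~\ref{claim:low_inf_gen}. The argument is a three-step Fourier-analytic scheme. First, for a suitable constant $m = m(\zeta)$, one shows that every $m$-variable influence $I_M[g]$ is tiny. Second, using Lemma~\ref{lem:vanishing_tail}, one shows that $g$ has a negligible Fourier tail $W_{\geq k}[g]$ for an appropriately large constant $k$. Third, one combines these via the boosted Bourgain theorem (Lemma~\ref{lem:Bourgain_gen}, in the relevant $p$-biased form) to deduce that $W_{\geq m}[g]$ is small, and then invokes the Kindler--Safra theorem (Theorem~\ref{thm:KS}), whose junta size $J = J(m,\zeta)$ depends only on $m$ and $\zeta$ and not on $\eps$, to conclude that $g$ is $\eps$-close to a $J$-junta.

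Concretely, I would fix $\zeta$, take $m = m(\zeta)$ from Claim~\ref{claim:low_inf_gen_smallrho}, and set $J = J(m,\zeta)$. Given $\eps > 0$, let $\delta$ be the closeness parameter produced by Theorem~\ref{thm:KS} for $m$ and $\eps$. Choose $k = \Theta_\zeta(\log(1/\delta))$ large enough that $\rho^{k} \leq (1/2-\zeta)^{k}$ is much smaller than the threshold $\delta/(2C\sqrt{k}\log^{1.5}(k))$ appearing on the right-hand side of the bound in Lemma~\ref{lem:Bourgain_gen}; this is the only place the hypothesis $\rho \leq 1/2 - \zeta$ is used, and only to make this bound quantitative. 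Then take $\tau = \tau(\delta, m, k)$ from Lemma~\ref{lem:Bourgain_gen} and $\eta_2 = \eta_2(\zeta, m, \tau)$ from Claim~\ref{claim:low_inf_gen_smallrho}, and set $\eta = \min(\eta_2, \eta_3)$, where $\eta_3$ is chosen small enough (using $\lambda \geq \zeta$) that $2\lambda^{-2}(\eta_3 + \rho^{k})$ stays below the same threshold.

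With these parameters in hand: Claim~\ref{claim:low_inf_gen_smallrho} gives $I_M[g] \leq \tau$ for all $M$ with $|M| = m$; Lemma~\ref{lem:vanishing_tail} gives $W_{\geq k}[g] \leq 2\lambda^{-2}(\eta + \rho^{k})$, which by the choice of parameters is below the threshold in Lemma~\ref{lem:Bourgain_gen}. Reading that lemma contrapositively, $W_{\geq m}[g] \geq \delta$ would force $W_{\geq k}[g]$ to exceed the threshold, a contradiction; hence $W_{\geq m}[g] < \delta$. Theorem~\ref{thm:KS} then produces a $J$-junta $\eps$-close to $g$, finishing the proof.

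The main point --- which is really a bookkeeping matter, since it is dispatched entirely by Claim~\ref{claim:low_inf_gen_smallrho} --- is that in the regime $\rho \leq 1/2 - \zeta$ one bounds the $m$-variable influences of $g$ \emph{without} any monotonicity or almost-monotonicity hypothesis on $f$. In the coupled downward-walk argument behind Claim~\ref{claim:low_inf_gen_smallrho}, both Case (a) and Case (b) reach a contradiction using only that $f$ is $[0,1]$-valued, because when $\rho$ is bounded below $1/2$ there is enough excess noise that the walks from a highly sensitive vertex and from its $\sqrt{m}$ neighbours collide with high enough probability. This is precisely what replaces Claim~\ref{claim:low_inf_gen} --- which, in the $\rho = 1/2$ case, genuinely required $\max_i I_i^{-}[f] \leq \eta_2$ --- and it is why no almost-monotonicity assumption appears in the statement of Lemma~\ref{lem:strong_junta_approx_smallrho}. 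Everything else in the proof is identical to that of Lemma~\ref{lem:strong_junta_approx}.
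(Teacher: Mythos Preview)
Your proposal is correct and matches the paper's own proof exactly: the paper simply states that the proof of Lemma~\ref{lem:strong_junta_approx_smallrho} is identical to that of Lemma~\ref{lem:strong_junta_approx}, with Claim~\ref{claim:low_inf_gen_smallrho} substituted for Claim~\ref{claim:low_inf_gen}, and is omitted. One small expositional quibble: the hypothesis $\rho\leq 1/2-\zeta$ is not really needed to bound $\rho^k$ (any $\rho$ bounded away from $1$ would do there); its essential role is inside Claim~\ref{claim:low_inf_gen_smallrho}, specifically in Case~(b), which you correctly note later.
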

%

\subsection{The case $n$ is constant}
This section is devoted for the proof of the following lemma.
\begin{lemma}\label{lemma:main_exact_smallrho}
  For any $\zeta > 0$ and $n\in\mathbb{N}$
  there exists $\eta_0>0$ such that the following holds for all $\lambda \in[\zeta,1]$,
  $p\in[\zeta,1-\zeta]$, $\rho\in[\zeta,1/2-\zeta]$
  and $0<\eta\leq\eta_0$.
  If $\norm{\Tdown{p}{\rho p} f - \lambda g}_{\infty}\leq \eta$
  then:
  \begin{itemize}
    \item Either $g$ is constant or $g$ is an AND function of width $r$ where $r\leq \ceil{\log(2/\zeta)}$.
    \item $\norm{f - 2^r\lambda\cdot g}_{\infty}\leq \zeta^{-3n^2} \eta$.
  \end{itemize}
\end{lemma}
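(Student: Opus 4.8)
The plan is to mimic the proof of Lemma~\ref{lemma:main_exact} (the $\rho = 1/2$ case) but to exploit the extra slack afforded by $\rho < 1/2$, which forces the AND-OR structure to collapse into a plain AND. As before, identify vectors in $\power{n}$ with subsets of $[n]$, and write $\T = \Tdown{p}{\rho p}$, so that $\T f(B) = \sum_{C \subseteq B} \rho^{|C|}(1-\rho)^{|B| - |C|} f(C)$. Fix $\zeta, n$ and choose $\eta_0$ to be a sufficiently small power of $\zeta$ (something like $\zeta^{n^2}$ times an absolute constant); this will be calibrated at the end so that all the ``contradiction to the choice of $\eta$'' steps go through.

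First I would establish that $g$ is monotone, by the same argument as Claim~\ref{claim:g_mono_rho_half}: if $g(A) = 1$ and $g(B) = 0$ for an edge $A \subsetneq B$, then $\T f(B) \leq \eta$; since the walk from $B$ avoids the extra coordinate with probability $1-\rho \geq \zeta$, non-negativity of $f$ gives $\T f(A) \leq \eta/(1-\rho) \leq \eta/\zeta$, contradicting $\T f(A) \geq \lambda - \eta \geq \zeta - \eta$. Next, for a minterm $M$ of $g$, I would pin down $f(M)$: since $g(M) = 1$ we get $|\T f(M) - \lambda| \leq \eta$, and since every proper subset $B \subsetneq M$ has $g(B) = 0$ hence $\T f(B) \leq \eta$, non-negativity of $f$ forces $\sum_{A \subsetneq M} \rho^{|A|}(1-\rho)^{|M|-|A|} f(A)$ to be tiny (bounded via picking the heaviest codimension-one subset $B$ as in Claim~\ref{claim:value_in_minterm}); it follows that $|f(M)\cdot \rho^{|M|} - \lambda| \leq \mathrm{poly}(1/\zeta)^{n}\eta$, i.e.\ $f(M) \approx \rho^{-|M|}\lambda$. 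The bound $\rho^{-|M|}\lambda = f(M) \pm \mathrm{small} \leq 1 + \mathrm{small}$ together with $\rho \leq 1/2 - \zeta \leq 1/2$ and $\lambda \geq \zeta$ gives $|M| \leq \log_{1/\rho}(2/\zeta) \leq \ceil{\log(2/\zeta)}$, so $r := |M|$ is bounded.

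The heart of the argument — and the step where $\rho < 1/2$ matters — is showing that $g$ has a \emph{unique} minterm. Here is where I expect the main obstacle, and where the proof departs from the AND-OR analysis. Suppose $M_1, M_2$ are distinct minterms. The key inequality is an analog of Proposition~\ref{prop:below_1}: for disjoint $B, Z$ with $g(B) = 1$, applying the approximate-eigenvalue condition at all $B \cup W$ for $W \subseteq Z$ and Möbius-inverting over $Z$ gives $\big|\sum_{A \subseteq B} \rho^{|A|}(1-\rho)^{|B|-|A|} \rho^{|Z|} f(A \cup Z) - \lambda\big| \leq \mathrm{poly}(1/\zeta)^{|B| + |Z|}\eta$ — because $g(B \cup W) = 1$ for all such $W$ by monotonicity, and the alternating sum $\sum_W (-1)^{|Z\setminus W|}(1-\rho)^{|Z\setminus W|}\rho^{|W|}\cdots$ telescopes. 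Now take $B = M_1$, $Z = M_2 \setminus M_1$. The term $A = M_1 \cap M_2$ contributes $\rho^{|M_1 \cap M_2|}(1-\rho)^{|M_1 \setminus M_2|}\rho^{|Z|} f(M_2) \approx \rho^{|M_1 \cap M_2| + |Z| - |M_2|}(1-\rho)^{|M_1\setminus M_2|}\lambda = (1-\rho)^{|M_1 \setminus M_2|}\lambda$ to a quantity which is $\approx \lambda$; since $M_1 \setminus M_2 \neq \emptyset$ and $1-\rho < 1$, and all other terms are non-negative, this is only consistent if... — wait, that direction is fine, so one must instead use that $f(M_2) \approx \rho^{-|M_2|}\lambda$ is \emph{large}: the single term $A = M_1\cap M_2$ already contributes $(1-\rho)^{|M_1\setminus M_2|}\lambda \geq \zeta^{n}\lambda$, and combined with the fact that $g(M_1) = 1$ implies $M_1$ itself is a subset showing up (the term $A = M_1$, giving $\rho^{|M_1| + |Z| - |M_2|}(1-\rho)^{0}\lambda\rho^{-|M_1|}\cdots$), two separated large terms overshoot $\lambda(1 + \mathrm{something})$, a contradiction — this is exactly the mechanism of the ``two points with $f$-value close to $2^m\lambda$'' argument in Corollary~\ref{corr:prop_reformulate}, except now even one ``extra'' vertex below $M_1$ relative to $M_2$ kills it because the weight $(1-\rho)$ on that vertex is bounded below, whereas at $\rho = 1/2$ the XOR structure absorbed it. I would formalize this by proving the analog of Claim~\ref{claim:f_2_val} (every $f(B)$ is close to $0$ or close to $\rho^{-r}\lambda$) and then Corollary~\ref{corr:prop_reformulate} (for $g(B) = 1$, exactly one $A \subseteq B$ has $f(A \cup Z)$ large), and finally observing that if $M_1 \neq M_2$ then looking at $X(M_1, M_2 \setminus M_1)$ yields two distinct large-$f$ vertices (namely $M_1 \cup (M_2\setminus M_1)$ itself and whatever minterm-witness sits below), contradicting uniqueness.

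Once $g$ has a unique minterm $T$ of size $r$, monotonicity gives $g = {\sf AND}_T$ (or $g \equiv 0$, or $g \equiv 1$ when $T = \emptyset$). Finally I would recover $f$ by inverting $\T$: the operator $\T = \Tdown{p}{\rho p}$ on functions on $\power{n}$ is lower-triangular with nonzero diagonal (diagonal entry $\rho^{|B|}$ at $B$), hence invertible, and $\|\T^{-1}\|_{\infty \to \infty}$ is bounded by $\zeta^{-O(n^2)}$ (each Neumann-series / triangular-solve step loses a factor $\mathrm{poly}(1/\zeta)^n$ and there are $\leq n$ levels — this is the source of the $\zeta^{-3n^2}$ in the statement). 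Writing $\psi = \T^{-1}(\lambda g)$, we get $\|f - \psi\|_\infty = \|\T^{-1}(\T f - \lambda g)\|_\infty \leq \zeta^{-O(n^2)}\eta$. It remains to check $\T(2^r\lambda \cdot g) $... no: we need $\psi = \rho^{-r}\lambda\, g$, i.e.\ $\T({\sf AND}_T) = \rho^{r}{\sf AND}_T$, which is exactly the stated eigenvector fact for $\Tdown{p}{\rho p}$ (with eigenvalue $\rho^{|S|}$) recalled at the start of Section~\ref{sec:main-results}; hence $\psi = \rho^{-r}\lambda\,{\sf AND}_T = \rho^{-r}\lambda\, g$ and $\|f - \rho^{-r}\lambda\, g\|_\infty \leq \zeta^{-O(n^2)}\eta \leq \zeta^{-3n^2}\eta$, with the constant in the exponent absorbed by choosing $\eta_0$ small. (Note the statement writes $2^r\lambda$; since $p, \rho$ are not both $1/2$ here this should read $\rho^{-r}\lambda$ — I would state it as $\rho^{-r}\lambda\cdot g$ to be safe, which is what Lemma~\ref{lem:strong_junta_approx_smallrho} and Theorem~\ref{thm:main_2func_smallrho} ultimately use.) The main obstacle, as flagged, is the uniqueness-of-minterm step; everything else is a routine but bookkeeping-heavy adaptation of Section~\ref{sec:exact_rho_half} with $2^{-1}$ replaced by $\rho$ and $(1-\rho)$ weights tracked carefully.
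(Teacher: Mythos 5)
Your high-level plan (mimic Section~\ref{sec:exact_rho_half}, exploit the slack from $\rho<1/2$ to collapse the AND-OR to a plain AND) is the right one, and your bookkeeping for monotonicity, the minterm value $f(M)\approx\rho^{-|M|}\lambda$, the bound $r\leq\ceil{\log(2/\zeta)}$, and the inversion idea at the end are all sound. You also correctly flagged that the statement's $2^r\lambda$ should read $\rho^{-r}\lambda$: the paper's own proof derives $\|f-\rho^{-|T|}\lambda\cdot{\sf AND}_T\|_\infty\leq\rho^{-3n^2}\eta$, consistent with Theorem~\ref{thm:main_2func_smallrho}.

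The gap is in the uniqueness-of-minterm step, and it is a genuine one. You want to port Proposition~\ref{prop:below_1}/Corollary~\ref{corr:prop_reformulate} and then argue that in $X(M_1,M_2\setminus M_1)$ there are two ``large'' vertices, namely $M_2$ and $M_1\cup M_2$, contradicting uniqueness. But $f(M_1\cup M_2)$ being large is precisely what you have \emph{not} established, and your own proposed analog of Corollary~\ref{corr:prop_reformulate} would predict the opposite: the unique large $A^\star\subseteq M_1$ is $M_1\cap M_2$ (witnessed by $f(M_2)$), which would force $f(M_1\cup M_2)\approx0$. If you instead try to overshoot $\T f(M_1\cup M_2)$ using just the two minterm values, the weighted contributions are $(1-\rho)^{|M_2\setminus M_1|}\lambda + (1-\rho)^{|M_1\setminus M_2|}\lambda$, which equals $2(1-\rho)\lambda>\lambda$ only in the case $|M_1\setminus M_2|=|M_2\setminus M_1|=1$ and is $<\lambda$ already for symmetric difference $4$, so there is no general contradiction from two points alone.

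What the paper actually proves first is the analog of your missing ingredient, stated as Claim~\ref{claim:value_above_minterm_rhosmall}: for $M$ a minterm of \emph{minimal} size, $|f(M')-\lambda\rho^{-|M|}|\leq\rho^{-3|M'|^2}\eta$ for \emph{every} $M'\supseteq M$. The induction (on $|M'\setminus M|$) needs no M\"obius inversion; it splits $\T f(M')$ into the diagonal term, the terms $A\supsetneq M$, and the terms $A\not\supseteq M$, and the inductive step shows $g(A)=0$ for all $A\not\supseteq M$ because otherwise a minterm $A^\star\in T_2$ of size $\geq|M|$ contributes at least $(1-\rho)^{|M'\setminus M|}\lambda\geq(\rho^{|M'\setminus M|}+\zeta^n)\lambda$ (using $\rho\leq1/2-\zeta$ and $|A^\star|\geq|M|$), overshooting the available mass. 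This is exactly where $\rho<1/2$ is used; your proposal invokes $\rho<1/2$ only in a vague ``the $(1-\rho)$ weight is bounded below'' remark, which is not enough. With this claim in hand, the two-minterm contradiction is one line: in $\T f(M_1\cup M_2)$, the event $\{A\supseteq M_1\}$ has probability $\rho^{|M_1|}$ and each such $A$ has $f(A)\approx\rho^{-|M_1|}\lambda$, accounting for $\approx\lambda$ already; the disjoint single point $A=M_2$ then adds at least $\rho^n\lambda$, so $\T f(M_1\cup M_2)>\lambda+\eta$. You should replace the Corollary~\ref{corr:prop_reformulate}-style uniqueness detour with this direct decomposition and supply the induction behind Claim~\ref{claim:value_above_minterm_rhosmall}.
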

The proof is similar to the proof in Section~\ref{sec:exact_rho_half}, but somewhat simpler. As there,
it will be more convenient for us to identify $\power{n}$ with subsets of $[n]$ in this section, and think of
the functions $f,g$ as being functions over subsets of $[n]$. Throughout this section we denote $\T = \Tdown{p}{\rho p}$.

Fix $\zeta,n$, and choose $\eta = \zeta^{3n^2+3n+4}$. Let
$f,g$ be functions as in the statement of the lemma.
\begin{claim}
  $g$ is monotone.
\end{claim}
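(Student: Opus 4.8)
The plan is to repeat, almost verbatim, the argument of Claim~\ref{claim:g_mono_rho_half}; the only change needed for general $\rho$ is that the factor $2$ (which there came from $\rho=1/2$) gets replaced by $1/(1-\rho)$, and this is still a bounded constant because $\rho\leq\half-\zeta$. So I would argue by contradiction: if $g$ is not monotone, there is an edge $(A,B)$ of the hypercube with $A\subsetneq B$, $\card{B\setminus A}=1$, say $B\setminus A=\set{i}$, such that $g(A)=1$ and $g(B)=0$. From $\norm{\T f-\lambda g}_\infty\leq\eta$ we immediately get $\T f(B)\leq \lambda g(B)+\eta=\eta$ and $\T f(A)\geq \lambda g(A)-\eta=\lambda-\eta$.

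The only thing to check is that these two facts are incompatible, and this follows from non-negativity of $f$. Writing $\T f(B)=\Expect{{\bf C}}{f({\bf C})}$, where ${\bf C}$ is obtained from $B$ by keeping each element independently with probability $\rho$, I would condition on the event $i\notin{\bf C}$, which has probability $1-\rho$; under this conditioning the distribution of ${\bf C}$ is exactly that of a $\rho$-biased random subset of $A$. Hence, since $f\geq 0$,
\[
\T f(B)=\Expect{{\bf C}}{f({\bf C})}\ \geq\ (1-\rho)\,\cExpect{{\bf C}}{i\notin{\bf C}}{f({\bf C})}\ =\ (1-\rho)\,\T f(A),
\]
so that $\T f(A)\leq \T f(B)/(1-\rho)\leq \eta/(1-\rho)\leq 2\eta$ (using $\rho<\half$). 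Combining with $\T f(A)\geq\lambda-\eta$ gives $\lambda\leq 3\eta$, contradicting $\lambda\geq\zeta$ and the choice $\eta=\zeta^{3n^2+3n+4}$ (here $\zeta<\tfrac14$ since the range $\rho\in[\zeta,\half-\zeta]$ is nonempty, so $3\eta<\zeta$). This forces $g$ to be monotone.

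There is essentially no obstacle here — it is a routine adaptation. The one point worth keeping in mind is that the multiplicative loss incurred in passing from the value of $\T f$ at $B$ down to its value at $A$ is governed by the constant $1/(1-\rho)$, which is legitimate precisely because $\rho$ is bounded away from $1$; in fact for this particular claim only $\rho\leq 1-\zeta$ is needed, not the full hypothesis $\rho\leq\half-\zeta$. After this, as in Section~\ref{sec:exact_rho_half}, one can speak of the minterms of $g$ and proceed to pin down the value of $f$ on them, but that is the content of the subsequent claims, not of this one.
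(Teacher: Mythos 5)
Your argument is exactly the one the paper intends: the paper's proof of this claim literally says to repeat Claim~\ref{claim:g_mono_rho_half} verbatim using $\Pr[i\notin {\bf C}] = 1-\rho \geq \tfrac12$, which is precisely the conditioning step you carry out, with the factor $2$ replaced by $1/(1-\rho)\leq 2$. Correct, and nothing else to add.
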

\begin{proof}
  Repeat the proof of Claim~\ref{claim:g_mono_rho_half} verbatim,
  using the fact that $\Prob{}{i\not\in C} = 1-\rho\geq \half$.
\end{proof}
If $g\equiv 0$, we are done proving the structure for $g$ (and we prove the
structure of $f$ below), and otherwise $g$ has a minterm.
The following claim shows that the value of $f$ on a minterm must
be very close to a specific value (similarly to Claim~\ref{claim:value_in_minterm}).
\begin{claim}\label{claim:value_in_minterm_rhosmall}
  If $M$ is a minterm of $g$, then $\card{f(M) - \lambda \rho^{-\card{M}}}\leq \rho^{-3\card{M}}\eta$.
\end{claim}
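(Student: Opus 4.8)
The plan is to mirror the proof of Claim~\ref{claim:value_in_minterm}, the only change being that a random subset of $B$ is now $\rho$-biased rather than uniform. First I would rewrite $\T$ in subset language: identifying vectors with their supports, for $B\subseteq[n]$ we have
\[
\T f(B) = \Expect{{\bf z}\sim\mu_\rho}{f(B\land {\bf z})} = \sum_{C\subseteq B}\rho^{|C|}(1-\rho)^{|B|-|C|} f(C),
\]
since $B\land {\bf z} = C$ exactly when ${\bf z}$ is $1$ on $C$ and $0$ on $B\setminus C$. Since $g(M)=1$, the hypothesis gives $\card{\T f(M)-\lambda}\le\eta$; isolating the top term $C=M$, which carries coefficient $\rho^{|M|}$, and using the triangle inequality yields
\[
\bigl\lvert\rho^{|M|} f(M) - \lambda\bigr\rvert \;\le\; \eta + \sum_{C\subsetneq M}\rho^{|C|}(1-\rho)^{|M|-|C|} f(C).
\]

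The one (small) obstacle is bounding this error sum over proper subsets. Here I would use minimality of $M$ together with non-negativity of $f$: given $C\subsetneq M$, choose any $i\in M\setminus C$ and set $B_C = M\setminus\{i\}$, so $C\subseteq B_C\subsetneq M$ and hence $g(B_C)=0$, giving $\T f(B_C)\le\eta$. Since every summand of $\T f(B_C)$ is non-negative, in particular $\rho^{|C|}(1-\rho)^{|B_C|-|C|}f(C)\le\eta$, i.e.\ $\rho^{|C|}(1-\rho)^{|M|-1-|C|}f(C)\le\eta$. Multiplying by $(1-\rho)\le1$ gives $\rho^{|C|}(1-\rho)^{|M|-|C|}f(C)\le\eta$, and summing over the at most $2^{|M|}$ proper subsets $C\subsetneq M$ bounds the error sum by $2^{|M|}\eta$.

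Combining the two displays yields $\card{\rho^{|M|}f(M)-\lambda}\le(1+2^{|M|})\eta\le 2^{|M|+1}\eta$, hence $\card{f(M)-\lambda\rho^{-|M|}}\le 2^{|M|+1}\rho^{-|M|}\eta$. To finish I would convert constants using $\rho\le\half-\zeta<\half$, so $\rho^{-1}>2$ and therefore, for $|M|\ge1$,
\[
2^{|M|+1}\rho^{-|M|} \;\le\; 2^{|M|+1}\cdot 2^{-|M|}\rho^{-2|M|}\;\le\;\rho^{-2|M|}\cdot 2\;\le\;\rho^{-2|M|}\rho^{-1}\;\le\;\rho^{-3|M|},
\]
while for $M=\emptyset$ the claim reduces to $\card{f(\emptyset)-\lambda}\le\eta$, which is immediate. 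I do not expect any genuine difficulty; the only thing requiring attention is correctly tracking the powers of $\rho$ and $1-\rho$ when passing from the bound on $\T f(B_C)$ to the single term $f(C)$, and recalling the edge case $M=\emptyset$ (i.e.\ $g\equiv1$).
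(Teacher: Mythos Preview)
Your proof is correct and follows essentially the same approach as the paper: isolate the top term $C=M$ in $\T f(M)$ and bound the remainder using $g(B)=0$ for $B\subsetneq M$. The only minor difference is that the paper bounds the entire error sum at once by an averaging argument (finding a single $B$ of size $|M|-1$ with $\sum_{A\subsetneq M}\rho^{|A|}(1-\rho)^{|M\setminus A|}f(A)\le |M|\,\T f(B)\le |M|\eta$), whereas you bound each term separately, picking up an extra factor of $2^{|M|}/|M|$ that is harmless within the stated $\rho^{-3|M|}\eta$ budget.
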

\begin{proof}
  Since $g(M)=1$, we get $\card{\T f(M) - \lambda}\leq \eta$, and since $\T f(M) = \sum\limits_{A\subseteq M}\rho^{\card{A}}(1-\rho)^{\card{M\setminus A}} f(A)$,
  the triangle inequality implies that $\card{f(M) - \rho^{-\card{M}}\lambda}\leq \rho^{-\card{M}}\eta + \rho^{-\card{M}}\sum\limits_{A\subsetneq M} \rho^{\card{A}}({1-\rho})^{\card{M\setminus A}} f(A)$; to finish the proof we upper-bound the last sum by $\card{M}\eta$. Note that for every $A\subsetneq M$, choosing
  ${\bf B}\subsetneq M$ randomly of size $\card{M}-1$, we have that $A\subseteq {\bf B}$ with probability at least $1/\card{M}$,
  hence by the non-negativity of $f$ there is $B$ of size $\card{M}-1$ such that
  \[
  \sum\limits_{A \subsetneq M} \rho^{\card{A}}({1-\rho})^{\card{M\setminus A}}f(A)\leq \card{M}\sum\limits_{A\subseteq B} \rho^{\card{A}}({1-\rho})^{\card{M\setminus A}}f(A)
  \leq \card{M}\T f(B),
  \]
  and we fix such $B$.
  Since $B\subsetneq M$ and $M$ is a minterm of $g$, we have that $g(B)=0$, implying that
  $\T f(B)\leq \eta$ and therefore $ \sum\limits_{A \subsetneq M} \rho^{\card{A}}({1-\rho})^{\card{M\setminus A}}f(A)\leq \card{M} \eta$.
\end{proof}

\begin{claim}\label{claim:value_in_zero_rhosmall}
  If $g(M) = 0$, then $f(M)\leq \rho^{-\card{M}}\eta$.
\end{claim}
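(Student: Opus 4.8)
The statement is the ``easy half'' companion to Claim~\ref{claim:value_in_minterm_rhosmall}: there the value of $f$ on a minterm was pinned down because $g=1$ forces $\T f$ close to $\lambda$, while here $g(M)=0$ forces $\T f(M)$ to be tiny, and non-negativity of $f$ does the rest. So the plan is a one-step argument. First I would recall that in the subset language (as set up just before these claims), for every $B\subseteq[n]$ the operator expands as
\[
\T f(B) \;=\; \Expect{{\bf C}\subseteq B}{f({\bf C})} \;=\; \sum_{A\subseteq B} \rho^{\card{A}}(1-\rho)^{\card{B\setminus A}} f(A),
\]
where ${\bf C}$ is obtained by keeping each element of $B$ independently with probability $\rho$ (this is exactly the definition of $\Tdown{p}{\rho p}$ after identifying $x\in\power{n}$ with its support, as used in Claim~\ref{claim:value_in_minterm_rhosmall}).

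Next, from the hypothesis $\norm{\T f - \lambda g}_\infty\leq\eta$ and $g(M)=0$ I get $\T f(M)\leq \lambda g(M)+\eta = \eta$. Since $f$ is $[0,1]$-valued and in particular non-negative, every summand in the expansion of $\T f(M)$ is non-negative, so dropping all terms except $A=M$ yields
\[
\rho^{\card{M}} f(M) \;\leq\; \sum_{A\subseteq M} \rho^{\card{A}}(1-\rho)^{\card{M\setminus A}} f(A) \;=\; \T f(M) \;\leq\; \eta .
\]
Dividing by $\rho^{\card{M}}$ gives $f(M)\leq \rho^{-\card{M}}\eta$, which is the claim. I do not expect any obstacle here: unlike Claim~\ref{claim:value_in_minterm_rhosmall}, there is no need to control a sum of $f$-values over proper subsets (the auxiliary choice of a near-maximal $B$ of size $\card{M}-1$), because we only need an \emph{upper} bound on $f(M)$, for which keeping the single term $A=M$ and discarding the rest suffices by non-negativity of $f$.
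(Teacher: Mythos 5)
Your proof is correct and is essentially identical to the paper's: both isolate the single term $A=M$ in the expansion of $\T f(M)$ via non-negativity of $f$, bound $\T f(M)\leq\lambda g(M)+\eta=\eta$ from the $L_\infty$ hypothesis, and divide by $\rho^{|M|}$. No differences worth noting.
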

\begin{proof}
  We have $\rho^{\card{M}} f(M)\leq \T f(M) \leq \lambda g(M) + \eta = \eta$, and the result follows by rearranging.
\end{proof}

Next we show that the value of $f$ above a minterm of minimal size must also be close to
the same value.
\begin{claim}\label{claim:value_above_minterm_rhosmall}
  Let $M$ be a minterm of $g$ of minimal, and let $M'\supseteq M$.
  Then $\card{f(M') - \lambda \rho^{-\card{M}}}\leq \rho^{-3\card{M'}^2}\eta$.
\end{claim}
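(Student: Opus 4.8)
The plan is to prove Claim~\ref{claim:value_above_minterm_rhosmall} by induction on $\card{M'}$, closely following the structure of the proof of Claim~\ref{claim:f_2_val} in Section~\ref{sec:exact_rho_half}, but adapted to the $\rho$-biased downward walk and using the minimality of $M$. The base case is $\card{M'} = \card{M}$, i.e.\ $M' = M$, which is exactly Claim~\ref{claim:value_in_minterm_rhosmall} (with the weaker bound $\rho^{-3\card{M}} \le \rho^{-3\card{M}^2}$). For the inductive step, suppose $\card{M'} = \card{M} + k$ with $k \ge 1$ and the statement holds for all sets of size $< \card{M'}$ that contain a minterm of minimal size.

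First I would extract from the approximate eigenvector condition a summation identity in the spirit of Proposition~\ref{prop:below_1}, but for the $\rho$-biased walk: writing $Z = M' \setminus M$, an inclusion–exclusion over restrictions of $Z$ (exactly as in Proposition~\ref{prop:below_1}, but with the weights $\rho^{\card{A}}(1-\rho)^{\card{\cdot}}$ and the algebraic identity $\sum_{W \subseteq Z}(-1)^{\card{Z \setminus W}}\rho^{-\card{B \cup W}} = \rho^{-\card{B}}(\rho^{-1}-1)^{\card Z}/\ldots$ replaced by the appropriate binomial computation) should yield
\[
\Bigl| \sum_{A \subseteq M} \rho^{\card{A}} (1-\rho)^{\card{M \setminus A}} f(A \cup Z) \; - \; \lambda \rho^{-\card{M}} \Bigr| \le \rho^{-3\card{M}}\Bigl(\tfrac{1}{\rho}\Bigr)^{\card Z}\eta \cdot (\text{small factor}),
\]
after dividing through by the appropriate power of $\rho$; concretely, the cleanest route is to observe that $\T f(M' ) $, and more generally $\T f((M\cup Z) )$, unfolds into a weighted sum over $A \subseteq M$ of the quantities $f(A \cup Z)$ with positive weights summing to something controlled, and then isolate the $A = M$ term. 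Every $A \subsetneq M$ that appears here satisfies $g(A \cup Z) \in \{0,1\}$, and since $M$ is a \emph{minimal-size} minterm, either $A \cup Z$ contains no minterm (so $g(A\cup Z)=0$ and Claim~\ref{claim:value_in_zero_rhosmall} gives $f(A \cup Z) \le \rho^{-\card{A \cup Z}}\eta$, which is tiny), or $A \cup Z$ does contain a minterm, but then that minterm has size $\ge \card M$ while $\card{A \cup Z} < \card{M'}$, so we may apply the induction hypothesis to conclude $f(A \cup Z)$ is within $\rho^{-3\card{A\cup Z}^2}\eta$ of $\lambda\rho^{-\card M}$ — wait, more carefully, the induction hypothesis requires $A\cup Z$ to contain a \emph{minimal-size} minterm, which need not hold. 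To avoid this gap, I would instead phrase the induction so that the quantity being controlled is $f(S)$ for \emph{any} $S$ with $g(S) = 1$ and $\card S < \card{M'}$, splitting into the two cases ``$f(S)$ close to $0$'' versus ``$f(S)$ close to $\lambda\rho^{-\card M}$'' exactly as Claim~\ref{claim:f_2_val} does — this is the correct analogue and sidesteps the minimality subtlety for intermediate sets.

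With that dichotomy in hand, the inductive step splits into two cases exactly as in Claim~\ref{claim:f_2_val}: (i) if some $A^\star \subsetneq M$ has $f(A^\star \cup Z)$ close to $\lambda \rho^{-\card M}$, then by non-negativity of $f$ we bound $f(M') = f(M \cup Z)$ above by (the sum) $-$ ($f(A^\star \cup Z)$), getting $f(M')$ close to $0$ — but here we must rule this out, since $g(M') = 1$ (as $M' \supseteq M$ and $g$ monotone), and $\T f(M') \ge \lambda - \eta$, forcing $f(M')$ not too small; this contradiction shows we are always in case (ii); (ii) all $A \subsetneq M$ have $f(A \cup Z)$ close to $0$, so $f(M') = f(M\cup Z)$ absorbs essentially all of $\lambda\rho^{-\card M}$ from the summation identity, and the triangle inequality gives $\card{f(M') - \lambda\rho^{-\card M}} \le \rho^{-3\card{M'}^2}\eta$ after bounding the accumulated error terms (a sum of $\le 2^{\card M}$ quantities each of size $\le \rho^{-3(\card{M'}-1)^2}\eta$ times a bounded weight, plus the $O(\rho^{-\card{M'}})\eta$ from the identity itself) — the crude bound $\rho^{-3\card{M'}^2}$ is chosen precisely to dominate all these, just as $4^{\card B^2}$ was in Claim~\ref{claim:f_2_val}.

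The main obstacle I anticipate is purely bookkeeping: getting the inclusion–exclusion identity right with the $\rho$-biased weights $\rho^{\card A}(1-\rho)^{\card{M\setminus A}}$ instead of the uniform $2^{-\card B}$, and then tracking the error constants carefully enough that the final bound is $\rho^{-3\card{M'}^2}\eta$ and that all appeals to ``$\eta$ is small enough'' (via the choice $\eta = \zeta^{3n^2 + 3n + 4}$ made at the start of the section) go through — in particular one needs $\rho^{-3n^2}\eta \ll \lambda$ to derive the contradictions. Since $\rho \ge \zeta$ and all sets have size $\le n$, every factor $\rho^{-(\cdot)}$ is at most $\zeta^{-n}$ or $\zeta^{-n^2}$, and the chosen $\eta$ leaves ample room; I would state the needed inequalities (e.g.\ $\rho^{-3n^2}\eta \le \lambda/2$) once and invoke them as needed. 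The structural/combinatorial content — that minimality of $M$ forces the dichotomy and that case (i) is impossible above a minterm — is exactly as in Section~\ref{sec:exact_rho_half}, so no new idea is required beyond the reweighting.
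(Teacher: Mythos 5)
The base case and the overall inductive structure are right, but the central step of your argument---ruling out case~(i)---is not justified, and the dichotomy you try to set up as a workaround does not actually hold at this stage of the proof. You assert that if $f(M')$ were close to $0$ this would contradict $\T f(M') \ge \lambda - \eta$, ``forcing $f(M')$ not too small.'' That implication is false: $\T f(M')$ is a weighted average of $f$ over all subsets of $M'$, so its value being close to $\lambda$ says nothing directly about $f(M')$. Indeed for $\rho = 1/2$ this exact configuration occurs (e.g.\ $f = x_1 \oplus x_2$, $g = x_1 \lor x_2$, $M' = \{1,2\}$: then $g(M')=1$, $\T f(M') = 1/2 = \lambda$, yet $f(M') = 0$). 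The fact that such a configuration is impossible when $\rho < 1/2$ is precisely the nontrivial content of this claim, and it must come from the minimality of $M$ together with $(1-\rho) > \rho$---not from the bound on $\T f(M')$ alone. Separately, the dichotomy you propose (``for any $S$ with $g(S)=1$, $f(S)$ is close to $0$ or to $\lambda\rho^{-|M|}$'') need not hold at this point in the argument: uniqueness of the minterm (hence equality of all minterm sizes) is only established \emph{after} this claim, and if $g$ had another minterm $M_2$ with $|M_2|>|M|$, then Claim~\ref{claim:value_in_minterm_rhosmall} gives $f(M_2) \approx \lambda\rho^{-|M_2|}$, which is neither close to $0$ nor to $\lambda\rho^{-|M|}$. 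In Section~\ref{sec:exact_rho_half} the analogous dichotomy is available only because it has already been shown there that all minterms have the same size $m$.

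The paper's proof sidesteps both issues by working directly with the expansion $\T f(M') = \sum_{A \subseteq M'} \rho^{|A|}(1-\rho)^{|M'\setminus A|} f(A)$ and splitting the sum into $T_1 = \{A \subsetneq M' : A \supseteq M\}$ and $T_2 = \{A \subseteq M' : A \not\supseteq M\}$. The induction hypothesis is invoked only on $T_1$ (where every $A$ contains $M$, so it applies verbatim with $|A\setminus M| < |M'\setminus M|$), and separately it is shown that $g \equiv 0$ on $T_2$: if some $A \in T_2$ had $g(A) = 1$, it would contain a minterm $A^\star \in T_2$ with $|A^\star| \ge |M|$ by minimality of $M$, and the single term $\rho^{|A^\star|}(1-\rho)^{|M'\setminus A^\star|} f(A^\star) \approx (1-\rho)^{|M'\setminus A^\star|}\lambda \ge (1-\rho)^{|M'\setminus M|}\lambda$ would then strictly exceed the upper bound $\rho^{|M'\setminus M|}\lambda + O(\eta)$ on the entire $T_2$-sum, precisely because $1-\rho \ge \tfrac12 + \zeta > \rho$. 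That use of $\rho < 1/2$---comparing $(1-\rho)^{|M'\setminus M|}$ against $\rho^{|M'\setminus M|}$---is the key mechanism, and it is absent from your proposal. No inclusion--exclusion in the style of Proposition~\ref{prop:below_1} is needed.
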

\begin{proof}
  The proof is by induction on $\card{M'\setminus M}$.
  The base case $M' = M$ is proved in Claim~\ref{claim:value_in_minterm_rhosmall}.
  Let $k\geq 0$, assume the statement holds when $\card{M'\setminus M}\leq k$,
  and prove for $\card{M'\setminus M} = k+1$.

  Let $T_1 = \sett{A\subsetneq M'}{A\supseteq M}$, $T_2 = \sett{A\subseteq M'}{A\not\supseteq M}$
  and write
  \begin{equation}\label{eq3}
  \T f(M') =
  \rho^{\card{M'}} f(M')
  + \sum\limits_{A\in T_1}{\rho^{\card{A}}(1-\rho)^{\card{M'\setminus A}} f(A)}
  + \sum\limits_{A\in T_2}{\rho^{\card{A}}(1-\rho)^{\card{M'\setminus A}} f(A)}.
  \end{equation}
  Our goal is to extract an approximation for $f(M')$, and for
  that we approximate the left-hand side as well as the two sums on the right-hand side.
  For the left-hand side, since $g$ is monotone we have $g(M') = 1$ and so $\card{\T f(M') - \lambda}\leq \eta$.

  For the first sum, for any $A\in T_1$ the induction hypothesis implies that $\card{f(A) - \lambda \rho^{-\card{M}}}\leq \rho^{-3\card{A}^2}\eta
  \leq \rho^{-3(\card{M'}-1)^2}\eta$,
  and therefore
  \[
  \card{\sum\limits_{A\in T_1}{\rho^{\card{A}}(1-\rho)^{\card{M'\setminus A}} f(A)} - \lambda \rho^{-\card{M}}\cdot \Prob{A\sim\mu_{\rho}}{A\in T_1}}
  \leq \rho^{-3(\card{M'}-1)^2}\eta.
  \]
  We also note that $\Prob{A\sim\mu_{\rho}}{A\in T_1} = \rho^{\card{M}}(1-\rho^{\card{M'\setminus M}})$.

  For the second sum, by non-negativity of $f$, plugging the approximations we have so far we get from~\eqref{eq3} that
  \[
    \sum\limits_{A\in T_2}{\rho^{\card{A}}(1-\rho)^{\card{M'\setminus A}} f(A)}
    \leq \rho^{\card{M'\setminus M}}\lambda + (1+\rho^{-3(\card{M'}-1)^2})\eta.
  \]
  We claim that this implies that for every $A\in T_2$ we have $g(A) = 0$. Indeed,
  otherwise there is $A^{\star}\in T_2$ that is a minterm of $g$, and by Claim~\ref{claim:value_in_minterm_rhosmall}
  its contribution to the left-hand side is at least
  \begin{align*}
  \rho^{\card{A^{\star}}}(1-\rho)^{\card{M'\setminus A^{\star}}}(\lambda \rho^{-\card{A^{\star}}} -  \rho^{-3\card{A^{\star}}}\eta)
  &\geq (1-\rho)^{\card{M'\setminus M}}\lambda - \rho^{-2\card{M'}}\eta \\
  &\geq (\rho^{\card{M'\setminus M}} + \zeta^{n})\lambda - \rho^{-2\card{M'}}\eta.
  \end{align*}
  In the first inequality we used $\card{A^{\star}}\geq \card{M}$ (since $M$ is minterm of minimal size),
  and in the second inequality we used $\rho\leq \half - \zeta$. Combining the two inequalities we get that
  $\zeta^{n}\lambda\leq \rho^{-3\card{M'}^2}\eta\leq \zeta^{-3n^2}\eta$, which is a contradiction to the choice of $\eta$.

  We conclude that $g(A) = 0$ for all $A\in T_2$, and by Claim~\ref{claim:value_in_zero_rhosmall} we conclude that
  $f(A)\leq \rho^{-\card{A}}\eta\leq \rho^{-\card{M'}}\eta$. Plugging all approximations we have in~\eqref{eq3}, by the triangle
  inequality we get that
  \[
  \card{\rho^{\card{M'}} f(M') + (1-\rho^{\card{M'\setminus M}})\lambda - \lambda}
  \leq (1+\rho^{-\card{M'}}+\rho^{-3(\card{M'}-1)^2})\eta
  \leq \rho^{-3\card{M'}^2+\card{M'}}\eta.
  \]
  Simplifying and dividing by $\rho^{\card{M'}}$ finishes the proof.
\end{proof}

\paragraph{Structure of $g$.}
We are now ready to prove that $g$ is an AND function, and we do it by the way of contradiction.
Suppose $g$ has at least two minterms, let $M_1$ be a minterm of $g$ of minimum size, and let $M_2\neq M_1$
be another minterm. Consider $M = M_1\cup M_2$, and let $A\subseteq M$ be sampled so that each $i\in M$
is included independently with probability $\rho$. Let $E_1$ be the event that $A\supseteq M_1$
and let $E_2$ be the event that $A = M_2$. We have that
\[
\T f(M) \geq
\Prob{A}{A\in E_1}\cdot\cExpect{A}{E_1}{f(A)}
+\Prob{A}{A\in E_2}\cdot f(M_2),
\]
and we lower-bound the right-hand side. The probability that $A$ is in $E_1$
is $\rho^{\card{M_1}}$, and by Claim~\ref{claim:value_above_minterm_rhosmall}, for every
$A$ satisfying $E_1$ we have $f(A)\geq \lambda \rho^{-\card{M_1}}\lambda - \rho^{-3n^2}\eta$.
The probability that $A$ is in $E_2$ is at least $\rho^n$, and by Claim~\ref{claim:value_in_minterm_rhosmall},
$f(M_2)\geq \lambda \rho^{-\card{M_2}} - \rho^{-3n}\eta$. Therefore we get that
$\T f(M)\geq (1+\rho^n)\lambda - \rho^{-3n^2 - 3n}\eta > \lambda + \eta$, by the choice of $\eta$.
This is a contradiction to $\T f(M)\leq \lambda g(M) + \eta = \lambda + \eta$.

It follows that there is $T\subseteq [n]$ such that $g = {\sf AND}_T$
(or $g\equiv 0$, a case that arose during the proof).

\paragraph{Structure of $f$.} If $g\equiv 0$, it follows by Claim~\ref{claim:value_in_zero_rhosmall}
that $f$ is close to the $0$ function. Otherwise, on any $M$ such that $g(M) = 1$, i.e.\ above
the minterm $T$, we have by Claim~\ref{claim:value_above_minterm_rhosmall} that $\card{f(M) - \rho^{-\card{T}}\lambda}\leq \rho^{-3n^2}\eta$,
and on any $M$ such that $g(M) = 0$ we have by Claim~\ref{claim:value_in_zero_rhosmall} that $f(M)\leq \rho^{-n}\eta$,
so $\norm{f - \rho^{-\card{T}}\lambda\cdot {\sf AND}_T}_{\infty}\leq \rho^{-3n^2}\eta$.
\qed

\begin{remark}
  A stronger bound on the distance of $f$ from $\rho^{-\card{T}}\lambda {\sf AND}_T$ may be proved using inversion,
  as in Section~\ref{sec:exact_rho_half}.
\end{remark}

\subsection{Proof of Theorem~\ref{thm:main_2func_smallrho}}
Fix $\zeta,\eps>0$
and let $J = J(\zeta)$ be from Lemma~\ref{lem:strong_junta_approx_smallrho}.
Take $\eta_0$ from Lemma~\ref{lemma:main_exact_smallrho} for $\zeta$ and $n=J$.
Set $\eps' = \zeta^{J + 3}\eps$, and pick $\eta_1$ from Lemma~\ref{lem:strong_junta_approx_smallrho}
for $\zeta,\eps'$.
We prove the statement for $\eta = \min(\zeta^{3n^2+n}\eps^2\eta_0,\eta_1)$.

By Lemma~\ref{lem:strong_junta_approx_smallrho} we get that $g$ is $\eps'$-close to
a junta $h$ depending on at most $J$ variables, say on $T\subseteq[n]$.
We write points $x\in \power{n}$ as $(\alpha,\beta)$, where $\alpha\in\power{T}$ and $\beta\in\power{[n]\setminus T}$,
and for each $\beta\in\power{[n]\setminus T}$, define $\tilde{f}_{\beta}\colon\power{T}\to[0,1]$ and $g_\beta\colon \power{T} \to \power{}$ by
\[
\tilde{f}_{\beta}(\alpha) = \Expectnoop{{\bf z}\sim\mu_{\rho}^{[n]\setminus T}}{f(\alpha, \beta\land{\bf z})}, \quad
g_\beta(\alpha) = g(\alpha,\beta).
\]
Let $B = \sett{\beta\in\power{[n]\setminus T}}{\norm{\tilde{f}_{\beta} - \lambda g_{\beta}}_{\infty}\leq \zeta^{3n^2}\eps\eta_0}$.
Since for any $\alpha,\beta$ we have $\T\tilde f_{\beta}(\alpha) = \T f(\alpha,\beta)$, it follows that
$\Expect{\bm{\beta}}{\norm{\T\tilde{f}_{\bm{\beta}} - \lambda g_{\bm{\beta}}}_1} = \norm{\T f - \lambda g}_1\leq \eta$.
Therefore Markov's inequality
implies that with probability at least $1-\frac{\eta}{\zeta^{3n^2+n}\eps\eta_0}\geq 1-\eps$ over $\bm{\beta}\sim\mu_p$ we have
$\norm{\T\tilde{f}_{\bm{\beta}} - \lambda g_{\bm{\beta}}}_1\leq \zeta^{3n^2+n}\eps\eta_0$,
in which case $\bm{\beta}\in B$. In particular, we conclude that $\Prob{\bm{\beta}\sim\mu_p^{[n]\setminus T}}{\bm{\beta}\in B}\geq 1-\eps$.

For each $\beta\in B$, Lemma~\ref{lemma:main_exact_smallrho} implies that $g_{\beta}$
is either the zero function or an AND function, i.e.\ there is $T'(\beta)$ which is either a subset of $T$ or $\bot$ such that $g_{\beta} = {\sf AND}_{T'(\beta)}$ (where ${\sf AND}_\bot = 0$).
Since $g$ is $\eps'$-close to a $T$-junta, if we choose $\bm{\beta},\bm{\beta'} \in B$ independently (according to $\mu_p$)
then on average $g_{\bm{\beta}}$ and $g_{\bm{\beta'}}$ are $\delta$-close, where $\delta \leq 2\eps'/\Pr[B] \leq 4\eps'$.
Thus there is $\beta^{\star} \in B$ such that $\Expect{\bm{\beta} \in B}{\norm{g_{\bm{\beta}} - g_{\beta^{\star}}}_1}\leq 4\eps'$,
and we denote $T^{\star} = T'(\beta^{\star})$.
By Markov's inequality, defining $B'\subseteq B$ by $B' = \sett{\beta\in B}{\norm{g_{\beta} - g_{\beta^{\star}}}_1\leq \zeta^{-J-1}}$, we have that
$\Prob{\bm{\beta}\sim\mu_p^{[n]\setminus T}}{\bm{\beta}\in B'}\geq 1-\frac{4\eps'}{\zeta^{-J-1}}\geq 1-\eps$. Note that if $\beta\in B'$, then
the functions ${\sf AND}_{T'(\beta)}$ and ${\sf AND}_{T^{\star}}$ agree on $\bm{\alpha}\sim\mu_p^T$ with probability $\geq 1-\zeta^{-J-1} > 1-\min_{\alpha\in\power{T}}\mu_p(\alpha)$, hence they must be the same function. In other words, we get that for every $\beta\in B'$ we have $g_{\beta} = {\sf AND}_{T^{\star}}$,
and so $g$ is $\eps$-close to ${\sf AND}_{T^{\star}}$.

To prove the structure for $f$, note that for each $\beta\in B'$, Lemma~\ref{lemma:main_exact_smallrho} implies that $\tilde{f}_{\beta}$ is $\eps$-close to
$2^{\card{T^{\star}}}\lambda\cdot {\sf AND}_{T^{\star}}$
in $L_{\infty}$, and averaging it over variables from $T\setminus T^{\star}$ can only decrease this distance. Thus, considering
$\tilde{f}\colon \power{T^{\star}}\to[0,1]$ defined by
\[
\tilde{f}(\alpha) = \Expectnoop{{\bf y}\sim\mu_{\rho p}^{[n]\setminus T^{\star}}}{f(\alpha,{\bf y})} =
\Expectnoop{\substack{\bm{\beta}\sim\mu_p^{[n]\setminus T}\\\bm{\alpha'}\sim\mu_{\rho p}^{T\setminus T^{\star}}}}{f_{\bm{\beta}}(\alpha,\bm{\alpha'})},
\]
we have that
$\norm{\tilde{f} - 2^{\card{T^{\star}}}\lambda\cdot {\sf AND}_{T^{\star}}}_{\infty}\leq
\mu_p(B')\cdot\eps  + (1-\mu_{p}(B')) \leq 2\eps$, and we are done.\qed.

\section{Open questions} \label{sec:open-questions}

Our work raises many open questions. Perhaps the most obvious is the quantitative aspect of our results:

\begin{open-question}
What is the optimal dependence between $\epsilon$ and $\delta$ in Theorem~\ref{thm:intro-main}?	
\end{open-question}
\noindent We can ask a similar question about the various results listed in Section~\ref{sec:main-results}.

Nehama~\cite{Nehama} showed that if we allow $\epsilon$ to depend on $n$, then we can choose $\epsilon = \Theta(\delta^3/n)$. Theorem~\ref{thm:intro-main} eliminates the dependence on $n$ in return for an exponential dependence on $\delta$. We conjecture that Theorem~\ref{thm:intro-one-sided} holds for $\epsilon = \delta^{\Theta(1)}$.

Nehama situates Theorem~\ref{thm:intro-main} in the larger context of approximate judgement aggregation, or equivalently, approximate polymorphisms. He considers not only functions satisfying
\[
 f(\mathbf{x_1} \land \cdots \land \mathbf{x_m}) \approx f(\mathbf{x_1}) \land \cdots \land f(\mathbf{x_m}),
\]
but also functions satisfying
\[
 f(\mathbf{x_1} \oplus \cdots \oplus \mathbf{x_m}) \approx f(\mathbf{x_1}) \oplus \cdots \oplus f(\mathbf{x_m}),
\]
showing (using linearity testing) that the latter must be close to XORs. More generally, we can replace $\land,\oplus$ with an arbitrary Boolean function (or even a function on a larger domain):

\begin{open-question}
Fix $\phi\colon	\{0,1\}^m \to \{0,1\}$. Suppose $f\colon \{0,1\}^n \to \{0,1\}$ satisfies
\[
 f(\phi(\mathbf{x_1},\ldots,\mathbf{x_m})) \approx \phi(f(\mathbf{x_1}),\ldots,f(\mathbf{x_m}))
\]
for random $\mathbf{x_1},\ldots,\mathbf{x_m} \in \{0,1\}^n$, where $\phi(x_1,\ldots,x_m)$ signifies elementwise application.

\noindent What can we say about $f$?
\end{open-question}

Dokow and Holzman~\cite{DH09} showed (essentially) that when $\phi$ is a non-trivial function which is not an AND or an XOR, then the only exact solutions are dictatorships.
We conjecture that when $\phi$ is such that the only exact solutions are dictatorships, then approximate solutions are approximate dictatorships.

Finally, let us mention the following tantalizing question:

\begin{open-question}
What can be said about functions $f\colon \{0,1\}^n \to \{0,1\}$ satisfying
\[
 \Pr[f(\mathbf{x} \land \mathbf{y}) = f(\mathbf{x}) \land f(\mathbf{y})] \geq \frac{3}{4} + \epsilon?
\]	
\end{open-question}
\noindent We remark that the $\frac{3}{4}$ bound on the right hand side is natural in light of
semi-random functions $f\colon\power{n}\to\power{}$, chosen by taking $f(x)$ to be a uniform bit when $\card{x}\approx \half n$,
and $f(x) = 0$ otherwise.


\bibliographystyle{plain}
\bibliography{ref,mossel,all_mossel}

\end{document}